\newcommand{\Levy}{L\'{e}vy\ }
\newcommand{\Levydriven}{L\'{e}vy-driven\ }
\newcommand{\LevyIto}{L\'{e}vy-It\^{o}\ }
\newcommand{\Ito}{It\^{o}\ }
\newcommand{\cadlag}{c\`{a}dl\`{a}g\ }
\newcommand{\caglad}{c\`{a}gl\`{a}d\ }
\newtheorem{theorem}{Theorem}[section]
\newtheorem{assumption}{Assumption}
\newtheorem{definition}{Definition}[section]
\newtheorem{example}{Example}[section]
\newtheorem{lemma}{Lemma}[section]
\newtheorem{proposition}{Proposition}[section]
\newtheorem{remark}{Remark}[section]
\newlength{\dhatheight}
\newcolumntype{L}[1]{>{\raggedright\let\newline\\\arraybackslash\hspace{0pt}}m{#1}}
\newcolumntype{C}[1]{>{\centering\let\newline\\\arraybackslash\hspace{0pt}}m{#1}}
\newcolumntype{R}[1]{>{\raggedleft\let\newline\\\arraybackslash\hspace{0pt}}m{#1}}
\title{Estimation and Inference for Multivariate Continuous-time Autoregressive Processes}
\author{Lorenzo Lucchese\thanks{This research has been supported by the EPSRC Centre for Doctoral Training in Mathematics of Random Systems: Analysis, Modelling and Simulation (EP/S023925/1).} \\
Department of Mathematics \\
Imperial College London \\
\texttt{lorenzo.lucchese17@imperial.ac.uk} \\
\And
Mikko S. Pakkanen\\
Department of Mathematics \\
Imperial College London \\
\texttt{m.pakkanen@imperial.ac.uk} \\
\And
Almut E. D. Veraart\\
Department of Mathematics \\
Imperial College London \\
\texttt{a.veraart17@imperial.ac.uk}
}
\date{\today}
\begin{document}
	
	\maketitle
	\allowdisplaybreaks
	
	\begin{abstract}
		The aim of this paper is to develop estimation and inference methods for the drift parameters of multivariate \Levydriven continuous-time autoregressive processes of order $p\in\mathbb{N}$. Starting from a continuous-time observation of the process, we develop consistent and asymptotically normal maximum likelihood estimators. We then relax the unrealistic assumption of continuous-time observation by considering natural discretizations based on a combination of Riemann-sum, finite difference, and thresholding approximations. The resulting estimators are also proven to be consistent and asymptotically normal under a general set of conditions, allowing for both finite and infinite jump activity in the driving \Levy process. When discretizing the estimators, allowing for irregularly spaced observations is of great practical importance. In this respect, CAR($p$) models are not just relevant for ``true'' continuous-time processes: a CAR($p$) specification provides a natural continuous-time interpolation for modeling irregularly spaced data -- even if the observed process is inherently discrete. As a practically relevant application, we consider the setting where the multivariate observation is known to possess a graphical structure. We refer to such a process as GrCAR and discuss the corresponding drift estimators and their properties. The finite sample behavior of all theoretical asymptotic results is empirically assessed by extensive simulation experiments. 		
	\end{abstract}

        {\hyperlink{https://mathscinet.ams.org/mathscinet/msc/msc2020.html}{\textit{Mathematics Subject Classification 2020:}} 62F12, 60G30.}\\
	{\textit{Keywords:} Estimation, Inference, Asymptotic properties, Autoregressive, Continuous-time, \Levy process, \\ Maximum likelihood, Ornstein-Uhlenbeck process, State-space models, Graph topology.}
	
	\section{Introduction}
	\subsection{Motivation: from OU processes to CAR processes}
	Let $\mathbb{L} = \{\mathbf{L}_t, \ t \geq 0\}$ denote a $d$-dimensional \Levy process with characteristics $(\mathbf{b}, \Sigma, F)$ on the probability space $(\Omega', \mathcal{F}', \mathbb{P}')$, i.e.\ $\mathbf{b}\in\mathbb{R}^d$, $\Sigma$ a symmetric positive semi-definite $d\times d$ matrix and $F$ a \Levy measure on $\mathbb{R}^d$, that is $F(\{0\}) = 0$ and $\int_{\mathbb{R}^d} (\|x\|^2\wedge 1) F(dx) <\infty$. For a detailed treatment of \Levy processes see \citet{Sato_1999}. The Ornstein-Uhlenbeck (OU) process driven by $\mathbb{L}$ with drift matrix $A\in\mathcal{M}_d(\mathbb{R})$ is the unique strong solution $\mathbb{Y} = \{\mathbf{Y}_t,\ t\geq 0\}$ to the stochastic differential equation (SDE)
	\begin{equation} \label{eqn:OU_intro}
		\mathrm{d}\mathbf{Y}_t = - A \mathbf{Y}_t\, \mathrm{d}t + \mathrm{d}\mathbf{L}_t,\ t\geq 0.
	\end{equation}
	This process has been the subject of great research interest from a theoretical and practical point of view. When $\mathbb{L}$ is taken to be a Brownian motion, its properties are now well understood, with the first studies dating back to \citet{OU_1930}. OU processes with more general \Levy drivers have also been extensively analyzed, for example, \citet{Sato_Yamazato_1984} related stationary OU-type processes with operator self-decomposable distributions while \cite{Masuda_2004} studied their transition semigroups, transition densities, and mixing conditions. Such processes have been used for modeling purposes in various fields, including physics, finance, and biology. For a concrete financial application, see \citet{barndorff_nielsen_shephard_2001} where a positive stationary OU process is used to model the volatility process in a stochastic volatility model.
	
	Formally, the OU process \eqref{eqn:OU_intro} satisfies
	\begin{equation} \label{eqn:OU_AR}
		p(D)\mathbf{Y}_t = D\mathbf{L}_t, \ t\geq 0,
	\end{equation}
	where $p(z) = z + A$ is a first order polynomial and $D$ denotes differentiation with respect to time $t$. If we instead interpret $D$ as a backshift operator we immediately notice why OU processes can be understood as the continuous-time analog of discrete-time autoregressive processes of order 1, AR(1) for short. A deeper connection between the two classes of processes is present: under appropriate conditions on $A$ and $\mathbb{L}$, for any $h>0$ the $h$-skeleton of the OU process $\mathbb{Y}$, defined by $\mathbb{Y}_h:=\{\mathbf{Y}_{nh}, \ n\in\mathbb{N}\}$, is an AR(1) process, see for example \citet[Theorem~ 4.3]{Masuda_2004}. In view of this connection, we can naturally generalize OU processes in the same way as AR($p$) processes generalize AR($1$) processes, i.e.\ for $p\in\mathbb{N}$ and $A_1,\ldots,A_p\in\mathcal{M}_d(\mathbb{R})$ we consider the $(p-1)$-times ``differentiable'' process $\mathbb{Y}$ which formally satisfies Equation \eqref{eqn:OU_AR} where 
	\[p(z)= z^p + A_1z^{p-1} +\ldots + A_{p-1}z + A_p\] 
	is a so-called autoregressive polynomial and $D$ denotes differentiation with respect to time $t$. Such a process, which will be rigorously defined in Section \ref{sec:MCAR}, is known as a Continuous Autoregressive (CAR) process and falls in the wider class of Continuous Autoregressive Moving Average (CARMA) processes, the continuous-time counterpart of ARMA time series models. An overview of CARMA processes is given in \citet[Chapter~11.5]{brockwell_davis}, focusing mainly on the results for Gaussian CARMA processes obtained by \citet{jones1980} and \citet{chan_tong}. Properties of \Levydriven CARMA processes are investigated by \citet{brockwell_CARMA} and \citet{brockwell_lindner} in the univariate case and by \citet{Marquardt_Stelzer_2007} in the more general multivariate setting. It is important to note that the study of the Lévy-driven OU and CARMA processes is an active area of research, and many researchers have made significant contributions to their understanding. The mentioned works are merely a small fraction of the research conducted in this field and should not be regarded as an exhaustive compilation.
	
	\subsection{Contributions and literature review}
	Firstly and most prominently we develop consistent and asymptotically normal estimators for the drift coefficients of multivariate \Levydriven CAR($p$) processes both when continuous-time and (possibly irregularly spaced) discrete-time observations are available. As a necessary step in the discretization procedure, in Section \ref{sec:approx_derivatives} we provide a rigorous treatment of the finite difference approximation of the derivatives of a differentiable stochastic process which, to the best of our knowledge, is a novelty in the field. Moreover, we believe the proofs of the asymptotic results for the OU drift estimators appearing in \citet{mai_OU} might contain some inconsistencies. By carefully modifying the arguments we show the stated properties hold under slightly different assumptions as a special case of our more general asymptotic results for CAR($p$) processes. Finally, when the multivariate observation has a known underlying network structure, we reduce the model parametrization by introducing the GrCAR architecture, a natural extension of the GrOU process introduced in \citet{Courgeau_Veraart_2021}, and the corresponding estimators in Section \ref{sec:GrCAR}.
	
	\paragraph{Estimation of OU drift coefficients} When $p=1$ our task reduces to estimating the drift coefficient of the \Levydriven OU process in Equation \eqref{eqn:OU_intro}. In this setting, a wide range of estimators have been investigated in the literature under different assumptions on the driving \Levy process $\mathbb{L}$, cf.\ Table \ref{table:OU_estimators}. Other than the well-studied Gaussian case, specific results were developed in the context of non-decreasing \Levy drivers in \citet{jongbloed_OU} and \citet{brockwell_OU} for Davis-McCormick $M$-estimators. As discussed in the previous section, this class of processes is of particular relevance for volatility modeling \citep{barndorff_nielsen_shephard_2001}. Assuming the \Levy driver has a non-degenerate Gaussian component, \citet{mai_OU} develops continuous-time maximum likelihood estimators and analyzes their discrete-time counterparts, naturally allowing for irregularly spaced data. \citet{gushchin_OU} extend the continuous-time likelihood results when $\mathbb{L}$ has heavier tails and \cite{Courgeau_Veraart_2022} consider a specific multivariate extension, imposing a graphical structure on the observations. More general (symmetric) \Levy drivers are considered in \citet{fasen_OU} and \citet{masuda_LAD} under a discrete-time uniform sampling scheme, where the authors analyze least squares and least absolute deviation estimators respectively. 
	
	In the present work, we aim to revisit the results in \citet{mai_OU}, treated in \citet{Courgeau_Veraart_2022} in the multivariate case, and extend them to multivariate CAR($p$) processes of general order $p\in\mathbb{N}$. Working under similar assumptions, we will consider explicit continuous-time maximum likelihood estimators and their corresponding discretizations. 	
	
	\bgroup
	\def\arraystretch{1.25}%
	\begin{table}[ht]
		\centering
		\begin{tabular}{|C{2.5cm}|C{3.75cm}|C{3.5cm}|C{2cm}|C{2.3cm}|}
			\cline{2-5}
			\multicolumn{1}{c|}{} & \multicolumn{4}{c|}{\Levy driver $\mathbb{L}$ with characteristics $(\mathbf{b}, \Sigma, F)$} \\ 
			\cline{2-5}
			\multicolumn{1}{c|}{} & \multicolumn{2}{c|}{Square-integrable\textsuperscript{1}} & \multicolumn{2}{c|}{Heavy tails\textsuperscript{2}} \\
			\cline{2-5}
			\multicolumn{1}{c|}{} & Non-decreasing\textsuperscript{3} & $\Sigma>0$\textsuperscript{4} & $\Sigma>0$\textsuperscript{4} & \\
			\hline
			continuous-time & \cite{brockwell_OU} & \cite{mai_OU}, \cite{Courgeau_Veraart_2022} & \cite{gushchin_OU} & \\ 
			\hline
			discrete-time (uniform) & \cite{brockwell_OU}, \cite{jongbloed_OU} & \cite{mai_OU}, \cite{Courgeau_Veraart_2022} & &\cite{fasen_OU}, \cite{masuda_LAD} \\
			\hline
			discrete-time (irregular) &  & \cite{mai_OU}, \cite{Courgeau_Veraart_2022} & & \\
			\hline
		\end{tabular}
		\caption{Estimation of OU drift coefficients. \\ \textsuperscript{1}$\int \|x\|^2 F(\mathrm{d}x)<\infty$. \textsuperscript{2}$F$ is regularly varying of order $\alpha\in(0,2)$. \textsuperscript{3}$\mathrm{supp}(F)\subseteq (0,\infty),\ \Sigma=0$. \\ \textsuperscript{4}$\Sigma$ strictly positive definite.} \label{table:OU_estimators}
	\end{table}
	\egroup 
	
	\paragraph{Estimation of CAR drift coefficients}
	Fewer results are available for the estimation of CAR($p$) processes of order $p\in\mathbb{N}$, some of which are summarized in Table \ref{table:CAR_estimators}. In the more general setting of \Levydriven CARMA($p$, $q$) models, to the best of our knowledge, all available estimators are based on uniform discrete-time observations and their statistical analysis relies on the properties of the $h$-skeleton of the process. Following this approach, \cite{brockwell_CARMA_estimation} develops least squares estimators with non-negative \Levy drivers while \cite{schlemm_stelzer_CARMA} provides mixing conditions which set the scene for  Gaussian quasi maximum likelihood estimation. As previously mentioned, in this work we will take a diametrically opposed approach by starting from continuous-time maximum likelihood estimators. In this spirit, we aim to extend the results for Gaussian CAR($p$) processes presented in \cite{brockwell_gaussian_CAR} by allowing for general \Levy drivers and rigorously proving the asymptotic properties of the discretized estimators.

	\bgroup
	\def\arraystretch{1.25}%
	\begin{table}[ht]
		\centering
		\begin{tabular}{|C{2.5cm}|C{3cm}|C{3cm}|C{3cm}|}
			\cline{2-4}
			\multicolumn{1}{c|}{} & \multicolumn{3}{c|}{\Levy driver $\mathbb{L}$ with characteristics $(\mathbf{b}, \Sigma, F)$} \\ 
			\cline{2-4}
			\multicolumn{1}{c|}{} & \multicolumn{3}{c|}{Square-integrable\textsuperscript{1}} \\
			\cline{2-4}
			\multicolumn{1}{c|}{} & Gaussian\textsuperscript{2} & Non-decreasing\textsuperscript{3} & $\Sigma>0$\textsuperscript{4}\\
			\hline
			continuous-time & \cite{brockwell_gaussian_CAR} &  & $(*)$ \\ 
			\hline
			discrete-time (uniform) & \cite{brockwell_gaussian_CAR} &\cite{brockwell_CARMA_estimation} & \cite{schlemm_stelzer_CARMA}, $(*)$ \\
			\hline
			discrete-time (irregular) & & & $(*)$ \\
			\hline
		\end{tabular}
		\caption{Estimation of CAR drift coefficients. $(*)$ contribution of this paper. \\
			\textsuperscript{1}$\int \|x\|^2 F(\mathrm{d}x)<\infty$. \textsuperscript{2}$F\equiv 0$, $\Sigma>0$. \textsuperscript{3}$\mathrm{supp}(F)\subseteq (0,\infty),\ \Sigma=0$. \\ \textsuperscript{4}$\Sigma$ strictly positive definite.} \label{table:CAR_estimators}
	\end{table}
	\egroup
	
	\paragraph{Outline of paper} The rest of the paper is organized as follows. In Section \ref{sec:MCAR}, we define multivariate CAR models via their state-space representation. In Section \ref{sec:MCAR_cont}, we develop continuous-time maximum likelihood theory based on fundamental results for more general stochastic processes \citep{Jacod_Memin_1976, Jacod_Shiryaev_1987, Sorensen_1991, Kuchler_Sorensen_1997}. In Section \ref{sec:MCAR_discr}, we discretize the maximum likelihood estimators and prove the asymptotic properties are preserved. The discretizations are obtained by approximating the integrals by Riemann sums, approximating the derivatives by finite differences, and disentangling the continuous and jump parts via thresholding. In Section \ref{sec:GrCAR}, we consider a specific application of CAR models to high-dimensional multivariate observations with a graphical structure, allowing for a parsimonious parametrization of the drift coefficients. Finally, Section \ref{sec:conclusions} summarizes the main findings and discusses possible avenues for future work.
	
	\section{Multivariate Continuous Autoregressive processes} \label{sec:MCAR}
	
	In the following we denote by $\mathcal{M}_{n,d}(\mathbb{R})$ the set of $n\times d$ matrices over $\mathbb{R}$. When working with square matrices we write $\mathcal{M}_{d}(\mathbb{R}) = \mathcal{M}_{d,d}(\mathbb{R})$ and denote by $\sigma(A)$ the spectrum of $A\in\mathcal{M}_{d}(\mathbb{R})$. For a vector space $V$, we denote by $V^k$ its $k$-th Cartesian power, i.e.\ $V^k:=\{(v_1, \ldots, v_k): v_1,\ldots,v_k\in V\}$. We start by making the definition of a multivariate \Levydriven CAR process rigorous.
	
	\begin{definition}[MCAR process] \label{def:MCAR}
		Let $\mathbb{L} = \{\mathbf{L}_t, \ t \geq 0\}$ be a $d$-dimensional \Levy process with characteristics $(\mathbf{b}, \Sigma, F)$. Let $p\in \mathbb{N}$ and $\mathbf{A} = (A_1, \ldots, A_p) \in (\mathcal{M}_d(\mathbb{R}))^p$. Define 
		\begin{equation} \label{eqn:design_A_E}
			\mathcal{A}_{\mathbf{A}} = \begin{pmatrix}
				0_{d \times d} & I_{d \times d} & 0_{d \times d} & \cdots & 0_{d \times d} \\
				0_{d \times d} & 0_{d \times d} & I_{d \times d} & \cdots & 0_{d \times d} \\
				\vdots & \vdots & \vdots & \ddots & \vdots \\
				0_{d \times d} & 0_{d \times d} & 0_{d \times d} & \cdots & I_{d \times d} \\
				- A_p & -A_{p-1} & -A_{p-2} & \cdots & -A_1 \\
			\end{pmatrix}\in \mathcal{M}_{pd}(\mathbb{R}), 
			\quad \mathcal{E} = \begin{pmatrix}
				0_{d \times d} \\ 0_{d \times d} \\ \vdots \\ 0_{d \times d} \\ I_{d \times d}
			\end{pmatrix} \in \mathcal{M}_{pd, d}(\mathbb{R}).
		\end{equation}
		Then, for any set of matrices $A_1, \ldots, A_p \in \mathcal{M}_d(\mathbb{R})$, we define a $d$-dimensional multivariate continuous-time autoregressive process of order $p$, MCAR($p$) process for short, driven by $\mathbb{L}$ with coefficient matrices $A_1, \ldots, A_p \in \mathcal{M}_d(\mathbb{R})$ and initial state-space representation $\xi$, an $\mathcal{F}'_0$-measurable random variable independent of $\mathbb{L} = \{\mathbf{L}_t, \ t \geq 0\}$, to be $\mathbb{Y} = \{\mathbf{Y}_t, \ t\geq0\}$ such that
		\begin{equation}
			\mathbf{Y}_t =  
			\begin{pmatrix}
				I_{d \times d} & 0_{d \times d} & \cdots & 0_{d \times d}
			\end{pmatrix} 
			\mathbf{X}_t, \quad t \geq 0, \label{eqn:CAR(p)}
		\end{equation}
		where $\mathbb{X} = \{\mathbf{X}_t,\ t \geq 0\}$ is the unique  $pd$-dimensional solution-process to the SDE
		\begin{equation} \label{eqn:SDE_def}
			d \mathbf{X}_t = \mathcal{A}_{\mathbf{A}} \mathbf{X}_t dt + \mathcal{E} d \mathbf{L}_t, \quad \mathbf{X}_0 = \xi.
		\end{equation}
	\end{definition}
	
	\begin{remark} \label{rem:stat_ergodic}
		We stress the following important properties of the MCAR($p$) process:
		\begin{itemize}
			\item The MCAR($p$) process is well-defined since the stochastic differential Equation \eqref{eqn:SDE_def} has a unique solution-process $\mathbb{X}=\{\mathbf{X}_t,\ t\geq 0\}$ with the following representation:
			\[\mathbf{X}_t = e^{\mathcal{A}_{\mathbf{A}}t}\xi + \int_0^t e^{\mathcal{A}_{\mathbf{A}}(t-s)} \mathcal{E} d\mathbf{L}_s, \quad t\geq 0.\]
			\item If, in particular, we assume
			\begin{itemize}
				\item $\mathbb{L}$ satisfies the log-moment condition $\displaystyle \int_{\|x\|\geq 1} \log \|x\| F(dx) <\infty$,
				\item the coefficient matrices are such that $\sigma(\mathcal{A}_{\mathbf{A}}) \subseteq (-\infty, 0) + i\mathbb{R}$, and
				\item the initial condition $\xi$ has law $\displaystyle \mathcal{L}\left( \int_0^\infty e^{\mathcal{A}_{\mathbf{A}} s} \mathcal{E}\, d\mathbf{L}_s\right)$,
			\end{itemize}
			then the $d$-dimensional MCAR($p$) process $\mathbb{Y} = \{\mathbf{Y}_t,\ t \geq 0\}$ and its state-space representation $\mathbb{X} = \{\mathbf{X}_t,\ t\geq 0\}$ are strictly stationary, cf.\ \citet[Proposition~2.2]{Masuda_2004} and \citet[Theorem~4.1~and~Theorem~4.2]{Sato_Yamazato_1984}. If, moreover, the driving \Levy process is assumed to be $r$-integrable for some $r>0$ they are also ergodic, cf.\ \citet[Proposition~3.34]{Marquardt_Stelzer_2007}.
			\item If we extend $\mathbb{L}$ to be a two-sided \Levy process, i.e.\ $\mathbb{L} = \{\mathbf{L}_t,\ t \in \mathbb{R}\}$ defined by
			\[ \mathbf{L}_t := \mathbf{L}^1_t \mathds{1}_{[0, \infty)}(t) - \mathbf{L}^2_{-t-} \mathds{1}_{(-\infty, 0)}(t),\quad t\in\mathbb{R}, \]
			where $\mathbb{L}^1 = \{\mathbf{L}^1_t,\ t \geq 0\}$ and $\mathbb{L}^2 = \{\mathbf{L}^2_t,\ t \geq 0\}$ are independent identically distributed one-sided \Levy processes, we can write the state-space representation of the strictly stationary MCAR($p$) process $\mathbb{Y} = \{\mathbf{Y}_t,\ t\in\mathbb{R}\}$ as
			\[\mathbf{X}_t = \int_{-\infty}^t e^{\mathcal{A}_{\mathbf{A}}(t-s)}\mathcal{E} d\mathbf{L}_s, \quad t\in\mathbb{R}.\]
		\end{itemize}
	\end{remark}
	
	The process $\mathbb{X}$ is known as the $pd$-dimensional state-space process of $\mathbb{Y}$. Formally, the MCAR($p$) process $\mathbb{Y} = \{\mathbf{Y}_t,\ t\geq 0\}$ satisfies
	\begin{equation} \label{eqn:autoregressive}
		p(D) \mathbf{Y}_t = D^p\mathbf{Y}_t + A_1 D^{p-1}\mathbf{Y}_t + \ldots + A_{p-1}  D\mathbf{Y}_t + A_{p} \mathbf{Y}_t = D\mathbf{L}_t,
	\end{equation}
	where $p(z) = z^p + A_1z^{p-1} + \ldots + A_{p-1}z + A_p$ is the autoregressive polynomial and the operator $D$ denotes differentiation with respect to $t$. Note that $D\mathbf{L}_t$ is in general not defined as a random function and hence we make sense of this formal equation by introducing the $\mathbb{R}^{pd}$-dimensional state-space representation $\mathbb{X} = \{\mathbf{X}_t,\ t\geq 0\}$. Note that, due to this construction, from a realization of the CAR($p$) process $\mathbb{Y}$ we can recover its state-space representation via right differentiation.
	
	\begin{proposition} \label{prop:right_diff}
		Let $D$ denote right differentiation, i.e\ for a function $f:[0, \infty)\rightarrow \mathbb{R}^d$ we say $f$ is right differentiable at $t\in[0,\infty)$ if the limit
        \[\lim_{\delta\downarrow 0} \frac{f(t+\delta) - f(t)}{\delta},\] 
        exists, in which case we call this the right derivative of $f$ at $t\in[0,\infty)$ and denote it by $Df(t)$. We say $f$ is right differentiable if it is right differentiable at every $t\in[0,\infty)$. Let $\mathbb{Y}$ be an MCAR($p$) process with state-space representation $\mathbb{X}=(\mathbb{X}^{(1),\mathrm{T}},\ldots, \mathbb{X}^{(p),\mathrm{T}})^\mathrm{T}$. Then we have that almost surely $\mathbb{Y}$ is $(p-1)$-times right differentiable and
		\[\mathbb{X}^{(j+1)} = D^j\mathbb{Y},\ \mathrm{for}\ j=0,\ldots, p-1.\]
	\end{proposition}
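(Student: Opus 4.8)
The plan is to read off the block structure of the state-space SDE~\eqref{eqn:SDE_def} and then integrate the resulting first-order relations, using the càdlàg regularity of the solution path to pass from indefinite integration to right differentiation. First I would decompose the solution into its $d$-dimensional blocks, writing $\mathbf{X}_t = (\mathbf{X}_t^{(1),\mathrm{T}}, \ldots, \mathbf{X}_t^{(p),\mathrm{T}})^{\mathrm{T}}$, and exploit the companion structure of $\mathcal{A}_{\mathbf{A}}$ together with the form of $\mathcal{E}$ in~\eqref{eqn:design_A_E}. Reading~\eqref{eqn:SDE_def} block by block, the first $p-1$ block rows (where $\mathcal{E}$ vanishes) give the purely deterministic relations $\mathrm{d}\mathbf{X}_t^{(j)} = \mathbf{X}_t^{(j+1)}\,\mathrm{d}t$ for $j=1,\ldots,p-1$, equivalently $\mathbf{X}_t^{(j)} = \mathbf{X}_0^{(j)} + \int_0^t \mathbf{X}_s^{(j+1)}\,\mathrm{d}s$, while the last block row alone absorbs the driving noise $\mathrm{d}\mathbf{L}_t$. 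Consequently only the top block carries a jump part, so $\mathbf{X}^{(p)}$ is càdlàg (for instance from the explicit representation in Remark~\ref{rem:stat_ergodic}, or directly since it is the sum of a continuous finite-variation term and $\mathbb{L}$), whereas each $\mathbf{X}^{(j)}$ with $j\leq p-1$, being an indefinite integral of a càdlàg, hence locally bounded and integrable, process, is continuous.

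The analytic core is an elementary right-differentiation lemma: if $g:[0,\infty)\to\mathbb{R}^d$ is càdlàg and $G(t) = G(0) + \int_0^t g(s)\,\mathrm{d}s$, then $G$ is right differentiable everywhere with $DG(t) = g(t)$, because $\delta^{-1}(G(t+\delta)-G(t)) = \delta^{-1}\int_t^{t+\delta} g(s)\,\mathrm{d}s \to g(t)$ as $\delta\downarrow 0$ by right-continuity of $g$ at $t$. Applying this with $G = \mathbf{X}^{(j)}$ and $g = \mathbf{X}^{(j+1)}$ for each $j=1,\ldots,p-1$ yields $D\mathbf{X}^{(j)}_t = \mathbf{X}^{(j+1)}_t$. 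I would then close the argument by induction on $j$: the base case $j=0$ is the defining identity $\mathbf{Y} = \mathbf{X}^{(1)}$ from~\eqref{eqn:CAR(p)}, and assuming $D^{j-1}\mathbf{Y} = \mathbf{X}^{(j)}$ for some $1\leq j\leq p-1$ gives $D^j\mathbf{Y} = D(D^{j-1}\mathbf{Y}) = D\mathbf{X}^{(j)} = \mathbf{X}^{(j+1)}$. This simultaneously shows that $\mathbf{Y}$ is $(p-1)$-times right differentiable, since each successive derivative $D^j\mathbf{Y} = \mathbf{X}^{(j+1)}$ exists precisely because $\mathbf{X}^{(j)}$ is the integral of a càdlàg block, and identifies $D^j\mathbf{Y} = \mathbf{X}^{(j+1)}$ for $j=0,\ldots,p-1$. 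All statements hold on the almost-sure event on which the path $\mathbf{X}$ is càdlàg.

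The main obstacle, and the reason the statement is phrased with right rather than two-sided differentiation, is concentrated in the top step $j=p-1$. For $j\leq p-2$ the integrand $\mathbf{X}^{(j+1)}$ is continuous, so the ordinary fundamental theorem of calculus delivers a genuine two-sided derivative; but at the final level $\mathbf{X}^{(p)}$ jumps wherever $\mathbb{L}$ does, so $\mathbf{X}^{(p-1)} = D^{p-2}\mathbf{Y}$ is merely right differentiable at those times. The right-differentiation lemma above is exactly what handles these jump times, the only care required being to invoke right-continuity rather than continuity of the integrand there. Note that the argument is entirely pathwise, so no integrability or spectral assumptions on $\mathbf{A}$ or $\mathbb{L}$ beyond those ensuring existence of the càdlàg solution are needed.
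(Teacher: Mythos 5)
Your proposal is correct and follows essentially the same route as the paper: the elementary lemma that $G(t)=G(0)+\int_0^t g(s)\,\mathrm{d}s$ with $g$ \cadlag is right differentiable with $DG=g$, applied inductively to the block relations $\mathbf{X}^{(j)}_t=\mathbf{X}^{(j)}_0+\int_0^t\mathbf{X}^{(j+1)}_s\,\mathrm{d}s$ coming from the companion structure of $\mathcal{A}_{\mathbf{A}}$ and $\mathcal{E}$. (Only note the minor slip where you write ``top block'' for the block carrying the jumps, which is the last block $\mathbf{X}^{(p)}$, as your subsequent discussion makes clear.)
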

	\begin{proof}
		Note that if $f(t) = \int_0^tg(s) \, ds$ for a \cadlag function $g$ (the integral is well-defined in the Lebesgue sense since $g$ has countably many discontinuities) then $f$ is right differentiable and at any $t\in[0,\infty)$ we have $Df(t) = g(t)$. This can be seen as follows: fix $\epsilon>0$, and choose $\delta>0$ such that if $s\in[t, t+\delta)$ then $|g(t)-g(s)|<\epsilon$. Then for any $0<\delta'\leq \delta$,
		\[ \left|\frac{f(t+\delta') - f(t)}{\delta'} - g(t)\right| =  \left|\frac{1}{\delta'} \int_t^{t+\delta'} g(s)\, ds- g(t)\right| \leq  \frac{1}{\delta'} \int_t^{t+\delta'}\left| g(s)- g(t)\right| \, ds \leq \epsilon.\]
		Thus, we have that, by definition of the MCAR process $\mathbb{Y}$,
		\[\mathbf{X}^{(1)}_t = \mathbf{Y}_t, \ t\geq 0,\]
		and, inductively, for $j\geq 1$
		\[D^{j-1}\mathbf{Y}_t = \mathbf{X}^{(j)}_t = \int_0^t \mathbf{X}^{(j+1)}_s \, ds, \ t\geq 0 \implies \mathbf{X}^{(j+1)}_t = D^j\mathbf{Y}_t, \ t\geq 0. \]
		In particular, we have that $D^{p-1}\mathbb{Y}$ can have (infinitely many) jumps, since it jumps every time $\mathbb{L}$ does and by the same magnitude, while $D^j\mathbb{Y}$ is almost surely continuous for all $j<p-1$.
	\end{proof}
	
	\begin{figure}
		\centering            \includegraphics[width=\textwidth]{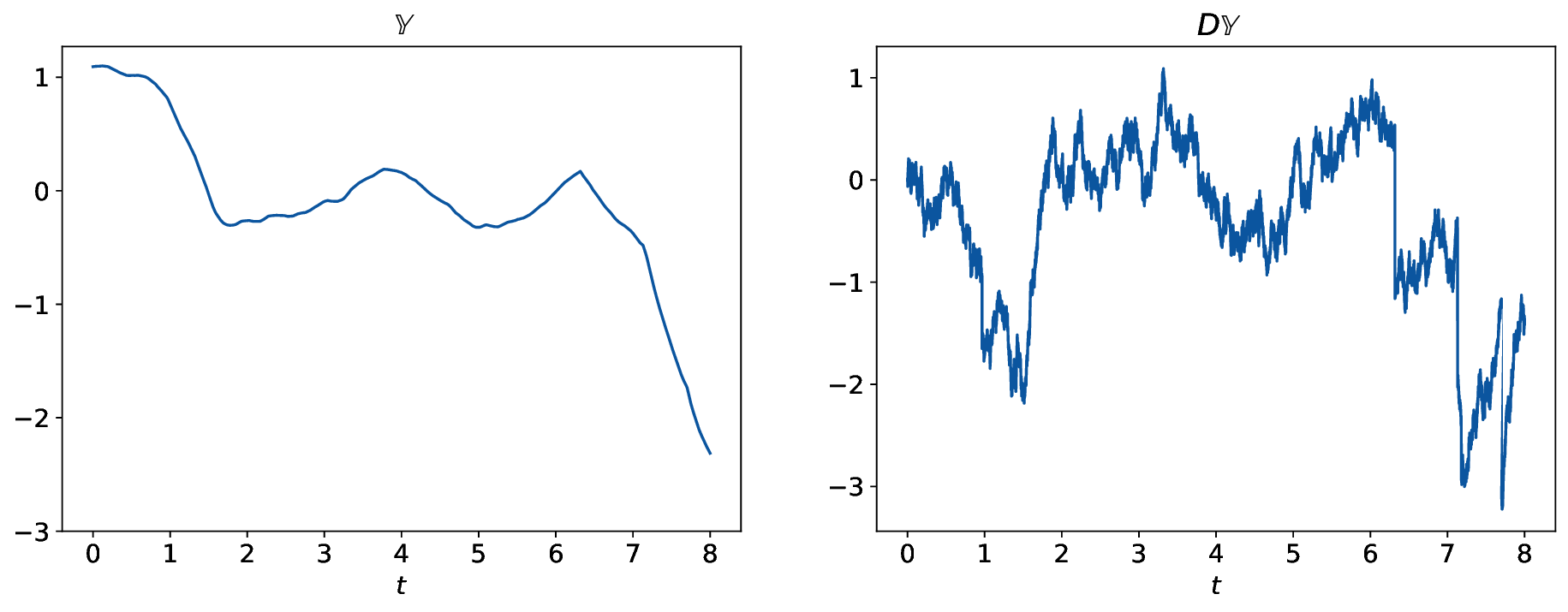}
		\caption{State-space path realization $\mathbb{X} = (\mathbb{Y}, D\mathbb{Y})^\textrm{T}$ of a $d=1$ dimensional MCAR process of order $p=2$ with drift parameter $\mathbf{A} = (A_1, A_2) = (1, 2)$ and \Levy triplet $(b=0, \Sigma=1, F(dx) = \phi(x)dx)$ where $\phi(x)$ is a $N(0, 1)$ probability density. The path is simulated over a discrete time grid by using the exact scheme described in Appendix \ref{app:simulate_MCAR_exact}.}
		\label{fig:MCAR_realization}
	\end{figure}
	
	Throughout this paper, we make the following assumptions on the driving \Levy process.
	\begin{assumption} \label{ass:levy_process}
		The driving \Levy process $\mathbb{L}=\{\mathbf{L}_t,\ t\geq0\}$ has characteristics $(\mathbf{b}, \Sigma, F)$ and:
		\begin{itemize}
			\item is square-integrable, i.e.\ $\mathbb{E}[\|\mathbf{L}_1\|^2]<\infty;$
			\item has a non-degenerate Gaussian part, i.e.\ $\Sigma$ is strictly positive definite.
		\end{itemize}
	\end{assumption}
	\begin{remark}
		Assuming $\Sigma$ is strictly positive definite, we denote by $\Sigma^{1/2}$ its unique symmetric positive definite square root. This assumption is necessary to obtain equivalence between measures on path space and develop a likelihood theory. Note our results will thus not be applicable to non-decreasing driving \Levy processes, i.e.\ \Levy subordinators, which are considered instead in \citet{brockwell_CARMA_estimation}. 
	\end{remark}

	\section{Continuous observations: explicit maximum likelihood} \label{sec:MCAR_cont}
	In order to develop a likelihood theory for MCAR processes we first need to understand the probability measures these processes induce on the path space and how such measures are related to each other via a likelihood function. Given an observation of an MCAR process $\mathbb{Y}_{[0,t]} := \{\mathbf{Y}_s, \ s\in[0,t]\}$ one can then estimate the drift coefficients $\mathbf{A}^*\in (\mathcal{M}(\mathbb{R}^d))^p$ by the set of parameters $\hat{\mathbf{A}}(\mathbb{Y}_{[0,t]}) \in(\mathcal{M}(\mathbb{R}^d))^p$ corresponding to the measure which makes the observation ``most likely'', i.e.\ which maximize the likelihood. 
	
	Fixing an initial condition for the state-space process $\mathbf{x}_0\in\mathbb{R}^{pd}$ and a set of coefficient matrices $\mathbf{A}\in(\mathcal{M}(\mathbb{R}^d))^p$, by applying the definition of an MCAR process and Proposition \ref{prop:right_diff} we note there is a one-to-one relationship between the MCAR probability measure $\mu_{\mathbf{A}, \mathbf{x}_0}$ on the space of $\mathbb{R}^d$-valued paths and the state-space probability measure $\mathbb{P}_{\mathbf{A}, \mathbf{x}_0}$ on the space of \cadlag $\mathbb{R}^{pd}$-valued paths. Let $\mathfrak{C}^{p-1}([0, \infty);\mathbb{R}^d)$ denote the set of $(p-1)$-times right differentiable \cadlag functions so that the support of $\mu_{\mathbf{A}, \mathbf{x}_0}$ is contained in $\mathfrak{C}^{p-1}([0, \infty);\mathbb{R}^d) \subseteq D([0, \infty);\mathbb{R}^d)$. Then
	\[ \mathbb{P}_{\mathbf{A}, \mathbf{x}_0} = f_\star \mu_{\mathbf{A}, \mathbf{x}_0}\ \mathrm{and}\ \mu_{\mathbf{A}, \mathbf{x}_0} = g_\star \mathbb{P}_{\mathbf{A}, \mathbf{x}_0}, \]
	where $f:\mathfrak{C}^{p-1}([0, \infty);\mathbb{R}^d)\rightarrow D([0, \infty);\mathbb{R}^{pd})$ maps
	\[ \{\mathbf{y}_t,\ t\geq 0\} \mapsto \{(\mathbf{y}^{\mathrm{T}}_t, D\mathbf{y}^{\mathrm{T}}_t,\ldots, D^{p-1}\mathbf{y}^{\mathrm{T}}_t)^{\mathrm{T}},\ t\geq 0\},\]
	and $g:D([0, \infty);\mathbb{R}^{pd})\rightarrow D([0, \infty);\mathbb{R}^d)$ maps 
	\[\{(\mathbf{x}^{(1),\mathrm{T}}_t, \mathbf{x}^{(2),\mathrm{T}}_t,\ldots, \mathbf{x}^{(p),\mathrm{T}}_t)^{\mathrm{T}},\ t\geq 0\} \mapsto \{\mathbf{x}^{1}_t,\ t\geq 0\}.\]
	In view of this correspondence, we can work directly with the probability measures of the state-space representation of the MCAR process, i.e.\ $\mathbb{P}_{\mathbf{A}, \mathbf{x}_0}$. These turn out to be easier to analyze since they are solution-measures of the $\mathbb{R}^{pd}$-valued \Levydriven OU SDE \eqref{eqn:SDE_def}. The main issues in the analysis arise from the degeneracy of the driving \Levy noise $\mathcal{E}\mathbb{L} = \{\mathcal{E}\mathbf{L}_t, \ t\geq 0\}$.
	
	\subsection{Deriving the likelihood}   
	
	\sloppy For a parameter $\mathbf{A} \in (\mathcal{M}(\mathbb{R}^{d}))^p$ consider the $pd$-dimensional state-space representation $\mathbb{X}_{\mathbf{A},\mathbf{x}_0} = \{\mathbf{X}_{\mathbf{A},\mathbf{x}_0,t},\ t\geq 0\}$ of a MCAR($p$) process driven by a \Levy process $\mathbb{L}$ with characteristics $(\mathbf{b}, \Sigma, F)$, parameter $\mathbf{A}$ and deterministic initial condition $\xi = \mathbf{x}_0$ on an arbitrary stochastic basis $(\Omega', \mathcal{F}', \{\mathcal{F}'_t, \ t\geq 0\}, \mathbb{P}')$. This satisfies the stochastic differential Equation \eqref{eqn:SDE_def} which, using the \LevyIto representation of $\mathbb{L}$, we can write as
	\begin{equation} \label{eqn:SDE}
		d \mathbf{X}_t = (\mathcal{A}_{\mathbf{A}} \mathbf{X}_t + \mathcal{E} b) dt + \mathcal{E} \Sigma^{1/2} d \mathbf{W}_t + \int_{\|z\| \leq 1} \mathcal{E} z \{\mu(dt, dz) - F(dz) d t\} + \int_{\|z\| > 1} \mathcal{E} z \mu(d t, dz),
	\end{equation}
	for a $d$-dimensional standard Wiener process $\mathbb{W} = \{\mathbf{W}_t,\ t\geq 0\}$ and a Poisson random measure $\mu(dt, dz)$ on $\mathbb{R}_+ \times \mathbb{R}^d\setminus \{0\}$ with compensator $F(dz) dt$. It is easy to check that the coefficients of the SDE satisfy local Lipschitz and linear growth conditions. We can thus combine \citet[Theorem~II.2.32~and~Theorem~II.2.33]{Jacod_Shiryaev_1987} to deduce that $\forall \mathbf{A} \in (\mathcal{M}_d(\mathbb{R}))^p$ there exists a unique solution-measure of the SDE \eqref{eqn:SDE} on the canonical state-space $(\Omega = D([0,\infty), \mathbb{R}^{pd}), \mathcal{F}, \{\mathcal{F}_t, \ t\geq 0\})$ for any initial condition $\mathbf{x}_0 \in \mathbb{R}^{pd}$. We denote this measure by $\mathbb{P}_{\mathbf{A}, \mathbf{x}_0}$. Under $\mathbb{P}_{\mathbf{A}, \mathbf{x}_0}$ the canonical process $\mathbb{X} = \{\mathbf{X}_t,\ t\geq 0\}$ defined by $\mathbf{X}_t(\omega) = \omega(t), \ \forall t\geq 0$ is a solution of SDE \eqref{eqn:SDE}. 
	
	\begin{remark}
		We note that $c_t :=  \mathcal{E} \Sigma \mathcal{E}^\mathrm{T}$ is not strictly positive definite and thus \citet[Condition~C]{Sorensen_1991} (which is the same as the hypothesis of \citet[Theorem~ II.2.34]{Jacod_Shiryaev_1987}) does not hold. This condition is used by \citet{Sorensen_1991} to show uniqueness of the solution-measure to SDEs for more general diffusions with jumps. 
	\end{remark}
	
	In the analysis of measures induced by general semimartingales, such as $\mathbb{P}_{\mathbf{A}, \mathbf{x}_0}$, the characteristics \citep[Definition~II.2.6]{Jacod_Shiryaev_1987} provide a valuable tool. It is a standard result \citep[Theorem~III.2.26]{Jacod_Shiryaev_1987} that the set of all solution-measures to SDE \eqref{eqn:SDE} with initial condition $\mathbf{X}_0 = \mathbf{x}_0$ is the set of solutions to the martingale problem on the canonical space $(\Omega, \mathcal{F}, \{\mathcal{F}_t, \ t\geq 0\})$ with deterministic initial condition $\eta = \delta_{\mathbf{x}_0}$ and semimartingale characteristics $(\mathbf{B}_{\mathbf{A}}, C, \nu)$ where for $\omega\in\Omega$, $t\geq 0$
	\begin{equation} \label{eqn:loc_char_P}
		\begin{gathered}
			\mathbf{B}_{\mathbf{A}, t}(\omega) = \int_0^t (\mathcal{A}_{\mathbf{A}} \omega(s) + \mathcal{E} b) ds = \mathcal{A}_{\mathbf{A}}\int_0^t \omega(s) ds + \mathcal{E} bt,
			\\
			C_t(\omega) = \int_0^t  \mathcal{E} \Sigma \mathcal{E}^\mathrm{T} ds = \mathcal{E} \Sigma \mathcal{E}^\mathrm{T} t, \quad
			\nu(\omega; dt, dx) = dt \int_{\mathbb{R}^d} \mathds{1}_{\{dx\}}(\mathcal{E} z) F(dz) = dt\, \delta_0 (dx_{-p})\, F(dx_{p}),
		\end{gathered}
	\end{equation}
	where we write $x = (x^\mathrm{T}_1, \ldots, x^\mathrm{T}_p)^\mathrm{T}\in\mathbb{R}^{pd}$ and $x_{-p} =(x_{1}^\mathrm{T}, \ldots, x^\mathrm{T}_{p-1})^\mathrm{T}$, for $x_1, \ldots, x_p \in \mathbb{R}^d$. Under $\mathbb{P}_{\mathbf{A}, \mathbf{x}_0}$ the canonical process $\mathbb{X}$ is a diffusion with jumps with local characteristics $(\mathbf{B}_\mathbf{A}, C, \nu)$. 
	
	\begin{remark}
		See \citet[Definition~II.2.4~and~Definition~II.2.24]{Jacod_Shiryaev_1987} for the definitions of solutions to martingale problems and solution-measures. Two remarks:
		\begin{itemize}
			\item Note that a solution-measure for an SDE as \eqref{eqn:SDE} defined on a general stochastic basis $(\Omega', \mathcal{F}', \{\mathcal{F}'_t, \ t\geq 0\}, \mathbb{P}')$ is a probability measure on the canonical space $(\Omega, \mathcal{F}, \{\mathcal{F}_t, \ t\geq 0\})$. As we vary $\mathbf{A} \in (\mathcal{M}_d(\mathbb{R}))^p$ in \eqref{eqn:SDE} the original stochastic basis $(\Omega', \mathcal{F}', \{\mathcal{F}'_t, \ t\geq 0\}, \mathbb{P}')$ does not change but the solution measure $\mathbb{P}_{\mathbf{A}, \mathbf{x}_0}$ on $(\Omega, \mathcal{F}, \{\mathcal{F}_t, \ t\geq 0\})$ does indeed.
			\item Note that solution-measures and solutions to martingale problems depend on the initial condition, i.e.\ the initial distribution $\eta(\cdot) = \mathbb{P}'(\mathbf{X}_0\in \cdot)$. In the case where this is non-deterministic the canonical space needs to be enlarged so that $\mathcal{F}_0$ contains the $\sigma$-algebra generated by $\mathbf{X}_0$. In the following though, we will only need to consider deterministic initial conditions $\mathbf{x}_0\in\mathbb{R}^{pd}$.
		\end{itemize} 
	\end{remark}
	Under $\mathbb{P}_{\mathbf{A}, \mathbf{x}_0}$, $\mathbb{X}^c_{\mathbf{A}, \mathbf{x}_0} = \{\mathbf{X}^c_{\mathbf{A},\mathbf{x}_0,t},\ t\geq 0\}$ defined by
	\begin{equation} \label{eqn:cont_mart_part}
		\mathbf{X}_{\mathbf{A}, \mathbf{x}_0, t}^c = \mathbf{X}_t-\mathbf{x}_0- \sum_{s\leq t} \Delta \mathbf{X}_s \mathds{1}_{\{\|\Delta \mathbf{X}_s\|\geq 1\}} - \mathbf{B}_{\mathbf{A},t}
		- \int_0^t\int_{\|x\|\leq 1} x(\mu_\mathbb{X} - \nu)(dt, dx), \ t\geq0,
	\end{equation}
	is a continuous local martingale, where $\mu$ is the jump measure over $\mathbb{R}_{+} \times \mathbb{R}^{pd}\setminus \{0\}$ associated with the canonical process $\mathbb{X}$, i.e.\
	\begin{equation} \label{eqn:jump_meas}
		\mu_\mathbb{X}(\omega; dt, dx) = \sum_{s\geq 0:\Delta \mathbf{X}_s(\omega) \neq 0} \delta_{(s, \Delta \mathbf{X}_s(\omega))}(dt, dx), 
	\end{equation}
	and $\Delta \mathbb{X} = \{\Delta \mathbf{X}_t,\ t\geq 0\}$ is the jump process of $\mathbb{X}$, i.e.\ $\Delta \mathbf{X}_t =  \mathbf{X}_t - \mathbf{X}_{t-}$. Note that by construction of the state-space representation of the MCAR($p$) process, jumps will only occur in the last $d$ entries of $\mathbb{X}$. Let \[\mathbf{A}^{(0)} = (0_{d\times d}, \ldots, 0_{d\times d}) \in (\mathcal{M}_d(\mathbb{R}))^p,\] and note that the jump measure $\nu$ does not depend on $\mathbf{A} \in (\mathcal{M}_d(\mathbb{R}))^p$, i.e.\ $Y_{\mathbf{A}}(t, x, y) = 1$ in \citet{Sorensen_1991} and \citet{Jacod_Memin_1976}. 
	\begin{proposition} \label{prop:exist_likelihood}
		Assume Assumption \ref{ass:levy_process} holds. Fix $t>0$ and let $\mathbb{P}^t_{\mathbf{A}, \mathbf{x}_0}$ and $\mathbb{P}^t_{\mathbf{A}^{(0)}, \mathbf{x}_0}$ denote the restrictions of $\mathbb{P}_{\mathbf{A}, \mathbf{x}_0}$ and $\mathbb{P}_{\mathbf{A}^{(0)}, \mathbf{x}_0}$ to $\mathcal{F}_t$ respectively. Then $\forall \mathbf{A} \in (\mathcal{M}_d(\mathbb{R}))^p$
		\[\mathbb{P}^t_{\mathbf{A}, \mathbf{x}_0} \sim \mathbb{P}^t_{\mathbf{A}^{(0)}, \mathbf{x}_0}, \]
		and $\mathbb{P}^t_{\mathbf{A}^{(0)}, \mathbf{x}_0}$-almost surely
		\begin{equation} \label{eqn:likelihood_MCAR}
			\frac{d \mathbb{P}^t_{\mathbf{A}, \mathbf{x}_0}}{d\mathbb{P}^t_{\mathbf{A}^{(0)}, \mathbf{x}_0}}  = \mathbb{E}^{\mathbb{P}_{\mathbf{A}^{(0)}, \mathbf{x}_0}}\left[ \frac{d \mathbb{P}_{\mathbf{A}, \mathbf{x}_0}}{d\mathbb{P}_{\mathbf{A}^{(0)}, \mathbf{x}_0}} \Big| {\mathcal{F}_t}\right] = \exp\Big\{ - \int_0^t \mathbf{X}^\mathrm{T}_{s-} \tilde{\mathcal{A}}_\mathbf{A}^\mathrm{T} \tilde{\Sigma}^{-1}\ d \mathbf{X}_{\mathbf{A}^{(0)}, s}^c - \frac{1}{2} \int_0^t \mathbf{X}^\mathrm{T}_{s}  \tilde{\mathcal{A}}_\mathbf{A}^\mathrm{T} \tilde{\Sigma}^{-1} \tilde{\mathcal{A}}_\mathbf{A} \mathbf{X}_s ds  \Big\}, 
		\end{equation}
		where 
		\begin{equation} \label{eqn:A_tilde} \tilde{\mathcal{A}}_\mathbf{A} = 
			\begin{pmatrix}
				0_{d \times d} & 0_{d \times d} & \cdots & 0_{d \times d} \\
				\vdots & \vdots & \ddots & \vdots \\
				0_{d \times d} & 0_{d \times d} & \cdots & 0_{d \times d} \\
				- A_p & -A_{p-1} & \cdots & -A_1 \\
			\end{pmatrix},
		\end{equation}
		and
		\begin{equation} \label{eqn:Sigma_tilde}
			\tilde{\Sigma}^{-1} := \mathcal{E} \Sigma^{-1} \mathcal{E}^\mathrm{T} = 
			\begin{pmatrix}
				0_{d \times d} & \cdots & 0_{d \times d} & 0_{d \times d} \\
				\vdots & \ddots & \vdots & \vdots \\
				0_{d \times d} & \cdots & 0_{d \times d} & 0_{d \times d} \\
				0_{d \times d} &  \cdots & 0_{d \times d} & \Sigma^{-1} \\
			\end{pmatrix}.
		\end{equation}
	\end{proposition}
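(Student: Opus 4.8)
The plan is to read the density in \eqref{eqn:likelihood_MCAR} as a Girsanov-type likelihood between two solution-measures of the \emph{same} SDE \eqref{eqn:SDE} that differ only in their drift. By \eqref{eqn:loc_char_P} the measures $\mathbb{P}_{\mathbf{A},\mathbf{x}_0}$ and $\mathbb{P}_{\mathbf{A}^{(0)},\mathbf{x}_0}$ carry the identical diffusion characteristic $C_t=\mathcal{E}\Sigma\mathcal{E}^\mathrm{T}t$ and the identical jump characteristic $\nu$ (neither depends on $\mathbf{A}$, so the jump-density factor is $Y_\mathbf{A}\equiv 1$), and they differ only in the first characteristic, through $\mathbf{B}_{\mathbf{A},t}-\mathbf{B}_{\mathbf{A}^{(0)},t}=\int_0^t(\mathcal{A}_{\mathbf{A}}-\mathcal{A}_{\mathbf{A}^{(0)}})\mathbf{X}_s\,ds=\int_0^t\tilde{\mathcal{A}}_\mathbf{A}\mathbf{X}_s\,ds$ with $\tilde{\mathcal{A}}_\mathbf{A}$ as in \eqref{eqn:A_tilde}. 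I would therefore invoke the absolute-continuity theorem for semimartingales with prescribed characteristics, in the form of \citet[Theorem~III.3.24]{Jacod_Shiryaev_1987} and \citet{Jacod_Memin_1976}, \citet{Sorensen_1991}: because only the drift changes and $Y_\mathbf{A}\equiv 1$, the likelihood collapses to the \emph{continuous} Girsanov exponential, with no multiplicative contribution from the jumps.

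The main obstacle is the degeneracy of $C=\mathcal{E}\Sigma\mathcal{E}^\mathrm{T}$, already flagged in the remark following \eqref{eqn:SDE}: \citet[Condition~C]{Sorensen_1991} fails since $C$ is not strictly positive definite, so it cannot be inverted outright. The crucial structural step is to verify that the drift change is nonetheless absolutely continuous with respect to $C$, i.e.\ that $\tilde{\mathcal{A}}_\mathbf{A}\mathbf{X}_s$ lies in the range of $C$. I would exhibit the integrand $\beta_s:=\tilde{\Sigma}^{-1}\tilde{\mathcal{A}}_\mathbf{A}\mathbf{X}_s$, with $\tilde{\Sigma}^{-1}=\mathcal{E}\Sigma^{-1}\mathcal{E}^\mathrm{T}$ from \eqref{eqn:Sigma_tilde}, and check the identity $C\beta_s=\tilde{\mathcal{A}}_\mathbf{A}\mathbf{X}_s$. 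This is exactly where the block structure does the work: by \eqref{eqn:A_tilde} the vector $\tilde{\mathcal{A}}_\mathbf{A}\mathbf{X}_s$ is supported on the last $d$ coordinates, i.e.\ the range of $\mathcal{E}$, which is the subspace on which $C$ acts invertibly, and the elementary identities $\mathcal{E}^\mathrm{T}\mathcal{E}=I_{d\times d}$ and $\Sigma\Sigma^{-1}=I_{d\times d}$ give both $C\beta_s=\tilde{\mathcal{A}}_\mathbf{A}\mathbf{X}_s$ and the simplification $\tilde{\Sigma}^{-1}C\tilde{\Sigma}^{-1}=\tilde{\Sigma}^{-1}$, which will collapse the quadratic term.

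Having identified $\beta$, the density restricted to $\mathcal{F}_t$ is the Dol\'eans-Dade exponential of $\int_0^\cdot\beta_s^\mathrm{T}\,d\mathbf{X}^c_{\mathbf{A}^{(0)},s}$, where $\mathbf{X}^c_{\mathbf{A}^{(0)}}$ is the continuous local martingale part \eqref{eqn:cont_mart_part} under the reference measure $\mathbb{P}_{\mathbf{A}^{(0)},\mathbf{x}_0}$. As this integrand is taken against a continuous martingale, the stochastic exponential is explicit, $\exp\{\int_0^t\beta_s^\mathrm{T}\,d\mathbf{X}^c_{\mathbf{A}^{(0)},s}-\tfrac12\int_0^t\beta_s^\mathrm{T}C\beta_s\,ds\}$. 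Substituting $\beta_s=\tilde{\Sigma}^{-1}\tilde{\mathcal{A}}_\mathbf{A}\mathbf{X}_s$ and using $\tilde{\Sigma}^{-1}C\tilde{\Sigma}^{-1}=\tilde{\Sigma}^{-1}$ reproduces the stochastic integral $\int_0^t\mathbf{X}_{s-}^\mathrm{T}\tilde{\mathcal{A}}_\mathbf{A}^\mathrm{T}\tilde{\Sigma}^{-1}\,d\mathbf{X}^c_{\mathbf{A}^{(0)},s}$ and the quadratic form $\tfrac12\int_0^t\mathbf{X}_s^\mathrm{T}\tilde{\mathcal{A}}_\mathbf{A}^\mathrm{T}\tilde{\Sigma}^{-1}\tilde{\mathcal{A}}_\mathbf{A}\mathbf{X}_s\,ds$ appearing in \eqref{eqn:likelihood_MCAR}, the remaining signs being routine bookkeeping inherited from the $-A_i$ entries of $\tilde{\mathcal{A}}_\mathbf{A}$.

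Finally, to upgrade local absolute continuity to genuine equivalence $\mathbb{P}^t_{\mathbf{A},\mathbf{x}_0}\sim\mathbb{P}^t_{\mathbf{A}^{(0)},\mathbf{x}_0}$ on each $\mathcal{F}_t$, I would argue that the density process is a strictly positive true martingale: strict positivity is immediate from the exponential form (it has no jumps, since $Y_\mathbf{A}\equiv 1$), while the true-martingale property needed for it to integrate to one follows from the square-integrability in Assumption~\ref{ass:levy_process} together with the linear-growth control on $\tilde{\mathcal{A}}_\mathbf{A}\mathbf{X}_s$, via a localization/Novikov-type argument underpinned by the existence and uniqueness of the solution-measure from \citet[Theorems~II.2.32--II.2.33]{Jacod_Shiryaev_1987}. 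Equivalence, rather than one-sided absolute continuity, then follows by symmetry upon swapping the roles of $\mathbf{A}$ and $\mathbf{A}^{(0)}$, as the same range condition holds in both directions. The middle equality in the statement is the standard consistency of a Radon--Nikodym density under restriction to a sub-$\sigma$-algebra, so it requires no separate argument.
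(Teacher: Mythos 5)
You have correctly isolated the structural heart of the argument: because $\tilde{\mathcal{A}}_\mathbf{A}\mathbf{X}_s$ is supported on the last $d$ coordinates, the drift change lies in the range of the degenerate matrix $C=\mathcal{E}\Sigma\mathcal{E}^\mathrm{T}$ and can be written as $\int_0^s dC_u\,\mathbf{Z}_{\mathbf{A},u}$ with $\mathbf{Z}_{\mathbf{A},u}=\tilde{\Sigma}^{-1}\tilde{\mathcal{A}}_\mathbf{A}\mathbf{X}_{u-}$; this is exactly Equation \eqref{eqn:B_change} in the paper, and your identities $C\beta_s=\tilde{\mathcal{A}}_\mathbf{A}\mathbf{X}_s$ and $\tilde{\Sigma}^{-1}C\tilde{\Sigma}^{-1}=\tilde{\Sigma}^{-1}$ are the right computations. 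The identification of the density as the continuous Dol\'eans--Dade exponential with no jump contribution (since $Y_\mathbf{A}\equiv 1$) also matches the paper. However, there are two genuine gaps in how you propose to turn this into a proof.

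First, the theorems of \citet{Jacod_Memin_1976} that deliver equivalence and the explicit density (Theorems 4.2.(c) and 4.5.(b)) require, as a hypothesis, $\tau_n$-uniqueness (in practice, local uniqueness) of the martingale problem for the target characteristics. In \citet{Sorensen_1991} this is dispatched by \citet[Corollary~III.2.41]{Jacod_Shiryaev_1987}, which is available only when $c_t$ is strictly positive definite --- precisely what fails here. The bulk of the paper's proof consists of establishing local uniqueness directly from \citet[Theorem~III.2.40]{Jacod_Shiryaev_1987}: verifying the Markov-type shift relations for the stopped characteristics $(\mathbf{B}_\mathbf{A}^t, C^t,\nu^t)$, showing that the family of solution-measures $\mathbb{Q}_{\mathbf{A},s,x}$ defines a measurable transition kernel via a $\pi$--$\lambda$ argument, and checking pathwise uniqueness of the stopped SDE. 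Your proposal invokes the absolute-continuity machinery without supplying this input, so as written the key theorems cannot be applied. Relatedly, you do not stop the SDE at $t$; the paper works with the stopped characteristics and the measures $\mathbb{Q}_{\mathbf{A},\mathbf{x}_0}$ precisely so that $A^t_{\mathbf{A},\infty}<\infty$ almost surely, which fails for the unstopped problem.

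Second, your route to equivalence via ``a localization/Novikov-type argument'' does not go through under Assumption \ref{ass:levy_process}. Novikov's condition would require $\mathbb{E}\big[\exp\{\tfrac12\int_0^t \mathbf{X}_s^\mathrm{T}\tilde{\mathcal{A}}_\mathbf{A}^\mathrm{T}\tilde{\Sigma}^{-1}\tilde{\mathcal{A}}_\mathbf{A}\mathbf{X}_s\,ds\}\big]<\infty$, i.e.\ exponential moments of the integrated squared state-space process, and these are not implied by square-integrability of the driving \Levy process (indeed they can fail for heavy-but-square-integrable jump distributions). The paper avoids any such condition: \citet[Theorem~4.2.(c)]{Jacod_Memin_1976} yields mutual equivalence from $\tau_n$-uniqueness of \emph{both} martingale problems together with the almost-sure finiteness of $A^t_{\mathbf{A},\infty}$ under \emph{both} $\mathbb{Q}_{\mathbf{A},\mathbf{x}_0}$ and $\mathbb{Q}_{\mathbf{A}^{(0)},\mathbf{x}_0}$, which is obtained from the $L^1$ bound \eqref{eqn:L2_bound} on second moments --- a much weaker requirement than Novikov. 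You should replace the Novikov step by this argument, and supply the local uniqueness verification, to make the proof complete.
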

	\begin{proof}
		We follow a similar proof strategy as that in \citet[Theorem~2.1]{Sorensen_1991}, relying on the results of \citet{Jacod_Memin_1976} and \citet{Jacod_Shiryaev_1987}, but we note that our process does not satisfy \citet[Condition~C]{Sorensen_1991} since $c_t = \tilde{\Sigma} = \mathcal{E} \Sigma \mathcal{E}^\mathrm{T}$ is not strictly positive definite. The outline of the proof is as follows:
		\begin{itemize}
			\item We start by considering the state-space representation of an MCAR process $\mathbb{X}_{\mathbf{A},\mathbf{x}_0}$ satisfying \eqref{eqn:SDE} on an arbitrary stochastic basis $(\Omega', \mathcal{F}', \{\mathcal{F}'_t,\ t\geq 0\}, \mathbb{P}')$ and the associated unique solution-measure $\mathbb{P}_{\mathbf{A}, \mathbf{x}_0}$ on the canonical state-space $(\Omega, \mathcal{F}, \{\mathcal{F}_t, \ t\geq 0\})$, i.e.\ the unique solution to the martingale problem $(\mathbb{X}, (\mathbf{B}_{\mathbf{A}}, C, \nu), \delta_{\mathbf{x}_0})$.
			\item Next, fixing $t>0$, we construct the unique solution $\mathbb{Q}_{\mathbf{A}, \mathbf{x}_0}$ to the martingale problem with stopped characteristics $(\mathbb{X}, (\mathbf{B}^t_{\mathbf{A}}, C^t, \nu^t), \delta_{\mathbf{x}_0})$, i.e.\ the solution measure of SDE \eqref{eqn:SDE} stopped at $t>0$. By construction, $\mathbb{P}^t_{\mathbf{A}, \mathbf{x}_0} = \mathbb{Q}^t_{\mathbf{A}, \mathbf{x}_0}$.
			\item Fixing a reference measure corresponding to the parameter $\mathbf{A}^{(0)}= (0_{d\times d}, \ldots, 0_{d\times_d})$, we then apply \citet[Theorem~4.2.(c)]{Jacod_Memin_1976} to $\mathbb{Q}_{\mathbf{A}, \mathbf{x}_0} \in (\mathbb{X}, (\mathbf{B}^t_{\mathbf{A}}, C^t, \nu^t), \delta_{\mathbf{x}_0})$ and $\mathbb{Q}_{\mathbf{A}^{(0)}, \mathbf{x}_0}\in (\mathbb{X}, (\mathbf{B}^t_{\mathbf{A}^{(0)}}, C^t, \nu^t), \delta_{\mathbf{x}_0})$ to deduce equivalence of the measures. This step crucially relies on the specific structure of the problem as we will be able to write
			\begin{equation} \label{eqn:B_change}
				\mathbf{B}^t_{\mathbf{A}, s} = \mathbf{B}^t_{\mathbf{A}^{(0)}, s} + \int_0^s dC^t_u \mathbf{Z}_{\mathbf{A}, u},
			\end{equation}
			where $\mathbb{Z}_{\mathbf{A}} = \{\mathbf{Z}_{\mathbf{A},s},\ s\geq0\}$ is the predictable process defined by $\mathbf{Z}_{\mathbf{A},s} = \tilde{\Sigma}^{-1}\tilde{\mathcal{A}}_{\mathbf{A}}\mathbf{X}_{s-}$. To apply the theorem we check the assumptions:
			\begin{enumerate}
				\item ``$\tau_n$-uniqueness of the martingale problem $(\mathbb{X}, (\mathbf{B}_\mathbf{A}^t, C^t, \nu^t), \delta_{\mathbf{x}_0})$'': We prove this by showing the stronger local uniqueness property. In \citet{Sorensen_1991}, under the assumption that $c_t$ is strictly positive definite, local uniqueness is deduced by a simple application of \citet[Corollary~III.2.41]{Jacod_Shiryaev_1987}. In our setting though this property does not hold and we need to work a bit harder to prove local uniqueness. We check directly the conditions of the more general result \citet[Theorem~III.2.40]{Jacod_Shiryaev_1987} for the martingale problem $(\mathbb{X}, (\mathbf{B}_\mathbf{A}^t, C^t, \nu^t), \delta_{\mathbf{x}_0})$:
				\begin{enumerate}
					\item \citet[Assumption~III.2.13]{Jacod_Shiryaev_1987} is satisfied because we are working on the canonical space.
					\item \citet[Assumption~III.2.39]{Jacod_Shiryaev_1987} is related to the Markovian structure of our martingale problem.
					\item The solution to the martingale problem defines a transition kernel as the characteristics are shifted in time and the initial condition is shifted in space: this is obtained via an application of the $\pi-\lambda$ theorem.
					\item Uniqueness of the original martingale problem $(\mathbb{X}, (\mathbf{B}_\mathbf{A}^t, C^t, \nu^t), \delta_{\mathbf{x}_0})$ has already been checked.
				\end{enumerate}
				\item Equivalence of initial conditions: We work with fixed initial condition $\delta_{\mathbf{x}_0}$.
				\item The random variable $A^t_{\mathbf{A}, \infty}$, given by the limit as $s\uparrow\infty$ of
				\begin{equation} \label{eqn:A_t}
					A^t_{\mathbf{A},s} = \int_0^s \mathbf{Z}^{\mathrm{T}}_{\mathbf{A},u} dC_u^t\mathbf{Z}_{\mathbf{A},u} = \int_0^{s\wedge t} \mathbf{X}^{\mathrm{T}}_{u} \tilde{\mathcal{A}}^\mathrm{T}_\mathbf{A}\tilde{\Sigma}^{-1}\tilde{\mathcal{A}}_\mathbf{A}\mathbf{X}_{u} du,
				\end{equation}
				is $\mathbb{Q}_{\mathbf{A}, \mathbf{x}_0}$ and $\mathbb{Q}_{\mathbf{A}^{(0)}, \mathbf{x}_0}$ a.s.\ finite: This follows from bounds obtained from the assumption of square integrability of the driving \Levy process and explicit formulas for the second moments of $\mathbb{X}_{\mathbf{A}}$.
			\end{enumerate}
			\item Finally, we apply \citet[Theorem~4.5.(b)]{Jacod_Memin_1976} to deduce the form of the conditional Radon–Nikodym derivative of $\mathbb{Q}_{\mathbf{A}, \mathbf{x}_0}$ w.r.t.\ $\mathbb{Q}_{\mathbf{A}^{(0)}, \mathbf{x}_0}$. We check the assumptions:
			\begin{enumerate}
				\item $\mathbb{Q}_{\mathbf{A}, \mathbf{x}_0}$ satisfies the martingale representation condition as it is the unique solution of the martingale problem $(\mathbb{X}, (\mathbf{B}_\mathbf{A}^t, C^t, \nu^t), \delta_{\mathbf{x}_0})$, this follows from \citet[Theorem~4.4]{Jacod_Memin_1976}.
				\item $\mathbb{A}^t_{\mathbf{A}} = \{A^t_{\mathbf{A}, s},\ s\geq 0\}$ is $\mathbb{Q}_{\mathbf{A}^{(0)}, \mathbf{x}_0}$ a.s.\ continuous and its limit is finite: this simplifies the form of the conditional Radon–Nikodym process as we can drop any dependency on the localizing sequence $(\tau_n)_n$.
			\end{enumerate}
		\end{itemize}		
		For full details, see Appendix \ref{app:proof_prop_exist_likelihood}.
	\end{proof}
	
	\begin{remark}
		We make some important remarks on the proof:
		\begin{itemize}
			\item The proof crucially relies on being able to write Equation \eqref{eqn:B_change}, for which we had to leverage the specific forms of $\mathcal{A}_\mathbf{A}$ and $\tilde{\Sigma}$. Our proof, therefore, does not immediately generalize to OU processes with arbitrary matrix coefficient $\mathcal{A}$ and non-negative definite Gaussian covariance $\tilde{\Sigma}$ of the driving \Levy process. In the particular case where $\tilde{\Sigma}$ is strictly positive definite, though, it is a standard result that the same result holds for any matrix coefficient $\mathcal{A}$.
			\item In the proof it is essential to stop the SDE \eqref{eqn:SDE} at $t>0$ since, if instead we attempt to apply the theorems by \citet{Jacod_Memin_1976} directly to $\mathbb{P}_{\mathbf{A}, \mathbf{x}_0}$ and $\mathbb{P}_{\mathbf{A}^{(0)}, \mathbf{x}_0}$, then we cannot guarantee that $A_{\mathbf{A},\infty}<\infty$, $\mathbb{P}_{\mathbf{A}, \mathbf{x}_0}$- and $\mathbb{P}_{\mathbf{A}^{(0)}, \mathbf{x}_0}$-almost surely, where for $t\geq0$
			\[A_{\mathbf{A}, s} := \int_0^s  \mathbf{Z}^\mathrm{T}_{\mathbf{A},u} dC_u \mathbf{Z}_{\mathbf{A}, u} = \int_0^s \mathbf{X}_u ^\mathrm{T} \tilde{\mathcal{A}}_\mathbf{A}^\mathrm{T} \tilde{\Sigma}^{-1} \tilde{\mathcal{A}}_\mathbf{A} \mathbf{X}_u du.\]
		\end{itemize}
	\end{remark}
	
		
		\subsection{Maximum likelihood estimation}
		
		Note that Proposition \ref{prop:exist_likelihood} yields the likelihood of parameter $\mathbf{A}$ with respect to reference parameter $\mathbf{A}^{(0)}$ as the Radon-Nikodym derivative of two measures on the canonical space of $\mathbb{R}^{pd}$-valued paths $(\Omega, \mathcal{F}, \{\mathcal{F}_t,\ t\geq 0\})$ restricted to $\mathcal{F}_t$, that is an $\mathcal{F}_t$-measurable random variable. This is thus a mapping from paths in $D([0,t];\mathbb{R}^{pd})$ to $\mathbb{R}$, i.e.\ a function of $\mathbb{X}_{[0,t]}$. 
		Setting $\mathbb{Y}_{[0,t]} = \{\mathbf{Y}_s,\ s\in [0,t]\} = \mathbb{X}^{(1)}_{[0,t]}$, understood as an observation of the MCAR process, in view of Proposition \ref{prop:right_diff} we can write the likelihood \eqref{eqn:likelihood_MCAR} as a function of $\mathbb{Y}_{[0,t]}$ (and its right derivatives). This is formalized by the following lemma in which we define the vectorization operator for elements $\mathbf{V}\in(\mathcal{M}(\mathbb{R}^d))^p$ by $\mathrm{vec}: (\mathcal{M}(\mathbb{R}^d))^p \mapsto \mathbb{R}^{pd^2}$ as
		\[\mathbf{V} = (V_1, \ldots, V_p) \mapsto \mathrm{vec}(\mathbf{V}) := (V^{(1,1)}_1, \ldots, V^{(d, 1)}_1, \ldots, V^{(1, d)}_1, \ldots, V^{(d,d)}_1, V^{(1,1)}_2, \ldots, V^{(d,d)}_p)^\mathrm{T}.\]
		\begin{lemma} \label{lemma:likelihood_MCAR}
			Assume Assumption \ref{ass:levy_process} holds. Let $\mathbb{Y}_{[0,t]} = \{\mathbf{Y}_s, \ s\in[0,t]\}$ be the observation of a MCAR($p$) process. Suppose the process starts in $\mathbf{y}_0 \in\mathbb{R}^d$ with initial derivatives $D^1\mathbf{y}_0, \ldots, D^{p-1}\mathbf{y}_0\in\mathbb{R}^d$, i.e.\ $\mathbf{x}_0 = (\mathbf{y}_0^\mathrm{T}, \ldots, D^{p-1}\mathbf{y}_0^\mathrm{T})^\mathrm{T}$. Let 
			\[D^{p-1}\mathbb{Y}^c_{\mathbf{A}^{(0)}, [0,t]}= \{D^{p-1}\mathbf{Y}^c_{\mathbf{A}^{(0)}, s} \ s\in[0,t]\}\]
			be the last $d$ entries of $\mathbb{X}^{c}_{\mathbf{A}^{(0)},\mathbf{x}_0}$, the continuous martingale part of the state-space representation with parameter $\mathbf{A}^{(0)} = (0_{d\times d}, \ldots, 0_{d\times d})$ as defined in \eqref{eqn:cont_mart_part} -- note we drop dependence on $\mathbf{x}_0$ as $D^{p-1}\mathbb{Y}^c_{\mathbf{A}^{(0)}, [0,t]}$ will only appear as an integrator and thus the initial condition becomes superfluous. Then the likelihood function of the MCAR($p$) parameters  $\mathbf{A} \in (\mathcal{M}(\mathbb{R}^{d}))^p$ given the observation $\mathbb{Y}_{[0,t]}$ is 
			\begin{equation} \label{eqn:likelihood}
				\mathcal{L}(\mathbf{A}; \mathbb{Y}_{[0,t]}) := \mathbb{E}^{\mathbb{P}_{\mathbf{A}^{(0)}, \mathbf{x}_0}}\left[ \frac{d \mathbb{P}_{\mathbf{A}, \mathbf{x}_0}}{d\mathbb{P}_{\mathbf{A}^{(0)}, \mathbf{x}_0}} \Big| {\mathcal{F}_t}\right] = \exp\left\{\mathrm{vec} (\mathbf{A})^\mathrm{T} \mathbf{H}_t - \frac{1}{2} \mathrm{vec}(\mathbf{A})^\mathrm{T} [\mathbf{H}]_t \mathrm{vec}(\mathbf{A})\right\},
			\end{equation}
			where
			\begin{equation} \label{eqn:H}
				\mathbf{H}_t = -
				\begin{pmatrix}
					\int_0^t D^{p-1}Y^{(1)}_{s-} \Sigma^{-1} dD^{p-1}\mathbf{Y}^c_{\mathbf{A}^{(0)}, s} \\
					\vdots \\
					\int_0^t  D^{p-1}Y^{(d)}_{s-} \Sigma^{-1} dD^{p-1}\mathbf{Y}^c_{\mathbf{A}^{(0)}, s} \\
					\vdots \\
					\int_0^t Y^{(1)}_{s-} \Sigma^{-1} dD^{p-1}\mathbf{Y}^c_{\mathbf{A}^{(0)}, s} \\
					\vdots \\
					\int_0^t Y^{(d)}_{s-} \Sigma^{-1} dD^{p-1}\mathbf{Y}^c_{\mathbf{A}^{(0)}, s}
					\\
				\end{pmatrix},
			\end{equation}
			with quadratic covariation matrix
			\begin{equation} \label{eqn:[H]}
				[\mathbf{H}]_t =
				\begin{pmatrix}
					\int_0^t D^{p-1}Y^{(1)}_{s-} \Sigma^{-1} D^{p-1}Y^{(1)}_{s-}\ ds & \cdots & \int_0^t D^{p-1}Y^{(1)}_{s-} \Sigma^{-1} Y^{(d)}_{s-}\ ds \\
					\vdots & \ddots & \vdots \\
					\int_0^t Y^{(d)}_{s-} \Sigma^{-1} D^{p-1}Y^{(1)}_{s-}\ ds & \cdots & \int_0^t Y^{(d)}_{s-} \Sigma^{-1} Y^{(d)}_{s-}\ ds \\
				\end{pmatrix}.
			\end{equation}
		\end{lemma}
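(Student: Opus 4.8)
The plan is to start from Proposition~\ref{prop:exist_likelihood}, which already identifies $\mathcal{L}(\mathbf{A};\mathbb{Y}_{[0,t]})$ with the explicit Radon--Nikodym derivative \eqref{eqn:likelihood_MCAR}, a functional of the \emph{state-space} path $\mathbb{X}_{[0,t]}$. The only genuinely new content is to re-express this functional in terms of the observed path $\mathbb{Y}_{[0,t]}$ and its right derivatives, and to recast the two integrals as the quadratic exponent $\mathrm{vec}(\mathbf{A})^{\mathrm{T}}\mathbf{H}_t - \tfrac12\mathrm{vec}(\mathbf{A})^{\mathrm{T}}[\mathbf{H}]_t\mathrm{vec}(\mathbf{A})$. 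The bridge is Proposition~\ref{prop:right_diff}: almost surely $\mathbf{X}^{(j+1)}_s = D^{j}\mathbf{Y}_s$ for $j=0,\dots,p-1$, so $\mathbb{X}_{[0,t]}$ is a measurable functional of $\mathbb{Y}_{[0,t]}$ and everything below is an algebraic rewriting.

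First I would exploit the sparsity of $\tilde{\mathcal{A}}_{\mathbf{A}}$ and $\tilde{\Sigma}^{-1}$ in \eqref{eqn:A_tilde}--\eqref{eqn:Sigma_tilde}. Since $\tilde{\mathcal{A}}_{\mathbf{A}}$ has only its last block-row $(-A_p,\dots,-A_1)$ nonzero, writing $\mathbf{X}_s=(\mathbf{X}^{(1),\mathrm T}_s,\dots,\mathbf{X}^{(p),\mathrm T}_s)^{\mathrm T}$ and substituting $\mathbf{X}^{(j)}_s=D^{j-1}\mathbf{Y}_s$ gives that $\tilde{\mathcal{A}}_{\mathbf{A}}\mathbf{X}_s$ has all blocks zero except the last, which equals $-\sum_{j=1}^{p}A_j D^{p-j}\mathbf{Y}_s$. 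Because $\tilde{\Sigma}^{-1}=\mathcal{E}\Sigma^{-1}\mathcal{E}^{\mathrm T}$ retains only the bottom-right block $\Sigma^{-1}$, every inner product collapses onto the last $d$ coordinates; in particular the integrator $\tilde{\Sigma}^{-1}\,d\mathbf{X}^c_{\mathbf{A}^{(0)},s}$ only sees the last block of the continuous martingale part \eqref{eqn:cont_mart_part}, namely $\Sigma^{-1}\,dD^{p-1}\mathbf{Y}^c_{\mathbf{A}^{(0)},s}$. The two integrals in \eqref{eqn:likelihood_MCAR} therefore reduce, up to sign, to the $d$-dimensional expressions $\int_0^t\bigl(\sum_j A_j D^{p-j}\mathbf{Y}_{s-}\bigr)^{\mathrm T}\Sigma^{-1}\,dD^{p-1}\mathbf{Y}^c_{\mathbf{A}^{(0)},s}$ and $\int_0^t\bigl(\sum_j A_j D^{p-j}\mathbf{Y}_{s}\bigr)^{\mathrm T}\Sigma^{-1}\bigl(\sum_k A_k D^{p-k}\mathbf{Y}_{s}\bigr)\,ds$, where the product of the two sign changes from $\tilde{\mathcal{A}}_{\mathbf{A}}$ makes the quadratic term positive.

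Next I would linearize in the coefficients using the vectorization identity $A\mathbf{u}=(\mathbf{u}^{\mathrm T}\otimes I_{d\times d})\,\mathrm{vec}(A)$, with the column-stacking $\mathrm{vec}$ ordered exactly as declared before the statement. Applied to the linear integral this yields $\mathrm{vec}(\mathbf{A})^{\mathrm T}\mathbf{H}_t$ with the $j$-th block of $\mathbf{H}_t$ equal to $\int_0^t D^{p-j}\mathbf{Y}_{s-}\otimes\Sigma^{-1}\,dD^{p-1}\mathbf{Y}^c_{\mathbf{A}^{(0)},s}$; reading this block row by row reproduces \eqref{eqn:H}. Inserting the same identity into both factors of the quadratic integral gives $\mathrm{vec}(\mathbf{A})^{\mathrm T}[\mathbf{H}]_t\,\mathrm{vec}(\mathbf{A})$, whose $(j,k)$ block has $d\times d$ sub-blocks $\int_0^t D^{p-j}Y^{(m)}_{s-}\Sigma^{-1}D^{p-k}Y^{(r)}_{s-}\,ds$, i.e.\ \eqref{eqn:[H]}. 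As a clean cross-check, and to justify the notation $[\mathbf{H}]_t$, I would observe that $\mathbf{H}_t$ is a vector stochastic integral against the continuous martingale $D^{p-1}\mathbf{Y}^c_{\mathbf{A}^{(0)}}$ whose predictable quadratic variation is $\langle D^{p-1}\mathbf{Y}^c_{\mathbf{A}^{(0)}}\rangle_s=\Sigma s$ (the last block of $C_s=\mathcal{E}\Sigma\mathcal{E}^{\mathrm T}s$); then $d[\mathbf{H}]_s$ carries the central factor $\Sigma^{-1}\Sigma\,\Sigma^{-1}=\Sigma^{-1}$, recovering \eqref{eqn:[H]} directly.

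The main obstacle I anticipate is bookkeeping rather than analysis: aligning the nonstandard index ordering of the declared $\mathrm{vec}$ operator with the Kronecker identities, and tracking the signs consistently through $\tilde{\mathcal{A}}_{\mathbf{A}}$, through the minus in front of each integral in \eqref{eqn:likelihood_MCAR}, and through the definition of $\mathbf{H}_t$. A second, smaller point to handle with care is the replacement of the left limits $\mathbf{X}_{s-}$ by the right-derivative process: by Proposition~\ref{prop:right_diff} the coordinates $D^{j}\mathbf{Y}$ with $j<p-1$ are continuous, so only $D^{p-1}\mathbf{Y}$ (the last block, which also carries the martingale integrator) jumps, and one must check that these left limits enter the predictable integrands exactly as written.
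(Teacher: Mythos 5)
Your proposal is correct and follows essentially the same route as the paper: expand the exponent of \eqref{eqn:likelihood_MCAR} using the sparsity of $\tilde{\mathcal{A}}_{\mathbf{A}}$ and $\tilde{\Sigma}^{-1}$ so that everything collapses onto the last $d$ coordinates, regroup the result in the coefficients $A_k^{(i,j)}$ (your Kronecker/vectorization identities are just a systematic way of doing the paper's ``grouping''), and identify $[\mathbf{H}]_t$ as the quadratic covariation of $\mathbb{H}$ via $[D^{p-1}\mathbf{Y}^c_{\mathbf{A}^{(0)}}]_t=\Sigma t$ and the covariation rule for stochastic integrals, exactly as the paper does with \citet[Theorem~II.29]{Protter_1990}. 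Your closing remarks on sign bookkeeping and on the left limits entering only through the jumping block $D^{p-1}\mathbb{Y}$ are apt and do not change the argument.
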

		\begin{proof}
			The form of the likelihood follows by expanding \eqref{eqn:likelihood_MCAR} and grouping in terms of $A^{(i, j)}_k$ for $k=1, \ldots, p$ and $i,j=1,\ldots,d$. To determine that $[\mathbf{H}]_t$ is indeed the quadratic covariation of $\mathbb{H}=\{\mathbf{H}_t,\ t\geq0\}$ we use the fact that $[D^{p-1}\mathbf{Y}^c_{\mathbf{A}^{(0)}}]_t = \Sigma t$ and the multidimensional version of \citet[Theorem~II.29]{Protter_1990}, i.e.\ if $\mathbb{X} = \{\mathbf{X}_t,\ t\geq0\}, \mathbb{Z} = \{\mathbf{Z}_t,\ t\geq0\}$ are semimartingales and $\{\mathbf{G}_t,\ t\geq 0\},\{\mathbf{K}_t,\ t\geq 0\}$ are \caglad adapted processes then
			\[ \left[\int_0^\cdot \mathbf{G}_s d\mathbf{X}_s, \int_0^\cdot \mathbf{K}_s d\mathbf{Z}_s \right]_t = \int_0^t  \mathbf{G}_s d[\mathbf{X}, \mathbf{Z}]_s \mathbf{K}^\mathrm{T}_s.\] 
		\end{proof}
		\begin{remark}
			Note that, given the observation $\mathbb{Y}_{[0,t]}$, the process $D^{p-1}\mathbb{Y}^c_{\mathbf{A}^{(0)}, [0,t]}$ can be expressed as
			\begin{align}\label{eqn:cont_mart_part_Dp-1Y}
				\begin{split}
					D^{p-1}\mathbf{Y}^c_{\mathbf{A}^{(0)}, s} = D^{p-1}\mathbf{Y}_s-D^{p-1}\mathbf{y}_{0} - \mathbf{b}s - \sum_{0\leq u\leq s} &\Delta D^{p-1} \mathbf{Y}_u \mathds{1}_{\{\|\Delta D^{p-1} \mathbf{Y}_u \|\geq 1\}} \\
					- &\int_0^s\int_{\|y\|\leq 1} x\{\tilde{\mu}(du, dy) - du\,F(dy)\},\quad s\in[0, t],
				\end{split}
			\end{align}
			where $\tilde{\mu}$ is the jump measure of $D^{p-1}\mathbb{Y}_{[0, t]} = \{D^{p-1}\mathbf{Y}_s,\ s\in[0,t]\}$, i.e.\
			\[\tilde{\mu}(ds, dy) = \sum_{0\leq u\leq t:\Delta D^{p-1}\mathbf{Y}_u \neq 0} \delta_{(u, \Delta D^{p-1}\mathbf{Y}_u)}(ds, dy). \]
			Under $\mathbb{P}_{\mathbf{A}^{(0)}, \mathbf{x}_0}$, this process is a Brownian motion with covariance $\Sigma$. For any other $\mathbf{A}\in(\mathcal{M}_d(\mathbb{R}))^p$ we can write
			\[ D^{p-1}\mathbf{Y}_{\mathbf{A}, s}^c = D^{p-1}\mathbf{Y}^c_{\mathbf{A}^{(0)}, s}  + \sum_{j=1}^p\int_0^s A_j D^{p-j}\mathbf{Y}_u\ du, \quad s\in[0,t],\]
			which is a Brownian motion with covariance $\Sigma$ under $\mathbb{P}_{\mathbf{A}, \mathbf{x}_0}$.
		\end{remark}
		
		We can now use the explicit form of the likelihood for the MCAR model to obtain a maximum likelihood estimator when continuous-time observations are available. In order to prove its asymptotic properties, we follow the approach outlined in \citet[Section~11.7]{Kuchler_Sorensen_1997}, which first requires establishing the following result for the score process.
		
		
		\begin{lemma}\label{lemma:loc_mart_MCAR}
			Under Assumption \ref{ass:levy_process}, for any $\mathbf{A}\in\mathcal{M}_d(\mathbb{R})$ the score 
			\begin{equation} \label{eqn:score} \nabla_{\mathrm{vec}(\mathbf{A}^*)} \log \mathcal{L}(\mathbf{A}; \mathbb{Y}_{[0,t]}) = \mathbf{H}_t - [\mathbf{H}]_t \mathrm{vec}(\mathbf{A}^*),\ t\geq0, \end{equation}
			defines a mean-zero continuous square-integrable $\mathbb{P}_{\mathbf{A}, \mathbf{x}_0}$-martingale with quadratic variation $\{[\mathbf{H}]_t,\ t\geq0\}$.
		\end{lemma}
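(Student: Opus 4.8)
The plan is to read the score off the explicit likelihood in Lemma~\ref{lemma:likelihood_MCAR} and then identify it, under $\mathbb{P}_{\mathbf{A},\mathbf{x}_0}$, with a stochastic integral against a continuous martingale. Since $\log\mathcal{L}(\mathbf{A};\mathbb{Y}_{[0,t]}) = \mathrm{vec}(\mathbf{A})^\mathrm{T}\mathbf{H}_t - \tfrac{1}{2}\mathrm{vec}(\mathbf{A})^\mathrm{T}[\mathbf{H}]_t\mathrm{vec}(\mathbf{A})$ is quadratic in $\mathrm{vec}(\mathbf{A})$, differentiating yields exactly $\mathbf{H}_t - [\mathbf{H}]_t\mathrm{vec}(\mathbf{A}^*)$, the martingale property being asserted at $\mathbf{A}^*=\mathbf{A}$, the parameter of the sampling measure. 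The first observation is that $\mathbf{H}_t$ is itself a stochastic integral: each block of \eqref{eqn:H} has the form $-\int_0^t D^{p-j}Y^{(i)}_{s-}\Sigma^{-1}\,dD^{p-1}\mathbf{Y}^c_{\mathbf{A}^{(0)},s}$, so stacking the \caglad regressors $D^{p-j}Y^{(i)}_{s-}$ into a predictable matrix $\mathbf{R}_{s-}$ one may write $\mathbf{H}_t = -\int_0^t \mathbf{R}_{s-}\Sigma^{-1}\,dD^{p-1}\mathbf{Y}^c_{\mathbf{A}^{(0)},s}$.

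The heart of the argument is the change-of-drift identity from the remark following Lemma~\ref{lemma:likelihood_MCAR}: under $\mathbb{P}_{\mathbf{A},\mathbf{x}_0}$ one has $dD^{p-1}\mathbf{Y}^c_{\mathbf{A}^{(0)},s} = dD^{p-1}\mathbf{Y}^c_{\mathbf{A},s} - \bigl(\sum_{j=1}^p A_j D^{p-j}\mathbf{Y}_s\bigr)ds$, where $D^{p-1}\mathbf{Y}^c_{\mathbf{A},s}$ is a $\mathbb{P}_{\mathbf{A},\mathbf{x}_0}$-Brownian motion with covariance $\Sigma$. Substituting this into $\mathbf{H}_t$ splits it into a stochastic integral against $D^{p-1}\mathbf{Y}^c_{\mathbf{A},s}$ plus the drift $\int_0^t \mathbf{R}_{s-}\Sigma^{-1}\bigl(\sum_j A_j D^{p-j}\mathbf{Y}_s\bigr)ds$. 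I would then verify, by matching the vectorization indices block by block, that this drift equals precisely $[\mathbf{H}]_t\mathrm{vec}(\mathbf{A})$; this is where the definition of $[\mathbf{H}]_t$ as the Gram matrix of the same regressors against $\Sigma^{-1}$ is used. Consequently the drift cancels in the score and $\mathbf{H}_t - [\mathbf{H}]_t\mathrm{vec}(\mathbf{A}) = -\int_0^t \mathbf{R}_{s-}\Sigma^{-1}\,dD^{p-1}\mathbf{Y}^c_{\mathbf{A},s}$, a stochastic integral of a predictable integrand against a continuous $\mathbb{P}_{\mathbf{A},\mathbf{x}_0}$-martingale, hence a continuous local martingale null at $t=0$.

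It remains to upgrade this to a genuine, mean-zero, square-integrable martingale and to compute its quadratic variation. For the former I would show $\mathbb{E}^{\mathbb{P}_{\mathbf{A},\mathbf{x}_0}}\bigl[\int_0^t \mathbf{R}_{s-}\Sigma^{-1}\mathbf{R}^\mathrm{T}_{s-}\,ds\bigr] = \mathbb{E}^{\mathbb{P}_{\mathbf{A},\mathbf{x}_0}}[\mathrm{tr}([\mathbf{H}]_t)] < \infty$; since the entries of $\mathbf{R}_{s-}$ are components of the state process $\mathbf{X}_s$, this reduces to a uniform bound on $\mathbb{E}[\|\mathbf{X}_s\|^2]$ over $s\in[0,t]$, which holds because Assumption~\ref{ass:levy_process} makes $\mathbb{L}$ square-integrable and the explicit representation of $\mathbf{X}$ in Remark~\ref{rem:stat_ergodic} then yields finite, locally bounded second moments. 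This $L^2$-bound makes the stochastic integral a true square-integrable martingale, whence mean-zero using that it vanishes at $t=0$. For the quadratic variation, since the subtracted term $[\mathbf{H}]_t\mathrm{vec}(\mathbf{A})$ is continuous and of finite variation it does not contribute, so the quadratic variation of the score coincides with that of $\mathbf{H}$; equivalently, applying the quadratic-covariation formula for stochastic integrals (the same \citet[Theorem~II.29]{Protter_1990} used in Lemma~\ref{lemma:likelihood_MCAR}) together with $[D^{p-1}\mathbf{Y}^c_{\mathbf{A}}]_s = \Sigma s$ gives $\int_0^t \mathbf{R}_{s-}\Sigma^{-1}\Sigma\Sigma^{-1}\mathbf{R}^\mathrm{T}_{s-}\,ds = \int_0^t \mathbf{R}_{s-}\Sigma^{-1}\mathbf{R}^\mathrm{T}_{s-}\,ds = [\mathbf{H}]_t$, as required.

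The main obstacle I anticipate is concentrated in the integrability step: establishing the uniform second-moment control on the state process $\mathbf{X}$ under $\mathbb{P}_{\mathbf{A},\mathbf{x}_0}$ that turns the local martingale into a square-integrable one, since this is the genuinely analytic input and rests on the square-integrability of $\mathbb{L}$. The accompanying difficulty is the careful index bookkeeping needed to confirm that the drift produced by the change of measure coincides term-by-term with $[\mathbf{H}]_t\mathrm{vec}(\mathbf{A})$, rather than merely agreeing in structural form; everything else is a direct consequence of the stochastic-calculus identities already invoked in the preceding lemma.
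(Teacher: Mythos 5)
Your argument is correct, and it is in fact the route the paper itself flags as an ``alternative'' in the remark immediately following its proof of this lemma. The paper's own proof is shorter at the level of the local-martingale claim: it observes that $\mathbb{H}$ is a semimartingale with characteristics $(0,[\mathbf{H}],0)$ under the reference measure $\mathbb{P}_{\mathbf{A}^{(0)},\mathbf{x}_0}$ and then invokes \citet[Corollary~11.3.2]{Kuchler_Sorensen_1997} for exponential families of semimartingales to conclude that under $\mathbb{P}_{\mathbf{A},\mathbf{x}_0}$ the characteristics become $([\mathbf{H}]\mathrm{vec}(\mathbf{A}),[\mathbf{H}],0)$, so the drift cancellation in the score is automatic and no index bookkeeping is needed. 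You instead substitute the drift decomposition $dD^{p-1}\mathbf{Y}^c_{\mathbf{A}^{(0)},s}=dD^{p-1}\mathbf{Y}^c_{\mathbf{A},s}-\bigl(\sum_{j}A_jD^{p-j}\mathbf{Y}_s\bigr)ds$ and verify by hand that the resulting drift equals $[\mathbf{H}]_t\mathrm{vec}(\mathbf{A})$; this is self-contained stochastic calculus and the verification is essentially the same grouping-of-terms computation already carried out to obtain the quadratic form of the log-likelihood in Lemma~\ref{lemma:likelihood_MCAR}, so it costs little. The two proofs coincide exactly on the genuinely analytic step: the upgrade from continuous local martingale to square-integrable martingale via $\mathbb{E}^{\mathbb{P}_{\mathbf{A},\mathbf{x}_0}}[\|[\mathbf{H}]_t\|]<\infty$, which both reduce to the locally uniform second-moment bound on the state process furnished by square-integrability of $\mathbb{L}$ (Equation~\eqref{eqn:L2_bound}), followed by \citet[Corollary~II.3]{Protter_1990}. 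Your computation of the quadratic variation via $[D^{p-1}\mathbf{Y}^c_{\mathbf{A}}]_s=\Sigma s$ and the covariation formula for stochastic integrals is likewise consistent with the paper.
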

		\begin{proof} See Appendix \ref{app:proof_loc_mart_MCAR}.
		\end{proof}
		
		The main results of this section are given in the following theorem which establishes the existence and uniqueness of the continuous-time maximum likelihood estimator as well as its consistency and asymptotic normality.
		
		\begin{theorem} \label{thm:cons_asymp}
			Assume Assumption \ref{ass:levy_process} holds and let \begin{equation} \label{eqn:param_set}
				\mathfrak{A}=\{\mathbf{A} \in (\mathcal{M}_d(\mathbb{R}))^p : \sigma(\mathcal{A}_\mathbf{A}) \subseteq (-\infty, 0) + i\mathbb{R}\}.
			\end{equation}
			Define the maximum likelihood estimator
			\[\hat{\mathbf{A}}(\mathbb{Y}_{[0,t]}) = \mathrm{argmax}_{\mathbf{A}\in(\mathcal{M}_d(\mathbb{R}))^p} \mathcal{L}(\mathbf{A};\mathbb{Y}_{[0,t]}), \]
			then the MLE exists and is uniquely defined $\forall t>0$ by
			\begin{equation} \label{eqn:MLE}
				\hat{\mathbf{A}}(\mathbb{Y}_{[0,t]}) = \mathrm{vec}^{-1} ( [\mathbf{H}]_t^{-1}\mathbf{H}_t),
			\end{equation}
			where $\mathbf{H}_t$ and $[\mathbf{H}]_t$ are given in Equation \eqref{eqn:H} and \eqref{eqn:[H]} respectively. Moreover, if $\mathbf{A}^*\in \mathfrak{A}$ then the MLE is
			\begin{itemize}
				\item (weakly) consistent, i.e.\ 
				\[\hat{\mathbf{A}}(\mathbb{Y}_{[0,t]}) \overset{\mathbb{P}_{ \mathbf{A}^*, \mathbf{x}_0}}{\longrightarrow} \mathbf{A}^*, \ t\rightarrow \infty; \] 
				\item asymptotically normal, i.e.\ 
				\[ \sqrt{t} \left(\mathrm{vec}(\hat{\mathbf{A}}(\mathbb{Y}_{[0,t]})) - \mathrm{vec}(\mathbf{A}^*)\right) \overset{\mathcal{L}}{\longrightarrow} \mathbf{Z} \sim N(\mathbf{0}, \mathcal{H}^{-1}_\infty), \ t \rightarrow \infty, \]
				where $\mathcal{H}_\infty\in\mathcal{M}_{pd^2}(\mathbb{R})$ is a symmetric positive definite matrix such that $t^{-1}[\mathbf{H}]_t \overset{\mathbb{P}_{ \mathbf{A}^*, \mathbf{x}_0}}{\longrightarrow} \mathcal{H}_\infty, \ t\rightarrow \infty$.
			\end{itemize}
		\end{theorem}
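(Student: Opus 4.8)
The plan is to exploit the fact that, by Lemma \ref{lemma:likelihood_MCAR}, the log-likelihood $\log\mathcal{L}(\mathbf{A};\mathbb{Y}_{[0,t]}) = \mathrm{vec}(\mathbf{A})^\mathrm{T}\mathbf{H}_t - \tfrac{1}{2}\mathrm{vec}(\mathbf{A})^\mathrm{T}[\mathbf{H}]_t\mathrm{vec}(\mathbf{A})$ is a concave quadratic in $\mathrm{vec}(\mathbf{A})$ with Hessian $-[\mathbf{H}]_t$. The first-order condition $\mathbf{H}_t - [\mathbf{H}]_t\mathrm{vec}(\mathbf{A}) = \mathbf{0}$ then yields the candidate \eqref{eqn:MLE}, and to show it is the unique global maximizer it suffices to establish that $[\mathbf{H}]_t$ is almost surely positive definite for every $t>0$. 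To this end I would use the quadratic-form identity recorded in \eqref{eqn:A_t}, namely
\[\mathrm{vec}(\mathbf{A})^\mathrm{T}[\mathbf{H}]_t\mathrm{vec}(\mathbf{A}) = \int_0^t \mathbf{X}_u^\mathrm{T}\tilde{\mathcal{A}}_\mathbf{A}^\mathrm{T}\tilde{\Sigma}^{-1}\tilde{\mathcal{A}}_\mathbf{A}\mathbf{X}_u\, du,\]
which vanishes only if $\tilde{\mathcal{A}}_\mathbf{A}\mathbf{X}_u = \mathbf{0}$ for Lebesgue-almost every $u\in[0,t]$. Since the only nonzero block of $\tilde{\mathcal{A}}_\mathbf{A}\mathbf{X}_u$ equals $\sum_{j=1}^p A_j D^{p-j}\mathbf{Y}_u$ (using Proposition \ref{prop:right_diff}) and $\Sigma^{-1}$ is strictly positive definite by Assumption \ref{ass:levy_process}, such a linear relation can hold for all $u$ only if every $A_j = 0$, because the non-degenerate Gaussian component forces the state-space path to explore a full-dimensional set. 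This contradicts $\mathbf{A}\neq\mathbf{A}^{(0)}$ and yields strict positivity and hence existence and uniqueness of the MLE for all $t>0$.

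For the asymptotic statements I would work under $\mathbf{A}^*\in\mathfrak{A}$ and follow the martingale-based likelihood framework of \citet[Section~11.7]{Kuchler_Sorensen_1997}. The crucial algebraic observation is that, combining \eqref{eqn:MLE} with the score in Lemma \ref{lemma:loc_mart_MCAR}, the estimation error factorizes as
\[\mathrm{vec}(\hat{\mathbf{A}}(\mathbb{Y}_{[0,t]})) - \mathrm{vec}(\mathbf{A}^*) = [\mathbf{H}]_t^{-1}\big(\mathbf{H}_t - [\mathbf{H}]_t\mathrm{vec}(\mathbf{A}^*)\big) = [\mathbf{H}]_t^{-1}\mathbf{M}_t,\]
where $\mathbb{M} = \{\mathbf{M}_t,\ t\geq 0\}$ is precisely the mean-zero continuous square-integrable $\mathbb{P}_{\mathbf{A}^*,\mathbf{x}_0}$-martingale of Lemma \ref{lemma:loc_mart_MCAR}, whose quadratic variation is $[\mathbb{M}]_t = [\mathbf{H}]_t$. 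Everything then reduces to controlling the normalized quadratic variation $t^{-1}[\mathbf{H}]_t$ and the normalized martingale $t^{-1/2}\mathbf{M}_t$.

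For consistency I would invoke the ergodicity of the stationary MCAR process recorded in Remark \ref{rem:stat_ergodic} (applicable since $\mathbf{A}^*\in\mathfrak{A}$ and $\mathbb{L}$ is square-integrable, hence $r$-integrable with $r=2$). Writing each entry of $[\mathbf{H}]_t$ as a time integral of a continuous functional of $\mathbf{X}_s$, the ergodic theorem gives $t^{-1}[\mathbf{H}]_t \overset{\mathbb{P}_{\mathbf{A}^*, \mathbf{x}_0}}{\longrightarrow} \mathcal{H}_\infty$, with $\mathcal{H}_\infty$ the corresponding stationary expectation; the non-degeneracy argument of the first paragraph, transferred to the stationary law, shows $\mathcal{H}_\infty$ is symmetric positive definite. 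Since $[\mathbb{M}]_t=[\mathbf{H}]_t$ grows linearly, the strong law of large numbers for martingales gives $t^{-1}\mathbf{M}_t \to \mathbf{0}$, and combining these through $\mathrm{vec}(\hat{\mathbf{A}}) - \mathrm{vec}(\mathbf{A}^*) = (t^{-1}[\mathbf{H}]_t)^{-1}(t^{-1}\mathbf{M}_t)$ delivers consistency.

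For asymptotic normality I would apply the continuous-martingale central limit theorem to $t^{-1/2}\mathbf{M}_t$: its normalized quadratic variation $t^{-1}[\mathbb{M}]_t = t^{-1}[\mathbf{H}]_t$ converges to the deterministic limit $\mathcal{H}_\infty$, which for a continuous martingale suffices to conclude $t^{-1/2}\mathbf{M}_t \overset{\mathcal{L}}{\longrightarrow} N(\mathbf{0}, \mathcal{H}_\infty)$. Slutsky's theorem together with $(t^{-1}[\mathbf{H}]_t)^{-1}\to\mathcal{H}_\infty^{-1}$ then gives $\sqrt{t}\,(\mathrm{vec}(\hat{\mathbf{A}}) - \mathrm{vec}(\mathbf{A}^*)) = (t^{-1}[\mathbf{H}]_t)^{-1}(t^{-1/2}\mathbf{M}_t) \overset{\mathcal{L}}{\longrightarrow} \mathcal{H}_\infty^{-1}N(\mathbf{0},\mathcal{H}_\infty) = N(\mathbf{0},\mathcal{H}_\infty^{-1})$. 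I expect the main obstacle to be the rigorous justification of the ergodic convergence $t^{-1}[\mathbf{H}]_t \to \mathcal{H}_\infty$ together with positive definiteness of the limit: because the observation is started from the deterministic $\mathbf{x}_0$ rather than from its stationary law, one must argue that the initial transient is asymptotically negligible, e.g.\ by coupling the process with its stationary version $\mathbf{X}_t=\int_{-\infty}^t e^{\mathcal{A}_{\mathbf{A}^*}(t-s)}\mathcal{E}\,d\mathbf{L}_s$ and using that $\sigma(\mathcal{A}_{\mathbf{A}^*})\subseteq(-\infty,0)+i\mathbb{R}$ forces $e^{\mathcal{A}_{\mathbf{A}^*}t}\to 0$ exponentially, so that $\mathbf{x}_0$ is forgotten and the ergodic theorem for the stationary process can be transferred to the actual observation.
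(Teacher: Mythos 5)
Your proposal is correct and follows essentially the same route as the paper's proof: the concave quadratic form of the log-likelihood with a.s.\ positive definite $[\mathbf{H}]_t$ (via the non-degeneracy/linear-independence of $\{D^l Y^{(i)}_s\}$, which the paper establishes by passing to the reference measure $\mathbb{P}_{\mathbf{A}^{(0)},\mathbf{x}_0}$ and using equivalence of the restricted laws), ergodic convergence $t^{-1}[\mathbf{H}]_t\to\mathcal{H}_\infty$, and the martingale CLT applied to the score from Lemma \ref{lemma:loc_mart_MCAR} combined with Slutsky. The only cosmetic differences are that you use the martingale SLLN for the consistency step where the paper cites \citet[Corollary~A.7.8]{Kuchler_Sorensen_1997}, and that you explicitly flag the transient from the deterministic initial condition $\mathbf{x}_0$, which the paper absorbs into its ergodicity claim under $\mathbb{P}_{\mathbf{A}^*,\mathbf{x}_0}$ (treating the stationary start separately in Remark \ref{rem:cons_asymp_stat_ergodic}).
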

		\begin{remark} \label{rem:feasible_CLT}
			The asymptotic normality result immediately yields a feasible version of the limit
			\[ [\mathbf{H}]_t^{1/2} \left(\mathrm{vec}(\hat{\mathbf{A}}(\mathbb{Y}_{[0,t]})) - \mathrm{vec}(\mathbf{A}^*)\right) \overset{\mathcal{L}}{\longrightarrow} \mathbf{Z} \sim N(\mathbf{0}, I_{pd^2}), \ t \rightarrow \infty. \]
		\end{remark}
		\begin{remark} \label{rem:cons_asymp_stat_ergodic}
			We can quite easily extend these results to the setting where the initial conditions for $\mathbb{Y}_{[0,t]}$ and its first $p-1$ derivatives are assumed to be random with distribution given by the stationary distribution of the MCAR($p$) state-space representation with parameters $\mathbf{A}^*\in\mathfrak{A}$, i.e.
			\[ \mathcal{L}([\mathbf{Y}^\mathrm{T}_0, \ldots, D^{p-1}\mathbf{Y}^\mathrm{T}_0]^\mathrm{T}) = \mathcal{L}\left( \int_0^{\infty} e^{s\mathcal{A}_{\mathbf{A}^*}} \mathcal{E}\, d\mathbf{L}_s\right).\]
			Under such measure $\mathbb{P}_{\mathbf{A}^*}$ the observed MCAR($p$) process $\mathbb{Y}$ and its state-space representation are stationary and ergodic, cf.\ Remark \ref{rem:stat_ergodic}. All other assumptions withstanding, we can show consistency and asymptotic normality hold under $\mathbb{P}_{\mathbf{A}^*}$ by a simple application of dominated convergence theorem.
		\end{remark}
		\begin{proof}
			The proof follows standard arguments for exponential families of stochastic processes, see \citet{Kuchler_Sorensen_1997}. The log-likelihood
			\[ \ell(\mathrm{vec}(\mathbf{A});\mathbb{Y}_{[0,t]}) = \mathrm{vec} (\mathbf{A})^\mathrm{T} \mathbf{H}_t - \frac{1}{2} \mathrm{vec}(\mathbf{A})^\mathrm{T} [\mathbf{H}]_t \mathrm{vec}(\mathbf{A}), \]
			is a quadratic form and $[\mathbf{H}]_t$ is $\mathbb{P}_{\mathbf{A}, \mathbf{x}_0}$-a.s.\ symmetric positive definite. We can thus write
			\[\mathrm{vec} (\hat{\mathbf{A}}(\mathbb{Y}_{[0,t]}) ) =  [\mathbf{H}]_t^{-1}\mathbf{H}_t = \mathrm{vec}(\mathbf{A}^*) +  [\mathbf{H}]_t^{-1}(\mathbf{H}_t - [\mathbf{H}]_t \mathrm{vec}(\mathbf{A}^*)), \]
			where the score $\{\mathbf{H}_t - [\mathbf{H}]_t \mathrm{vec}(\mathbf{A}^*),\ t\geq0\}$ is a mean-zero continuous square-integrable $\mathbb{P}_{\mathbf{A}^*, \mathbf{x}_0}$-martingale with quadratic variation $[\mathbf{H}]_t$ by Lemma \ref{lemma:loc_mart_MCAR}. The desired results then follow by martingale CLTs. For the full details, see Appendix \ref{app:proof_cons_asymp}.
		\end{proof}
		
		\section{Discrete observations: high-frequency estimators} \label{sec:MCAR_discr}
		In the previous section, we worked under the unrealistic assumption of a continuous-time observation of the MCAR process $\mathbb{Y}_{[0,t]} = \{\mathbf{Y}_s,\ s \in [0,t]\}$. Instead, we now assume that we observe the process on a discrete (possibly irregularly spaced) time grid. For each $t>0$, let $N_t\in\mathbb{N}$ and 
		\[\mathcal{P}_t := \{0=s_0< s_1<\ldots <s_{N_t} = t\}, \]
		be a partition of $[0,t]$ with mesh $\Delta_{\mathcal{P}_t} := \sup_{0\leq i< N_t} (s_{i+1} - s_i)$. The discretely observed process is then given by 
		\[ \mathbb{Y}_{\mathcal{P}_t} := \{\mathbf{Y}_{s}, \ s\in\mathcal{P}_t\}. \]
		As before, we assume the driving \Levy process parameters $(\mathbf{b}, \Sigma, F)$ are given. In this section, we work under the probability measure $\mathbb{P}_{\mathbf{A}^*}$ corresponding to a stationary and ergodic MCAR($p$) process with parameter $\mathbf{A}^*\in \mathfrak{A}$. We start by noting that the maximum likelihood estimator \[\hat{\mathbf{A}}(\mathbb{Y}_{[0,t]}) =  \mathrm{vec}^{-1} ( [\mathbf{H}]_t^{-1}\mathbf{H}_t), \]
		is a function of integrals of the form
		\begin{equation} \label{eqn:integrals}
			\int_0^t D^{l} Y^{(i)}_{s-} dD^{p-1} Y^{c, (j)}_{\mathbf{A}^{(0)}, s} \ \mathrm{and} \ \int_0^t D^{l} Y^{(i)}_s D^{k} Y^{(j)}_s\, ds, 
		\end{equation}
		for $i,j\in\{1,\ldots,d\}$ and $l,k\in\{0,\ldots, p-1\}$ where 
		\begin{equation} \label{eqn:integrator_process}
			dD^{p-1}\mathbf{Y}^c_{\mathbf{A}^{(0)}, s} = dD^{p-1}\mathbf{Y}_s - \mathbf{b}\,ds- d\mathbf{J}_s
			- d\mathbf{M}_s,\quad s\in[0,t],
		\end{equation}
		and the jump process $\mathbb{J}_{[0,t]} = \{\mathbf{J}_s,\ s\in[0,t]\}$ and $\mathbb{M}_{[0,t]} = \{\mathbf{M}_s,\ s\in[0,t]\}$ are defined by
		\begin{align}
			\mathbf{J}_s &= \sum_{0\leq u\leq s} \Delta D^{p-1} \mathbf{Y}_u \mathds{1}_{\{\|\Delta D^{p-1} \mathbf{Y}_u \|\geq 1\}}, \label{eqn:jump_J}\\
			\mathbf{M}_s &= \int_0^s\int_{\|y\|\leq 1} y\{\tilde{\mu}(du, dy) - du\,F(dy)\}. \label{eqn:jump_M}
		\end{align}
		To write an estimator based only on $\mathbb{Y}_{\mathcal{P}_t}$ we thus need to make the following approximations:
		\begin{enumerate}
			\item Approximate the integrals appearing in Equation \eqref{eqn:integrals} by Riemann sums computed using discretized observations of the processes, i.e.\ replace
			\[ \int_0^t \ast_s \, \mathrm{d}\cdot_s\ \mathrm{with} \ \sum_{n=0}^{N_t -1} \ast_{s_n} (\cdot_{s_{n+1}} - \cdot_{s_n});\]
			\item Approximate the time derivatives in equations \eqref{eqn:integrals} and \eqref{eqn:integrator_process} by finite differences, i.e.\ replace 
			\[ D^{j}\mathbf{Y}_{s_n}\ \mathrm{with}\ \hat{D}^{j}\mathbf{Y}_{s_n}\ \mathrm{for}\ j=1,\ldots, p-1;\]
			\item Filter out the discretized jump increments $\Delta_{\mathcal{P}_t}^n \mathbf{J}$ and $\Delta_{\mathcal{P}_t}^n \mathbf{M}$ from $\Delta_{\mathcal{P}_t}^n \hat{D}^{p-1}\mathbf{Y}$ in Equation \eqref{eqn:integrator_process} via thresholding, i.e.\ replace 
			\[ \Delta^n_{\mathcal{P}_t} D^{p-1}\mathbf{Y}^{c}_{\mathbf{A}^{(0)}}\ \mathrm{with}\ \left[ \Delta_{\mathcal{P}_t}^n \hat{D}^{p-1}\mathbf{Y} - \mathbf{b} \Delta_{\mathcal{P}_t}^n \right] \odot \mathds{1}_{\left\{\left|\Delta_{\mathcal{Q}_t}^m \hat{D}^{p-1}\mathbf{Y} - \mathbf{b} \Delta_{\mathcal{P}_t}^n\right| \leq \boldsymbol{\nu}^n_t\right\}}.\]
		\end{enumerate}
		As we will see in Section \ref{sec:approx_derivatives}, when introducing finite difference approximations, we will need to work on two granularity levels: a finer partition $\mathcal{P}_t$ to approximate the derivatives and a coarser partition $\mathcal{Q}_t$ to approximate the integrals and filter the jumps. Thus, throughout this section, $\mathcal{P}_t$ will denote the finest grid on which observations are available and, when required, $\mathcal{Q}_t$ an appropriately chosen coarsening.
		
		\begin{remark} \label{rem:finite_activity}
			Let $\mathbb{L}_{[0, t]}$ denote the background driving \Levy process. If we assume $\mathbb{L}_{[0, t]}$ to have finite jump activity, i.e.\ $F(\mathbb{R}^d)<\infty$, then we can write the continuous martingale part of $D^{p-1}\mathbb{Y}_{[0,t]}$ as
			\[
			D^{p-1}\mathbf{Y}^c_{\mathbf{A}^{(0)}, s} = D^{p-1}\mathbf{Y}_s - D^{p-1}\mathbf{Y}_{0} - \tilde{\mathbf{b}}s -\sum_{0\leq u\leq s} \Delta D^{p-1} \mathbf{Y}_u, \quad s\in[0,t],
			\]
			where $(\tilde{\mathbf{b}}, \Sigma, F)$ is the characteristic triplet when no truncation function is used, i.e.\ 
			\[\mathbf{L}_t = \tilde{\mathbf{b}}t + \Sigma^{1/2}\mathbf{W}_t + \sum_{0\leq s\leq t}\Delta \mathbf{L}_s, \quad t\geq 0,\]
			where the sum is almost surely finite. In this case we will work with only one jump process, $\tilde{\mathbb{J}}_{[0,t]} = \{\tilde{\mathbf{J}}_s,\ s\in[0,t]\}$ given by
			\begin{equation} \label{eqn:jump_J_tilde}
				\tilde{\mathbf{J}}_s = \sum_{0\leq u\leq s} \Delta D^{p-1} \mathbf{Y}_u,
			\end{equation}
			in which case the integrator becomes 
			\begin{equation}
				D^{p-1}\mathbf{Y}^c_{\mathbf{A}^{(0)}, s} = D^{p-1}\mathbf{Y}_s - D^{p-1}\mathbf{Y}_{0} - \tilde{\mathbf{b}}s- \tilde{\mathbf{J}}_s,\quad s\in[0,t].
			\end{equation}
		\end{remark}
		
		Note that by rearranging the integral form of the formal Equation \eqref{eqn:autoregressive}, i.e.\ the last $d$ entries of SDE \eqref{eqn:SDE_def}, we can retrieve the background driving \Levy process $\mathbb{L}_{[0,t]} = \{\mathbf{L}_s,\ s\in[0,t]\}$ from $\mathbb{Y}_{[0,t]}$ by
		\begin{equation} \label{eqn:L=p(D)Y}
			\mathbf{L}_s = D^{p-1}\mathbf{Y}_{s} - D^{p-1}\mathbf{Y}_{0} + \sum_{j=1}^p\int_0^s A^*_j D^{p-j}\mathbf{Y}_u\ du, \quad s\in[0,t].
		\end{equation}
		Under $\mathbb{P}_{\mathbf{A}^*}$ this is a \Levy process with \Levy triplet $(\mathbf{b},\Sigma, F)$.
		
		\begin{remark}
			Recall that $D^{p-1}\mathbb{Y}$ jumps whenever $\mathbb{L}$ does and by the same magnitude, i.e.\ $\Delta D^{p-1}\mathbb{Y} = \Delta \mathbb{L}$. Assuming we can observe the jumps of the process $D^{p-1}\mathbb{Y}$ is equivalent to assuming we observe the jumps of the driving noise $\mathbb{L}$.
		\end{remark}
		
		The following Lemma establishes the framework to prove the subsequent approximations preserve the asymptotic properties of the estimator. 
		
		
		\begin{lemma} \label{lemma:approx}
			Let $\hat{\mathbf{A}}_{1, t} = \mathrm{vec}^{-1}([\mathbf{H}]^{-1}_{1, t} \mathbf{H}_{1, t})$ with $\mathbf{H}_{1, t}\in\mathbb{R}^{pd^2}, [\mathbf{H}]_{1, t}\in\mathbb{R}^{pd^2\times pd^2}$ be a consistent and asymptotically normal estimator for $\mathbf{A}^*\in\mathfrak{A}$, i.e.
			\begin{itemize}
				\item $\displaystyle \hat{\mathbf{A}}_{1,t} \overset{\mathbb{P}_{ \mathbf{A}^*}}{\longrightarrow} \mathbf{A}^*, \ t\rightarrow \infty;$ 
				\item $\displaystyle \sqrt{t} \left(\mathrm{vec}(\hat{\mathbf{A}}_{1,t}) - \mathrm{vec}(\mathbf{A}^*)\right) \overset{\mathcal{L}}{\longrightarrow} \mathbf{Z} \sim N(\mathbf{0}, \mathcal{H}^{-1}_\infty), \ t \rightarrow \infty,$ where $\mathcal{H}_\infty\in\mathcal{M}_{pd^2}(\mathbb{R})$ is a symmetric positive definite matrix such that $t^{-1}[\mathbf{H}]_{1,t} \overset{\mathbb{P}_{ \mathbf{A}^*}}{\longrightarrow} \mathcal{H}_\infty, \ t\rightarrow \infty$.
			\end{itemize}
			Suppose that
			\begin{equation} \label{eqn:H_limit}
				t^{-1/2}\left( \mathbf{H}_{2, t} - \mathbf{H}_{1,t}\right) \overset{\mathbb{P}_{\mathbf{A}^*}}{\longrightarrow} 0,\quad t\rightarrow\infty,
			\end{equation}
			and 
			\begin{equation} \label{eqn:[H]_limit} 
				t^{-1/2}\left([\mathbf{H}]_{2, t} - [\mathbf{H}]_{1, t} \right) \overset{\mathbb{P}_{ \mathbf{A}^*}}{\longrightarrow} 0, \quad t\rightarrow \infty,
			\end{equation}
			for $\mathbf{H}_{2, t}\in\mathbb{R}^{pd^2}, [\mathbf{H}]_{2, t}\in\mathbb{R}^{pd^2\times pd^2}$. Then the estimator $\hat{\mathbf{A}}_{2, t} = \mathrm{vec}^{-1}([\mathbf{H}]^{-1}_{2, t} \mathbf{H}_{2, t})$ is also consistent and asymptotically normal for $\mathbf{A}^*\in\mathfrak{A}$, i.e.
			\begin{itemize}
				\item $\displaystyle \hat{\mathbf{A}}_{2,t} \overset{\mathbb{P}_{ \mathbf{A}^*}}{\longrightarrow} \mathbf{A}^*, \ t\rightarrow \infty;$ 
				\item $\displaystyle \sqrt{t} \left(\mathrm{vec}(\hat{\mathbf{A}}_{2,t}) - \mathrm{vec}(\mathbf{A}^*)\right) \overset{\mathcal{L}}{\longrightarrow} \mathbf{Z} \sim N(\mathbf{0}, \mathcal{H}^{-1}_\infty), \ t \rightarrow \infty,$ and $t^{-1}[\mathbf{H}]_{2,t} \overset{\mathbb{P}_{ \mathbf{A}^*}}{\longrightarrow} \mathcal{H}_\infty, \ t\rightarrow \infty.$
			\end{itemize}
		\end{lemma}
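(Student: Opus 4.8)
The plan is to reduce both claims to Slutsky-type arguments built around the normalized ``score'' quantity $\mathbf{S}_{i,t} := \mathbf{H}_{i,t} - [\mathbf{H}]_{i,t}\,\mathrm{vec}(\mathbf{A}^*)$ for $i\in\{1,2\}$, exploiting the elementary identity
\[ \mathrm{vec}(\hat{\mathbf{A}}_{i,t}) - \mathrm{vec}(\mathbf{A}^*) = [\mathbf{H}]_{i,t}^{-1}\mathbf{H}_{i,t} - \mathrm{vec}(\mathbf{A}^*) = [\mathbf{H}]_{i,t}^{-1}\mathbf{S}_{i,t}, \]
valid whenever $[\mathbf{H}]_{i,t}$ is invertible. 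The crucial observation is that the two hypotheses \eqref{eqn:H_limit} and \eqref{eqn:[H]_limit} are posed at exactly the $t^{-1/2}$ scale governing the central limit theorem, so they are strong enough to control the normality perturbation and, \emph{a fortiori}, the consistency perturbation.

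For consistency I would first note that $t^{-1}([\mathbf{H}]_{2,t}-[\mathbf{H}]_{1,t}) = t^{-1/2}\cdot\big(t^{-1/2}([\mathbf{H}]_{2,t}-[\mathbf{H}]_{1,t})\big)\overset{\mathbb{P}_{\mathbf{A}^*}}{\longrightarrow}0$ by \eqref{eqn:[H]_limit}, whence $t^{-1}[\mathbf{H}]_{2,t}\overset{\mathbb{P}_{\mathbf{A}^*}}{\longrightarrow}\mathcal{H}_\infty$. Writing $t^{-1}\mathbf{H}_{1,t} = (t^{-1}[\mathbf{H}]_{1,t})\,\mathrm{vec}(\hat{\mathbf{A}}_{1,t})\to\mathcal{H}_\infty\,\mathrm{vec}(\mathbf{A}^*)$ and adding the vanishing term $t^{-1}(\mathbf{H}_{2,t}-\mathbf{H}_{1,t})\to 0$ gives $t^{-1}\mathbf{H}_{2,t}\to\mathcal{H}_\infty\,\mathrm{vec}(\mathbf{A}^*)$. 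Since $\mathcal{H}_\infty$ is positive definite, matrix inversion is continuous at $\mathcal{H}_\infty$, so $(t^{-1}[\mathbf{H}]_{2,t})^{-1}$ is eventually well-defined and converges to $\mathcal{H}_\infty^{-1}$; the continuous mapping and Slutsky theorems then yield $\mathrm{vec}(\hat{\mathbf{A}}_{2,t}) = (t^{-1}[\mathbf{H}]_{2,t})^{-1}(t^{-1}\mathbf{H}_{2,t})\overset{\mathbb{P}_{\mathbf{A}^*}}{\longrightarrow}\mathcal{H}_\infty^{-1}\mathcal{H}_\infty\,\mathrm{vec}(\mathbf{A}^*)=\mathrm{vec}(\mathbf{A}^*)$.

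For asymptotic normality the first step is to back out the limit law of the unnormalized score from the assumed CLT for $\hat{\mathbf{A}}_{1,t}$: using $\mathbf{S}_{1,t} = [\mathbf{H}]_{1,t}(\mathrm{vec}(\hat{\mathbf{A}}_{1,t})-\mathrm{vec}(\mathbf{A}^*))$ one has $t^{-1/2}\mathbf{S}_{1,t} = (t^{-1}[\mathbf{H}]_{1,t})\,\sqrt{t}\,(\mathrm{vec}(\hat{\mathbf{A}}_{1,t})-\mathrm{vec}(\mathbf{A}^*))$, so by $t^{-1}[\mathbf{H}]_{1,t}\to\mathcal{H}_\infty$ and Slutsky $t^{-1/2}\mathbf{S}_{1,t}\overset{\mathcal{L}}{\longrightarrow} N(\mathbf{0},\mathcal{H}_\infty)$. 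The decomposition
\[ t^{-1/2}\mathbf{S}_{2,t} = t^{-1/2}\mathbf{S}_{1,t} + t^{-1/2}(\mathbf{H}_{2,t}-\mathbf{H}_{1,t}) - t^{-1/2}([\mathbf{H}]_{2,t}-[\mathbf{H}]_{1,t})\,\mathrm{vec}(\mathbf{A}^*) \]
then shows, by \eqref{eqn:H_limit}, \eqref{eqn:[H]_limit} and constancy of $\mathrm{vec}(\mathbf{A}^*)$, that $t^{-1/2}\mathbf{S}_{2,t} = t^{-1/2}\mathbf{S}_{1,t} + o_{\mathbb{P}_{\mathbf{A}^*}}(1)\overset{\mathcal{L}}{\longrightarrow}N(\mathbf{0},\mathcal{H}_\infty)$. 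Combining with $(t^{-1}[\mathbf{H}]_{2,t})^{-1}\to\mathcal{H}_\infty^{-1}$ from the consistency step and applying Slutsky once more to $\sqrt{t}(\mathrm{vec}(\hat{\mathbf{A}}_{2,t})-\mathrm{vec}(\mathbf{A}^*)) = (t^{-1}[\mathbf{H}]_{2,t})^{-1}\,t^{-1/2}\mathbf{S}_{2,t}$ yields the stated $N(\mathbf{0},\mathcal{H}_\infty^{-1})$ limit, together with the already established $t^{-1}[\mathbf{H}]_{2,t}\to\mathcal{H}_\infty$. The argument is almost entirely routine; the only point requiring care is the bookkeeping of scales, namely recognizing that \eqref{eqn:H_limit}--\eqref{eqn:[H]_limit} live at the $t^{-1/2}$ rate, so that the extra factor $t^{-1/2}$ annihilates the perturbations at the consistency level while at the CLT level they survive but vanish in probability.
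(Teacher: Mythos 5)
Your proof is correct, and its overall architecture --- Slutsky arguments driven by the two $t^{-1/2}$-scale hypotheses plus the deduction $t^{-1}[\mathbf{H}]_{2,t}\to\mathcal{H}_\infty$ --- matches the paper's. The one genuine difference is where you place the decomposition: the paper compares the estimators directly, writing $\sqrt{t}\left([\mathbf{H}]_{2,t}^{-1}\mathbf{H}_{2,t}-[\mathbf{H}]_{1,t}^{-1}\mathbf{H}_{1,t}\right)$ as a telescoping sum of two terms, each handled by multiplying $t[\mathbf{H}]_{2,t}^{-1}\to\mathcal{H}_\infty^{-1}$ against one of the vanishing perturbations (the second term additionally invoking $[\mathbf{H}]_{1,t}^{-1}\mathbf{H}_{1,t}\to\mathrm{vec}(\mathbf{A}^*)$, i.e.\ consistency of $\hat{\mathbf{A}}_{1,t}$); you instead compare the scores $\mathbf{S}_{i,t}=\mathbf{H}_{i,t}-[\mathbf{H}]_{i,t}\,\mathrm{vec}(\mathbf{A}^*)$, whose difference is an exact linear combination of the two assumed perturbations, then transfer the CLT from the estimator to the score via $t^{-1/2}\mathbf{S}_{1,t}=(t^{-1}[\mathbf{H}]_{1,t})\sqrt{t}(\mathrm{vec}(\hat{\mathbf{A}}_{1,t})-\mathrm{vec}(\mathbf{A}^*))$ and push it back through $(t^{-1}[\mathbf{H}]_{2,t})^{-1}$. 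Your route is marginally cleaner at the CLT step (the score perturbation needs no consistency input, only the constancy of $\mathrm{vec}(\mathbf{A}^*)$), while the paper's route yields the slightly more informative intermediate statement that the two estimators themselves are asymptotically equivalent at rate $\sqrt{t}$, which it then reuses for both conclusions at once. Both proofs use exactly the same hypotheses and neither gains generality over the other.
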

		
		\begin{proof}
			We start by noting that Equation \eqref{eqn:[H]_limit} immediately implies 
			\begin{equation*} 
				t^{-1}\left([\mathbf{H}]_{2, t} - [\mathbf{H}]_{1,t} \right) \overset{\mathbb{P}_{ \mathbf{A}^*}}{\longrightarrow} 0, \ t\rightarrow \infty. 
			\end{equation*}
			Combining this with $t^{-1}[\mathbf{H}]_{1,t} \overset{\mathbb{P}_{ \mathbf{A}^*}}{\longrightarrow} \mathcal{H}_\infty, \ t\rightarrow \infty,$ by uniqueness of limits we have
			\begin{equation} \label{eqn:ergodic_[H_hat]} t^{-1}[\mathbf{H}]_{2, t} \overset{\mathbb{P}_{ \mathbf{A}^*}}{\longrightarrow} \mathcal{H}_\infty, \ t\rightarrow \infty. 
			\end{equation}
			Next, we show that 
			\begin{equation} \label{eqn:limit_discr_est} \sqrt{t}\left( [\mathbf{H}]_{2, t}^{-1} \mathbf{H}_{2, t} - [\mathbf{H}]_{1,t}^{-1} \mathbf{H}_{1,t} \right) \overset{\mathcal{L}}{\longrightarrow} 0, \ t\rightarrow \infty. \end{equation}			
			To do so we split the limiting variable as
			\begin{equation} \label{eqn:split_limit_discr_est}
				\begin{split}
					\sqrt{t} \left( [\mathbf{H}]_{2,t}^{-1} \mathbf{H}_{2, t} - [\mathbf{H}]_{1,t}^{-1} \mathbf{H}_{1,t}\right)
					&=\sqrt{t} \left( [\mathbf{H}]_{2,t}^{-1} \mathbf{H}_{2, t} - [\mathbf{H}]_{2, t}^{-1} \mathbf{H}_{1,t}\right) + \sqrt{t} \left([\mathbf{H}]_{2, t}^{-1} \mathbf{H}_{1,t}- [\mathbf{H}]_{1,t}^{-1} \mathbf{H}_{1,t}\right) 
				\end{split}
			\end{equation}
			and show that each of these two terms converges to zero in probability. We can write the first term as
			\[ \sqrt{t} \left( [\mathbf{H}]_{2, t}^{-1} \mathbf{H}_{2, t} - [\mathbf{H}]_{2, t}^{-1} \mathbf{H}_{1,t}\right) = \left(t[\mathbf{H}]_{2, t}^{-1}\right) t^{-1/2} \left( \mathbf{H}_{2, t} - \mathbf{H}_{1,t}\right). \]
			By Equation \eqref{eqn:ergodic_[H_hat]}, continuity of the inversion operator and invertibility of $\mathcal{H}_\infty$ we have \[t[\mathbf{H}]_{2, t}^{-1}\overset{\mathbb{P}_{\mathbf{A}^*}}{\longrightarrow} \mathcal{H}_\infty^{-1}, \ \rightarrow\infty.\]
			Thus by multiplicativity of $\mathbb{P}_{\mathbf{A}^*}$-limits and Equation \eqref{eqn:H_limit} the first term in \eqref{eqn:split_limit_discr_est} converges to zero in probability under $\mathbb{P}_{\mathbf{A}^*}$. The second term in \eqref{eqn:split_limit_discr_est} can be written as
			\begin{align*}  
				\sqrt{t} \left([\mathbf{H}]_{2, t}^{-1} \mathbf{H}_{1,t}- [\mathbf{H}]_{1,t}^{-1} \mathbf{H}_{1,t}\right)  
				&= \left(t [\mathbf{H}]_{2, t}^{-1}\right) t^{-1/2} \left([\mathbf{H}]_{1,t} - [\mathbf{H}]_{2, t}\right)  ( [\mathbf{H}]^{-1}_{1,t} \mathbf{H}_{1,t}).
			\end{align*}
			We note that:
			\begin{itemize}
				\item $t [\mathbf{H}]_{2, t}^{-1}$ converges in probability under $\mathbb{P}_{\mathbf{A}^*}$ to $\mathcal{H}_\infty^{-1}$;
				\item $t^{-1/2} \left([\mathbf{H}]_{1,t} - [\mathbf{H}]_{2, t}\right)$ converges in probability under $\mathbb{P}_{\mathbf{A}^*}$ to zero by Equation \eqref{eqn:[H]_limit};
				\item $[\mathbf{H}]^{-1}_{1,t} \mathbf{H}_{1,t}$ converges in probability under $\mathbb{P}_{\mathbf{A}^*}$ to $\mathrm{vec}(\mathbf{A}^*)$ by consistency of $\hat{\mathbf{A}}_{1,t}$ and continuity of $\mathrm{vec}$.
			\end{itemize} 
			Thus, by multiplicativity of $\mathbb{P}_{\mathbf{A}^*}$-limits, we get that also the second term converges to zero in probability under $\mathbb{P}_{\mathbf{A}^*}$.
			
			Equation \eqref{eqn:limit_discr_est} immediately implies 
			\begin{equation*} [\mathbf{H}]_{2, t}^{-1} \mathbf{H}_{2, t} - [\mathbf{H}]_{1,t}^{-1} \mathbf{H}_{1,t} \overset{\mathbb{P}_{ \mathbf{A}^* }}{\longrightarrow} 0, \ t\rightarrow \infty, \end{equation*}
			and hence consistency of the estimator $\hat{\mathbf{A}}_{2,t}$ follows from consistency of $\hat{\mathbf{A}}_{1,t}$ and continuity of the $\mathrm{vec}$ operator. Combining Equation \eqref{eqn:limit_discr_est} with the asymptotic normality of $\hat{\mathbf{A}}_{1,t}$ yields
			\[ t^{1/2} \left( [\mathbf{H}]_{2, t}^{-1} \mathbf{H}_{2, t} - \mathrm{vec}^{-1}(\mathbf{A}^*) \right) \overset{\mathcal{L}}{\longrightarrow} \mathbf{Z} \sim N(\mathbf{0}, \mathcal{H}_\infty^{-1}), \ t\rightarrow \infty, \]
			which, together with Equation \eqref{eqn:ergodic_[H_hat]}, gives the desired asymptotic normality result for $\hat{\mathbf{A}}_{2,t}$
		\end{proof}
		
		\subsection{Approximating the integrals} \label{sec:approx_integrals}
		First, we assume we have access to the sampled time series of $\mathbb{Y}_{[0,t]}$ and its first $p-1$ derivatives over the partition $\mathcal{P}_t$, i.e.\ $D^i\mathbb{Y}_{\mathcal{P}_t} := \{D^i\mathbf{Y}_{s}, \ s\in\mathcal{P}_t\}$
		for $i=0,\ldots, p-1$, as well as the jump processes $\mathbb{J}_{\mathcal{P}_t} := \{\mathbf{J}_{s}, \ s\in\mathcal{P}_t\},$ $\mathbb{M}_{\mathcal{P}_t} := \{\mathbf{M}_{s}, \ s\in\mathcal{P}_t\}$.
		In this setting we can perfectly recover $D^{p-1}\mathbb{Y}^c_{\mathbf{A}^{(0)}, \mathcal{P}_t}$ by
		\[D^{p-1}\mathbf{Y}^c_{\mathbf{A}^{(0)}, s} = D^{p-1}\mathbf{Y}_{s} - D^{p-1}\mathbf{Y}_0 - \mathbf{b}s- \mathbf{J}_s - \mathbf{M}_s, \quad s\in\mathcal{P}_t.\]
		Approximating the integrals in  \eqref{eqn:integrals}
		by
		\[ \sum_{n = 0}^{N_t-1} D^l Y^{(i)}_{s_n} \left(D^{p-1} Y^{c, (j)}_{\mathbf{A}^{(0)}, s_{n+1}} - D^{p-1} Y^{c, (j)}_{\mathbf{A}^{(0)}, s_{n}}\right) \ \mathrm{and} \  \sum_{n = 0}^{N_t-1} D^l Y^{(i)}_{s_n} D^{k} Y^{(j)}_{s_n} (s_{n+1} - s_n), \]
		for $i,j\in\{1,\ldots,d\}$ and $l,k\in\{0,\ldots, p-1\}$, we obtain natural approximations for $\mathbf{H}_t$ and $[\mathbf{H}]_t$ respectively.
		
		
		In order to derive the asymptotic properties of this discretized estimator, classic in-fill approximations
		are not sufficient. Instead, we need to combine in-fill and long-span asymptotics by taking the limits as both the mesh size vanishes and the observation horizon increases to infinity. We consider a countable sequence of partitions $\{\mathcal{P}_t,\ t\in\mathcal{T}\}$ with mesh $\Delta_{\mathcal{P}_t}\rightarrow 0$ as $t\rightarrow\infty$ which satisfy the following ``high-frequency sampling condition''.
		
		\begin{assumption} \label{ass:HF_sampling}
			The sequence of partitions $\{\mathcal{P}_t,\ t\in\mathcal{T}\}$ satisfies $\Delta_{\mathcal{P}_t} t\rightarrow 0$ as $t\rightarrow\infty$.
		\end{assumption}
		
		Under this assumption, we can prove the discretized estimator preserves the asymptotic properties of $\hat{\mathbf{A}}(\mathbb{Y}_{[0,t]})$.
		
		\begin{theorem} \label{thm:cons_asymp_first_approx}
			Let $\mathbb{Y}_{\mathcal{P}_t}$ denote the sampled observation of a stationary and ergodic MCAR($p$) process with parameters $\mathbf{A}^*\in\mathfrak{A}$ under $\mathbb{P}_{\mathbf{A}^*}$ such that the driving \Levy process satisfies Assumption \ref{ass:levy_process}. Let $\{\mathcal{P}_t,\ t\in\mathcal{T}\}$ be a countable sequence of partitions satisfying Assumption \ref{ass:HF_sampling}. 
			Define the discretized estimator
			\begin{equation} \label{eqn:discr_estimator_A} \hat{\mathbf{A}}(\mathbb{Y}_{\mathcal{P}_t}, \ldots, D^{p-1}\mathbb{Y}_{\mathcal{P}_t},\mathbb{J}_{\mathcal{P}_t}, \mathbb{M}_{\mathcal{P}_t}) = \mathrm{vec}^{-1}\left([\mathbf{H}]_{\mathcal{P}_t}^{-1} \mathbf{H}_{\mathcal{P}_t}\right),
			\end{equation} 
			where 
			\begin{align}\label{eqn:H_hat}
				\mathbf{H}_{\mathcal{P}_t} &= -
				\sum_{n = 0}^{N_t-1}
				\begin{pmatrix}
					D^{p-1} Y^{(1)}_{s_n} \Sigma^{-1} \\
					\vdots \\
					D^{p-1} Y^{(d)}_{s_n} \Sigma^{-1}\\
					\vdots \\
					Y^{(1)}_{s_n} \Sigma^{-1}\\
					\vdots \\
					Y^{(d)}_{s_n} \Sigma^{-1}\\
				\end{pmatrix} \left(D^{p-1} \mathbf{Y}^{c}_{\mathbf{A}^{(0)}, s_{n+1}} - D^{p-1} \mathbf{Y}^{c}_{\mathbf{A}^{(0)}, s_{n}}\right) ,\\
				\label{eqn:[H]_hat}
				[\mathbf{H}]_{\mathcal{P}_t} &=
				\sum_{n = 0}^{N_t-1} \begin{pmatrix}
					D^{p-1} Y^{(1)}_{s_n} \Sigma^{-1} D^{p-1}Y^{(1)}_{s_n} & \cdots &  D^{p-1}Y^{(1)}_{s_n} \Sigma^{-1} Y^{(d)}_{s_n} \\
					\vdots & \ddots & \vdots \\
					Y^{(d)}_{s_n} \Sigma^{-1} D^{p-1}Y^{(1)}_{s_n} & \cdots &  Y^{(d)}_{s_n} \Sigma^{-1} Y^{(d)}_{s_n}\ \\
				\end{pmatrix}(s_{n+1} - s_n).\end{align}
			The estimator is 
			\begin{itemize}
				\item (weakly) consistent, i.e.\ 
				\[ \hat{\mathbf{A}}(\mathbb{Y}_{\mathcal{P}_t}, \ldots, D^{p-1}\mathbb{Y}_{\mathcal{P}_t},\mathbb{J}_{\mathcal{P}_t}, \mathbb{M}_{\mathcal{P}_t}) \overset{\mathbb{P}_{ \mathbf{A}^* }}{\longrightarrow} \mathbf{A}^*, \ t\rightarrow \infty; \] 
				\item asymptotically normal, i.e.\ 
				\[ \sqrt{t} \left(\mathrm{vec}(\hat{\mathbf{A}}(\mathbb{Y}_{\mathcal{P}_t}, \ldots, D^{p-1}\mathbb{Y}_{\mathcal{P}_t},\mathbb{J}_{\mathcal{P}_t}, \mathbb{M}_{\mathcal{P}_t})) - \mathrm{vec}(\mathbf{A}^*)\right) \overset{\mathcal{L}}{\longrightarrow} \mathbf{Z} \sim N(\mathbf{0}, \mathcal{H}^{-1}_\infty), \ t \rightarrow \infty, \]
				where $\mathcal{H}_\infty\in\mathcal{M}_{pd^2}(\mathbb{R})$ is a symmetric positive definite matrix such that $t^{-1}[\mathbf{H}]_{\mathcal{P}_t} \overset{\mathbb{P}_{ \mathbf{A}^*}}{\longrightarrow} \mathcal{H}_\infty, \ t\rightarrow \infty$.
			\end{itemize}
		\end{theorem}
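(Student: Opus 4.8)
The plan is to derive the result from Lemma \ref{lemma:approx} by taking the continuous-time quantities $\mathbf{H}_{1,t}=\mathbf{H}_t$ and $[\mathbf{H}]_{1,t}=[\mathbf{H}]_t$ of \eqref{eqn:H}--\eqref{eqn:[H]} as the benchmark, and the Riemann-sum quantities $\mathbf{H}_{2,t}=\mathbf{H}_{\mathcal{P}_t}$, $[\mathbf{H}]_{2,t}=[\mathbf{H}]_{\mathcal{P}_t}$ of \eqref{eqn:H_hat}--\eqref{eqn:[H]_hat} as the approximation. Under $\mathbb{P}_{\mathbf{A}^*}$ the former pair produces the continuous-time MLE, which is consistent and asymptotically normal by Theorem \ref{thm:cons_asymp} together with Remark \ref{rem:cons_asymp_stat_ergodic}. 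The entire problem thus reduces to verifying the two negligibility conditions \eqref{eqn:H_limit} and \eqref{eqn:[H]_limit}, i.e.\ that the $t^{-1/2}$-scaled Riemann-sum errors vanish in $\mathbb{P}_{\mathbf{A}^*}$-probability.

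The single analytic input behind both estimates is a uniform $L^2$ increment bound: for every $l\in\{0,\ldots,p-1\}$ there is a constant $C$ with
\[\mathbb{E}\big[\|D^l\mathbf{Y}_{s+\delta}-D^l\mathbf{Y}_s\|^2\big]\le C\delta,\qquad s\ge 0,\ 0<\delta\le 1.\]
For $l\le p-2$ this is immediate, since $D^l\mathbb{Y}$ is absolutely continuous with stationary, square-integrable derivative. For the infinite-activity component $l=p-1$ one works from the stationary moving-average representation $\mathbf{X}_t=\int_{-\infty}^t e^{\mathcal{A}_{\mathbf{A}^*}(t-s)}\mathcal{E}\,d\mathbf{L}_s$, splitting the increment into a fresh part $\int_t^{t+\delta}e^{\mathcal{A}_{\mathbf{A}^*}(t+\delta-s)}\mathcal{E}\,d\mathbf{L}_s$, which is $O(\sqrt\delta)$ in $L^2$ by the It\^o isometry and $\int\|z\|^2F(dz)<\infty$, and a history part governed by $\|e^{\mathcal{A}_{\mathbf{A}^*}(t-s)}(e^{\mathcal{A}_{\mathbf{A}^*}\delta}-I)\|\le C e^{-c(t-s)}\delta$, which contributes $O(\delta)$ once the exponential decay from $\sigma(\mathcal{A}_{\mathbf{A}^*})\subseteq(-\infty,0)+i\mathbb{R}$ is used.

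To verify \eqref{eqn:[H]_limit}, observe that each entry of $[\mathbf{H}]_{\mathcal{P}_t}-[\mathbf{H}]_t$ is, up to the constant factor $\Sigma^{-1}$, a Lebesgue Riemann-sum error $\sum_n\int_{s_n}^{s_{n+1}}(g_{s_n}-g_s)\,ds$ for a product $g_s=D^l Y^{(i)}_s D^k Y^{(j)}_s$. By stationarity its $L^1$ norm is at most $t\sup_{0<\delta\le\Delta_{\mathcal{P}_t}}\mathbb{E}|g_{s+\delta}-g_s|$; writing $g_{s+\delta}-g_s=D^lY^{(i)}_{s+\delta}(D^kY^{(j)}_{s+\delta}-D^kY^{(j)}_s)+D^kY^{(j)}_s(D^lY^{(i)}_{s+\delta}-D^lY^{(i)}_s)$ and applying Cauchy--Schwarz with the increment bound yields $\mathbb{E}|g_{s+\delta}-g_s|\le C\sqrt\delta$. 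Hence the error is $O(t\sqrt{\Delta_{\mathcal{P}_t}})$ in $L^1$, so its $t^{-1/2}$-scaled version is $O(\sqrt{t\Delta_{\mathcal{P}_t}})$ and vanishes in probability precisely because of Assumption \ref{ass:HF_sampling}.

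For \eqref{eqn:H_limit} one decomposes the integrator under $\mathbb{P}_{\mathbf{A}^*}$ as
\[ D^{p-1}\mathbf{Y}^c_{\mathbf{A}^{(0)},s}=\mathbf{M}_s-\sum_{j=1}^p\int_0^s A^*_j D^{p-j}\mathbf{Y}_u\,du,\qquad \mathbf{M}:=D^{p-1}\mathbf{Y}^c_{\mathbf{A}^*}, \]
where $\mathbf{M}$ is a Brownian motion with covariance $\Sigma$. The error splits accordingly. The finite-variation part is a product Riemann-sum error of exactly the kind handled for \eqref{eqn:[H]_limit}, and is again $O(\sqrt{t\Delta_{\mathcal{P}_t}})$ in $L^1$. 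The martingale part equals the stochastic integral $\int_0^t(D^lY^{(i)}_{\eta(s)}-D^lY^{(i)}_{s-})\,d\mathbf{M}^{(j)}_s$, with $\eta(s)$ the preceding grid point; being itself a martingale, the It\^o isometry and the increment bound give $\mathbb{E}[(\cdot)^2]\le C\int_0^t\mathbb{E}[(D^lY^{(i)}_{\eta(s)}-D^lY^{(i)}_{s-})^2]\,ds\le C t\Delta_{\mathcal{P}_t}$, so its $t^{-1/2}$-scaled square mean is $O(\Delta_{\mathcal{P}_t})\to 0$. Collecting the two pieces verifies \eqref{eqn:H_limit}, and Lemma \ref{lemma:approx} then delivers consistency and asymptotic normality. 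The main obstacle is the increment bound of the second paragraph for the jumpy component $D^{p-1}\mathbb{Y}$, together with the realization that the $\mathbf{H}$-error must be split into its martingale and drift parts: isolating the martingale part lets the It\^o isometry act on orthogonal increments, which is the cleanest route to the required rate, whereas the drift part and all of $[\mathbf{H}]$ genuinely need the stronger balance $t\Delta_{\mathcal{P}_t}\to 0$ of Assumption \ref{ass:HF_sampling}.
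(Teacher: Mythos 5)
Your proposal is correct and follows essentially the same route as the paper: Lemma \ref{lemma:approx} with the continuous-time MLE as benchmark, the decomposition \eqref{eqn:DY=W-A} of the integrator into its Brownian and drift parts, the It\^o isometry for the martingale error, and Cauchy--Schwarz plus stationarity for the drift and $[\mathbf{H}]$ errors, all resting on the $O(\delta)$ stationary $L^2$ increment bound (which the paper obtains from the explicit infinitely divisible transition characteristics, Equation \eqref{eqn:big_O_X2}, rather than your fresh/history split of the moving-average representation — an equivalent derivation). The resulting rates $O(\Delta_{\mathcal{P}_t}^{1/2})$ and $O((t\Delta_{\mathcal{P}_t})^{1/2})$ and the use of Assumption \ref{ass:HF_sampling} match the paper's proof exactly.
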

		\begin{proof}
			We apply Lemma \ref{lemma:approx} with $\hat{\mathbf{A}}_{1,t} = \hat{\mathbf{A}}(\mathbb{Y}_{[0,t]})$ and $\hat{\mathbf{A}}_{2,t} = \hat{\mathbf{A}}(\mathbb{Y}_{\mathcal{P}_t}, \ldots, D^{p-1}\mathbb{Y}_{\mathcal{P}_t},\mathbb{J}_{\mathcal{P}_t}, \mathbb{M}_{\mathcal{P}_t})$
			by decomposing the integrator as
			\begin{equation} \label{eqn:DY=W-A}
				D^{p-1}\mathbf{Y}^c_{\mathbf{A}^{(0)}, s} = D^{p-1}\mathbf{Y}_{\mathbf{A}^*, s}^c  - \sum_{j=1}^p\int_0^s A^*_j D^{p-j}\mathbf{Y}_u\ du = \Sigma^{1/2} \mathbf{W}_s - \sum_{j=1}^p\int_0^s A^*_j D^{p-j}\mathbf{Y}_u\ du,
			\end{equation}
			where $\mathbb{W} = \{\mathbf{W}_s,\  s\geq 0\}$ is a standard Brownian motion under $\mathbb{P}_{\mathbf{A}^*}$. Under Assumption \ref{ass:levy_process} by Theorem \ref{thm:cons_asymp} (and Remark \ref{rem:cons_asymp_stat_ergodic}) we have that $\hat{\mathbf{A}}_{1,t}$ is consistent and asymptotic normal. It thus remains to prove equations \eqref{eqn:H_limit} and \eqref{eqn:[H]_limit}. See Appendix \ref{app:proof_them_cons_asymp_first_approx} for the full details.			
		\end{proof}
		
		\subsection{Approximating the derivatives} \label{sec:approx_derivatives}
		Next, we assume we have access to the sampled MCAR($p$) process only, i.e.\ $\mathbb{Y}_{\mathcal{P}_t} := \{\mathbf{Y}_{s}, \ s\in\mathcal{P}_t\}$,
		and, as in the previous section, the jump processes $ \mathbb{J}_{\mathcal{P}_t}$ and $\mathbb{M}_{\mathcal{P}_t}$.
		Motivated by Proposition \ref{prop:right_diff} we approximate the first $p-1$ derivatives of $\mathbb{Y}_{[0,t]}$ at each point in $\mathcal{P}_t$ by forward finite differences. We can define these iteratively by
		\begin{equation} \label{eqn:D_hat}
			\hat{D}^k\mathbb{Y}_{\mathcal{P}_t} := \{ \hat{D}^k\mathbf{Y}_{s_n}, \ n=0,\ldots, N_t - k\}\ \mathrm{s.t.}\ \hat{D}^k\mathbf{Y}_{s_n} = 
			\frac{\hat{D}^{k-1}\mathbf{Y}_{s_{n+1}} - \hat{D}^{k-1}\mathbf{Y}_{s_{n}}}{s_{n+1} - s_{n}},
		\end{equation} 
		for $k\in\{1,\ldots, p-1\}$ and $\hat{D}^0\mathbb{Y}_{\mathcal{P}_t} = \mathbb{Y}_{\mathcal{P}_t}$. 
		In order to obtain appropriate convergence of the (higher-order) finite difference approximations to the corresponding derivatives we assume the observation partitions $\mathcal{P}_t$ become increasingly uniform. This is what we will refer to as the ``convergence to uniform spacing'' condition.
		\begin{assumption} \label{ass:evenly_spaced_lim}
			The sequence of partitions $\{\mathcal{P}_t,\ t\in\mathcal{T}\}$ satisfies
			\begin{equation} \label{eqn:c_P}
				|c_{\mathcal{P}_t}^{p-1} - 1| = O\left(\Delta_{\mathcal{P}_t}^{p-3/2}\right), \quad t\rightarrow \infty,
			\end{equation}
			where
			\begin{equation*}
				c_{\mathcal{P}_t} := \frac{\inf_{0\leq i< N_t} (s_{i+1} - s_i)}{\sup_{0\leq i< N_t} (s_{i+1} - s_i)}.
			\end{equation*}
		\end{assumption}
		\begin{remark} 
			To understand this condition we note that
			\begin{itemize}
				\item for $p\geq 2$, it implies $c_{\mathcal{P}_t}\rightarrow 1,\ t\rightarrow\infty$ and Equation \eqref{eqn:c_P} can be understood as the ``speed of convergence'';
				\item when $p=1$, it is trivially satisfied for any sequence of partitions $\{\mathcal{P}_t,\ t\in\mathcal{T}\}$. This is what one would desire: the resulting estimator does not depend on $\mathcal{P}_t$ as there is no approximation of derivatives.
			\end{itemize}
			We will also make use of the following immediate consequences of the definition of $c_{\mathcal{P}_t}$:
			\begin{itemize}
				\item $c_{\mathcal{P}_t}\in[0, 1]$, thus $|c_{\mathcal{P}_t}^k - 1| = O\left(\Delta_{\mathcal{P}_t}^{k-1/2}\right),\ t\rightarrow \infty$ for all $k=1, \ldots, p-1$, and
				\item for any interval $[s_n, s_{n+1}]\in\mathcal{P}_t$ one has
				\[c_{\mathcal{P}_t}\Delta_{\mathcal{P}_t} \leq |s_{n+1} - s_n| \leq \Delta_{\mathcal{P}_t}.\]
			\end{itemize}
			A weaker assumption might be sufficient, where we control the ratio between the smallest and the largest interval in each group of $p-1$ successive intervals, i.e.\ the maximum length over which a forward difference is defined.
		\end{remark}
		
		\begin{lemma} \label{lemma:D-D_L2_bound}
			Let $\mathbb{Y}_{\mathcal{P}_t}$ denote the sampled observation of a stationary and ergodic MCAR($p$) process with parameters $\mathbf{A}^*\in\mathfrak{A}$ under $\mathbb{P}_{\mathbf{A}^*}$ such that the driving \Levy process satisfies Assumption \ref{ass:levy_process}. Let $\{\mathcal{P}_t,\ t\in\mathcal{T}\}$ be a countable sequence of partitions satisfying Assumption \ref{ass:evenly_spaced_lim}. For $k\in\{1,\ldots, p-1\}$ let $\hat{D}^k\mathbb{Y}_{\mathcal{P}_t}$ denote the $k$-th forward finite difference of $\mathbb{Y}_{\mathcal{P}_t}$ defined in Equation \eqref{eqn:D_hat}. Then for any $n\in\{0,\ldots, N_t-k\}$ we have
			\begin{equation} \label{eqn:D-D_L2_bound}
				\mathbb{E}_{\mathbf{A}^*}\left[\left\|\hat{D}^k\mathbf{Y}_{s_n} - D^k\mathbf{Y}_{s_n}\right\|^2\right] = O(\Delta_{\mathcal{P}_t}), \quad t\rightarrow \infty,
			\end{equation}
		\end{lemma}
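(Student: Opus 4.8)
The plan is to reduce the statement to two ingredients: (i) stationary second‑moment bounds for the state‑space components, and (ii) a purely deterministic analysis of the forward‑difference weights on an asymptotically uniform grid. Recall from Proposition \ref{prop:right_diff} that $D^j\mathbf{Y} = \mathbb{X}^{(j+1)}$. First I would record the increment estimates that follow from square‑integrability and stationarity (Remark \ref{rem:stat_ergodic}). For $j\leq p-2$ the component $D^j\mathbf{Y}$ is continuously differentiable, so $D^j\mathbf{Y}_t - D^j\mathbf{Y}_s = \int_s^t D^{j+1}\mathbf{Y}_u\,du$ and Cauchy–Schwarz gives $\mathbb{E}_{\mathbf{A}^*}\|D^j\mathbf{Y}_t - D^j\mathbf{Y}_s\|^2 = O((t-s)^2)$; for the top component $D^{p-1}\mathbf{Y}=\mathbb{X}^{(p)}$, the SDE \eqref{eqn:SDE_def} together with the Itô isometry for the (square‑integrable) compensated jump part yields $\mathbb{E}_{\mathbf{A}^*}\|D^{p-1}\mathbf{Y}_t - D^{p-1}\mathbf{Y}_s\|^2 = O(t-s)$, the leading order coming from the Brownian and jump‑martingale parts. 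Stationarity also gives $\sup_t\mathbb{E}_{\mathbf{A}^*}\|D^j\mathbf{Y}_t\|^2 = O(1)$ for every $j$.

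Next I would expand $\mathbf{Y}$ about $s_n$ using the iterated‑integral (Taylor) formula, which is legitimate precisely because of Proposition \ref{prop:right_diff}:
\[ \mathbf{Y}_{s_{n+i}} = \sum_{j=0}^{p-1}\frac{(s_{n+i}-s_n)^j}{j!}\,D^j\mathbf{Y}_{s_n} + R_{n,i},\qquad R_{n,i}=\int_{s_n}^{s_{n+i}}\frac{(s_{n+i}-u)^{p-2}}{(p-2)!}\bigl(D^{p-1}\mathbf{Y}_u-D^{p-1}\mathbf{Y}_{s_n}\bigr)\,du. \]
Writing the finite difference as a fixed linear combination $\hat{D}^k\mathbf{Y}_{s_n}=\sum_{i=0}^k w_{n,i}\mathbf{Y}_{s_{n+i}}$ of its stencil and substituting this expansion gives $\hat{D}^k\mathbf{Y}_{s_n} = \sum_{j=0}^{p-1}\frac{m_j}{j!}D^j\mathbf{Y}_{s_n} + \sum_{i}w_{n,i}R_{n,i}$, where $m_j := \sum_i w_{n,i}(s_{n+i}-s_n)^j = \hat{D}^k[(\cdot-s_n)^j](s_n)$ are deterministic grid quantities. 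Subtracting $D^k\mathbf{Y}_{s_n}$ and using the triangle inequality in $L^2(\mathbb{P}_{\mathbf{A}^*})$ reduces the claim to bounding $|m_k/k!-1|\,\|D^k\mathbf{Y}_{s_n}\|_{L^2} + \sum_{j\neq k}\tfrac{|m_j|}{j!}\|D^j\mathbf{Y}_{s_n}\|_{L^2} + \|\sum_i w_{n,i}R_{n,i}\|_{L^2}$; since each $\|D^j\mathbf{Y}_{s_n}\|_{L^2}=O(1)$, it suffices to show every factor is $O(\Delta_{\mathcal{P}_t}^{1/2})$.

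The remainder is the simplest piece. From the $O(\sqrt{u-s_n})$ increment bound of $D^{p-1}\mathbf{Y}$, integrated against the $O(\Delta_{\mathcal{P}_t}^{p-2})$ kernel over an interval of length $O(\Delta_{\mathcal{P}_t})$, one gets $\|R_{n,i}\|_{L^2}=O(\Delta_{\mathcal{P}_t}^{p-1/2})$; together with $w_{n,i}=O(\Delta_{\mathcal{P}_t}^{-k})$ (each of the $k$ divided‑difference steps divides by a spacing bounded below by $c_{\mathcal{P}_t}\Delta_{\mathcal{P}_t}$, and $c_{\mathcal{P}_t}\to1$ under Assumption \ref{ass:evenly_spaced_lim}) this yields a remainder of order $O(\Delta_{\mathcal{P}_t}^{\,p-k-1/2})\leq O(\Delta_{\mathcal{P}_t}^{1/2})$, the binding case being $k=p-1$. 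This term alone already forces the bound to be $O(\Delta_{\mathcal{P}_t})$ rather than smaller, so the lemma is sharp for the top‑order difference.

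The main obstacle is the deterministic estimate of the moment defects $m_j$. On a uniform grid one has $m_j=0$ for $j<k$ and $m_k=k!$, but our grid is only asymptotically uniform and, because the weights carry the factor $\Delta_{\mathcal{P}_t}^{-k}$, the naive bounds on $m_j$ for $j<k$ blow up; recovering smallness requires exploiting the near‑cancellation that Assumption \ref{ass:evenly_spaced_lim} guarantees. The key observation is that, writing consecutive spacing ratios as $h_{m+1}/h_m = 1+\epsilon_m$, each defect $m_j$ is a polynomial in the $\epsilon_m$ that vanishes identically when the grid is uniform, and a careful bookkeeping (which I would organise as an induction on the differencing order $k$, tracking how one divided‑difference step acts on the $\epsilon_m$‑dependence) shows $m_0=m_1=0$ exactly and, for $2\leq j\leq k$, a deviation from the uniform‑grid value of order $O(\Delta_{\mathcal{P}_t}^{\,(p-3/2)+(j-k)})$, since $|\epsilon_m|=O(|c_{\mathcal{P}_t}-1|)=O(\Delta_{\mathcal{P}_t}^{\,p-3/2})$ by Assumption \ref{ass:evenly_spaced_lim}; the defects for $k<j\leq p-1$ are $O(\Delta_{\mathcal{P}_t}^{\,j-k})=O(\Delta_{\mathcal{P}_t})$ already on the uniform grid. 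As $k\leq p-1$ and $j\geq2$, all these exponents are at least $1/2$, so every summand is $O(\Delta_{\mathcal{P}_t}^{1/2})$ in $L^2$. Squaring and invoking stationarity to make the constants uniform in $n$ then gives $\mathbb{E}_{\mathbf{A}^*}\|\hat{D}^k\mathbf{Y}_{s_n}-D^k\mathbf{Y}_{s_n}\|^2=O(\Delta_{\mathcal{P}_t})$, as claimed. The combinatorial control of the $m_j$ on the non‑uniform grid is the step I expect to require the most work.
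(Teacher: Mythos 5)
Your proposal is correct in outline and shares the same architecture as the paper's proof: write the forward difference as a weighted linear combination of the stencil values, Taylor-expand $\mathbf{Y}$ about $s_n$ via the iterated-integral representation from Proposition \ref{prop:right_diff}, control the deterministic weight defects using Assumption \ref{ass:evenly_spaced_lim}, and bound the stochastic remainder with the increment estimate $\mathbb{E}_{\mathbf{A}^*}\|\mathbf{X}_t-\mathbf{X}_0\|^2=O(t)$ together with stationarity. There are two genuine differences in execution. First, you expand to order $p-1$ with a remainder driven by increments of $D^{p-1}\mathbf{Y}$, paying the full weight factor $O(\Delta_{\mathcal{P}_t}^{-k})$ against a remainder of size $O(\Delta_{\mathcal{P}_t}^{p-1/2})$; the paper instead expands only to order $k$ (Lemma \ref{lemma:Y_iterated_int}), so its remainder involves $D^k\mathbf{Y}_{u}-D^k\mathbf{Y}_{s_n}$ weighted by the $O(1)$ quantity $\bar{C}(s_n,\ldots,s_{n+k};k)$. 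Both give $O(\Delta_{\mathcal{P}_t}^{1/2})$ in $L^2$, and your version makes it transparent that $k=p-1$ is the binding case. Second, for the moment defects $m_j$ the paper works with the explicit composition-sum formula for the weights (Lemma \ref{lemma:FD_Y}), the alternating binomial identity of Lemma \ref{lemma:F(j,k)}, and two-sided bounds on $C(s_n,\ldots,s_{n+k};j)$ in terms of $|c_{\mathcal{P}_t}^k-1|$, yielding $|C(\cdot;j)-F(j,k)|\lesssim\Delta_{\mathcal{P}_t}^{j-1/2}$; you propose an induction on the differencing order tracking consecutive spacing ratios $1+\epsilon_m$. Your claimed rates ($m_0=0$; $m_1=0$ exactly for $k\geq 2$, while for $k=1$ one has $m_1=1!$ so the relevant defect $|m_1/1!-1|$ still vanishes — a small slip in your statement; defects $O(\Delta_{\mathcal{P}_t}^{p-3/2+j-k})$ for $2\leq j\leq k$ and $O(\Delta_{\mathcal{P}_t}^{j-k})$ for $j>k$) are consistent with the paper's bounds and check out in low-order examples, so the strategy is sound; but this combinatorial step is exactly where the paper expends most of its effort, and you have (candidly) left it as the least developed part of your argument, so a complete write-up would still need to carry out that induction or fall back on the explicit formulas of Lemmas \ref{lemma:FD_Y} and \ref{lemma:F(j,k)}.
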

		\begin{proof}
			See Appendix \ref{app:proof_D-D_L2_bound}.
		\end{proof}
		
		Having approximated the derivatives on the partition $\mathcal{P}_t$, we then proceed as in Section \ref{sec:approx_integrals} to approximate the integrals. To preserve the asymptotic properties of the estimator, we will need to work on a coarser partition. For all $t\in\mathcal{T}$ let 
		\[\mathcal{Q}_t = \{0=u_0<u_1<\ldots < u_{N'_t} = t\},\]
		be a coarsening of $\mathcal{P}_t$. Note that $N'_t \leq N_t$ and $\Delta_{\mathcal{P}_t} \leq \Delta_{\mathcal{Q}_t}$. As before, we will assume the sequence $\{\mathcal{Q}_t, \ t\in\mathcal{T}\}$ satisfies the high-frequency sampling condition Assumption \ref{ass:HF_sampling}, and, hence, so does $\{\mathcal{P}_t, \ t\in\mathcal{T}\}$, but we will also control the rate at which the number of observations increases.
		\begin{assumption} \label{ass:controlled_sampling}
			The sequence of partitions $\{\mathcal{Q}_t,\ t\in\mathcal{T}\}$ satisfies $c_{\mathcal{Q}_t}\rightarrow c>0$ as $t\rightarrow\infty$.
		\end{assumption}
		\begin{remark}
			This is a slightly stronger condition than $N'_t \Delta_{\mathcal{Q}_t} = O(t) $ as $t\rightarrow\infty$, which is used in \cite{mai_OU} and \cite{Courgeau_Veraart_2022}. The stronger condition stated in Assumption \ref{ass:controlled_sampling} will only be needed when jointly approximating the derivatives and thresholding the jumps in Section \ref{sec:thresholding}. In any case, both versions of this condition can always be achieved by subsampling $\mathcal{Q}_t$, i.e.\ discarding observations.
		\end{remark}
		Finally, we control the relative mesh sizes of $\mathcal{P}_t$ and $\mathcal{Q}_t$ via the following joint condition. This ensures the derivative approximations converge faster than the Riemann sums and the resulting estimator does not degenerate. The relationship between $\mathcal{P}_t$ and $\mathcal{Q}_t$ is illustrated in Figure \ref{fig:joint_partitions}.
		\begin{assumption} \label{ass:joint_mesh}
			The sequences of partitions $\{\mathcal{P}_t, \ t\in\mathcal{T}\}$ and $\{\mathcal{Q}_t, \ t\in\mathcal{T}\}$ are such that 
			$\Delta_{\mathcal{P}_t} = o(t^{-1} \Delta_{\mathcal{Q}_t}^2)$ as $t\rightarrow \infty.$
		\end{assumption}
		\begin{remark}
			For example, since $t\Delta_{\mathcal{Q}_t}\rightarrow 0$ as $t\rightarrow 0$ by Assumption \ref{ass:HF_sampling}, a sufficient condition is that  $\Delta_{\mathcal{P}_t} \propto \Delta^{3}_{\mathcal{Q}_t}$.
		\end{remark}
		
		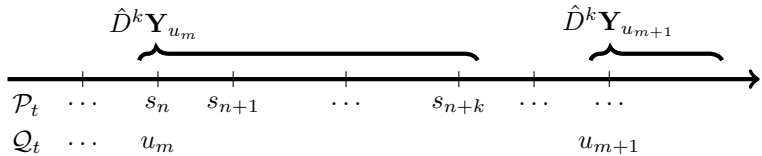
\begin{figure}[ht]
			\centering
			\begin{tikzpicture}
				\draw[ultra thick, ->] (0,0) -- (10,0);
				
				\foreach \x in {1, 2,3,4.5,6,7,8}
				\draw (\x cm,3pt) -- (\x cm,-3pt);
				
				\draw[ultra thick] (0.25,0) node[below=2pt,thick] {$\mathcal{P}_t$} node[above=3pt] {};]
				\draw[ultra thick] (1,0) node[below=5pt,thick] {$\ldots$} node[above=3pt] {};
				\draw[ultra thick] (2,0) node[below=3pt,thick] {$s_n$} node[above=3pt] {};
				\draw[ultra thick] (3,0) node[below=3pt,thick] {$s_{n+1}$} node[above=3pt] {};
				\draw[ultra thick] (4.5,0) node[below=5pt,thick] {$\ldots$} node[above=3pt] {};
				\draw[ultra thick] (6,0) node[below=3pt,thick] {$s_{n+k}$} node[above=3pt] {};
				\draw[ultra thick] (7,0) node[below=5pt,thick] {$\ldots$} node[above=3pt] {};
				\draw[ultra thick] (8,0) node[below=5pt,thick] {$\ldots$} node[above=3pt] {};
				
				\draw[ultra thick] (0.25,0) node[below=16pt,thick] {$\mathcal{Q}_t$} node[above=3pt] {};
				\draw[ultra thick] (1,0) node[below=20pt,thick] {$\ldots$} node[above=3pt] {};
				\draw[ultra thick] (2,0) node[below=18pt,thick] {$u_m$} node[above=3pt] {};
				\draw[ultra thick] (8,0) node[below=18pt,thick] {$u_{m+1}$} node[above=3pt] {};
				
				\draw[black, ultra thick, decorate, decoration={brace, amplitude=5pt, aspect=0.05}] (1.75,0.25) -- (6.25,0.25) node [black, midway, above=4pt, xshift=-2.05cm] {$\hat{D}^{k}\mathbf{Y}_{u_m}$};
				
				\draw[black, ultra thick, decorate, decoration={brace, amplitude=5pt, aspect=0.15}] (7.75,0.25) -- (9.5,0.25) node [black, midway, above=4pt, xshift=-0.5cm] {$\hat{D}^{k}\mathbf{Y}_{u_{m+1}}$};
			\end{tikzpicture}
			\caption{Approximating the derivatives on $\mathcal{P}_t$ and the integrals on $\mathcal{Q}_t$.} \label{fig:joint_partitions}
		\end{figure}
		
		\begin{theorem}               \label{thm:cons_asymp_second_approx}
			Let $\mathbb{Y}_{\mathcal{P}_t}$ denote the sampled observation of a stationary and ergodic MCAR($p$) process with parameters $\mathbf{A}^*\in\mathfrak{A}$ under $\mathbb{P}_{\mathbf{A}^*}$ such that the driving \Levy process satisfies Assumption \ref{ass:levy_process}. Let $\{\mathcal{P}_t,\ t\in\mathcal{T}\}$ be a countable sequence of partitions satisfying Assumption \ref{ass:evenly_spaced_lim} with refinements $\{\mathcal{Q}_t,\ t\in\mathcal{T}\}$ satisfying Assumption \ref{ass:HF_sampling}, Assumption \ref{ass:controlled_sampling} and Assumption \ref{ass:joint_mesh}. Define the discretized estimator
			\begin{equation} \label{eqn:discr_estimator_A_2}
				\hat{\mathbf{A}}(\mathbb{Y}_{\mathcal{P}_t}, \mathbb{J}_{\mathcal{Q}_t}, \mathbb{M}_{\mathcal{Q}_t}) := \mathrm{vec}^{-1}\left([\mathbf{H}]_{\mathcal{P}_t, \mathcal{Q}_t}^{-1} \mathbf{H}_{\mathcal{P}_t, \mathcal{Q}_t}\right),
			\end{equation} 
			where 
			\begin{align} \label{eqn:H_hat_hat}
				\mathbf{H}_{\mathcal{P}_t, \mathcal{Q}_t} &= -
				\sum_{m = 0}^{M_t-1}
				\begin{pmatrix}
					\hat{D}^{p-1} Y^{(1)}_{u_m} \Sigma^{-1} \\
					\vdots \\
					\hat{D}^{p-1} Y^{(d)}_{u_m} \Sigma^{-1}\\
					\vdots \\
					Y^{(1)}_{u_m} \Sigma^{-1}\\
					\vdots \\
					Y^{(d)}_{u_m} \Sigma^{-1}\\
				\end{pmatrix} \left(\hat{D}^{p-1} \mathbf{Y}^{c}_{\mathbf{A}^{(0)}, u_{m+1}} - \hat{D}^{p-1} \mathbf{Y}^{c}_{\mathbf{A}^{(0)}, u_{m}}\right),\\ \label{eqn:[H]_hat_hat}
				[\mathbf{H}]_{\mathcal{P}_t, \mathcal{Q}_t} &=
				\sum_{m = 0}^{M_t-1} \begin{pmatrix}
					\hat{D}^{p-1}Y^{(1)}_{u_m} \Sigma^{-1} \hat{D}^{p-1}Y^{(1)}_{u_m} & \cdots &  \hat{D}^{p-1}Y^{(1)}_{u_m} \Sigma^{-1} Y^{(d)}_{u_m} \\
					\vdots & \ddots & \vdots \\
					Y^{(d)}_{u_m} \Sigma^{-1} \hat{D}^{p-1}Y^{(1)}_{u_m} & \cdots &  Y^{(d)}_{u_m} \Sigma^{-1} Y^{(d)}_{u_m}\ \\
				\end{pmatrix}(u_{m+1} - u_m),
			\end{align}
			are the approximated integrals over the partition $\mathcal{Q}_t$ with the derivatives replaced by the finite difference approximations, $M_t := \max\{m\in\{0,\ldots, N'_t\}:u_{m} \leq s_{N_t-p}\}$ (i.e.\ need $\hat{D}^{p-1}\mathbf{Y}_{u_{M_t}}$ to be defined) and
			\begin{equation} \label{eqn:approx_cont_mart_part}
				\hat{D}^{p-1}\mathbf{Y}^c_{\mathbf{A}^{(0)}, u_m} = \hat{D}^{p-1}\mathbf{Y}_{u_m} - \hat{D}^{p-1}\mathbf{Y}_0 - \mathbf{b} u_m - \mathbf{J}_{u_m} - \mathbf{M}_{u_m}, \quad m = 0,\ldots, M_t.
			\end{equation}
			The estimator is 
			\begin{itemize}
				\item (weakly) consistent, i.e.\ 
				\[ \hat{\mathbf{A}}(\mathbb{Y}_{\mathcal{P}_t}, \mathbb{J}_{\mathcal{Q}_t}, \mathbb{M}_{\mathcal{Q}_t}) \overset{\mathbb{P}_{ \mathbf{A}^* }}{\longrightarrow} \mathbf{A}^*, \ t\rightarrow \infty; \] 
				\item asymptotically normal, i.e.\ 
				\[ \sqrt{t} \left(\mathrm{vec}(\hat{\mathbf{A}}(\mathbb{Y}_{\mathcal{P}_t}, \mathbb{J}_{\mathcal{Q}_t}, \mathbb{M}_{\mathcal{Q}_t})) - \mathrm{vec}(\mathbf{A}^*)\right) \overset{\mathcal{L}}{\longrightarrow} \mathbf{Z} \sim N(\mathbf{0}, \mathcal{H}^{-1}_\infty), \ t \rightarrow \infty, \]
				where $\mathcal{H}_\infty\in\mathcal{M}_{pd^2}(\mathbb{R})$ is a symmetric positive definite matrix such that $t^{-1}[\mathbf{H}]_{\mathcal{P}_t, \mathcal{Q}_t} \overset{\mathbb{P}_{ \mathbf{A}^*}}{\longrightarrow} \mathcal{H}_\infty, \ t\rightarrow \infty$.
			\end{itemize}
		\end{theorem}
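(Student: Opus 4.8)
The plan is to reduce everything to Lemma~\ref{lemma:approx}, just as in the proof of Theorem~\ref{thm:cons_asymp_first_approx}. I would take as baseline estimator $\hat{\mathbf{A}}_{1,t}$ the \emph{exact-derivative} estimator sampled on the coarse grid $\mathcal{Q}_t$, i.e.\ the one built from $\mathbf{H}_{1,t}:=\mathbf{H}_{\mathcal{Q}_t}$ and $[\mathbf{H}]_{1,t}:=[\mathbf{H}]_{\mathcal{Q}_t}$ obtained by replacing every $\hat{D}^{p-1}$ with the true $D^{p-1}$ in \eqref{eqn:H_hat_hat}--\eqref{eqn:[H]_hat_hat}. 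Since $\{\mathcal{Q}_t\}$ satisfies Assumption~\ref{ass:HF_sampling} and the driver satisfies Assumption~\ref{ass:levy_process}, Theorem~\ref{thm:cons_asymp_first_approx} guarantees that $\hat{\mathbf{A}}_{1,t}$ is consistent and asymptotically normal with $t^{-1}[\mathbf{H}]_{1,t}\to\mathcal{H}_\infty$. Setting $\mathbf{H}_{2,t}:=\mathbf{H}_{\mathcal{P}_t,\mathcal{Q}_t}$ and $[\mathbf{H}]_{2,t}:=[\mathbf{H}]_{\mathcal{P}_t,\mathcal{Q}_t}$, the conclusion then follows from Lemma~\ref{lemma:approx} once the two negligibility conditions \eqref{eqn:H_limit} and \eqref{eqn:[H]_limit} are verified. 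The entire proof thus amounts to controlling the error of replacing the true derivatives $D^k\mathbf{Y}$ by the finite differences $\hat{D}^k\mathbf{Y}$, whose basic input is the bound $\mathbb{E}_{\mathbf{A}^*}[\|\hat{D}^k\mathbf{Y}_{u_m}-D^k\mathbf{Y}_{u_m}\|^2]=O(\Delta_{\mathcal{P}_t})$ of Lemma~\ref{lemma:D-D_L2_bound}, which already encodes Assumption~\ref{ass:evenly_spaced_lim}. Throughout I use that under $\mathbb{P}_{\mathbf{A}^*}$ the process is stationary with finite second moments, so $\|D^lY_{u_m}\|_{L^2}$ and $\|\hat{D}^lY_{u_m}\|_{L^2}$ are $O(1)$ uniformly in $m$, and that Assumption~\ref{ass:controlled_sampling} yields $M_t\asymp t/\Delta_{\mathcal{Q}_t}$, in particular $M_t\Delta_{\mathcal{Q}_t}=O(t)$.

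For \eqref{eqn:[H]_limit} I would argue entrywise. A generic entry of $[\mathbf{H}]_{\mathcal{P}_t,\mathcal{Q}_t}-[\mathbf{H}]_{\mathcal{Q}_t}$ is $\sum_m(\hat{D}^lY^{(i)}_{u_m}\,\Sigma^{-1}_\cdot\,\hat{D}^kY^{(j)}_{u_m}-D^lY^{(i)}_{u_m}\,\Sigma^{-1}_\cdot\,D^kY^{(j)}_{u_m})(u_{m+1}-u_m)$, which I split via $\hat a\hat b-ab=(\hat a-a)\hat b+a(\hat b-b)$. Cauchy--Schwarz together with Lemma~\ref{lemma:D-D_L2_bound} and the uniform $L^2$ bounds controls the $L^1$ norm of each summand by $O(\sqrt{\Delta_{\mathcal{P}_t}})(u_{m+1}-u_m)$; summing collapses the mesh to $\sum_m(u_{m+1}-u_m)=t$, giving an $L^1$ bound of order $t\sqrt{\Delta_{\mathcal{P}_t}}$. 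After the $t^{-1/2}$ scaling this is $O(\sqrt{t\Delta_{\mathcal{P}_t}})$, which tends to $0$ since Assumption~\ref{ass:joint_mesh} forces $t\Delta_{\mathcal{P}_t}=o(\Delta_{\mathcal{Q}_t}^2)\to 0$. Convergence in $L^1$ gives \eqref{eqn:[H]_limit}.

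The main obstacle is \eqref{eqn:H_limit}, because of the stochastic integrator. Writing $G_m:=D^lY^{(i)}_{u_m}\Sigma^{-1}$, $\hat G_m:=\hat{D}^lY^{(i)}_{u_m}\Sigma^{-1}$, and $\Delta N_m$, $\Delta\hat N_m$ for the increments of $D^{p-1}\mathbf{Y}^c_{\mathbf{A}^{(0)}}$ and of its finite-difference analogue \eqref{eqn:approx_cont_mart_part}, a generic entry of $\mathbf{H}_{\mathcal{P}_t,\mathcal{Q}_t}-\mathbf{H}_{\mathcal{Q}_t}$ is $-\sum_m(\hat G_m\Delta\hat N_m-G_m\Delta N_m)$, which I decompose as $-\sum_m(\hat G_m-G_m)\Delta\hat N_m-\sum_m G_m(\Delta\hat N_m-\Delta N_m)$. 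The crucial simplification is that the drift and jump terms in \eqref{eqn:approx_cont_mart_part} coincide with those of the exact integrator, so the integrator error telescopes: with $\varepsilon_m:=\hat{D}^{p-1}\mathbf{Y}_{u_m}-D^{p-1}\mathbf{Y}_{u_m}$ one has $\Delta\hat N_m-\Delta N_m=\varepsilon_{m+1}-\varepsilon_m$, where $\|\varepsilon_m\|_{L^2}=O(\sqrt{\Delta_{\mathcal{P}_t}})$ by Lemma~\ref{lemma:D-D_L2_bound}. For the first sum, $\|\hat G_m-G_m\|_{L^2}=O(\sqrt{\Delta_{\mathcal{P}_t}})$ while $\|\Delta\hat N_m\|_{L^2}=O(\sqrt{\Delta_{\mathcal{Q}_t}})$ (the exact increment is a $\Sigma^{1/2}$-Brownian increment over an interval of length $\asymp\Delta_{\mathcal{Q}_t}$, up to the negligible $\varepsilon$-correction), so Cauchy--Schwarz and $M_t\asymp t/\Delta_{\mathcal{Q}_t}$ give an $L^1$ bound of order $t\sqrt{\Delta_{\mathcal{P}_t}}/\sqrt{\Delta_{\mathcal{Q}_t}}$. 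For the second sum I would bound directly, using $\|G_m\|_{L^2}=O(1)$, $\|\varepsilon_{m+1}-\varepsilon_m\|_{L^2}=O(\sqrt{\Delta_{\mathcal{P}_t}})$ and $M_t\asymp t/\Delta_{\mathcal{Q}_t}$, yielding the dominant $L^1$ bound of order $t\sqrt{\Delta_{\mathcal{P}_t}}/\Delta_{\mathcal{Q}_t}$ (this can be sharpened by summation by parts, transferring the increment onto $G$, but is not needed). Multiplying by $t^{-1/2}$, both contributions are $O(\sqrt{t\Delta_{\mathcal{P}_t}}/\Delta_{\mathcal{Q}_t})$, and this vanishes precisely because Assumption~\ref{ass:joint_mesh} gives $\sqrt{t\Delta_{\mathcal{P}_t}}=o(\Delta_{\mathcal{Q}_t})$. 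This establishes \eqref{eqn:H_limit} in $L^1$, hence in probability; Lemma~\ref{lemma:approx} then delivers the stated consistency and asymptotic normality, together with $t^{-1}[\mathbf{H}]_{\mathcal{P}_t,\mathcal{Q}_t}\overset{\mathbb{P}_{\mathbf{A}^*}}{\longrightarrow}\mathcal{H}_\infty$.
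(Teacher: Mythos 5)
Your proposal is essentially the paper's own proof: the paper likewise applies Lemma~\ref{lemma:approx} with the exact-derivative estimator on $\mathcal{Q}_t$ (justified by Theorem~\ref{thm:cons_asymp_first_approx}) as baseline, verifies \eqref{eqn:[H]_limit} and \eqref{eqn:H_limit} via the same product decomposition, Lemma~\ref{lemma:D-D_L2_bound}, the telescoping identity $\Delta\hat N_m-\Delta N_m=\varepsilon_{m+1}-\varepsilon_m$, and termwise Cauchy--Schwarz, landing on the same dominant rate $(t\Delta_{\mathcal{P}_t}\Delta_{\mathcal{Q}_t}^{-2})^{1/2}\to0$ from Assumptions~\ref{ass:controlled_sampling} and~\ref{ass:joint_mesh} (the paper sharpens one sub-term with the It\^o isometry, but your cruder bound suffices). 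The only item you omit is the trivial boundary mismatch between the summation ranges $N'_t-1$ and $M_t-1$ (at most $p-1$ extra terms, each $O(\Delta_{\mathcal{Q}_t})$ in $L^1$), which the paper dispatches in a remark.
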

		\begin{proof}
			We again apply Lemma \ref{lemma:approx}, this time with $\hat{\mathbf{A}}_{1,t} = \hat{\mathbf{A}}(\mathbb{Y}_{\mathcal{Q}_t}, \ldots, D^{p-1}\mathbb{Y}_{\mathcal{Q}_t},\mathbb{J}_{\mathcal{Q}_t}, \mathbb{M}_{\mathcal{Q}_t})$ and $\hat{\mathbf{A}}_{2,t} = \hat{\mathbf{A}}(\mathbb{Y}_{\mathcal{P}_t}, \mathbb{J}_{\mathcal{Q}_t}, \mathbb{M}_{\mathcal{Q}_t})$. Consistency and asymptotic normality of $\hat{\mathbf{A}}_{1,t}$ are given by Theorem \ref{thm:cons_asymp_first_approx} under Assumption \ref{ass:levy_process} on the driving \Levy process and Assumption \ref{ass:HF_sampling} on $\{\mathcal{Q}_t,\ t\in\mathcal{T}\}$. As before, it thus remains to prove equations \eqref{eqn:H_limit} and \eqref{eqn:[H]_limit}, for which we rely on the result of Lemma \ref{lemma:D-D_L2_bound}. See Appendix \ref{app:proof_them_cons_asymp_second_approx} for full details.
		\end{proof}
		
		\subsection{Thresholding the jumps} \label{sec:thresholding}
		Finally, we assume that we have access to the sampled MCAR($p$) process $\mathbb{Y}_{\mathcal{P}_t}$ only, the most realistic assumption. To remove the dependency on the jump processes $\mathbb{J}_{\mathcal{P}_t}$ and $\mathbb{M}_{\mathcal{P}_t}$ we will use thresholding techniques to disentangle the continuous and jump components of $D^{p-1}\mathbb{Y}_{[0,t]}$. For ease of notation let us write the increments of the partition $\mathcal{Q}_t$ with mesh $\Delta_{\mathcal{Q}_t}$ as
		\[\Delta_{\mathcal{Q}_t}^m := u_{m+1} - u_m, \quad m=0,\ldots, M_t-1,\]
		and the increment of a process $\mathbb{Z}$ over the interval $[u_m, u_{m+1}] \in\mathcal{Q}_t$ as
		\[\Delta_{\mathcal{Q}_t}^m \mathbf{Z} := (\mathbf{Z}_{u_{m+1}} - \mathbf{Z}_{u_m}), \quad m=0,\ldots, M_t-1. \]
		In the following, we aim to approximate the increments of the process given in Equation \eqref{eqn:approx_cont_mart_part}, i.e.\
		\begin{align*}
			\Delta_{\mathcal{Q}_t}^m \hat{D}^{p-1}\mathbf{Y}^c_{\mathbf{A}^{(0)}} = \Delta_{\mathcal{Q}_t}^m \hat{D}^{p-1}\mathbf{Y} - \mathbf{b} \Delta_{\mathcal{Q}_t}^m - \Delta_{\mathcal{Q}_t}^m \mathbf{J} - \Delta_{\mathcal{Q}_t}^m\mathbf{M}, \quad m=0,\ldots, M_t-1,
		\end{align*}
		by the thresholded increments
		\begin{align} \label{eqn:thresholded_incr}
			\Delta^{m}_{\mathcal{Q}_t, \boldsymbol{\nu}^m_t} \hat{D}^{p-1}\mathbf{Y}^c_{\mathbf{A}^{(0)}} := \left[ \Delta_{\mathcal{Q}_t}^m \hat{D}^{p-1}\mathbf{Y} - \mathbf{b} \Delta_{\mathcal{Q}_t}^m \right] \odot \mathds{1}_{\left\{\left|\Delta_{\mathcal{Q}_t}^m \hat{D}^{p-1}\mathbf{Y} - \mathbf{b} \Delta_{\mathcal{Q}_t}^m\right| \leq \boldsymbol{\nu}^m_t\right\}} , \quad m=0,\ldots, M_t-1,
		\end{align}
		where $\boldsymbol{\nu}^m_t\in\mathbb{R}^d$ is a thresholding vector, $\mathds{1}_{\{|\mathbf{x}|\leq \mathbf{y}\}}$ for $\mathbf{y}\in\mathbb{R}^d$ denotes the vector with $i$-th entry given by $\mathds{1}_{\{|x_i|\leq y_i\}}$ and $\odot$ denotes element-wise product. The motivation behind the thresholding approach is to interpret the (de-trended) increments 
		\[\Delta_{\mathcal{Q}_t}^m \hat{D}^{p-1}Y^{(i)} - {b}^{(i)} \Delta_{\mathcal{Q}_t}^m,\]
		greater than the threshold $\nu^{(i),m}_t$ to be mainly driven by the jump part of $D^{p-1}Y^{(i)}$ and those smaller than $\nu^{(i), m}_t$ to be mainly due to the continuous part of $D^{p-1}Y^{(i)}$. By an appropriate choice of scaling for $\{(\boldsymbol{\nu}^m_t)_{m=0}^{M_t-1},\ t\in\mathcal{T}\}$ this procedure aims to disentangle the continuous and the jump components of the process as $t\rightarrow\infty$. We proceed in a similar way to \citet{mai_OU} and \citet{Courgeau_Veraart_2022} where the thresholding technique is used to develop an estimator for the drift of a (multivariate) OU process. We treat the cases where the driving \Levy process has finite and infinite jump activity separately.
		
		\begin{remark}
			Note that, while in the OU setting \citet{mai_OU} and \citet{Courgeau_Veraart_2022} could apply thresholding to $\mathbb{Y}$ directly, here we can only threshold the \textit{approximated} process $\hat{D}^{p-1}\mathbb{Y}$. Moreover, in the following, we allow for the slightly more general case of a (known) non-zero drift $\mathbf{b}$ in the driving \Levy process.
		\end{remark}
		
		In this section, we thus define the discretized estimator resulting from the combination of Riemann sum, finite difference, and thresholding approximations by 
		\begin{equation} \label{eqn:discr_estimator_A_3}
			\hat{\mathbf{A}}(\mathbb{Y}_{\mathcal{P}_t}; \mathcal{Q}_t, \boldsymbol{\nu}_t) := \mathrm{vec}^{-1}\left([\mathbf{H}]^{-1}_{\mathcal{P}_t, \mathcal{Q}_t} \mathbf{H}_{\mathcal{P}_t, \mathcal{Q}_t, \boldsymbol{\nu}_t} \right),\end{equation}
		where $[\mathbf{H}]^{-1}_{\mathcal{P}_t, \mathcal{Q}_t}$ is given by Equation \eqref{eqn:[H]_hat_hat} and
		\begin{align} \label{eqn:H_hat_hat_hat}
			\mathbf{H}_{\mathcal{P}_t, \mathcal{Q}_t, \boldsymbol{\nu}_t} &= -
			\sum_{m = 0}^{M_t-1}
			\begin{pmatrix}
				\hat{D}^{p-1} Y^{(1)}_{u_m} \Sigma^{-1} \\
				\vdots \\
				\hat{D}^{p-1} Y^{(d)}_{u_m} \Sigma^{-1}\\
				\vdots \\
				Y^{(1)}_{u_m} \Sigma^{-1}\\
				\vdots \\
				Y^{(d)}_{u_m} \Sigma^{-1}\\
			\end{pmatrix} \Delta^m_{\mathcal{Q}_t,\boldsymbol{\nu}_t} \hat{D}^{p-1}\mathbf{Y}^c_{\mathbf{A}^{(0)}}.
		\end{align}
		
		\begin{remark} \label{rem:approx_only_H}
			To show the asymptotic properties are preserved under appropriate conditions, we will apply Lemma \ref{lemma:approx} one last time with $\hat{\mathbf{A}}_{1,t} = \hat{\mathbf{A}}(\mathbb{Y}_{\mathcal{P}_t}, \mathbb{J}_{\mathcal{Q}_t}, \mathbb{M}_{\mathcal{Q}_t})$ and $\hat{\mathbf{A}}_{2,t} = \hat{\mathbf{A}}(\mathbb{Y}_{\mathcal{P}_t}; \mathcal{Q}_t, \boldsymbol{\nu}_t)$. In this case, it will be sufficient to show that Equation \eqref{eqn:H_limit} holds, as the approximator of the quadratic variation $[\mathbf{H}]_{\mathcal{P}_t, \mathcal{Q}_t}$ does not change and hence \eqref{eqn:[H]_limit} trivially holds.
		\end{remark}

		\subsubsection{Finite jump activity}
		If we assume that the driving \Levy process has finite jump activity then by Remark \ref{rem:finite_activity} we can write
		\begin{equation}
			\Delta_{\mathcal{Q}_t}^m D^{p-1}\mathbf{Y}_{\mathbf{A}^{(0)}}^c = \Delta_{\mathcal{Q}_t}^m D^{p-1}\mathbf{Y} - \tilde{\mathbf{b}} \Delta_{\mathcal{Q}_t}^m - \Delta_{\mathcal{Q}_t}^m \tilde{\mathbf{J}}, \label{eqn:cont_mart_finite_act}
		\end{equation}
		where $\tilde{\mathbb{J}}$ is the compound Poisson process given in Equation \eqref{eqn:jump_J_tilde} with elementwise Poisson counting processes $\mathbf{N}_t = (N_t^{(1)},\ldots, N_t^{(d)})^{\mathrm{T}}$. We denote by $\lambda^{(i)}$ the jump rate of the process $\{N_t^{(i)},\ t\geq 0\}$ and by $\tilde{F}^{(i)}(\cdot)$ the probability distribution of jump sizes in the $i$-th component, i.e.\ 
		$F^{(i)}(\cdot) = \lambda^{(i)}\tilde{F}^{(i)}(\cdot), $
		where $F^{(i)}(\cdot)$ is the $i$-th marginal of the \Levy measure $F$. We impose the following joint condition on the sequence of partitions $\{\mathcal{Q}_t,\ t\in\mathcal{T}\}$ and the thresholding sequences $\{(\boldsymbol{\nu}^m_t)_{m=0}^{M_t-1},\ t\in\mathcal{T}\}$:
		\begin{assumption} \label{ass:finite_thresholding}
			The sequence of partitions $\{\mathcal{Q}_t,\ t\in\mathcal{T}\}$ and the sequence of jump thresholds $\{(\boldsymbol{\nu}^m_t)_{m=0}^{M_t-1},\ t\in\mathcal{T}\}$ satisfy for $i\in\{1,\ldots, d\}$
			\begin{enumerate}[label=(\roman*)]
				\item $\nu_t^{(i),m} = \left(\Delta^m_{\mathcal{Q}_t}\right)^{\beta^{(i)}}$ with $\beta^{(i)}\in(0,\, 1/2)$;
				\item $\displaystyle t \Delta^{1 - 2 \beta^{(i)}}_{\mathcal{Q}_t} \rightarrow 0, \ t\rightarrow \infty$;
				\item $\displaystyle t \tilde{F}^{(i)}\left(\left(-2\Delta^{\beta^{(i)}}_{\mathcal{Q}_t},\  2\Delta^{\beta^{(i)}}_{\mathcal{Q}_t}\right)\right) \rightarrow 0, \ t\rightarrow \infty.$
			\end{enumerate}
		\end{assumption}
		\begin{remark} \label{rem:ass_finite_activity}
			Note that $\beta^{(i)}\in(0,\,1/2)$ relates the strength of the thresholding, via Assumption \ref{ass:finite_thresholding}$.(i)$, and the speed at which the mesh $\Delta_{\mathcal{Q}_t}$ vanishes, via Assumptions \ref{ass:finite_thresholding}$.(ii)$ and \ref{ass:finite_thresholding}$.(iii)$.
			\begin{itemize}
				\item Since $1-2\beta^{(i)}\in(0,\,1)$ Assumption \ref{ass:finite_thresholding}$.(ii)$ is strictly stronger than the high-frequency sampling condition for $\mathcal{Q}_t$, i.e.\ Assumption \ref{ass:HF_sampling}.
				\item Intuitively, Assumption \ref{ass:finite_thresholding}$.(ii)$ controls the rate at which the probability of observing a continuous increment bigger than the threshold vanishes, while Assumption \ref{ass:finite_thresholding}$.(iii)$ controls the rates at which the probability of observing a jump increment smaller than (twice) the threshold vanishes. The ``optimal'' $\beta^{(i)}$ is thus chosen by appropriately balancing the two conditions. For example, if $\tilde{F}^{(i)}(\cdot)$ has a bounded Lebesgue density, Assumption \ref{ass:finite_thresholding}$.(iii)$ becomes \[t \Delta^{\beta^{(i)}}_{\mathcal{Q}_t}\rightarrow 0, \ t \rightarrow \infty,\] and we see the optimal trade-off is given by $\beta^{(i), *} = 1/3$. 
				\item From a theoretical point of view, given a sequence of partitions $\{\mathcal{Q}_t,\ t\in\mathcal{T}\}$, one can choose $\beta^{(i)}$ depending on the driving \Levy process such that Assumption \ref{ass:finite_thresholding}$.(ii)$ and \ref{ass:finite_thresholding}$.(iii)$ are ``optimally'' fulfilled. In practice, given the observation partition $\mathcal{Q}_t$, one chooses $\beta^{(i)}$ based on the data, we will discuss this in more detail in Section \ref{sec:practical_considerations}.
			\end{itemize}
		\end{remark}
		
		In order to understand the choice of thresholding sequence, we start sketching the main idea of the proof. Let us focus on dimension $i\in\{1,\ldots,d\}$ and assume for the moment that we can perfectly observe an increment of $D^{p-1}Y^{(i)}$ over $[u_m, u_{m+1}]$, i.e.\ the finite difference approximation is close enough to the target value. In this case, one can bound the probability that the thresholding technique fails by considering whether
		\begin{itemize}
			\item an increment is discarded due to a large move in the continuous component, or 
			\item an increment is kept in the presence of a jump.
		\end{itemize} 
		To do so one requires bounds on 
		\begin{itemize}
			\item the probability of continuous increments larger than the threshold $\nu^{(i), m}_t = (\Delta^m_{\mathcal{Q}_t})^{\beta^{(i)}}$: This is controlled for $\beta^{(i)}\in(0,1/2)$ by Assumption \ref{ass:finite_thresholding}$.(i)$ and \ref{ass:finite_thresholding}$.(iii)$,
			\item the probability of jumps smaller than $2(\Delta^m_{\mathcal{Q}_t})^{\beta^{(i)}}$: This is controlled by Assumption \ref{ass:finite_thresholding}$.(ii)$.
		\end{itemize}
		One can then show that the probability of the thresholding technique being successful approaches one and, on this set, the $L^1(\Omega, \mathcal{F}, \mathbb{P}_{\mathbf{A}^*})$ distance between the original and the thresholded approximator, i.e.\ $\mathbf{H}_{\mathcal{P}_t, \mathcal{Q}_t}$ and $\mathbf{H}_{\mathcal{P}_t, \mathcal{Q}_t, \boldsymbol{\nu}_t}$, approaches zero. The following Theorem, along with Theorem \ref{thm:cons_asymp_third_approx_infinite_activity} in the infinite activity setting, states the asymptotic properties of the only feasible, i.e.\ depending on data $\mathbb{Y}_{\mathcal{P}_t}$ only, estimator \eqref{eqn:discr_estimator_A_3} for $\mathbf{A}^*\in\mathfrak{A}$, the main results of this section. 
		
		\begin{theorem}               \label{thm:cons_asymp_third_approx_finite_activity}
			Let $\mathbb{Y}_{\mathcal{P}_t}$ denote the sampled observation of a stationary and ergodic MCAR($p$) process with parameters $\mathbf{A}^*\in\mathfrak{A}$ under $\mathbb{P}_{\mathbf{A}^*}$ such that the driving \Levy process satisfies Assumption \ref{ass:levy_process} and has finite jump activity. Let $\{\mathcal{P}_t,\ t\in\mathcal{T}\}$ be a countable sequence of partitions satisfying Assumption \ref{ass:evenly_spaced_lim} with refinements $\{\mathcal{Q}_t,\ t\in\mathcal{T}\}$ satisfying Assumption \ref{ass:HF_sampling}, Assumption \ref{ass:controlled_sampling} and Assumption \ref{ass:joint_mesh}. Furthermore, assume the sequence of partitions $\{\mathcal{Q}_t,\ t\in\mathcal{T}\}$ and the thresholding sequences $\{(\boldsymbol{\nu}^m_t)_{m=0}^{M_t-1},\ t\in\mathcal{T}\}$ satisfy Assumption \ref{ass:finite_thresholding}.
			Then the estimator $\hat{\mathbf{A}}(\mathbb{Y}_{\mathcal{P}_t}; \mathcal{Q}_t, \boldsymbol{\nu}_t)$ defined in Equation \eqref{eqn:discr_estimator_A_3} is 
			\begin{itemize}
				\item (weakly) consistent, i.e.\ 
				\[ \hat{\mathbf{A}}(\mathbb{Y}_{\mathcal{P}_t}; \mathcal{Q}_t, \boldsymbol{\nu}_t) \overset{\mathbb{P}_{ \mathbf{A}^* }}{\longrightarrow} \mathbf{A}^*, \ t\rightarrow \infty; \] 
				\item asymptotically normal, i.e.\ 
				\[ \sqrt{t} \left(\mathrm{vec}(\hat{\mathbf{A}}(\mathbb{Y}_{\mathcal{P}_t}; \mathcal{Q}_t, \boldsymbol{\nu}_t)) - \mathrm{vec}(\mathbf{A}^*)\right) \overset{\mathcal{L}}{\longrightarrow} \mathbf{Z} \sim N(\mathbf{0}, \mathcal{H}^{-1}_\infty), \ t \rightarrow \infty, \]
				where $\mathcal{H}_\infty\in\mathcal{M}_{pd^2}(\mathbb{R})$ is a symmetric positive definite matrix such that $t^{-1}[\mathbf{H}]_{\mathcal{P}_t, \mathcal{Q}_t} \overset{\mathbb{P}_{ \mathbf{A}^*}}{\longrightarrow} \mathcal{H}_\infty, \ t\rightarrow \infty$.
			\end{itemize}
		\end{theorem}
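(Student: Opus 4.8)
The plan is to apply Lemma \ref{lemma:approx} a final time, as flagged in Remark \ref{rem:approx_only_H}, taking $\hat{\mathbf{A}}_{1,t} = \hat{\mathbf{A}}(\mathbb{Y}_{\mathcal{P}_t}, \mathbb{J}_{\mathcal{Q}_t}, \mathbb{M}_{\mathcal{Q}_t})$, which is consistent and asymptotically normal by Theorem \ref{thm:cons_asymp_second_approx}, and $\hat{\mathbf{A}}_{2,t} = \hat{\mathbf{A}}(\mathbb{Y}_{\mathcal{P}_t}; \mathcal{Q}_t, \boldsymbol{\nu}_t)$. Because both estimators share the same quadratic-variation surrogate $[\mathbf{H}]_{\mathcal{P}_t, \mathcal{Q}_t}$, condition \eqref{eqn:[H]_limit} is automatic, and only \eqref{eqn:H_limit}, namely $t^{-1/2}(\mathbf{H}_{\mathcal{P}_t, \mathcal{Q}_t, \boldsymbol{\nu}_t} - \mathbf{H}_{\mathcal{P}_t, \mathcal{Q}_t}) \overset{\mathbb{P}_{\mathbf{A}^*}}{\longrightarrow} 0$, remains. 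The two score surrogates carry \emph{identical} weights and differ only in the integrator, so their difference is $-\sum_{m=0}^{M_t-1} \mathbf{W}_{u_m} \mathbf{R}_m$ with $\mathbf{R}_m := \Delta^m_{\mathcal{Q}_t, \boldsymbol{\nu}^m_t}\hat{D}^{p-1}\mathbf{Y}^c_{\mathbf{A}^{(0)}} - \Delta_{\mathcal{Q}_t}^m\hat{D}^{p-1}\mathbf{Y}^c_{\mathbf{A}^{(0)}}$, where $\mathbf{W}_{u_m}$ is the common (predictable) weight column built from $\hat{D}^l Y^{(i)}_{u_m}\Sigma^{-1}$, $l=0,\ldots,p-1$. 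Since stationarity and Lemma \ref{lemma:D-D_L2_bound} bound the second moments of $\mathbf{W}_{u_m}$ uniformly in $m,t$, it suffices to control $t^{-1/2}\sum_m \mathbb{E}_{\mathbf{A}^*}[\|\mathbf{W}_{u_m}\|\,\|\mathbf{R}_m\|]$ by Cauchy--Schwarz, exploiting predictability of $\mathbf{W}_{u_m}$ where cancellation is needed.

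Working componentwise and substituting the finite-activity integrator \eqref{eqn:cont_mart_finite_act}, I would classify each interval $[u_m, u_{m+1}]$ by whether $D^{p-1}Y^{(i)}$ jumps in it and whether the de-trended increment crosses the threshold $\nu^{(i),m}_t = (\Delta^m_{\mathcal{Q}_t})^{\beta^{(i)}}$. This gives four regimes: (A) no jump, increment retained, where $R^{(i)}_m = 0$ since the retained de-trended increment coincides exactly with the continuous-martingale increment; (B) no jump but increment discarded, a false positive in which $R^{(i)}_m$ equals the lost continuous increment (of size $> \nu^{(i),m}_t$); (C) a jump below threshold retained, a false negative in which $R^{(i)}_m$ equals the small kept jump (of size $\leq 2\nu^{(i),m}_t$); and (D) a jump correctly discarded, where $R^{(i)}_m$ is the discarded continuous Brownian increment of order $\sqrt{\Delta_{\mathcal{Q}_t}}$. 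Assumption \ref{ass:controlled_sampling} makes $\Delta^m_{\mathcal{Q}_t} \asymp \Delta_{\mathcal{Q}_t}$ uniformly, so thresholds and per-interval estimates are comparable across $m$.

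Regime (A) contributes nothing. For the false positives (B) I would invoke the sub-Gaussian tail of the (approximately Gaussian) continuous increment: $\mathbb{P}(|\Sigma^{1/2}\Delta W| > \Delta_{\mathcal{Q}_t}^{\beta^{(i)}})$ decays like $\exp(-c\,\Delta_{\mathcal{Q}_t}^{2\beta^{(i)}-1})$ for $\beta^{(i)} < 1/2$, and Assumption \ref{ass:finite_thresholding}$.(ii)$ ties this exponent to $t$, so that after multiplying by $M_t \asymp t/\Delta_{\mathcal{Q}_t}$, the increment size $\sqrt{\Delta_{\mathcal{Q}_t}}$ and the factor $t^{-1/2}$, the exponential decay dominates the polynomial prefactor and the contribution vanishes. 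For the false negatives (C), the kept jump is at most $2\Delta_{\mathcal{Q}_t}^{\beta^{(i)}}$ and its interval has probability $\asymp \Delta_{\mathcal{Q}_t}\,\tilde F^{(i)}((-2\Delta^{\beta^{(i)}}_{\mathcal{Q}_t}, 2\Delta^{\beta^{(i)}}_{\mathcal{Q}_t}))$, so the aggregate is $\asymp t^{1/2}\Delta_{\mathcal{Q}_t}^{\beta^{(i)}}\tilde F^{(i)}(\cdot) = \Delta_{\mathcal{Q}_t}^{\beta^{(i)}}\,t^{-1/2}\,(t\tilde F^{(i)}(\cdot))$, which tends to zero by Assumption \ref{ass:finite_thresholding}$.(iii)$. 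For the correctly removed jumps (D), there are $O_{\mathbb{P}}(t)$ jump-carrying intervals, each losing a conditionally mean-zero increment of order $\sqrt{\Delta_{\mathcal{Q}_t}}$; a crude $L^1$ bound gives a contribution of order $\sqrt{t\Delta_{\mathcal{Q}_t}}$ (sharpenable to $\sqrt{\Delta_{\mathcal{Q}_t}}$ via the martingale structure), which vanishes under the high-frequency condition $t\Delta_{\mathcal{Q}_t}\to0$.

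The crux, and what genuinely separates $p \geq 2$ from the Ornstein--Uhlenbeck case of \citet{mai_OU} and \citet{Courgeau_Veraart_2022}, is that the threshold is applied to the \emph{finite-difference} surrogate $\hat{D}^{p-1}\mathbb{Y}$, which smears every jump of $D^{p-1}Y^{(i)}$ across the $\mathcal{P}_t$-window of length $\asymp (p-1)\Delta_{\mathcal{P}_t}$ underlying $\hat{D}^{p-1}Y_{u_m}$. A jump landing within such a window of a $\mathcal{Q}_t$-knot is split between two successive increments, so the localised $L^2$ estimate of Lemma \ref{lemma:D-D_L2_bound} degenerates to an $O(1)$ error there and the jump may be misclassified. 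I would isolate this ``boundary'' set: its expected number of offending jumps is $\asymp \lambda^{(i)}(p-1)\,t\,\Delta_{\mathcal{P}_t}/\Delta_{\mathcal{Q}_t}$, and its contribution to $t^{-1/2}(\mathbf{H}_{\mathcal{P}_t, \mathcal{Q}_t, \boldsymbol{\nu}_t} - \mathbf{H}_{\mathcal{P}_t, \mathcal{Q}_t})$ is of order $t^{1/2}\Delta_{\mathcal{P}_t}/\Delta_{\mathcal{Q}_t} = o(t^{-1/2}\Delta_{\mathcal{Q}_t})$, which vanishes precisely because Assumption \ref{ass:joint_mesh} enforces $\Delta_{\mathcal{P}_t} = o(t^{-1}\Delta_{\mathcal{Q}_t}^2)$; off this set Lemma \ref{lemma:D-D_L2_bound} controls the bulk discrepancy between $\hat{D}^{p-1}\mathbb{Y}$ and $D^{p-1}\mathbb{Y}$, so the four regimes above behave as in the idealised case. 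Combining these bounds establishes \eqref{eqn:H_limit}, and Lemma \ref{lemma:approx} then transfers consistency and asymptotic normality from $\hat{\mathbf{A}}_{1,t}$ to $\hat{\mathbf{A}}(\mathbb{Y}_{\mathcal{P}_t}; \mathcal{Q}_t, \boldsymbol{\nu}_t)$, with the same limiting variance $\mathcal{H}_\infty^{-1}$.
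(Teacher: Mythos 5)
Your proposal is correct and follows essentially the same route as the paper: reduce to condition \eqref{eqn:H_limit} via Lemma \ref{lemma:approx} and Remark \ref{rem:approx_only_H}, then classify intervals by jump occurrence versus threshold crossing, bounding the false positives with Gaussian/Markov tail estimates under Assumption \ref{ass:finite_thresholding}$.(ii)$, the false negatives via $t\tilde{F}^{(i)}(\cdot)\rightarrow 0$, the correctly discarded Brownian increments via $t\Delta_{\mathcal{Q}_t}\rightarrow 0$, and the finite-difference contamination via Lemma \ref{lemma:D-D_L2_bound} together with Assumptions \ref{ass:controlled_sampling} and \ref{ass:joint_mesh} (the paper organizes this as a ``thresholding succeeds'' event $A_t$ with $\mathbb{P}_{\mathbf{A}^*}(A_t)\rightarrow 1$ plus an $L^1$ bound on $A_t$, rather than your four-regime single pass, but the estimates are the same). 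The only point to tighten is your regime (C): a retained increment containing a jump need not have a jump of size $\leq 2\nu^{(i),m}_t$ unless you first rule out cancellation by a large continuous excursion, which requires the same tail bound as your regime (B) -- exactly the split the paper performs when bounding $\mathbb{P}_{\mathbf{A}^*}(A_t^c)$.
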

		\begin{proof}
			We apply Lemma \ref{lemma:approx} with $\hat{\mathbf{A}}_{1,t} = \hat{\mathbf{A}}(\mathbb{Y}_{\mathcal{P}_t}, \mathbb{J}_{\mathcal{Q}_t}, \mathbb{M}_{\mathcal{Q}_t})$ and $\hat{\mathbf{A}}_{2,t} = \hat{\mathbf{A}}(\mathbb{Y}_{\mathcal{P}_t}; \mathcal{Q}_t, \boldsymbol{\nu}_t)$. Consistency and asymptotic normality of $\hat{\mathbf{A}}_{1,t}$ are given by Theorem \ref{thm:cons_asymp_second_approx}. By Remark \ref{rem:approx_only_H} it suffices to show Equation \eqref{eqn:H_limit}. The arguments of the proof are sketched above, see Appendix \ref{app:proof_thm_cons_asymp_third_approx_finite_activity} for full details.
		\end{proof}		
		
		\subsubsection{Infinite jump activity}
		We now consider the more general case, where the driving \Levy process $\mathbb{L}$ might have infinite jump activity, i.e.\ $F(\mathbb{R}^d)=\infty$. In this case, by Equation \eqref{eqn:integrator_process}, we can write the continuous martingale increments as
		\begin{equation}
			\Delta_{\mathcal{Q}_t}^m D^{p-1}\mathbf{Y}_{\mathbf{A}^{(0)}}^c = \Delta_{\mathcal{Q}_t}^m D^{p-1}\mathbf{Y} - \mathbf{b}\Delta_{\mathcal{Q}_t}^m - \Delta_{\mathcal{Q}_t}^m \mathbf{J} - \Delta_{\mathcal{Q}_t}^m \mathbf{M}, \label{eqn:cont_mart_infinite_act}
		\end{equation}
		where $\mathbb{J}$ is the finite activity pure jump \Levy process with jumps greater than one given in Equation \eqref{eqn:jump_J} and $\mathbb{M}$ is the pure jump martingale with jumps smaller or equal to one given in Equation \eqref{eqn:jump_M}. Note that $\mathbb{J}$ is a compound Poisson process with elementwise Poisson counting processes $\mathbf{N}_t = (N_t^{(1)},\ldots, N_t^{(d)})^{\mathrm{T}}$. We denote by $\lambda^{(i)}$ the jump rate of the process $\{N_t^{(i)},\ t\geq 0\}$ and by $\tilde{F}^{(i)}(\cdot)$ the probability distribution of jump sizes in the $i$-th component of $\mathbb{J}$, i.e.\ the $i$-th marginal of $\tilde{F}:= F|_{\{\mathbf{x}:\|\mathbf{x}\|>1\}} / F(\{\mathbf{x}:\|\mathbf{x}\|>1\})$. We impose the following joint condition on the sequence of partitions $\{\mathcal{Q}_t,\ t\in\mathcal{T}\}$ and the thresholding sequence $\{(\boldsymbol{\nu}^m_t)_{m=0}^{M_t-1},\ t\in\mathcal{T}\}$:
		\begin{assumption} \label{ass:infinite_thresholding}
			The sequence of partitions $\{\mathcal{Q}_t,\ t\in\mathcal{T}\}$ and the sequence of jump thresholds $\{(\boldsymbol{\nu}^m_t)_{m=0}^{M_t-1},\ t\in\mathcal{T}\}$ satisfy for $i\in\{1,\ldots, d\}$ 
			\begin{enumerate}[label=(\roman*)]
				\item $\nu^{(i), m}_t = (\Delta^m_{\mathcal{Q}_t})^{\beta^{(i)}}$ with $\beta^{(i)}\in(0,\, 1/4)$;
				\item $\displaystyle t \Delta^{1 - 4 \beta^{(i)}}_{\mathcal{Q}_t} \rightarrow 0, \ t\rightarrow \infty$;
				\item $t\tilde{F}^{(i)}\left(\left(-4\Delta_{\mathcal{Q}_t}^{\beta^{(i)}}, 4\Delta_{\mathcal{Q}_t}^{\beta^{(i)}}\right)\right)\rightarrow 0,\ t\rightarrow\infty$ and 
				$\exists \epsilon^{(i)} >0$ such that $\displaystyle t \Delta^{2\epsilon^{(i)}}_{\mathcal{Q}_t} \rightarrow 0, \ t\rightarrow \infty$ and
				\[\mathbb{E}_{\mathbf{A}^*}\left[ |{M}^{(i)}_s| \mathds{1}_{\left\{|{M}^{(i)}_s| \leq s^{\beta^{(i)}}\right\}}\right] = O\left(s^{1+\epsilon^{(i)}}\right) \ \mathrm{as}\  s\downarrow 0.\]
			\end{enumerate}
			Moreover, the driving \Levy process $\mathbb{L}$ has finite fourth-moment, i.e.\ $\mathbb{E}[\|\mathbf{L}_1\|^4]<\infty$.
		\end{assumption}
		\begin{remark}
			Some important remarks are in place. If we compare these assumptions to those considered in the finite jump activity case, i.e.\ Assumption \ref{ass:finite_thresholding}, we note that:
			\begin{itemize}
				\item In Assumption \ref{ass:infinite_thresholding}$.(i)$ the upper bound of allowed $\beta^{(i)}$'s is lower. This implies the thresholding cannot be too strong, i.e.\ enough observations need to be kept to identify the parameters. A choice of $\beta^{(i)}\in(1/4,1/2)$ would void Assumption \ref{ass:infinite_thresholding}$.(ii)$ which is required to control the rate at which the probability of observing increments of $\mathbb{M}$ bigger than the threshold vanishes.
				\item On the other hand, Assumption \ref{ass:infinite_thresholding}$.(iii)$ implicitly requires additional control on $\beta^{(i)}$ to ensure the jump component of the kept increments is negligible. The control on the rate at which the probability of observing a jump increment due to the compound Poisson process $\mathbb{J}$ smaller than (four times) the threshold vanishes -- also present in Assumption \ref{ass:finite_thresholding}$.(iii)$ -- is complemented by a condition controlling the magnitude of the increments due to the infinite activity jump part $\mathbb{M}$.  We note that, in the second part of Assumption \ref{ass:infinite_thresholding}$.(iii)$, the existence and choice of $\epsilon^{(i)}$ depends on both $\mathbb{M}$ and $\beta^{(i)}$. 
				\item As in Assumption \ref{ass:finite_thresholding} the ``optimal'' $\beta^{(i)}$ is chosen by appropriately balancing  conditions $(ii)$ and $(iii)$. For example, if $\tilde{F}^{(i)}(\cdot)$ has a bounded Lebesgue density and we write $\epsilon^{(i)} = f\left(\beta^{(i)}\right)$, the optimal trade-off is given by \[\beta^{*, (i)} = \underset{\beta^{(i)}\in(0,1/4)}{\mathrm{argmax}} \min\left\{1 - 4 \beta^{(i)}, \beta^{(i)}, 2f\left(\beta^{(i)}\right)\right\},\] i.e.\ by balancing the rates $t \Delta^{1 - 4 \beta^{(i)}}_{\mathcal{Q}_t}, t \Delta^{ \beta^{(i)}}_{\mathcal{Q}_t}, t \Delta^{2f\left(\beta^{(i)}\right)}_{\mathcal{Q}_t}$. In practice, as for the finite activity case, we usually choose the thresholding sequence from data as discussed in Section \ref{sec:practical_considerations}.
				\item When the jump components of $\mathbb{L}$ are independent, the first part of Assumption \ref{ass:infinite_thresholding}$.(iii)$ is trivially satisfied since $\tilde{F}^{(i)}([-1,1]) = 0$. This holds trivially when $d=1$.
			\end{itemize}
			We note that our assumptions differ from those imposed in \citet{mai_OU} to deduce the asymptotic properties of the drift estimator for one-dimensional \Levydriven OU processes. 
			We had to modify the assumptions after noticing some inconsistencies in the arguments applied at the bottom of \citet[page~944]{mai_OU} when splitting the proof of \citet[Equation~(35)]{mai_OU} into two steps denoted by $(i)$ and $(ii)$ therein. Similar arguments were repeatedly applied in the paper (and ceteris paribus extended in \cite{Courgeau_Veraart_2022} to the multivariate setting). Moreover, we wish to underline the following additional differences.
			\begin{itemize}
				\item We will require finiteness of 4-th moments of the driving \Levy process. This was explicitly imposed in \cite{Courgeau_Veraart_2022}, but was a tacit assumption in \cite{mai_OU}.
				\item The assumptions in \citet{mai_OU} include an additional symmetric condition on $\mathbb{M}$. As discussed in  Appendix \ref{app:infinite_activity_symmetric}, including this additional condition allows to ``weaken'' Assumption \ref{ass:infinite_thresholding}$.(iii)$.
			\end{itemize}
			Following multiple numerical experiments we believe the assumptions considered here might be stronger than required but, as illustrated in Example \ref{example:Gamma_process}, they are not too restrictive.
		\end{remark}
		It is important to explore the attainability of Assumption \ref{ass:infinite_thresholding}$.(iii)$, to ensure it is not a void condition. Example \ref{example:Gamma_process} considers the case where the driving \Levy process is a Gamma process.
		\begin{example} \label{example:Gamma_process}
			Note that if $\{M^{(i)}_t,\ t\geq 0\}$ has a transition probability density $p(s, x, y)$ then
			\[\mathbb{E}_{\mathbf{A}^*}\left[ |{M}^{(i)}_s| \mathds{1}_{\left\{|{M}^{(i)}_s| \leq s^{\beta^{(i)}}\right\}}\right] = \int_{-s^{\beta^{(i)}}}^{s^{\beta^{(i)}}} |x| p(s, 0, x)\, dx.\]
			If $M^{(i)}$ is a standard Gamma process, i.e.\ shape $k=1$ and scale $\theta=1$, then $p(s, 0, x)= \frac{1}{\Gamma(s)} x^{s-1}e^{-x}, x> 0$. This is an infinite activity process with intensity measure \[F(dx)=x^{-1}e^{-x}\mathds{1}_{(0,\infty)} (x) dx.\] 
			It has infinite activity since
			\[F((0,\infty)) = \int_0^\infty x^{-1}e^{-x} dx \geq e^{-1}\int_0^1 x^{-1} dx = e^{-1} [\ln(1) - \ln(0)] = \infty,\]
			and finite moments for any $n\in\mathbb{N}$
			\[\mathbb{E}_{\mathbf{A}^*}\left[ |{M}^{(i)}_s|^n\right] = \int_0^\infty \frac{1}{\Gamma(s)}x^{s+n-1} e^{-x} \, dx = \frac{\Gamma(s + n )}{\Gamma(s)}.\]
			We have
			\begin{align*}
				\mathbb{E}_{\mathbf{A}^*}\left[ |{M}^{(i)}_s| \mathds{1}_{\left\{|{M}^{(i)}_s| \leq s^{\beta^{(i)}}\right\}}\right] &= \int_{0}^{s^{\beta^{(i)}}} \frac{1}{\Gamma(s)}  x^{s}e^{-x}\, dx \\
				&\leq \frac{1}{\Gamma(s)} \int_{0}^{s^{\beta^{(i)}}} x^{s}\, dx =  \frac{1}{\Gamma(s)} \frac{s^{\beta^{(i)}(s+1)}}{s+1} \leq s^{1+\beta^{(i)}},
			\end{align*}
			using $\Gamma(s)\geq s^{-1}$ for $s>0$. Thus if $\beta^{(i)}\in(0,\, 1/4)$, we can set $\epsilon^{(i)}=\beta^{(i)}$ to obtain
			\[\mathbb{E}_{\mathbf{A}^*}\left[ |{M}^{(i)}_s| \mathds{1}_{\left\{|{M}^{(i)}_s| \leq s^{\beta^{(i)}}\right\}}\right] =  O\left(s^{1+ \epsilon^{(i)}}\right),\]
			i.e.\ Assumption \ref{ass:infinite_thresholding}$.(iii)$ is satisfied. In this case the optimal thresholding power is given by $\beta^{(i), *} = 1/5$. Similar arguments can be carried out for non-standard (a-)symmetric Gamma drivers.
		\end{example}
		We can now state the consistency and asymptotic normality result for the high-frequency estimator \eqref{eqn:discr_estimator_A_3} in the infinite-activity setting.
		
		\begin{theorem}               \label{thm:cons_asymp_third_approx_infinite_activity}
			Let $\mathbb{Y}_{\mathcal{P}_t}$ denote the sampled observation of a stationary and ergodic MCAR($p$) process with parameters $\mathbf{A}^*\in\mathfrak{A}$ under $\mathbb{P}_{\mathbf{A}^*}$ such that the driving \Levy process satisfies Assumption \ref{ass:levy_process}. Let $\{\mathcal{P}_t,\ t\in\mathcal{T}\}$ be a countable sequence of partitions satisfying Assumption \ref{ass:evenly_spaced_lim} with refinements $\{\mathcal{Q}_t,\ t\in\mathcal{T}\}$ satisfying Assumption \ref{ass:HF_sampling}, Assumption \ref{ass:controlled_sampling} and Assumption \ref{ass:joint_mesh}. Furthermore, assume the sequence of partitions $\{\mathcal{Q}_t,\ t\in\mathcal{T}\}$ and the thresholding sequences $\{(\boldsymbol{\nu}^m_t)_{m=0}^{M_t-1},\ t\in\mathcal{T}\}$ satisfy Assumption \ref{ass:infinite_thresholding}.
			Then the estimator $\hat{\mathbf{A}}(\mathbb{Y}_{\mathcal{P}_t}; \mathcal{Q}_t, \boldsymbol{\nu}_t)$ defined in Equation \eqref{eqn:discr_estimator_A_3} is 
			\begin{itemize}
				\item (weakly) consistent, i.e.\ 
				\[ \hat{\mathbf{A}}(\mathbb{Y}_{\mathcal{P}_t}; \mathcal{Q}_t, \boldsymbol{\nu}_t) \overset{\mathbb{P}_{ \mathbf{A}^* }}{\longrightarrow} \mathbf{A}^*, \ t\rightarrow \infty; \] 
				\item asymptotically normal, i.e.\ 
				\[ \sqrt{t} \left(\mathrm{vec}(\hat{\mathbf{A}}(\mathbb{Y}_{\mathcal{P}_t}; \mathcal{Q}_t, \boldsymbol{\nu}_t)) - \mathrm{vec}(\mathbf{A}^*)\right) \overset{\mathcal{L}}{\longrightarrow} \mathbf{Z} \sim N(\mathbf{0}, \mathcal{H}^{-1}_\infty), \ t \rightarrow \infty, \]
				where $\mathcal{H}_\infty\in\mathcal{M}_{pd^2}(\mathbb{R})$ is a symmetric positive definite matrix such that $t^{-1}[\mathbf{H}]_{\mathcal{P}_t, \mathcal{Q}_t} \overset{\mathbb{P}_{ \mathbf{A}^*}}{\longrightarrow} \mathcal{H}_\infty, \ t\rightarrow \infty$.
			\end{itemize}
		\end{theorem}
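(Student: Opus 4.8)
The plan is to invoke Lemma~\ref{lemma:approx} one final time, as anticipated in Remark~\ref{rem:approx_only_H}, with $\hat{\mathbf{A}}_{1,t} = \hat{\mathbf{A}}(\mathbb{Y}_{\mathcal{P}_t}, \mathbb{J}_{\mathcal{Q}_t}, \mathbb{M}_{\mathcal{Q}_t})$, whose consistency and asymptotic normality are supplied by Theorem~\ref{thm:cons_asymp_second_approx}, and $\hat{\mathbf{A}}_{2,t} = \hat{\mathbf{A}}(\mathbb{Y}_{\mathcal{P}_t}; \mathcal{Q}_t, \boldsymbol{\nu}_t)$. Because the quadratic-variation block $[\mathbf{H}]_{\mathcal{P}_t, \mathcal{Q}_t}$ is identical for both estimators, condition~\eqref{eqn:[H]_limit} holds trivially and it suffices to verify~\eqref{eqn:H_limit}, namely that $t^{-1/2}(\mathbf{H}_{\mathcal{P}_t, \mathcal{Q}_t, \boldsymbol{\nu}_t} - \mathbf{H}_{\mathcal{P}_t, \mathcal{Q}_t}) \to 0$ in $\mathbb{P}_{\mathbf{A}^*}$-probability. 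Comparing~\eqref{eqn:H_hat_hat_hat} with~\eqref{eqn:H_hat_hat}, this difference is a regressor-weighted sum over $\mathcal{Q}_t$ of the per-interval thresholding error; subtracting~\eqref{eqn:approx_cont_mart_part} from~\eqref{eqn:thresholded_incr} componentwise, the $i$-th entry of this error reads
\begin{equation*}
\Delta^{m}_{\mathcal{Q}_t, \nu^{(i),m}_t} \hat{D}^{p-1}Y^{(i)} - \Delta^{m}_{\mathcal{Q}_t} \hat{D}^{p-1}Y^{c,(i)}_{\mathbf{A}^{(0)}} = -C^{(i),m}\mathds{1}_{\{|C^{(i),m}| > \nu^{(i),m}_t\}} + \Delta^{m}_{\mathcal{Q}_t} J^{(i)} + \Delta^{m}_{\mathcal{Q}_t} M^{(i)},
\end{equation*}
where $C^{(i),m} := \Delta^{m}_{\mathcal{Q}_t}\hat{D}^{p-1}Y^{(i)} - b^{(i)}\Delta^{m}_{\mathcal{Q}_t}$ denotes the de-trended increment.

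Next I would bound the $L^1(\mathbb{P}_{\mathbf{A}^*})$-norm of $t^{-1/2}$ times each of the three resulting sums. The regressors $\hat{D}^{l}Y^{(i)}_{u_m}$ and $Y^{(i)}_{u_m}$ have finite (indeed fourth) moments under stationarity, so by Cauchy--Schwarz it is enough to control the integrator errors interval by interval. The term carrying $\Delta^{m}_{\mathcal{Q}_t}J^{(i)}$ collects the intervals on which a compound-Poisson jump of $\mathbb{J}$ survives the threshold, and is handled exactly as in the finite-activity Theorem~\ref{thm:cons_asymp_third_approx_finite_activity}: the first part of Assumption~\ref{ass:infinite_thresholding}$.(iii)$ bounds the probability that such a jump falls below four times the threshold, while the Gaussian tail of the continuous part, together with the finite-difference error controlled by Lemma~\ref{lemma:D-D_L2_bound}, handles $C^{(i),m}\mathds{1}_{\{|C^{(i),m}| > \nu^{(i),m}_t\}}$ on jump-free intervals. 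Here Assumption~\ref{ass:infinite_thresholding}$.(ii)$ with $\beta^{(i)}\in(0,1/4)$ ensures that the aggregate contribution, summed over the $M_t \asymp t/\Delta_{\mathcal{Q}_t}$ intervals, vanishes after $t^{-1/2}$ scaling.

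The genuinely new contribution, absent in the finite-activity setting, is the small-jump martingale term $t^{-1/2}\sum_m (\text{regressor})\,\Sigma^{-1}\,\Delta^{m}_{\mathcal{Q}_t}M^{(i)}\mathds{1}_{\{\text{kept}\}}$: since $\mathbb{M}$ is a genuine compensated jump martingale rather than part of the continuous Brownian component, its increment is retained on every kept interval and must be shown to be asymptotically negligible. This is precisely what the second part of Assumption~\ref{ass:infinite_thresholding}$.(iii)$ delivers: with $s = \Delta^{m}_{\mathcal{Q}_t}$ the truncated-moment bound $\mathbb{E}_{\mathbf{A}^*}[|M^{(i)}_s|\mathds{1}_{\{|M^{(i)}_s|\leq s^{\beta^{(i)}}\}}] = O(s^{1+\epsilon^{(i)}})$ gives a per-interval $L^1$ estimate of order $\Delta^{1+\epsilon^{(i)}}_{\mathcal{Q}_t}$; summing over $M_t \asymp t/\Delta_{\mathcal{Q}_t}$ intervals yields $O(t\Delta^{\epsilon^{(i)}}_{\mathcal{Q}_t})$, and the accompanying rate $t\Delta^{2\epsilon^{(i)}}_{\mathcal{Q}_t}\to 0$ guarantees $t^{-1/2}\cdot t\Delta^{\epsilon^{(i)}}_{\mathcal{Q}_t} = t^{1/2}\Delta^{\epsilon^{(i)}}_{\mathcal{Q}_t}\to 0$.

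I expect the main obstacle to be this last step, that is, disentangling the small-jump martingale $\mathbb{M}$ from the continuous part on the \emph{approximated} increments $\hat{D}^{p-1}\mathbf{Y}$. Unlike in \citet{mai_OU}, where thresholding is applied directly to the observed process, here the finite-difference error of Lemma~\ref{lemma:D-D_L2_bound} is entangled with the thresholding event, so the events $\{|C^{(i),m}|\leq\nu^{(i),m}_t\}$ must be carefully decomposed into the Brownian increment, the $\mathbb{M}$-increment, and the approximation error before the moment bounds can be applied; this coupling, together with the fourth-moment requirement needed to dominate the cross terms by Cauchy--Schwarz, is what forces the stricter range $\beta^{(i)}\in(0,1/4)$ and the correction to the argument of \citet[page~944]{mai_OU} flagged in the preceding remark. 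The remaining bookkeeping---reassembling the three bounds across all $pd^2$ components and upgrading $L^1$-convergence to convergence in probability---is routine.
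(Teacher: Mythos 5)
Your proposal follows essentially the same route as the paper's proof: reduce to verifying~\eqref{eqn:H_limit} via Lemma~\ref{lemma:approx} and Remark~\ref{rem:approx_only_H}, split the per-interval thresholding error into the compound-Poisson part (handled as in the finite-activity case via the first half of Assumption~\ref{ass:infinite_thresholding}$.(iii)$ and the Gaussian/finite-difference tail bounds), the erroneously discarded continuous increments, and the retained small-jump increments of $\mathbb{M}$, with the last controlled by the truncated-moment condition and the rate $t\Delta_{\mathcal{Q}_t}^{2\epsilon^{(i)}}\to 0$ exactly as you compute. The only difference is organizational: the paper first replaces $\hat{D}$ by $D$ outside the indicators as a separate $L^1$ approximation step and then routes the decomposition through the auxiliary finite-activity process $D^{p-1}\tilde{\mathbb{Y}} = D^{p-1}\mathbb{Y} - \mathbf{b}\,\cdot - \mathbb{M}$ with a doubled threshold $2\boldsymbol{\nu}^m_t$, but the substance — including your correct identification of where the fourth moment and the restriction $\beta^{(i)}\in(0,1/4)$ enter — matches.
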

		\begin{proof}
			We proceed exactly as in the proof of Theorem \ref{thm:cons_asymp_third_approx_finite_activity} by applying Lemma \ref{lemma:approx} with $\hat{\mathbf{A}}_{1,t} = \hat{\mathbf{A}}(\mathbb{Y}_{\mathcal{P}_t}, \mathbb{J}_{\mathcal{Q}_t}, \mathbb{M}_{\mathcal{Q}_t})$ and $\hat{\mathbf{A}}_{2,t} = \hat{\mathbf{A}}(\mathbb{Y}_{\mathcal{P}_t}; \mathcal{Q}_t, \boldsymbol{\nu}_t)$. Consistency and asymptotic normality of $\hat{\mathbf{A}}_{1,t}$ are given by Theorem \ref{thm:cons_asymp_second_approx} and thus, by Remark \ref{rem:approx_only_H}, it suffices to show Equation \eqref{eqn:H_limit}. The computation requires splitting the limiting variable in multiple components and showing each one converges to zero in $\mathbb{P}_{\mathbf{A}^*}$. See Appendix \ref{app:proof_thm_cons_asymp_third_approx_infinite_activity} for full details.
		\end{proof}
		
		Table \ref{tab:assumptions} summarizes how the assumptions of Theorem \ref{thm:cons_asymp_third_approx_finite_activity} and Theorem \ref{thm:cons_asymp_third_approx_infinite_activity} relate the driving \Levy process $\mathbb{L}$, the sequences of observation partitions $\{\mathcal{Q}_t,\ t\in\mathcal{T}\}$ and $\{\mathcal{P}_t,\ t\in\mathcal{T}\}$, and the threshold powers $\beta^{(i)},\ i=1,\ldots,d$.
		

		
		\bgroup
		\def\arraystretch{1.75}%
		\begin{table}[H]
			\centering
			\begin{tabular}{|p{2.15cm}p{5.05cm}|C{1.5cm}C{1.85cm}C{1.85cm}C{1.5cm}|}
				\hline
				& & \multicolumn{1}{C{1.5cm}|}{$\mathbb{L}$} & \multicolumn{1}{C{1.85cm}|}{$\{\mathcal{Q}_t,\ t\in\mathcal{T}\}$} & \multicolumn{1}{C{1.85cm}|}{$\{\mathcal{P}_t,\ t\in\mathcal{T}\}$} & \multicolumn{1}{C{1.5cm}|}{$\beta^{(i)}$'s} \\
				\hline
				\multirow{2}{*}{Assumption \ref{ass:levy_process}} & $\mathbb{E}[\|\mathbf{L}_1\|^2]<\infty$ & \cellcolor[rgb]{ .125,  .216,  .392} &       &       &      \\
				\hhline{~-~~~~|}
				& $\Sigma$ positive definite & \cellcolor[rgb]{ .125,  .216,  .392} &       &       &      \\
				\hhline{--~~~~|}   
				Assumption \ref{ass:HF_sampling}  & $\Delta_{\mathcal{Q}_t} t \rightarrow 0$ &  &  \cellcolor[rgb]{ .125,  .216,  .392}   &       &       \\ 
				\hhline{--~~~~|}   
				Assumption \ref{ass:evenly_spaced_lim}  &  $c_{\mathcal{P}_t} \rightarrow 1$ ``fast enough'' &  &       &    \cellcolor[rgb]{ .125,  .216,  .392}   &      \\ 
				\hhline{--~~~~|}   
				Assumption \ref{ass:controlled_sampling} & $c_{\mathcal{Q}_t}\rightarrow c>0$  &  &     \cellcolor[rgb]{ .125,  .216,  .392}   &       &         \\ 
				\hhline{--~~~~|}   
				Assumption \ref{ass:joint_mesh}  & $\Delta_{\mathcal{P}_t} = o(t^{-1}\Delta^2_{\mathcal{Q}_t})$ &   &   \cellcolor[rgb]{ .125,  .216,  .392}    &    \cellcolor[rgb]{ .125,  .216,  .392}   &          \\ 
				\hhline{--~~~~|}   
				\multirow{2}{*}{Assumption \ref{ass:finite_thresholding}} & $t\Delta_{\mathcal{Q}_t}^{1-2\beta^{(i)}}\rightarrow 0$ &  &   \cellcolor[rgb]{ .125,  .216,  .392}    &       &    \cellcolor[rgb]{ .125,  .216,  .392}  \\
				\hhline{~-~~~~|}
				& $t\tilde{F}^{(i)}\Big(\Big(-2\Delta_{\mathcal{Q}_t}^{\beta^{(i)}}, 2\Delta_{\mathcal{Q}_t}^{\beta^{(i)}}\Big)\Big)\rightarrow 0$ & \cellcolor[rgb]{ .125,  .216,  .392} &   \cellcolor[rgb]{ .125,  .216,  .392}    &       &  \cellcolor[rgb]{ .125,  .216,  .392}    \\
				\hhline{--~~~~|}   
				\multirow{4}{*}{Assumption \ref{ass:infinite_thresholding}} & $t\Delta_{\mathcal{Q}_t}^{1-4\beta^{(i)}}\rightarrow 0$ &  &    \cellcolor[rgb]{ .125,  .216,  .392}   &       &  \cellcolor[rgb]{ .125,  .216,  .392}    \\
				\hhline{~-~~~~|}
				& $t\tilde{F}^{(i)}\Big(\Big(-4\Delta_{\mathcal{Q}_t}^{\beta^{(i)}}, 4\Delta_{\mathcal{Q}_t}^{\beta^{(i)}}\Big)\Big)\rightarrow 0$ & \cellcolor[rgb]{ .125,  .216,  .392} &    \cellcolor[rgb]{ .125,  .216,  .392}   &       &   \cellcolor[rgb]{ .125,  .216,  .392}   \\
				\hhline{~-~~~~|}
				& control on $\mathbb{M}$ in terms of $\beta^{(i)}, \Delta_{\mathcal{Q}_t}$ & \cellcolor[rgb]{ .125,  .216,  .392} &    \cellcolor[rgb]{ .125,  .216,  .392}    &       &   \cellcolor[rgb]{ .125,  .216,  .392}   \\
				\hhline{~-~~~~|}
				& $\mathbb{E}[\|\mathbf{L}_1\|^4]<\infty$ & \cellcolor[rgb]{ .125,  .216,  .392} &       &       &      \\
				\hline
			\end{tabular}
			\caption{The assumptions of Theorem \ref{thm:cons_asymp_third_approx_finite_activity} and Theorem \ref{thm:cons_asymp_third_approx_infinite_activity} where the thresholds are assumed to be of the form $\nu_t^{(i),m} = (\Delta^m_{\mathcal{Q}_t})^{\beta^{(i)}}$ for $i=1,\ldots,d$ and $m=0,\ldots. M_t-1$ for any $t\in\mathcal{T}$.}
			\label{tab:assumptions}%
		\end{table}%
		\egroup

		\subsection{Simulation study} \label{sec:simulation_study}
		In this section, we empirically check how well the asymptotic results proved in Theorem \ref{thm:cons_asymp_third_approx_finite_activity} and Theorem \ref{thm:cons_asymp_third_approx_infinite_activity} perform in finite samples by carrying out a simulation study. We consider a $d=1$ dimensional MCAR process of order $p=2$ with drift parameter $\mathbf{A}^* = (A_1, A_2) = (1, 2)$. The driving \Levy process has characteristic triplet $(0, \Sigma, F)$ with variance $\Sigma = 1$. We consider three different jump regimes: 
		\begin{itemize}
			\item[(BM)] when no jumps are present, i.e.\ the process is driven by a standard Brownian motion only;
			\item[(CP)] when the jumps follow a compound Poisson process with rate $\lambda = 1$ and $N(0,1)$ jump sizes;
			\item[($\Gamma$)] when the jumps follow a symmetric standard Gamma process, i.e.\ with shape $k=1$ and scale $\theta=1$.
		\end{itemize}
		In all three cases the \Levy driver satisfies Assumption \ref{ass:levy_process}. In the first two, the process has finite jump activity and hence we can apply Theorem \ref{thm:cons_asymp_third_approx_finite_activity}, in the last case the process has infinite activity with finite 4th moment and thus Theorem \ref{thm:cons_asymp_third_approx_infinite_activity} applies.
		
		We consider a sequence of uniform partitions $\{\mathcal{P}_t,\ t\in\mathcal{T}\}$ with step size $\Delta_{\mathcal{P}_t} = t^{-6}$ and uniform refinements $\{\mathcal{Q}_t,\ t\in\mathcal{T}\}$ with step size $\Delta_{\mathcal{Q}_t} = t^{-2}$ so that Assumption \ref{ass:HF_sampling}, Assumption \ref{ass:evenly_spaced_lim}, Assumption \ref{ass:controlled_sampling} and Assumption \ref{ass:joint_mesh} hold. Noting that all partitions are refinements of each other we simulate the processes at each point in the finest grid we will use for estimation, i.e.\ $\mathcal{P}_t$ with $t=8$. We use the exact procedure described in Appendix \ref{app:simulate_MCAR_exact} to simulate the finite activity CAR processes and the approximate Euler-Maruyama procedure described in Appendix \ref{app:simulate_MCAR_approx} to simulate the Gamma-driven CAR process. The code for simulating and estimating MCAR processes is available at \url{https://github.com/lorenzolucchese/mcar}.
		
		We assume the \Levy triplet $(0,\Sigma, F)$ is known and hence select the theoretically optimal threshold powers $\beta^{*} =-\infty, 1/3, 1/5$, i.e.\ no thresholding in the Brownian case and the values given in Remark \ref{rem:ass_finite_activity} and Example \ref{example:Gamma_process} for the finite activity and Gamma drivers. Under these choices of observation partitions and thresholding vectors Assumption \ref{ass:finite_thresholding} is satisfied by regimes (BM) and (CP) while Assumption \ref{ass:infinite_thresholding} is satisfied by ($\Gamma$).
		
		\begin{figure}
			\centering
			\includegraphics[width=\textwidth]{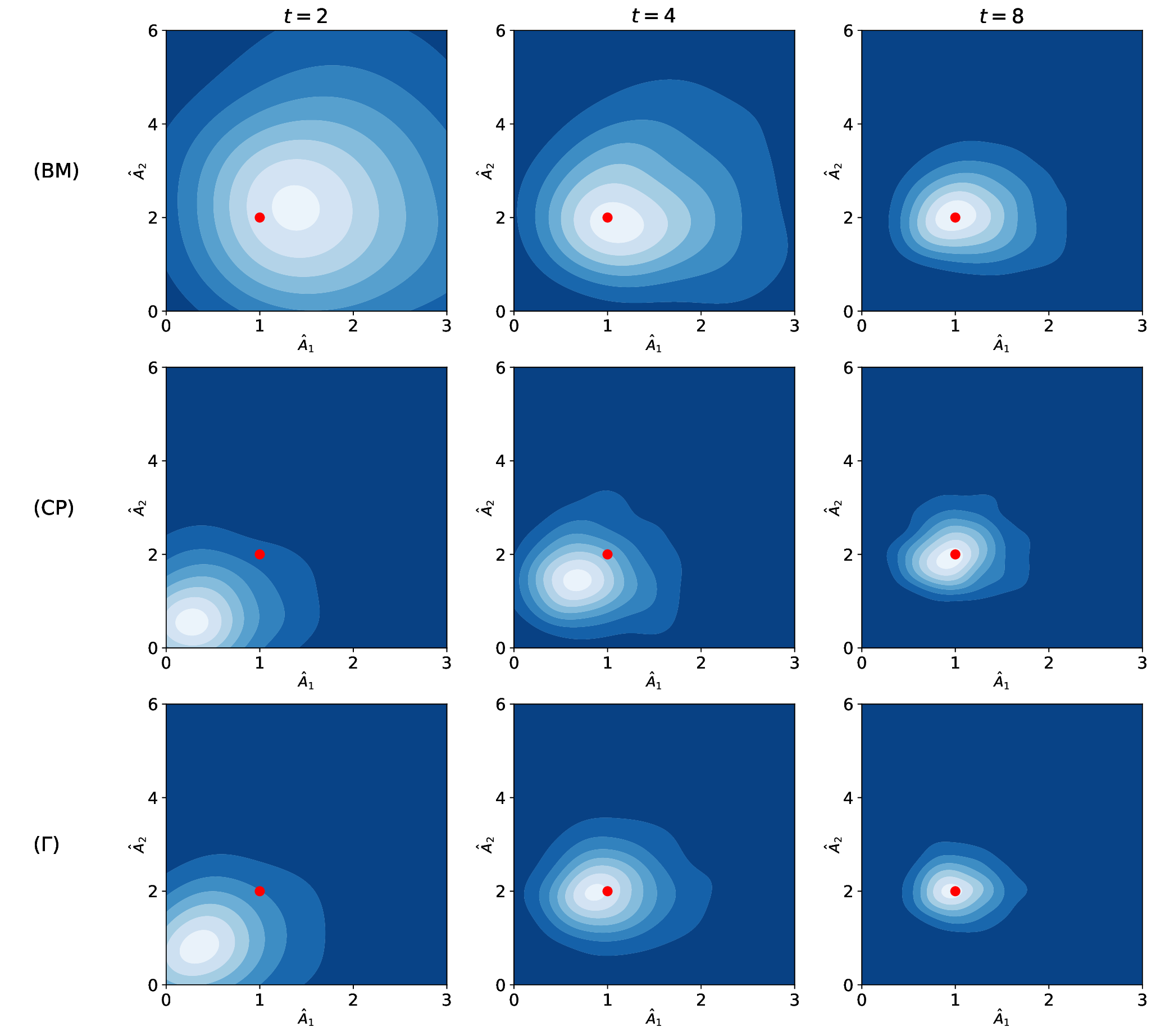}
			\caption{Empirical distribution of the estimator $\hat{\mathbf{A}}(\mathbb{Y}_{\mathcal{P}_t}; \mathcal{Q}_t, \boldsymbol{\nu}_t)$ for 1000 Monte Carlo samples under the three jump regimes for $t=2, 4, 8$.}
			\label{fig:kde_contour_estimators}
		\end{figure}
		
		In Figure \ref{fig:kde_contour_estimators} we report the empirical distribution of the estimator $\hat{\mathbf{A}}(\mathbb{Y}_{\mathcal{P}_t}; \mathcal{Q}_t, \boldsymbol{\nu}_t)$ for 1000 Monte Carlo samples under the three jump regimes for $t=2, 4, 8$. We see that as $t$ increases the distributions concentrate around the true values $\mathbf{A}^*=(1, 2)$, validating the consistency result. Moreover, in Figure \ref{fig:kde_contour_statistic}, we check how well the pre-asymptotic normal approximation holds for the bivariate statistic $\mathbf{Z}_t$ of the feasible CLT \eqref{eqn:feasible_CLT}. We see that as $t$ increases the empirical distribution of the $\mathbf{Z}_t$'s approaches a $N(\mathbf{0}, I_{2\times 2})$ distribution.
		
		\begin{figure}
			\centering            \includegraphics[width=\textwidth]{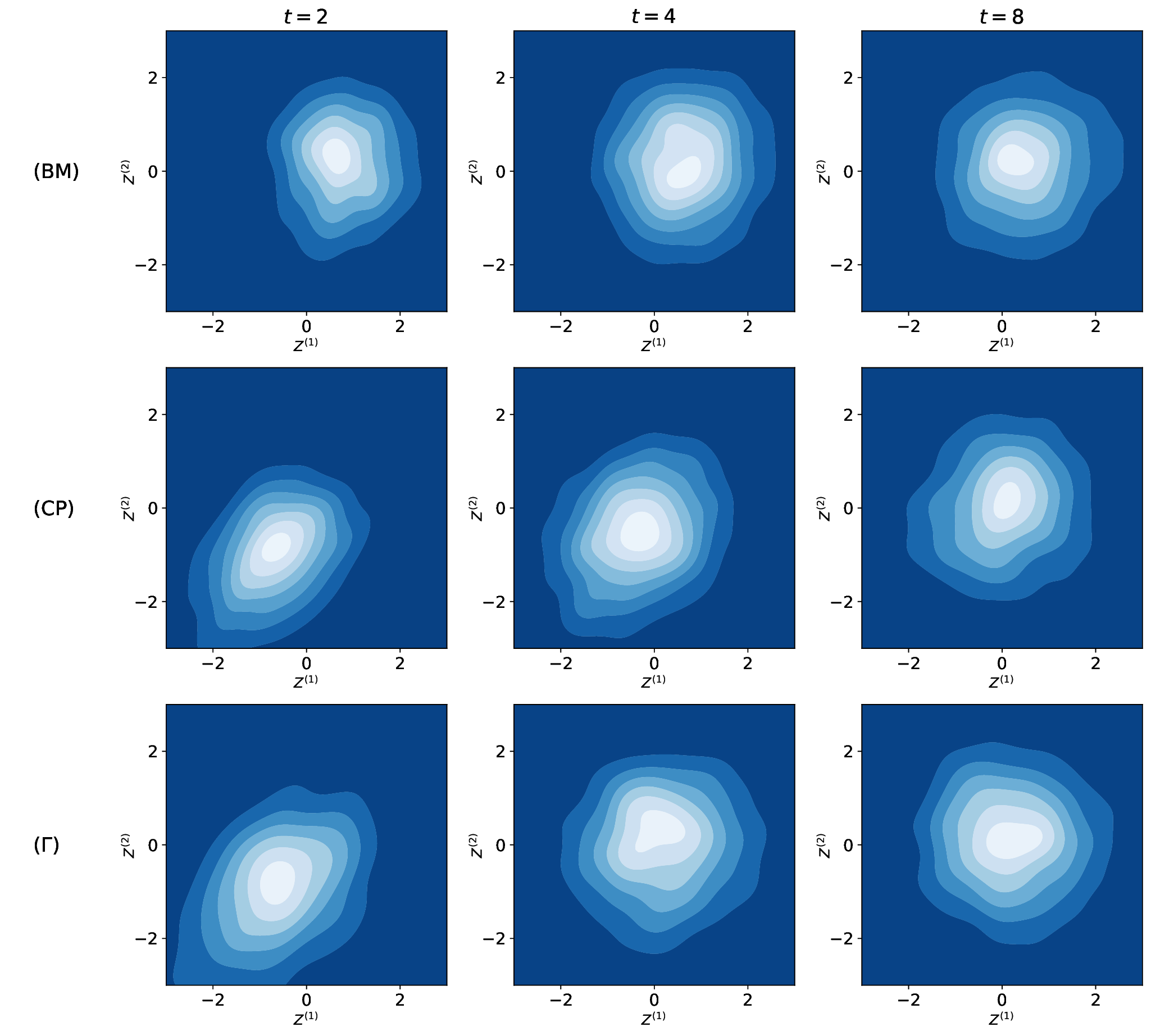}
			\caption{Empirical distribution of the statistic $\mathbf{Z}_t$ for 1000 Monte Carlo samples under the three jump regimes for $t=2, 4, 8$.}
			\label{fig:kde_contour_statistic}
		\end{figure}
		
		\subsection{Practical considerations} \label{sec:practical_considerations} 
		In this section we focus on the feasible estimator  $\hat{\mathbf{A}}(\mathbb{Y}_{\mathcal{P}_t}; \mathcal{Q}_t, \boldsymbol{\nu}_t)$ defined in Equation \eqref{eqn:discr_estimator_A_3} and investigate some natural questions which arise when applying this to observed data.
		
		First, we note that in Theorem \ref{thm:cons_asymp_third_approx_finite_activity} and Theorem \ref{thm:cons_asymp_third_approx_infinite_activity} the estimator $\hat{\mathbf{A}}(\mathbb{Y}_{\mathcal{P}_t}; \mathcal{Q}_t, \boldsymbol{\nu}_t)$ depends, more or less explicitly, on the driving \Levy triplet $(\mathbf{b},\Sigma, F)$ as follows:
		\begin{itemize}
			\item the drift $\mathbf{b}\in\mathbb{R}^d$ is used to de-trend the thresholded increments in Equation \eqref{eqn:thresholded_incr};
			\item the symmetric positive definite covariance matrix $\Sigma\in\mathcal{M}_d(\mathbb{R})$ appears in both $\mathbf{H}_{\mathcal{P}_t, \mathcal{Q}_t, \boldsymbol{\nu}_t}$, cf.\ Equation \eqref{eqn:H_hat_hat_hat}, and  $[\mathbf{H}]_{\mathcal{P}_t, \mathcal{Q}_t}$, cf.\ Equation \eqref{eqn:[H]_hat_hat};
			\item the \Levy measure $F$ is used to determine the activity level of the \Levy driver and thence choose the optimal thresholding powers $\beta^{(i)}$ for $i\in\{1,\ldots,d\}$ via Assumption \ref{ass:finite_thresholding} and Assumption \ref{ass:infinite_thresholding}.
		\end{itemize}
		In practice, these quantities are not known, but some natural and data-driven choices are possible.
		\begin{itemize}
			\item The driving \Levy process is assumed to have zero drift with respect to the truncation function, i.e.\ $\mathbf{b}\equiv 0$. This is a natural choice when considering the \Levy driver as ``noise''.
			\item We note that in the computation of $\hat{\mathbf{A}}(\mathbb{Y}_{\mathcal{P}_t}; \mathcal{Q}_t, \boldsymbol{\nu}_t)$ the $\Sigma$'s appearing in $\mathbf{H}_{\mathcal{P}_t, \mathcal{Q}_t, \boldsymbol{\nu}_t}$ and  $[\mathbf{H}]_{\mathcal{P}_t, \mathcal{Q}_t}$ cancel out. This removes the dependency on $\Sigma$ when estimating $\mathbf{A}^*$ but, as discussed later in this section, does not remove the dependency on $\Sigma$ when making the CLT feasible for inference.
			\item The thresholding powers $\beta^{(i)}$ for $i\in\{1,\ldots,d\}$ can be chosen using a data-driven approach. To be more precise, in practice, we can choose the thresholds $\nu_t^{m, (i)} = (\Delta_{\mathcal{Q}_t}^m)^{\beta^{(i)}}$ directly, dropping their functional form in terms of $\Delta_{\mathcal{Q}_t}^m$ and $\beta^{(i)}$. These thresholds should be selected as small as possible to eliminate the jump part without throwing away too many continuous increments, cf. \citet[Section~6.2.2]{HFFE} where increments of a general semimartingale are thresholded to estimate the integrated volatility. Assuming $\Delta_{\mathcal{Q}_t}^m \hat{D}^{p-1}\mathbf{Y}$ is close enough to $\Delta_{\mathcal{Q}_t}^m D^{p-1}\mathbf{Y}$, i.e.\ $\Delta_{\mathcal{P}_t}$ in Lemma \ref{lemma:D-D_L2_bound} is negligible compared to $\Delta_{\mathcal{Q}_t}^m$, we can write
			\[\Delta_{\mathcal{Q}_t}^m \hat{D}^{p-1}\mathbf{Y} \approx \Delta_{\mathcal{Q}_t}^m \mathbf{L} - \sum_{j=1}^p\int_{u_{m}}^{u_{m+1}} A^*_j D^{p-j}\mathbf{Y}_u\ du \approx  \Delta_{\mathcal{Q}_t}^m \mathbf{L}, \]
			assuming $\Delta_{\mathcal{Q}_t}^m$ is small enough so that, for each $j\in\{1,\ldots,p\}$,
			\[ \mathbb{E}_{\mathbf{A}^*}\left[\left\|\int_{0}^{\Delta_{\mathcal{Q}_t}^m} D^{p-j}\mathbf{Y}_u \ du \right\|^2\right] \leq (\Delta_{\mathcal{Q}_t}^m)^2 \mathbb{E}_{\mathbf{A}^*}\left[\int_{0}^{\Delta_{\mathcal{Q}_t}^m} \frac{\|D^{p-j}\mathbf{Y}_u\|^2}{\Delta_{\mathcal{Q}_t}^m} \ du \right] = O\left((\Delta_{\mathcal{Q}_t}^m)^2\right), \]
			is negligible compared to
			\[\mathbb{E}_{\mathbf{A}^*}\left[\left\|\Delta_{\mathcal{Q}_t}^m \mathbf{L}\right\|^2\right] = \Omega(\Delta_{\mathcal{Q}_t}^m), \]
			due to the presence of the Brownian component. We can hence threshold $\Delta_{\mathcal{Q}_t}^m \hat{D}^{p-1}\mathbf{Y}$ as if we were disentangling the increments of a \Levy process, cf.\ Appendix \ref{app:disentangling}. In this setting, i.e.\ when choosing thresholds $\nu_t^{m, (i)}$, one may work on each component $\hat{D}^{p-1}Y^{(i)}$ independently.
		\end{itemize}
		
		Next, we discuss how to produce feasible asymptotic inference from Theorem \ref{thm:cons_asymp_third_approx_finite_activity} and Theorem \ref{thm:cons_asymp_third_approx_infinite_activity}, i.e.\ how to compute explicit confidence regions for the parameters $\mathbf{A}^*\in\mathfrak{A}$. Under the conditions of the theorems $t^{-1} [\mathbf{H}]_{\mathcal{P}_t,\mathcal{Q}_t}$ is a consistent estimator for the inverse of the limiting variance-covariance matrix $\mathcal{H}_\infty$. Thus, in the spirit of Remark \ref{rem:feasible_CLT}, we can apply Slutsky's lemma to write the feasible CLT
		\begin{equation} \label{eqn:feasible_CLT} \mathbf{Z}_t := [\mathbf{H}]_{\mathcal{P}_t,\mathcal{Q}_t}^{1/2} \left(\mathrm{vec}(\hat{\mathbf{A}}(\mathbb{Y}_{\mathcal{P}_t}; \mathcal{Q}_t, \boldsymbol{\nu}_t)) - \mathrm{vec}(\mathbf{A}^*)\right) \overset{\mathcal{L}}{\longrightarrow} \mathbf{Z} \sim N(\mathbf{0}, I_{pd^2}), \quad t \rightarrow \infty. 
		\end{equation}
		By working with the distribution of a standard $pd^2$-dimensional normal random variable one can easily obtain asymptotic confidence regions for the parameters, e.g.\ elliptic regions centred at $\hat{\mathbf{A}}(\mathbb{Y}_{\mathcal{P}_t}; \mathcal{Q}_t, \boldsymbol{\nu}_t)$. The main caveat here is that, as discussed at the beginning of the section, $[\mathbf{H}]_{\mathcal{P}_t,\mathcal{Q}_t}$ depends on the, usually unknown, covariance matrix $\Sigma$. In this case though, having estimated the parameters $\hat{\mathbf{A}}(\mathbb{Y}_{\mathcal{P}_t}; \mathcal{Q}_t, \boldsymbol{\nu}_t)$, we can plug these into \eqref{eqn:L=p(D)Y} along with finite difference and Riemann sum approximations to recover the \Levy increments 
		\begin{equation} \label{eqn:recover_Levy}
			\left\{\Delta_{\mathcal{Q}_t}^m \mathbf{L} \approx \Delta_{\mathcal{Q}_t}^m \hat{D}^{p-1}\mathbf{Y} + \sum_{j=1}^p \sum_{[s_n,s_{n+1}]\subseteq[u_m, u_{m+1}]} \hat{A}_j(\mathbb{Y}_{\mathcal{P}_t}; \mathcal{Q}_t, \boldsymbol{\nu}_t) \hat{D}^{p-j}\mathbf{Y}_{s_n} \Delta_{\mathcal{P}_t}^n ,\ m = 0,\ldots,M_t-1 \right\}, \end{equation}
		and use standard methods to estimate $\Sigma$, for example, one can use the critical region-based estimator discussed in Appendix \ref{app:disentangling}. The resulting estimator $\hat{\Sigma}$ can then be plugged into $[\mathbf{H}]_{\mathcal{P}_t,\mathcal{Q}_t}$ to compute explicit confidence regions. In Appendix \ref{app:realistic_simulations} we repeat the simulation study of Section \ref{sec:simulation_study} assuming $\Sigma$ and $F$ are unknown. We use the data-driven procedures described above to select the thresholds $(\boldsymbol{\nu}_t^m)_{m=0}^{M_t-1}$ and estimate $\Sigma$. The simulations yield similar results to those of Section \ref{sec:simulation_study} for the estimator and z-statistic of the drift parameter $\mathbf{A}$ and, moreover, seem to suggest the estimator $\hat{\Sigma}$ is consistent.
		
		Finally, the choice of the hyperparameter $p$, the order of the MCAR specification, may or may not be motivated by the modeling task at hand. In the latter case, one may tackle such problem from a model selection perspective. For example, one might use a likelihood-based criterion, such as the Aikake Information Criterion (AIC), by plugging in the approximators $\mathbf{H}_{\mathcal{Q}_t, \mathcal{P}_t, \boldsymbol{\nu}_t}$ and $[\mathbf{H}]_{\mathcal{Q}_t, \mathcal{P}_t}$ in the continuous-time likelihood \eqref{eqn:likelihood_MCAR}.

		\section{Application: GrCAR} \label{sec:GrCAR}
		
		The parametrization of a $d$-dimensional MCAR process is given by the tuple of coefficient matrices 
		\[\mathbf{A} = (A_1, \ldots, A_p) \in (\mathcal{M}_d(\mathbb{R}))^p. \]
		Clearly this means the number of parameters increases quadratically in the dimension $d$ and thus, for a high dimensional observed process, the number of parameters to be estimated can be quite large. When additional structure of the observed process is known, one can significantly reduce such parametrization. For example, as in \citet{Courgeau_Veraart_2022}, one can interpret the components of an MCAR process $\mathbb{Y}$ as the nodes of a graph structure linked together through a collection of edges. The edges of the graph are encoded in an adjacency (or graph topology) matrix $A \in \mathcal{M}_d(\{0, 1\})$ such that $A^{(i,j)} = 1 $ if and only if there exists a link from node $i$ to node $j$. We set $A^{(i,i)} = 0$ for all $i \in\{1,\ldots, d\}$. For any node $i\in\{1,\ldots, d\}$ the in-degree is defined as $n_i = 1 \vee \sum_{j=1}^d A^{(j,i)} $, i.e.\ the number of in-going edges to node $i$. We can thus define the column-normalized adjacency matrix 
		\[\bar{A} = A\, \mathrm{diag}(n^{-1}_1, \ldots, n^{-1}_d) \in \mathcal{M}_d(\mathbb{R}). \]
		
		\begin{definition}[Graph CAR process] \label{def:GrCAR}
			Let $\mathbb{L} = \{\mathbf{L}_t,\ t \geq 0\}$ be a $d$-dimensional \Levy process with characteristics $(\mathbf{b}, \Sigma, F)$. Fix $p\in\mathbb{N}$ and let $\theta \in \mathbb{R}^{p \times 2}$. Define
			\[A_k(\theta) = \theta^{(k, 1)} I_{d\times d} + \theta^{(k, 2)} \bar{A}^\mathrm{T} \in \mathcal{M}_d(\mathbb{R}), \quad k\in\{1,\ldots, p\}.\]
			A graph continuous-time autoregressive process of order $p$, GrCAR($p$) process for short, driven by $\mathbb{L}$ with parameter $\theta\in\mathbb{R}^{p\times 2}$ and adjacency matrix $A\in \mathcal{M}_d(\{0, 1\})$ is a is a $d$-dimensional MCAR($p$) process driven by $\mathbb{L}$ with coefficient matrices $\mathbf{A}(\theta)= (A_1(\theta), \ldots, A_p(\theta))\in(\mathcal{M}_d(\mathbb{R}))^p$ and initial state-space representation $\xi$.
		\end{definition}
		\begin{remark} 
			\begin{itemize}
				\item In the definition of the GrCAR process we use the normalized matrix $\bar{A}$. This is a modeling choice consistent with the GrOU definition of \citet{Courgeau_Veraart_2022}  ensuring the strength of the network effect is proportional to the average value of the (derivatives of the) process at the influencing nodes. Alternatively, one could replace $\bar{A}$ by the un-normalized adjacency matrix $A$, in which case the network effect becomes proportional to the sum of the (derivatives of the) process at the influencing nodes.
				\item When defining the GrCAR process we transpose the matrix $\bar{A}$ so that if $A^{(i,j)} = 1$ values of node $i$ influence those of node $j$. In the social network literature instead a link from node $i$ to node $j$ is interpreted as ``node $i$ follows node $j$'' and thus the values of node $j$ influence those of node $i$. By using the row-normalized version of $A$ instead of $\bar{A}^\mathrm{T}$ in the construction of the network effect, as in \citet{Courgeau_Veraart_2022}, the interpretation of the GrCAR parameters is consistent with such literature.
			\end{itemize}
		\end{remark}
		
		We now leverage the graphical structure of the GrCAR model to write the corresponding explicit estimator for the $\theta\in\mathbb{R}^{p\times 2}$ parameters when continuous-time and discrete-time observations are available. In the following, we consider the vectorization operator for elements $\theta\in\mathbb{R}^{p\times 2}$ given by \ $\mathrm{vec}: \mathbb{R}^{p\times 2} \mapsto \mathbb{R}^{2p}$ such that
		\[ \theta \mapsto \mathrm{vec}(\theta) := (\theta^{(1,1)}, \theta^{(1, 2)},\ldots,\theta^{(p,1)}, \theta^{(p, 2)})^\mathrm{T}.\]	
		\begin{theorem} \label{thm:GrCAR}
			Let $\mathbb{Y}_{[0,t]} = \{\mathbf{Y}_s, \ s\in[0,t]\}$ be the observation of a stationary and ergodic GrCAR($p$) process with parameters 
			\begin{equation} \label{eqn:param_set_Theta}
				\theta^* \in\Theta := \{\theta \in \mathbb{R}^{p \times 2} : \sigma(\mathcal{A}_{\mathbf{A}(\theta)}) \subseteq (-\infty, 0) + i\mathbb{R}\},
			\end{equation}
			and known adjacency matrix $A\in\mathcal{M}(\{0,1\})$ under $\mathbb{P}_{\theta^*}$ such that the driving \Levy process satisfies Assumption \ref{ass:levy_process}. For $\theta^{(0)} = 0_{p\times 2}$, let $D^{p-1}\mathbb{Y}^c_{\theta^{(0)}, [0,t]}= \{D^{p-1}\mathbf{Y}^c_{\theta^{(0)}, t}, \ s\in[0,t]\}$ be defined as in Equation \eqref{eqn:cont_mart_part_Dp-1Y}. Let $\mathbb{K} = \{\mathbf{K}_t,\ t\geq 0\}$ be the process defined by 
			\begin{equation} \label{eqn:K}
				\mathbf{K}_t = -
				\begin{pmatrix}
					\int_0^t \langle D^{p-1}\mathbf{Y}_{s-}, dD^{p-1}\mathbf{Y}^c_{\theta^{(0)}, s}\rangle_{\Sigma} \\
					\int_0^t \langle \bar{A}^\mathrm{T} D^{p-1}\mathbf{Y}_{s-}, dD^{p-1}\mathbf{Y}^c_{\theta^{(0)}, s}\rangle_{\Sigma}  \\
					\vdots\\
					\int_0^t \langle \mathbf{Y}_{s-}, dD^{p-1}\mathbf{Y}^c_{\theta^{(0)}, s}\rangle_{\Sigma} \\
					\int_0^t \langle \bar{A}^\mathrm{T} \mathbf{Y}_{s-}, dD^{p-1}\mathbf{Y}^c_{\theta^{(0)}, s}\rangle_{\Sigma}  \\
				\end{pmatrix}
			\end{equation}
			with quadratic covariation matrix
			\begin{equation} \label{eqn:[K]}
				[\mathbf{K}]_t =
				\begin{pmatrix}
					\int_0^t \langle D^{p-1}\mathbf{Y}_{s-}, D^{p-1}\mathbf{Y}_{s-}\rangle_{\Sigma}\ ds & \cdots & \int_0^t \langle D^{p-1}\mathbf{Y}_{s-}, \bar{A}^\mathrm{T} \mathbf{Y}_{s-}\rangle_{\Sigma}\ ds\\
					\int_0^t \langle \bar{A}^\mathrm{T} D^{p-1}\mathbf{Y}_{s-}, D^{p-1}\mathbf{Y}_{s-}\rangle_{\Sigma}\ ds & \cdots & \int_0^t \langle \bar{A}^\mathrm{T} D^{p-1}\mathbf{Y}_{s-}, \bar{A}^\mathrm{T} \mathbf{Y}_{s-}\rangle_{\Sigma}\ ds\\
					\vdots & \ddots & \vdots \\
					\int_0^t \langle \mathbf{Y}_{s-}, D^{p-1}\mathbf{Y}_{s-}\rangle_{\Sigma}\ ds & \cdots & \int_0^t \langle \mathbf{Y}_{s-}, \bar{A}^\mathrm{T} \mathbf{Y}_{s-}\rangle_{\Sigma}\ ds\\
					\int_0^t \langle \bar{A}^\mathrm{T} \mathbf{Y}_{s-}, D^{p-1}\mathbf{Y}_{s-}\rangle_{\Sigma}\ ds & \cdots & \int_0^t \langle \bar{A}^\mathrm{T} \mathbf{Y}_{s-}, \bar{A}^\mathrm{T} \mathbf{Y}_{s-}\rangle_{\Sigma}\ ds\\
				\end{pmatrix}.
			\end{equation}
			Then the MLE given by
			\begin{equation} \label{eqn:MLE_GrCAR}
				\hat{\theta}(\mathbb{Y}_{[0,t]}) = \mathrm{vec}^{-1} ( [\mathbf{K}]_t^{-1}\mathbf{K}_t),
			\end{equation}
			is 
			\begin{itemize}
				\item (weakly) consistent, i.e.\ 
				\[\hat{\theta}(\mathbb{Y}_{[0,t]}) \overset{\mathbb{P}_{ \theta^*}}{\longrightarrow} \theta^*, \ t\rightarrow \infty; \] 
				\item asymptotically normal, i.e.\ 
				\[ \sqrt{t} \left(\mathrm{vec}(\hat{\theta}(\mathbb{Y}_{[0,t]})) - \mathrm{vec}(\theta^*)\right) \overset{\mathcal{L}}{\longrightarrow} \mathbf{Z} \sim N(\mathbf{0}, \mathcal{K}^{-1}_\infty), \ t \rightarrow \infty, \]
				where $\mathcal{K}_\infty\in\mathcal{M}_{2p}(\mathbb{R})$ is a symmetric positive definite matrix such that $t^{-1}[\mathbf{K}]_t \overset{\mathbb{P}_{ \theta^*}}{\longrightarrow} \mathcal{K}_\infty, \ t\rightarrow \infty$.
			\end{itemize}
			Next, assume $\mathbb{Y}_{\mathcal{P}_t}$ is a discretely sampled version of $\mathbb{Y}_{[0,t]}$ where $\{\mathcal{P}_t,\ t\in\mathcal{T}\}$ is a countable sequence of partitions satisfying Assumption \ref{ass:evenly_spaced_lim} with refinements $\{\mathcal{Q}_t,\ t\in\mathcal{T}\}$ satisfying Assumption \ref{ass:HF_sampling}, Assumption \ref{ass:controlled_sampling} and Assumption \ref{ass:joint_mesh}. Also, assume the sequence of partitions $\{\mathcal{Q}_t,\ t\in\mathcal{T}\}$ and the thresholding sequences $\{(\boldsymbol{\nu}^m_t)_{m=0}^{M_t-1},\ t\in\mathcal{T}\}$ either
			\begin{itemize}
				\item satisfy Assumption \ref{ass:finite_thresholding} and the driving \Levy process has finite jump activity; or
				\item satisfy Assumption \ref{ass:infinite_thresholding}.
			\end{itemize} 
			Then the estimator \begin{equation}
				\hat{\theta}(\mathbb{Y}_{\mathcal{P}_t}; \mathcal{Q}_t, \boldsymbol{\nu}_t) = \mathrm{vec}^{-1}\left([\mathbf{K}]^{-1}_{\mathcal{P}_t, \mathcal{Q}_t} \mathbf{K}_{\mathcal{P}_t, \mathcal{Q}_t, \boldsymbol{\nu}_t} \right),
			\end{equation}
			where
			\begin{align}
				\label{eqn:K_hat}
				\mathbf{K}_{\mathcal{P}_t, \mathcal{Q}_t, \boldsymbol{\nu}_t} &= -
				\sum_{m = 0}^{M_t-1}
				\begin{pmatrix}
					\hat{D}^{p-1} \mathbf{Y}_{u_m}^\mathrm{T} \\
					\hat{D}^{p-1} \mathbf{Y}_{u_m}^\mathrm{T} \bar{A} \\
					\vdots \\
					\mathbf{Y}_{u_m}^\mathrm{T}\\
					\mathbf{Y}_{u_m}^\mathrm{T} \bar{A} \\
				\end{pmatrix} \Sigma^{-1} \left[ \Delta_{\mathcal{Q}_t}^m \hat{D}^{p-1}\mathbf{Y} - \mathbf{b} \Delta_{\mathcal{Q}_t}^m \right] \odot \mathds{1}_{\left\{\left|\Delta_{\mathcal{Q}_t}^m \hat{D}^{p-1}\mathbf{Y} - \mathbf{b} \Delta_{\mathcal{Q}_t}^m\right| \leq \boldsymbol{\nu}^m_t\right\}} , \\
				\label{eqn:[K]_hat}
				[\mathbf{K}]_{\mathcal{P}_t, \mathcal{Q}_t} &=
				\sum_{m = 0}^{M_t-1} \begin{pmatrix}
					\langle \hat{D}^{p-1} \mathbf{Y}_{u_m}, \hat{D}^{p-1}\mathbf{Y}_{u_m}\rangle_{\Sigma} & \cdots &  \langle \hat{D}^{p-1} \mathbf{Y}_{u_m}, \bar{A}^\mathrm{T} \mathbf{Y}_{u_m}\rangle_{\Sigma}\\ 
					\langle \bar{A}^\mathrm{T} \hat{D}^{p-1} \mathbf{Y}_{u_m}, \hat{D}^{p-1}\mathbf{Y}_{u_m}\rangle_{\Sigma} & \cdots &  \langle \bar{A}^\mathrm{T} \hat{D}^{p-1} \mathbf{Y}_{u_m}, \bar{A}^\mathrm{T} \mathbf{Y}_{u_m}\rangle_{\Sigma}\\ 
					\vdots & \ddots & \vdots \\
					\langle \mathbf{Y}_{u_m}, \hat{D}^{p-1}\mathbf{Y}_{u_m}\rangle_{\Sigma} & \cdots &  \langle \mathbf{Y}_{u_m}, \bar{A}^\mathrm{T} \mathbf{Y}_{u_m}\rangle_{\Sigma}\\ 
					\langle \bar{A}^\mathrm{T}  \mathbf{Y}_{u_m}, \hat{D}^{p-1}\mathbf{Y}_{u_m}\rangle_{\Sigma} & \cdots &  \langle \bar{A}^\mathrm{T}  \mathbf{Y}_{u_m}, \bar{A}^\mathrm{T} \mathbf{Y}_{u_m}\rangle_{\Sigma}\\ 
				\end{pmatrix}(u_{m+1} - u_m),
			\end{align}
			is 
			\begin{itemize}
				\item (weakly) consistent, i.e.\ 
				\[\hat{\theta}(\mathbb{Y}_{\mathcal{P}_t}; \mathcal{Q}_t, \boldsymbol{\nu}_t) \overset{\mathbb{P}_{ \theta^*}}{\longrightarrow} \theta^*, \ t\rightarrow \infty; \] 
				\item asymptotically normal, i.e.\ 
				\[ \sqrt{t} \left(\mathrm{vec}(\hat{\theta}(\mathbb{Y}_{\mathcal{P}_t}; \mathcal{Q}_t, \boldsymbol{\nu}_t)) - \mathrm{vec}(\theta^*)\right) \overset{\mathcal{L}}{\longrightarrow} \mathbf{Z} \sim N(\mathbf{0}, \mathcal{K}^{-1}_\infty), \ t \rightarrow \infty, \]
				with $t^{-1}[\mathbf{K}]_{\mathcal{P}_t,
					\mathcal{Q}_t} \overset{\mathbb{P}_{ \theta^*}}{\longrightarrow} \mathcal{K}_\infty, \ t\rightarrow \infty$.
			\end{itemize}
			
		\end{theorem}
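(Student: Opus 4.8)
The plan is to exploit the fact that the GrCAR model is nothing but the MCAR model restricted to a $2p$-dimensional linear subspace of the parameter space $(\mathcal{M}_d(\mathbb{R}))^p$. Indeed, the defining relation $A_k(\theta) = \theta^{(k,1)} I_{d\times d} + \theta^{(k,2)} \bar{A}^\mathrm{T}$ is linear in $\theta$, so there is a constant matrix $G \in \mathcal{M}_{pd^2, 2p}(\mathbb{R})$, depending only on $\bar{A}$, with $\mathrm{vec}(\mathbf{A}(\theta)) = G\, \mathrm{vec}(\theta)$. The two columns of $G$ associated with the $k$-th block are $\mathrm{vec}$ of $I_{d\times d}$ and of $\bar{A}^\mathrm{T}$ placed in the $k$-th slot; these are linearly independent precisely when $\bar{A}$ is not a scalar multiple of $I_{d\times d}$, which — since $\bar{A}$ has vanishing diagonal — amounts to the graph having at least one edge. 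I would record this as the standing identifiability condition guaranteeing that $G$ has full column rank $2p$ (forcing in particular $d\geq 2$).

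First I would substitute $\mathrm{vec}(\mathbf{A}) = G\,\mathrm{vec}(\theta)$ into the MCAR log-likelihood of Lemma \ref{lemma:likelihood_MCAR} and read off, by the chain rule for the quadratic form, that $\mathbf{K}_t = G^\mathrm{T} \mathbf{H}_t$ and $[\mathbf{K}]_t = G^\mathrm{T} [\mathbf{H}]_t\, G$. A short index computation then verifies that the right-hand sides coincide entry-by-entry with the matrices displayed in \eqref{eqn:K} and \eqref{eqn:[K]}: contracting the row of $\mathbf{H}_t$ indexed by $A_k^{(i,j)}$ against $\delta_{ij}$ collapses to $\langle D^{p-k}\mathbf{Y}_{s-}, \cdot\rangle_\Sigma$, while contracting against $\bar{A}^{(j,i)}$ collapses to $\langle \bar{A}^\mathrm{T} D^{p-k}\mathbf{Y}_{s-}, \cdot\rangle_\Sigma$. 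Consequently $\hat\theta(\mathbb{Y}_{[0,t]}) = \mathrm{vec}^{-1}([\mathbf{K}]_t^{-1}\mathbf{K}_t)$ is the exact maximiser of the GrCAR likelihood, and $[\mathbf{K}]_t = G^\mathrm{T}[\mathbf{H}]_t G$ is $\mathbb{P}_{\theta^*}$-a.s.\ positive definite because $[\mathbf{H}]_t$ is positive definite and $G$ has full column rank.

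For the continuous-time asymptotics I would observe that the GrCAR score is the image of the MCAR score under the fixed linear map $G^\mathrm{T}$, namely $\mathbf{K}_t - [\mathbf{K}]_t\,\mathrm{vec}(\theta^*) = G^\mathrm{T}\bigl(\mathbf{H}_t - [\mathbf{H}]_t\,\mathrm{vec}(\mathbf{A}(\theta^*))\bigr)$. Since $\mathbf{A}(\theta^*) \in \mathfrak{A}$ (this is exactly the meaning of $\theta^* \in \Theta$), Lemma \ref{lemma:loc_mart_MCAR} makes the bracketed MCAR score a mean-zero square-integrable $\mathbb{P}_{\theta^*}$-martingale, and applying the constant matrix $G^\mathrm{T}$ keeps it a martingale with quadratic variation $G^\mathrm{T}[\mathbf{H}]_t G = [\mathbf{K}]_t$. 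The decomposition $\mathrm{vec}(\hat\theta) = \mathrm{vec}(\theta^*) + [\mathbf{K}]_t^{-1}(\mathbf{K}_t - [\mathbf{K}]_t\,\mathrm{vec}(\theta^*))$ together with $t^{-1}[\mathbf{K}]_t = G^\mathrm{T}(t^{-1}[\mathbf{H}]_t)G \to G^\mathrm{T}\mathcal{H}_\infty G =: \mathcal{K}_\infty$ (positive definite, again by full column rank) then feeds the very same martingale CLT used in the proof of Theorem \ref{thm:cons_asymp}, yielding consistency and the stated asymptotic normality with limiting precision $\mathcal{K}_\infty$.

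Finally, for the discretised estimator I would note that the identities $\mathbf{K} = G^\mathrm{T}\mathbf{H}$ and $[\mathbf{K}] = G^\mathrm{T}[\mathbf{H}]G$ persist verbatim at every level of approximation — this is immediate on comparing \eqref{eqn:K_hat} and \eqref{eqn:[K]_hat} with \eqref{eqn:H_hat_hat_hat} and \eqref{eqn:[H]_hat_hat}, since the finite differences and thresholded increments enter both constructions identically. Hence the limit relations \eqref{eqn:H_limit} and \eqref{eqn:[H]_limit} already established for the $\mathbf{H}$-approximators in the chain of Theorems \ref{thm:cons_asymp_first_approx}, \ref{thm:cons_asymp_second_approx}, \ref{thm:cons_asymp_third_approx_finite_activity} and \ref{thm:cons_asymp_third_approx_infinite_activity} transfer to the $\mathbf{K}$-approximators simply by left-multiplication by the constant $G^\mathrm{T}$ (and right-multiplication by $G$), which preserves convergence in $\mathbb{P}_{\theta^*}$-probability to zero. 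Applying Lemma \ref{lemma:approx} in $\mathbb{R}^{2p}$ — its proof uses only the Gaussian quadratic-form structure and the stated convergence modes, not the ambient dimension — with $\hat{\mathbf{A}}_{1,t}$ the continuous-time GrCAR MLE and $\hat{\mathbf{A}}_{2,t}$ the fully discretised GrCAR estimator then delivers both claims. I expect the only genuinely delicate point to be the positive-definiteness of $[\mathbf{K}]_t$ and $\mathcal{K}_\infty$, i.e.\ the full-column-rank / identifiability condition on $G$; everything else is a mechanical transfer of the MCAR results through the fixed linear reparametrisation.
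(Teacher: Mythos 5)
Your proposal is correct, and the facts it rests on are the same ones the paper uses in Appendix \ref{app:thm_GrCAR_proof}: the GrCAR likelihood is the MCAR likelihood restricted to the graph-parametrized family, the score is a mean-zero square-integrable martingale, and the discretization chain carries over. What you do differently is to package the restriction as an explicit linear reparametrization $\mathrm{vec}(\mathbf{A}(\theta)) = G\,\mathrm{vec}(\theta)$ and transfer every object and every limit through the fixed matrix $G$, so that $\mathbf{K}_t = G^\mathrm{T}\mathbf{H}_t$, $[\mathbf{K}]_t = G^\mathrm{T}[\mathbf{H}]_t G$, and identically for all the discretized approximators; the paper instead re-runs the arguments of Theorem \ref{thm:cons_asymp} and Section \ref{sec:MCAR_discr} mutatis mutandis, ``factoring out the $\bar{A}^\mathrm{T}$ term'' at each step. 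Your route buys two things. First, the asymptotics and the approximation limits \eqref{eqn:H_limit} and \eqref{eqn:[H]_limit} for the $\mathbf{K}$-objects follow mechanically from the already-proved MCAR statements by multiplication with a constant matrix, so none of the estimates has to be redone. Second, the positive definiteness of $[\mathbf{K}]_t$ and $\mathcal{K}_\infty = G^\mathrm{T}\mathcal{H}_\infty G$ is isolated as the full-column-rank condition on $G$, i.e.\ the linear independence of $I_{d\times d}$ and $\bar{A}^\mathrm{T}$, which (since $\bar{A}$ has zero diagonal) is precisely the requirement that the graph have at least one edge; the paper leaves this identifiability condition implicit, since reducing positive definiteness to the affine independence of $\{D^{l}Y^{(i)}_s\}$ alone does not rule out $\bar{A}=0$, so making it explicit is a genuine improvement. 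The only point to tidy is the final application of Lemma \ref{lemma:approx}: it should be invoked once per approximation stage (Riemann sums, finite differences, thresholding), as in the MCAR chain, rather than in a single jump from the continuous-time MLE to the fully discretized estimator -- but since the $G$-transfer works at every stage, this is bookkeeping rather than a gap.
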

		\begin{remark}
			Note that, unlike the estimator for the MCAR process $\hat{\mathbf{A}}(\mathbb{Y}_{\mathcal{P}_t}; \mathcal{Q}_t, \boldsymbol{\nu}_t)$ given in Equation \eqref{eqn:discr_estimator_A_3}, we cannot easily remove the dependence of $\hat{\theta}(\mathbb{Y}_{\mathcal{P}_t}; \mathcal{Q}_t, \boldsymbol{\nu}_t)$ on $\Sigma$. Extending \citet[Definition~2.6]{Courgeau_Veraart_2022} from the GrOU to the GrCAR setting we can construct an alternative estimator $\hat{\theta}$ for $\theta$ from an estimator  $\hat{\mathbf{A}}$ for $\mathbf{A}$ by ``inverting'' the relationship
			\[A_k(\theta) = \theta^{(k, 1)} I_{d\times d} + \theta^{(k, 2)} \bar{A}^\mathrm{T} \in \mathcal{M}_d(\mathbb{R}), \quad k\in\{1,\ldots, p\}.\]
			As long as each node has at least one neighbour the resulting estimator $\hat{\theta}$ depends linearly on $\hat{\mathbf{A}}$, ensuring the asymptotic properties of $\hat{\mathbf{A}}$ are preserved. In particular, by choosing $\hat{\mathbf{A}} = \hat{\mathbf{A}}(\mathbb{Y}_{\mathcal{P}_t}; \mathcal{Q}_t, \boldsymbol{\nu}_t)$ given in Equation \eqref{eqn:discr_estimator_A_3} the resulting estimator $\hat{\theta}$ does not depend on $\Sigma$.
		\end{remark}
		
		\begin{proof}
			The proof follows by similar arguments to Theorem \ref{thm:cons_asymp}, Theorem \ref{thm:cons_asymp_third_approx_finite_activity} and Theorem \ref{thm:cons_asymp_third_approx_infinite_activity}. See Appendix \ref{app:thm_GrCAR_proof}.
		\end{proof}
		
		In a similar spirit to Section \ref{sec:simulation_study} we carry out a finite sample simulation study. Here, we consider a $d=5$ dimensional GrCAR process of order $p=2$ with parameters 
		\[\theta^* = \begin{pmatrix} \theta^{(1, 1)}& \theta^{(1, 2)}\\
			\theta^{(2, 1)}& \theta^{(2, 2)}
		\end{pmatrix} = \begin{pmatrix} 2 & 1\\
			1& 0.5
		\end{pmatrix},  \]
		and fully connected underlying adjacency matrix 
		\[ A = \begin{pmatrix} 
			0& 1& 1& 1& 1 \\
			1& 0& 1& 1& 1 \\
			1& 1& 0& 1& 1 \\
			1& 1& 1& 0& 1 \\
			1& 1& 1& 1& 0 \\
		\end{pmatrix}. \]
		The driving \Levy process has characteristics $(\mathbf{0}, \Sigma, F)$ with covariance matrix $\Sigma = I_{5 \times 5}$. As in Section \ref{sec:simulation_study} we consider the three jump regimes (BM), (CP) and ($\Gamma$), and use the numerical methods described in Appendix \ref{app:simulate_MCAR} to simulate the processes. Again, we work with uniformly spaced partitions $\mathcal{P}_t$ and $\mathcal{Q}_t$ with mesh sizes $\Delta_{\mathcal{P}_t} =t^{-6}$ and $\Delta_{\mathcal{Q}_t}=t^{-2}$ with given optimal $\beta^* = -\infty, 1/3, 1/5$, so that all the Assumptions of Theorem \ref{thm:GrCAR} are satisfied. The resulting empirical distributions for 1000 Monte Carlo simulations are plotted in $(\theta^{(1,1)}, \theta^{(1,2)})$- and $(\theta^{(2,1)}, \theta^{(2,2)})$-space in Figure \ref{fig:graph__estimators}.
		
		\begin{figure}
			\centering
			\includegraphics[height=0.9\textheight]{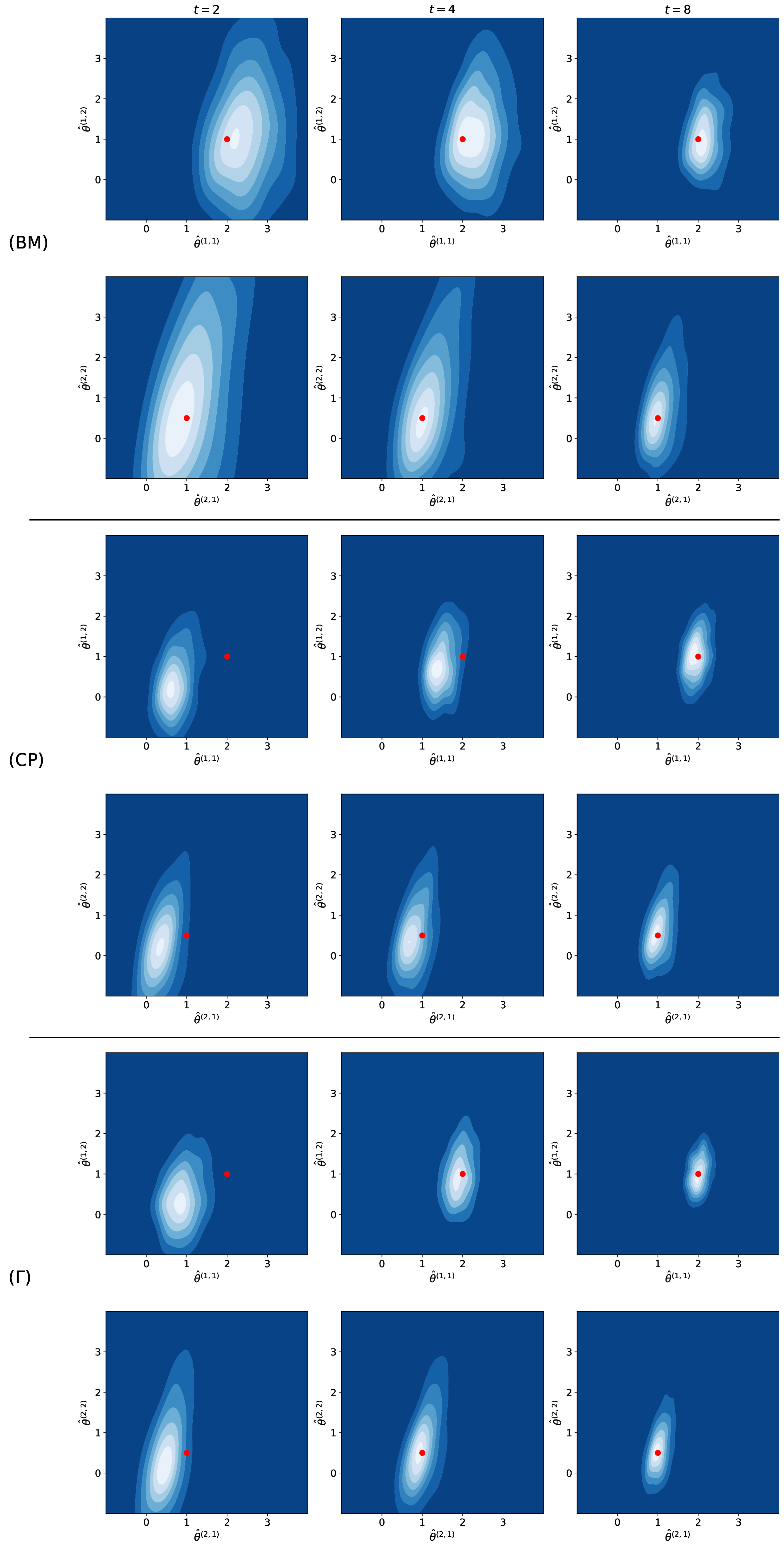}
			\caption{Empirical distribution of the estimator $\hat{\theta}(\mathbb{Y}_{\mathcal{P}_t}; \mathcal{Q}_t, \boldsymbol{\nu}_t) $ for 1000 Monte Carlo samples under the three jump regimes for $t=2, 4, 8$.}            \label{fig:graph__estimators}
		\end{figure}

		\section{Conclusions and Outlook} \label{sec:conclusions}
		This paper introduced novel estimation and inference methods for the drift parameter of a general multivariate \Levydriven CAR process. The main contributions can be summarized as follows. By working on the path space of the state-space process, we showed the maximum-likelihood estimators based on continuous-time observations are consistent and asymptotically normal. Next, we addressed the more realistic scenario of discrete-time observations. We considered natural discretizations of the quantities appearing in the continuous-time setting and proved that the resulting estimators possess the same asymptotic properties. Importantly, the proposed methods also accommodate irregularly spaced observations, which often pose a practical modeling challenge. To prove the desired asymptotic results we carried out a rigorous analysis of the finite difference approximation of the derivatives of the CAR process. To the best of our knowledge this was the first time such kind of analysis was done for a differentiable stochastic processes. Moreover, we addressed what we deem to be inconsistencies in the proofs of the asymptotic properties of the corresponding estimator for drift parameter of a univariate OU process found in \citet{mai_OU}. Under slightly different assumptions we provided modified arguments establishing the desired results in the more general context of multivariate CAR processes. Finally, considering the case where the multivariate observation possesses a known underlying network structure, we introduced the GrCAR architecture, a natural extension of the GrOU process from \citet{Courgeau_Veraart_2021}, along with the corresponding continuous-time and discrete-time estimators. The reduced parametrization offers an additional modeling advantage when dealing with graphically structured data.
		
		Moving forward, there are several directions for future research that can build upon the findings of this paper. Firstly, it is important to rigorously explore the practical challenges discussed in Section \ref{sec:practical_considerations}, for example by developing a joint estimation framework for the CAR drift and the parameters of the driving \Levy process. Investigating methods that simultaneously estimate these components could potentially enhance modeling accuracy and capture the interplay between structure of the CAR process and the underlying \Levy driver.
		From a practitioner's perspective, exploring different modeling applications of the GrCAR specification represents an exciting area for future empirical work. In such an applied context the GrCAR architecture provides a flexible framework to capture relatively simple relationships in very large networks from (possibly irregular) high-frequency observations. A related interesting problem is to consider the scenario where the adjacency matrix is unknown and needs to be estimated from an observation of the GrCAR process. Developing methodologies that effectively estimate the adjacency matrix can provide insights into the underlying network structure and improve the understanding of complex systems with a graphical topology.		
		Finally, we believe it would be of great research interest to explore the estimation and inference of more general CARMA processes, possibly by extending the results discussed in this paper for CAR processes. CARMA models incorporate moving average components, allowing for more flexible modeling of time series data. Developing estimators for CARMA processes with strong theoretical underpinnings would provide a more comprehensive toolbox for modeling and analyzing complex (spatio-)temporal data.
		
		\bibliographystyle{apalike}
		\bibliography{references}
		
		\newpage
		
		\appendix
		\section{Preliminaries} \label{sec:prelim}
		\subsection{\Levy processes}
		We consider a filtered probability space $(\Omega',\mathcal{F}', \{\mathcal{F}'_t,\ t\in\mathbb{R}\}, \mathbb{P}')$ to which all stochastic processes are adapted. Let $\mathbb{L} = \{\mathbf{L}_t,\ t \geq 0\}$ be a  $d$-dimensional \Levy process, i.e.\ a continuous in probability stochastic processes with stationary and independent increments such that $\mathbf{L}_0 = \mathbf{0},\ \mathbb{P}'- \mathrm{a.s.}$, which we assume without loss of generality to be càdlàg. Two well-known results about \Levy processes are the \Levy - Khintchine representation and the \LevyIto decomposition:
		\begin{itemize}
			\item (\Levy - Khintchine representation, \citet[Theorem~8.1]{Sato_1999}) For each $t>0$ the law $\mathcal{L}(\mathbf{L}_t) = (\mathbf{L}_t)_* \mathbb{P}'$ of the \Levy process $\mathbb{L}$ is infinitely divisible with characteristic triplet $(t\mathbf{b}, t\Sigma, tF)$, i.e.\ $\mathbf{b}\in\mathbb{R}^d$, $\Sigma$ is a symmetric positive semi-definite $d\times d$ matrix and $F$ is a \Levy measure on $\mathbb{R}^d$, that is $F(\{0\}) = 0$ and $\int_{\mathbb{R}^d} (\|x\|^2\wedge 1) F(dx) <\infty$, such that
			\[\mathbb{E}'\left[\exp\{iz^\mathrm{T}\mathbf{L}_t\}\right] = \exp\left\{t\left[i \mathbf{b}^\mathrm{T}z - \frac{1}{2}z^\mathrm{T}\Sigma z + \int_{\mathbb{R}^d} (e^{iy^\mathrm{T}z} - 1 - iy^\mathrm{T} z \tau(y)) F(dy) \right]\right\}, \quad z\in\mathbb{R}^d, \]
			where $\tau:\mathbb{R}^d\rightarrow \mathbb{R}$ is a fixed truncation function, i.e.\ a bounded measurable function such that $\tau(x) = 1 + o(\|x\|)$ as $\|x\|\rightarrow 0$ and $\tau(x) = O(\|x\|^{-1})$ as $\|x\|\rightarrow \infty$. We say that $(\mathbf{b}, \Sigma, F)$ are the characteristics of $\mathbb{L}$.
			\begin{remark}
				Strictly speaking, the triplet parameter $b$ depends on the choice of truncation function in the \Levy - Khintchine representation  (cf. \citet[Remark~8.4]{Sato_1999}). In this paper we will always take the truncation function to be $\tau(x):=\mathds{1}_{\{\|x\|\leq 1\}}$ but other choices are possible.
			\end{remark}
			\item (\LevyIto decomposition, \citet[Theorem~19.2]{Sato_1999}) One has the following representation for the \Levy process $\mathbb{L}$ with characteristics $(\mathbf{b}, \Sigma, F)$:
			\begin{multline}
				\mathbf{L}_t = \mathbf{b}t + \Sigma^{1/2} \mathbf{W}_t + \int_0^t \int_{\{x:\|x\| > 1\}} x \mu(dt, dx) \\
				+\int_0^t \int_{\{x:\|x\| \leq 1\}} x \{\mu(dt, dx) - dt\, F(dx)\}, \quad t\geq 0, \mathbb{P}'\mathrm{-a.s.}
			\end{multline}
			where:
			\begin{itemize}
				\item $\mathbb{W} = \{ \mathbf{W}_t,\ t\geq 0\}$ is a standard Brownian motion on $(\Omega',\mathcal{F}', \{\mathcal{F}'_t,\ t\in\mathbb{R}\}, \mathbb{P}')$ and $\Sigma^{1/2}$ is a $d\times d$ matrix s.t. $\Sigma = \Sigma^{1/2} \Sigma^{1/2, \mathrm{T}}$;
				\item $\{\mu(B): B\in\mathcal{B}((0,\infty)\times \mathbb{R}^d\setminus\{0\})\}$ is a Poisson random measure independent of $\mathbb{W}$ such that
				\begin{equation*} \mu(dt, dx) = \sum_{s\geq 0: \Delta \mathbf{L}_s \neq 0} \delta_{(s, \Delta \mathbf{L}_s)}(dt, dx), \end{equation*}
				and $\Delta \mathbb{L} = \{\Delta \mathbf{L}_t,\ t\geq 0\}$ is the jump process of $\mathbb{L}$, i.e.\ $\Delta \mathbf{L}_t =  \mathbf{L}_t - \mathbf{L}_{t-}$, 
				\item we define the integral \[\int_0^t \int_{\{x:\|x\| \leq 1\}} x \{\mu(dt, dx) - dtF(dx)\} := \lim_{\epsilon \downarrow 0} \int_0^t \int_{\{x: \epsilon < \|x\| \leq 1\}} x \{\mu(dt, dx) - dtF(dx)\}.\]
			\end{itemize}
		\end{itemize}
		\subsection{Stochastic Integration}
		In order to define an MCAR process we require a notion of stochastic differential, or, more formally, that of the stochastic integral
		\[(H \cdot \mathbf{L})_t := \int_0^t H_s d\mathbf{L}_s, \quad t\geq 0, \]
		for general (possibly stochastic) $\mathbb{R}^{m \times d}$-valued integrands $\mathbb{H} = \{H_t,\ t\geq 0\}$. When considering a \Levy process $\mathbb{L}$ as integrator we will define the integral above via the definition of \Levy type stochastic integral given in \citet[Section~4.3.3]{Applebaum_2009} and \citet[Example~4.3.2]{Applebaum_2009}. For any predictable process $\mathbb{H} = \{H_t,\ t\geq 0\}$ s.t. $\int_0^t \|H_s\|^2 ds < \infty$, $\mathbb{P}'$-a.s.\ we use the \Levy - \Ito decomposition to define for any $t\geq 0$
		\begin{multline}
			\int_0^t H_s d\mathbf{L}_s = \int_0^t H_s \mathbf{b}\, ds + \int_0^t H_s \Sigma^{1/2} d\mathbf{W}_s \\ + \int_0^t \int_{\{x:\|x\| > 1\}} H_s x  \mu(ds, dx) +  \int_0^t \int_{\{x:\|x\| \leq 1\}} H_s x \{\mu(ds, dx) - ds\, F(dx)\}.
		\end{multline}
		Here we use the following definitions:
		\begin{itemize} 
			\item The first integral $\int_0^t H_s \mathbf{b}\, ds$ is a (stochastic) Lebesgue-Stieljes integral w.r.t.\ the Lebesgue measure on $\mathbb{R}$, i.e.\ for each $\omega\in\Omega', t\geq 0$ one defines $\int_0^t H_s \mathbf{b}\, ds (\omega) := \int_0^t H_s(\omega) \mathbf{b}\, ds$ as a Lebsgue-Stieljes integral.
			\item The second integral $\int_0^t H_s \Sigma^{1/2} d\mathbf{W}_s$ is a stochastic \Ito integral w.r.t.\ Brownian motion defined
			via $L_2$-limit construction. The original theory was developed by \citet{Ito_1944} for Brownian motion and then extended to general square-integrable martingales by \citet{Kunita_Watanabe_1967}. Expositions of the general theory can be found in \citet{Karatzas_Shreve_1996} when the integrator is continuous or in \citet[Theorem~I.4.40]{Jacod_Shiryaev_1987}, when the integrator is discontinuous (slightly different classes of integrands need to be considered in the two cases, progressively measurable integrands when the integrator is continuous and predictable integrands, a smaller class, when the integrator is discontinuous). \citet[Section~4.3.1]{Applebaum_2009} gives a slightly different but equivalent construction by treating $\mathbb{W}$ as an $\mathbb{R}^d$-valued martingale measure on $\mathbb{R}$.
			\item The third integral is defined as $\displaystyle \int_0^t \int_{\{x:\|x\| > 1\}} H_s x  \mu(ds, dx) = \sum_{0\leq s\leq t: \|\Delta \mathbf{L}_s\| > 1} H_s \Delta \mathbf{L}_s$.
			\item The last integral $\int_0^t \int_{\{x:\|x\| \leq 1\}} H_s x \{\mu(ds, dx) - ds\, F(dx)\}$ is a stochastic integral w.r.t.\ the $(2, F)$-martingale valued measure $M(dt, dx):= \mu(dt, dx) - dt\, F(dx)$, cf.\ \citet[Section~4.2.2]{Applebaum_2009}.
		\end{itemize}
		
		\begin{remark}
			There are less general definitions of integral which are consistent with this construction when suitable stricter conditions are satisfied:
			\begin{itemize}
				\item When $\mathbb{L}$ is of finite variation this coincides $\mathbb{P}'$-a.s.\ with a stochastic pathwise Lebesgue-Stieljes integral, see \citet[page~314]{Millar_1972}.
			\end{itemize}
			
			There are also more general definitions which are consistent with the above construction of the integral:
			\begin{itemize}
				\item Every \Levy process is a semimartingale [add reference], so any definition of stochastic integral for semimartingale integrators applies to \Levy integrators:
				\begin{itemize}
					\item The uniformly on compacts in probability (ucp) limit construction of \citet{Protter_1990}. In the simplest case the integrand $\mathbb{H}$ is an adapted \caglad process but the construction can be extended to more general predictable processes $\mathbb{H}$ which satisfy joint conditions with $\mathbb{X}$, cf.\ \citet{Protter_1990}.
					\item The $L_2$-limit construction of \citet{Dellacherie_Meyer_1978} (also treated in \citet[Theorem~I.4.31]{Jacod_Shiryaev_1987}). This approach uses the decomposition of a semimartingale into a predictable finite variation process and a local martingale to define the stochastic integral, as discussed below.
				\end{itemize}
				\item Every \Levy process can be viewed as an $\mathbb{R}^d$-valued \Levy random measure on $\mathbb{R}$, so definitions of stochastic integral w.r.t.\ infinitely divisible random measures apply. For example, \citet{Rajput_Rosinski_1989}, construct the integral using the $\mathbb{P}'$-limit construction of \citet{Urbanik_Woyczynski}.
			\end{itemize}
		\end{remark}
		
		In the following section we will also be integrating with respect to general semimartingales, i.e.\ we will be considering processes $\mathbb{H} \cdot \mathbb{X} = \{ (H \cdot \mathbf{X})_t,\ t \geq 0\}$ where 
		\[ (H \cdot \mathbf{X})_t := \int_0^t H_s d\mathbf{X}_s \]
		for $\mathbb{R}^d$-valued semimartingale $\mathbb{X} = \{\mathbf{X}_t,\ t\geq 0\}$ and suitable $\mathbb{R}^{m\times d}$-valued integrand $\mathbb{H} = \{H_t,\ t\geq 0\}$. When considering such integrals, we will be using the definition of stochastic integral obtained by extending the classical $L_2$ integration theory of square-integrable martingales to semimartingales. To do so, one decomposes $\mathbb{X} = \mathbb{A} + \mathbb{M}$ where $\mathbb{A} = \{\mathbf{A}_t,\ t\geq 0\}$ is a predictable finite variation process and $\mathbb{M}=\{\mathbf{M}_t,\ t\geq 0\}$ is a local martingale and defines
		\[\int_0^t H_s d\mathbf{X}_s = \int_0^t H_s d\mathbf{A}_s + \int_0^t H_s d\mathbf{M}_s, \]
		where the first integral is defined as a stochastic Lebesgue-Stieljes integral while the latter is obtained by an extension via localization of the classical $L_2$-theory for square-integrable martingales by \citet{Kunita_Watanabe_1967}. For further details see \citet[Theorem~I.4.31]{Jacod_Shiryaev_1987}, which follows a similar exposition as that in \citet{Dellacherie_Meyer_1978}. Note that the integral in the most general form is defined for any locally bounded predictable process $\mathbb{H}$, that is $\mathbb{H} = \{H_t,\ t\geq 0\}$ needs to be predictable, i.e.\ measurable w.r.t.\ the $\sigma$-algebra generated by all left-continuous adapted processes, and locally bounded, i.e.\ there exists a localizing sequence of stopping times $(\tau_n)_{n\geq 0}$ increasing to infinity such that the stopped process $\mathbb{H}^{\tau_n}$ is uniformly bounded for all $n\geq 0$. In particular, it holds that $\mathbb{H} \cdot \mathbb{X}$ is also a semimartingale and, whenever $\mathbb{X}$ is a local martingale, then so is $\mathbb{H} \cdot \mathbb{X}$. The way these integrals are defined automatically ensures that when observations of the continuous-time processes are only available on a discrete grid, the integral constructed from the observed piecewise-linear processes yields a natural approximator for the stochastic integral. This will be of fundamental importance when constructing estimators from discrete observations in Section \ref{sec:MCAR_discr}.
		
		\subsection{Canonical Space}
		Let $(\Omega, \mathcal{F}, \{\mathcal{F}_t, \ t\geq 0\})$ denote the canonical space of $\mathbb{R}^{pd}$-valued paths, i.e.\ 
		\begin{itemize}
			\item $\Omega = D([0,\infty);\mathbb{R}^{pd})$ the space of \cadlag processes $\omega:[0,\infty)\rightarrow\mathbb{R}^{pd}$,
			\item $\{\mathcal{F}_t, \ t\geq 0\}$ is the filtration s.t. $\forall t\geq 0$
			\[\mathcal{F}_t = \bigcap_{u>t} \sigma\Big(\omega(s): 0\leq s\leq u\Big),\]
			\item $\mathcal{F}$ is the smallest sigma algebra containing $\{\mathcal{F}_t, \ t\geq 0\}$.
		\end{itemize}
		This is the (filtered) probability space on which the probability measures $\mathbb{P}_{\mathbf{A}, \mathbf{x}_0}$ for $\mathbf{x}_0\in\mathbb{R}^{pd}$ and $\mathbf{A}\in(\mathcal{M}(\mathbb{R}^d))^p$ are defined.
		\begin{remark} \label{rem:sigma_algebra}
			Here $\sigma(\omega(s): s\in\mathcal{I})$ indicates the smallest sigma algebra such that $\omega\in D([0,\infty)) \mapsto \omega(s)\in\mathbb{R}^{pd}$ is measurable for each $s\in\mathcal{I}$. We have that $\mathcal{F} = \sigma \left(\omega(s) : s \geq 0 \right)$ and hence $\mathcal{F}$ can be characterized as the restriction to $D([0, \infty))$ of the product $\sigma$-algebra $\otimes_{\mathbb{R}_+}\mathcal{B}(\mathbb{R}^{pd})$ defined as the smallest $\sigma$-algebra such that $\pi \in \{f\ \mathrm{s.t.}\ f:[0,\infty)\rightarrow \mathbb{R}^{pd}\}\mapsto \pi(s)\in\mathbb{R}^{pd}$ is measurable for each $s\in\mathbb{R}_+$. Alternatively, we can define $\mathcal{F}$ as the $\sigma$-algebra generated by cylinder sets of the form
			\[C = \{\omega\in D([0,\infty)) \ \mathrm{s.t.}\ \omega(t_1) \in A_1, \ldots, \omega(t_n) \in A_N\},\]
			for any $n\in\mathbb{N}, t_1< \ldots< t_n\in\mathbb{R}_{+}$ and $A_1,\ldots, A_n\in\mathcal{B}(\mathbb{R}^{pd})$.
			Moreover, by \citet[Proposition~III.7.1]{Ethier_Kurtz_1986}, this $\sigma$-algebra is the same as the Borel $\sigma$-algebra generated by the Skorokhod topology on $D([0,\infty))$. It is a well-known property that $(\Omega, \mathcal{F})$ is a Polish space.
		\end{remark}
		
		\sloppy For a parameter $\mathbf{A} \in (\mathcal{M}(\mathbb{R}^{d}))^p$ consider the $pd$-dimensional state-space representation $\mathbb{X}_{\mathbf{A},\mathbf{x}_0} = \{\mathbf{X}_{\mathbf{A},\mathbf{x}_0,t},\ t\geq 0\}$ of a MCAR($p$) process driven by a \Levy process $\mathbb{L}$ with characteristics $(\mathbf{b}, \Sigma, F)$, parameter $\mathbf{A}$ and deterministic initial condition $\xi = \mathbf{x}_0$ on an arbitrary stochastic basis $(\Omega', \mathcal{F}', \{\mathcal{F}'_t, \ t\geq 0\}, \mathbb{P}')$. Under $\mathbb{P}'$, the process $\mathbb{X}_{\mathbf{A},\mathbf{x}_0}$ is a semimartingale with local characteristics $(\mathbf{B}'_{\mathbf{A}}, C', \nu')$ on $(\Omega', \mathcal{F}', \{\mathcal{F}'_t, \ t\geq 0\}, \mathbb{P}')$ given by
		\begin{equation} \label{eqn:loc_char_P'}
			\begin{gathered}
				\mathbf{B}'_{\mathbf{A}, t}(\omega) = \int_0^t (\mathcal{A}_{\mathbf{A}} \mathbf{X}_{\mathbf{A},\mathbf{x}_0, s}(\omega) + \mathcal{E} b) ds = \mathcal{A}_{\mathbf{A}}\int_0^t \mathbf{X}_{\mathbf{A},\mathbf{x}_0, s}(\omega) ds + \mathcal{E} bt,
				\\
				C'_t(\omega) = \int_0^t  \mathcal{E} \Sigma \mathcal{E}^\mathrm{T} ds = \mathcal{E} \Sigma \mathcal{E}^\mathrm{T} t, \quad
				\nu'(\omega; dt, dx) = dt \int_{\mathbb{R}^d} \mathds{1}_{\{dx\}}(\mathcal{E} z) F(dz) = dt\, \delta_0 (dx_{-p})\, F(dx_{p}),
			\end{gathered}
		\end{equation}
		where we write $x = (x_{1}^\mathrm{T}, \ldots, x^\mathrm{T}_p)^\mathrm{T}\in\mathbb{R}^{pd}$ and $x_{-p} =( x_{1}^\mathrm{T}, \ldots, x^\mathrm{T}_{p-1})^\mathrm{T}$ for $x_1, \ldots, x_p \in \mathbb{R}^d.$    
		\section{Properties of the state-space representation of the MCAR(p) process}
		Note that the state-space representation of an $\mathbb{R}^d$-dimensional MCAR($p$) process $\mathbb{Y} = \{\mathbf{Y}_t,\ t\geq 0\}$ driven by $\mathbb{L} = \{\mathbf{L}_t,\ t\geq 0\}$ with coefficient matrices $\mathbf{A} = (A_1, \ldots, A_p) \in (\mathcal{M}_d(\mathbb{R}))^p$ and initial state-space representation $\mathbf{X}_0$ is the $\mathbb{R}^{pd}$-dimensional Ornstein-Uhlenbeck process $\mathbb{X} = \{\mathbf{X}_t,\ t\geq 0\}$ driven by the (degenerate) \Levy process $\mathcal{E}\mathbb{L}:= \{\mathcal{E}\mathbf{L}_t,\ t\geq 0\}$ and dynamics matrix 
		\[\mathcal{A} = \begin{pmatrix}
			0_{d \times d} & I_{d \times d} & 0_{d \times d} & \cdots & 0_{d \times d} \\
			0_{d \times d} & 0_{d \times d} & I_{d \times d} & \cdots & 0_{d \times d} \\
			\vdots & \vdots & \vdots & \ddots & \vdots \\
			0_{d \times d} & 0_{d \times d} & 0_{d \times d} & \cdots & I_{d \times d} \\
			- A_p & -A_{p-1} & -A_{p-2} & \cdots & -A_1 \\
		\end{pmatrix},
		\]
		i.e.\ it can be written as
		\[\mathbf{X}_t = e^{\mathcal{A} t}\mathbf{X}_0 + \int_0^t e^{\mathcal{A} (t-s)} \mathcal{E} d\mathbf{L}_s, \quad t\geq 0.\]
		
		We can thus apply standard results from the theory of multivariate Ornstein–Uhlenbeck processes driven by a general \Levy process, i.e.\ \cite{Masuda_2004}, \cite{Sato_Yamazato_1984}. For example, we have that for $t>0$, the law of $\mathbf{X}_t$ given $\mathbf{X}_{0} = \mathbf{x}_0$ is infinitely divisible with characteristics $(b_{t,\mathbf{x}_0, \mathcal{A}}, \Sigma_{t, \mathcal{A}}, F_{t, \mathcal{A}})$ given by
		\begin{align*}
			b_{t,\mathbf{x}_0,\mathcal{A}} &= e^{t\mathcal{A}} \mathbf{x}_0 + \int_0^t e^{s \mathcal{A}} \mathcal{E} \mathbf{b} \, ds \\
			&\quad\quad+\int_{\mathbb{R}^{pd}}\int_0^t e^{s \mathcal{A}} z\left(\mathds{1}_{\{y: \|y\| \leq 1\}}\left(e^{s \mathcal{A}} z\right) - \mathds{1}_{\{y: \|y\| \leq 1\}}(z)\right) ds\, \delta_0(dz_{-p})\,F(dz_p), \\
			\Sigma_{t, \mathcal{A}} &= \int_0^t e^{s \mathcal{A}} \mathcal{E}\Sigma\mathcal{E}^\mathrm{T}  e^{s \mathcal{A}^{\mathrm{T}}} ds, \\
			F_{t,\mathcal{A}}(E) &= \int_{\mathbb{R}^{pd}}  \int_0^t \mathds{1}_{E}\left(e^{s\mathcal{A}}z\right) ds \, \delta_0(dz_{-p})\, F(dz_p), \quad \forall E\in\mathcal{B}(\mathbb{R}^{pd}),
		\end{align*}
		where $(\mathbf{b}, \Sigma, F)$ are the characteristics of the driving \Levy process $\mathbb{L} = \{\mathbf{L}_{t}, \ t\geq 0\}$, cf.\ \citet[Theorem~3.1]{Sato_Yamazato_1984}. If the driving \Levy process $\mathbb{L}$ is assumed to have second moments then also $\mathbb{X}$ will have second moments and we can explicitly compute
		\begin{align*}
			\mathbb{E}[\|\mathbf{X}_t - \mathbf{X}_0\|^2] &= \mathrm{tr}\left(\Sigma_{t,\mathcal{A}}\right) + \int_{\mathbb{R}^{pd}} \|z\|^2 F_{t, \mathcal{A}}(dz) + \mathbb{E}\left[\left\| \mathbf{X}_0 - \left(b_{t,\mathbf{X}_0, \mathcal{A}} + \int_{\{z:\|z\|>1\}} z F_{t,\mathcal{A}}(dz)\right) \right\|^2 \right] \\
			&= \mathrm{tr}\left(\int_0^t e^{s \mathcal{A}} \mathcal{E}\Sigma\mathcal{E}^\mathrm{T}  e^{s \mathcal{A}^{\mathrm{T}}} ds\right) + \int_{\mathbb{R}^{d}}\int_0^t \|e^{s\mathcal{A}}\mathcal{E}y\|^2 F(dy) ds + \\
			&\quad\quad\quad\quad\quad\quad\quad\quad\quad\quad\quad\quad\quad\quad\quad +\mathbb{E}\left[\left\|(e^{t\mathcal{A}}- I) \mathbf{X}_0 + \int_0^t e^{(t-s)\mathcal{A}} \mathcal{E} \mathbb{E}[\mathbf{L}_1] ds \right\|^2 \right]. \\
		\end{align*}
		And one can thus obtain the bound
		\begin{equation} \label{eqn:big_O_X2}
			\mathbb{E}[\|\mathbf{X}_t - \mathbf{X}_0\|^2] = O\left(\int_0^t e^{2s\|\mathcal{A}\|}ds + t\|\mathcal{A}\|e^{t\|\mathcal{A}\|}\right) = O\left( t(e^{2t\|\mathcal{A}\|} + e^{t\|\mathcal{A}\|})\right) = O(t), \quad t\rightarrow 0.
		\end{equation}
		
		\section{Continuous observations: proofs}
		
		\subsection{Proof of Proposition \ref{prop:exist_likelihood}} \label{app:proof_prop_exist_likelihood}
		\begin{proof}
			Let $\mathbb{X}_{\mathbf{A},\mathbf{x}_0} = \{\mathbf{X}_{\mathbf{A},\mathbf{x}_0,s},\ t\geq 0\}$ denote a MCAR($p$) process driven by a \Levy process $\mathbb{L}$ with characteristics $(\mathbf{b}, \Sigma, F)$ on the stochastic basis $(\Omega', \mathcal{F}', \{\mathcal{F}'_s, \ s\geq 0\}, \mathbb{P}')$ and parameter $\mathbf{A}$, i.e.\ such as it satisfies SDE \eqref{eqn:SDE}. Recall that $\mathbb{P}_{\mathbf{A}, \mathbf{x}_0}$ represents the unique solution-measure to the SDE \eqref{eqn:SDE} with initial condition $\mathbf{X}_0 = \mathbf{x}_0$, i.e.\ the unique solution to the martingale problem for $(\mathbf{B}_{\mathbf{A}}, C, \nu)$ on the canonical space with initial condition $\eta = \delta_{\mathbf{x}_0}$, where $(\mathbf{B}_{\mathbf{A}}, C, \nu)$ are defined in Equation \eqref{eqn:loc_char_P}.
			
			Fix deterministic horizon $t>0$, then, under $\mathbb{P}'$, the stopped process $\mathbb{X}_{\mathbf{A},\mathbf{x}_0}^t = \{\mathbf{X}^t_{\mathbf{A},\mathbf{x}_0, s},\ s\geq 0\}$ such that $\mathbf{X}^t_{\mathbf{A},\mathbf{x}_0,s}(\omega) = \mathbf{X}_{\mathbf{A},\mathbf{x}_0, s\wedge t} (\omega)$ satisfies the stopped SDE
			\begin{equation}
				\begin{split}
					d \mathbf{X}_s = (\mathcal{A}_{\mathbf{A}} \mathbf{X}_s + \mathcal{E} b)\mathds{1}_{[0,t]}(s) ds + \mathcal{E} &\Sigma^{1/2}\mathds{1}_{[0,t]}(s) d \mathbf{W}_s \\
					+\int_{\|z\| \leq 1} &\mathcal{E} z \{\mu^t(ds, dz) - \mathds{1}_{[0,t]}(s)F(dz) d s\} + \int_{\|z\| > 1} \mathcal{E} z \mu^t(d s, dz),
				\end{split} \label{eqn:SDE_stopped}
			\end{equation}
			where $\mu^t(ds, dz)= \mathds{1}_{[0,t]}(s)\mu^t(ds, dz)$ is again a Poisson random measure on $\mathbb{R}_+ \times \mathbb{R}^d$ with compensator $\mathds{1}_{[0,t]}(s) F(dz) ds$ (abusing notation slightly). The process $\mathbb{X}_{\mathbf{A},\mathbf{x}_0}^t$ is a semimartingale with local characteristics $(\mathbf{B}^{\prime, t}_\mathbf{A}, C^{\prime,t}, \nu^{\prime,t})$ \citep[Proposition~2.4]{Jacod_Memin_1976} where $\mathbf{B}^{\prime, t}_\mathbf{A}$ and $C^{\prime, t}$ denote the stopped at $t>0$ processes $\mathbf{B}'_\mathbf{A}$ and $C'$ respectively and
			\[\nu^{\prime, t}(A) = \int_A \mathds{1}_{[0,t]}(s) \nu'(ds, dx), \quad \forall A \in \mathcal{B}(\mathbb{R}_+\times \mathbb{R}^{pd}).\]
			We start by showing that for any $\mathbf{A}\in(\mathcal{M}_d(\mathbb{R}))^p$ the martingale problem for $(\mathbf{B}^t_{\mathbf{A}}, C^t, \nu^t)$ with deterministic initial condition has a unique solution. The characteristics $(\mathbf{B}^t_{\mathbf{A}}, C^t, \nu^t)$ are defined on the canonical space by
			\begin{equation} 
				\begin{gathered}            \label{eqn:loc_char_P_stopped}
					\mathbf{B}^t_{\mathbf{A}, s}(\omega) = \mathcal{A}_{\mathbf{A}}\int_0^{s\wedge t} \omega(u) du + \mathcal{E} b(s\wedge t),
					\quad
					C^t_s(\omega) = \tilde{\Sigma} (s\wedge t), \\
					\nu^t(\omega; ds, dx) = \mathds{1}_{[0,t]}(s)\, ds\, \delta_0 (dx_{-p})\, F(dx_{p}), 
				\end{gathered}
			\end{equation}
			and we write $x = (x^\mathrm{T}_1, \ldots, x^\mathrm{T}_p)^\mathrm{T}\in\mathbb{R}^{pd}$ and $x_{-p} =(x_{1}^\mathrm{T}, \ldots, x^\mathrm{T}_{p-1})^\mathrm{T}$ for $x_1, \ldots, x_p \in \mathbb{R}^d$.
			\begin{lemma} \label{lemma:uniqueness_mart_problem}
				Let $\mathbb{X}$ denote the canonical process on $(\Omega, \mathcal{F}, \{\mathcal{F}_s,\ s\geq0\})$. For any $\mathbf{A}\in(\mathcal{M}_d(\mathbb{R}))^p$ the martingale problem $(\mathbb{X}, (\mathbf{B}^t_{\mathbf{A}}, C^t, \nu^t), \delta_{\mathbf{x}_0})$ has a unique solution
			\end{lemma}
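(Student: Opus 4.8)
The plan is to reduce uniqueness for the martingale problem to pathwise uniqueness for the stopped SDE \eqref{eqn:SDE_stopped}, whose coefficients are affine in the state and constant in the noise, so that the degeneracy of $\tilde{\Sigma}$ does not obstruct the argument. Existence is immediate: the law induced on $(\Omega,\mathcal{F})$ by the stopped state-space process $\mathbb{X}^t_{\mathbf{A},\mathbf{x}_0}$, which solves \eqref{eqn:SDE_stopped} on the original basis, is a solution of the martingale problem with stopped characteristics \eqref{eqn:loc_char_P_stopped} by \citet[Proposition~2.4]{Jacod_Memin_1976}. The real content is therefore uniqueness.

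For uniqueness I would argue as follows. Let $\mathbb{Q}$ be any solution; since the characteristics $(\mathbf{B}^t_{\mathbf{A}},C^t,\nu^t)$ vanish (resp.\ are constant) on $(t,\infty)$, the canonical process $\mathbb{X}$ has no continuous martingale part and no jumps there and is thus constant on $(t,\infty)$, $\mathbb{Q}$-a.s., so that the stopped SDE governs the whole path. The martingale-problem/SDE correspondence \citet[Theorem~III.2.26]{Jacod_Shiryaev_1987} then applies: under $\mathbb{Q}$ the process $\mathbb{X}$ is a semimartingale with characteristics \eqref{eqn:loc_char_P_stopped}, so on a possibly enlarged space one can exhibit a Wiener process $\mathbb{W}$ and a Poisson random measure with compensator $F(dz)\,ds$ for which $\mathbb{X}$ solves \eqref{eqn:SDE_stopped}. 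Because the drift $\mathcal{A}_{\mathbf{A}}x+\mathcal{E}b$ is affine and the coefficients $\mathcal{E}\Sigma^{1/2}$ and $\mathcal{E}z$ do not depend on the state, the local Lipschitz and linear growth hypotheses of \citet[Theorems~II.2.32~and~II.2.33]{Jacod_Shiryaev_1987} hold (the factors $\mathds{1}_{[0,t]}$ do not affect this), yielding pathwise uniqueness and hence uniqueness in law of the solution-measure of \eqref{eqn:SDE_stopped}. As every solution of the martingale problem arises as such a solution-measure, they all coincide, establishing uniqueness.

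The delicate point, which I expect to be the only genuine obstacle, is the recovery of the driving Wiener process in the representation step, since $\tilde{\Sigma}=\mathcal{E}\Sigma\mathcal{E}^\mathrm{T}$ is merely positive semi-definite --- exactly the failure of \citet[Condition~C]{Sorensen_1991} noted earlier. The resolution is that the degeneracy is confined to the block structure: under $\mathbb{Q}$ the continuous martingale part $\mathbf{X}^c$ of $\mathbb{X}$ has $[\mathbf{X}^c]_s=\tilde{\Sigma}(s\wedge t)$, so its first $(p-1)d$ coordinates have vanishing quadratic variation and are identically $0$, while its last $d$ coordinates $\mathbf{X}^{c,(p)}$ carry quadratic variation $\Sigma(s\wedge t)$ with $\Sigma$ strictly positive definite by Assumption \ref{ass:levy_process}. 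Setting $\mathbf{W}_s:=\Sigma^{-1/2}\mathbf{X}^{c,(p)}_{s\wedge t}$ then recovers a genuine $d$-dimensional Wiener process directly from $\mathbb{X}$, so no artificial enlargement is needed to supply the missing noise. Once this identification is in place, checking that $\mathbb{W}$ together with the jump measure reproduces \eqref{eqn:SDE_stopped}, and that uniqueness in law follows from pathwise uniqueness, is routine.
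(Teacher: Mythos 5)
Your argument is correct and follows essentially the same route as the paper: both reduce uniqueness of the martingale problem $(\mathbb{X}, (\mathbf{B}^t_{\mathbf{A}}, C^t, \nu^t), \delta_{\mathbf{x}_0})$ to uniqueness in law for the stopped SDE \eqref{eqn:SDE_stopped} via \citet[Theorem~III.2.26]{Jacod_Shiryaev_1987}, and obtain the latter from pathwise uniqueness via \citet[Theorem~III.2.33]{Jacod_Shiryaev_1987}. The one difference is that the paper gets pathwise uniqueness more cheaply: because the noise coefficients are state-independent, the difference $\mathbb{D}_{\mathbf{A}}$ of two solution-processes driven by the same noise satisfies the deterministic linear ODE $d\mathbf{D}_s = \mathcal{A}_{\mathbf{A}}\mathbf{D}_s\mathds{1}_{[0,t]}(s)\,ds$ with $\mathbf{D}_0 = 0$ and hence vanishes, so it never needs to recover the driving Wiener process from the degenerate $\tilde{\Sigma}$ --- your (correct) identification $\mathbf{W}_s = \Sigma^{-1/2}\mathbf{X}^{c,(p)}_{s\wedge t}$ is therefore unnecessary at this step, the degeneracy only becoming a genuine obstacle later when local uniqueness is needed and \citet[Corollary~III.2.41]{Jacod_Shiryaev_1987} cannot be applied.
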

			\begin{proof}
				Uniqueness can be proved as follows. First, we note that there exists a solution-process to the stopped SDE \eqref{eqn:SDE_stopped} starting at $\mathbf{x}_0$ on $(\Omega', \mathcal{F}', \{\mathcal{F}'_s, \ s\geq 0\}, \mathbb{P}')$ given by $\mathbb{X}_{\mathbf{A},\mathbf{x}_0}^t$ (and this implies the existence of a solution-measure to SDE \eqref{eqn:SDE_stopped}). If $\mathbb{Y}$ is another solution-process of \eqref{eqn:SDE_stopped} starting at $\mathbf{x}_0$ then the difference $\mathbb{D}_\mathbf{A} = \mathbb{Y} - \mathbb{X}_{\mathbf{A},\mathbf{x}_0}^t$ must satisfy the ODE 
				\[d\mathbf{D}_s = \mathcal{A}_\mathbf{A} \mathbf{D}_s \mathds{1}_{[0,t]} ds, \quad \mathbf{D}_0 = 0,\]
				$\mathbb{P}'$-a.s.\, and hence $\mathbb{D}_\mathbf{A} \equiv 0$, $\mathbb{P}'$-a.s. Thus \eqref{eqn:SDE_stopped} has a unique solution-process (for any stochastic basis $(\Omega', \mathcal{F}', \{\mathcal{F}'_s, \ s\geq 0\}, \mathbb{P}')$). By \citet[Theorem~III.2.33]{Jacod_Shiryaev_1987} this implies uniqueness of the solution-measure with initial condition $\delta_{\mathbf{x}_0}$. In turn, by \citet[Theorem~III.2.26]{Jacod_Shiryaev_1987}, this implies uniqueness of the solution to the martingale problem on the canonical space $(\Omega, \mathcal{F}, \{\mathcal{F}_s, \ s\geq 0\})$ for the characteristics $(\mathbf{B}_\mathbf{A}^t, C^t, \nu^t)$ and initial condition $\eta = \delta_{\mathbf{x}_0}$. The same reasoning holds with $\mathbf{A}$ replaced by $\mathbf{A}^{(0)}$.
			\end{proof}
			Next, let $\mathbb{Q}_{\mathbf{A}, \mathbf{x}_0}$ and $\mathbb{Q}_{\mathbf{A}^{(0)}, \mathbf{x}_0}$ denote the unique measures on canonical path space $(\Omega, \mathcal{F}, \{\mathcal{F}_s,\ s\geq 0\})$ such that the canonical process is a semimartingale with characteristics $(\mathbf{B}_\mathbf{A}^t, C^t, \nu^t)$ and $(\mathbf{B}_{\mathbf{A}^{(0)}}^t, C^t, \nu^t)$ respectively. Then we clearly have that the restricted measures on $(\Omega, \mathcal{F}_t)$ coincide, i.e.
			\[
			\mathbb{Q}^t_{\mathbf{A}, \mathbf{x}_0} = \mathbb{P}^t_{\mathbf{A}, \mathbf{x}_0} \quad \mathrm{and} \quad \mathbb{Q}^t_{\mathbf{A}^{(0)}, \mathbf{x}_0} = \mathbb{P}^t_{\mathbf{A}^{(0)}, \mathbf{x}_0}.
			\]
			Note that for $\omega \in \Omega$, $s\geq 0$
			\begin{align*}
				\mathbf{B}^t_{\mathbf{A}, s}(\omega) = \mathbf{B}^t_{\mathbf{A}^{(0)}, s}(\omega) + (\mathcal{A}_\mathbf{A} - \mathcal{A}_{\mathbf{A}^{(0)}}) \int_0^s \omega(u) du = \mathbf{B}^t_{\mathbf{A}^{(0)}, s}(\omega) + \tilde{\mathcal{A}}_\mathbf{A} \int_0^s \omega(u-) du.
			\end{align*}
			Due to the specific structure of $\tilde{\mathcal{A}}_\mathbf{A}:= \mathcal{A}_\mathbf{A} - \mathcal{A}_{\mathbf{A}^{(0)}}$, we can write $\forall \mathbf{x}\in\mathbb{R}^{pd}$
			\[ \tilde{\mathcal{A}}_\mathbf{A} \mathbf{x} = \tilde{\Sigma} \tilde{\Sigma}^{-1} \tilde{\mathcal{A}}_\mathbf{A} \mathbf{x}, \]
			where we define the pseudo-inverse $\tilde{\Sigma}^{-1}$ for $\tilde \Sigma$ as
			\[\tilde{\Sigma}^{-1} := 
			\begin{pmatrix}
				0_{d \times d} & \cdots & 0_{d \times d} & 0_{d \times d} \\
				\vdots & \ddots & \vdots & \vdots \\
				0_{d \times d} & \cdots & 0_{d \times d} & 0_{d \times d} \\
				0_{d \times d} &  \cdots & 0_{d \times d} & \Sigma^{-1} \\
			\end{pmatrix}.
			\]
			We can thus define the $\mathbb{R}^{pd}$-valued predictable process $\mathbb{Z}_\mathbf{A} = \{\mathbf{Z}_{\mathbf{A},s},\ s\geq 0\}$ on $(\Omega, \mathcal{F}, \{\mathcal{F}_{s},\ s\geq0\})$ such that $\mathbf{Z}_{\mathbf{A},s} (\omega)= \tilde{\Sigma}^{-1} \tilde{\mathcal{A}}_\mathbf{A} \omega({s-})$ and for $s\geq0$, $\omega\in\Omega$
			\begin{equation} \label{eqn:B_change_2} 
				\mathbf{B}^t_{\mathbf{A}, s}(\omega) = \mathbf{B}^t_{\mathbf{A}^{(0)}, s}(\omega) + \int_0^s dC^t_u \mathbf{Z}_{\mathbf{A}, u}(\omega).
			\end{equation}
			The results from \citet{Jacod_Memin_1976} can be immediately generalized to the multivariate setting, cf.\ \citet{Sorensen_1991}. We can thus check the hypothesis of \citet[Theorem~4.2.(c)]{Jacod_Memin_1976} for $\mathbb{Q}_{\mathbf{A}, \mathbf{x}_0}$ and $\mathbb{Q}_{\mathbf{A}^{(0)}, \mathbf{x}_0}$:
			\begin{itemize}
				\item For $s\in\mathbb{R}_+$ let
				\[A^t_{\mathbf{A}, s}(\omega) := \int_0^s  \mathbf{Z}^\mathrm{T}_{\mathbf{A},u}(\omega) dC^t_u \mathbf{Z}_{\mathbf{A}, u}(\omega) = \int_0^{s\wedge t} \omega(u)^\mathrm{T} \tilde{\mathcal{A}}_\mathbf{A}^\mathrm{T} \tilde{\Sigma}^{-1} \tilde{\mathcal{A}}_\mathbf{A} \omega(u) du, \]
				and define the $(\Omega, \mathcal{F}, \{\mathcal{F}_s,\ s\geq 0\})$-stopping times (not that, by construction, this is a strict stopping time in the sense of \citet[Definition~III.2.35]{Jacod_Shiryaev_1987}): 
				\[\tau_n(\omega) = \inf\{s\geq 0 :A^t_{\mathbf{A}, s}(\omega) \geq n\}. \]
				In order to check $\tau_n$-uniqueness of the martingale problem $(\mathbb{X}, (\mathbf{B}_\mathbf{A}^t, C^t, \nu^t), \delta_{\mathbf{x}_0})$, in the sense of \citet[Section~4]{Jacod_Memin_1976}, we show the (stronger) condition of local uniqueness, as defined in \citet[Definition~III.2.37]{Jacod_Shiryaev_1987}. Due to the Markovian structure of the martingale problem we can apply \citet[Theorem~III.2.40]{Jacod_Shiryaev_1987} to the characteristics $(\mathbf{B}_\mathbf{A}^t, C^t, \nu^t)$ to deduce local uniqueness (note that, unlike \citet{Sorensen_1991}, we cannot apply \citet[Corollary~III.2.41]{Jacod_Shiryaev_1987} directly because \citet[Condition~C]{Sorensen_1991} is not satisfied). For $s\geq 0, u\geq 0, \omega\in\Omega$ define
				\begin{align*}
					&(p_s \mathbf{B}_\mathbf{A}^t)_u(\omega) = \mathbf{B}_{\mathbf{A},u}^{(t-s)\vee 0}(\omega) = \mathcal{A}_{\mathbf{A}}\int_0^{u\wedge [(t-s)\vee 0]} \omega(r) dr + \mathcal{E} \mathbf{b} (u\wedge [(t-s)\vee 0]), \\
					&(p_s C^t)_u(\omega) = C_u^{(t-s)\vee 0}(\omega) = \tilde{\Sigma} (u \wedge [(t-s)\vee 0]), \\
					&(p_s \nu^t)(\omega; du, dx) = \nu^{(t-s)\vee 0}(\omega; du, dx) = \mathds{1}_{[0,(t-s)\vee 0]}(u)\, du\, \delta_0 (dx_{-p})\, F(dx_{p}).
				\end{align*}
				Note that for fixed $u\geq 0$ and $A\in \mathcal{B}(\mathbb{R}_+ \times \mathbb{R}^{pd})$ the maps 
				\[(s,\omega)\mapsto (p_s\mathbf{B}^t_{\mathbf{A}})_u(\omega), \ (s,\omega)\mapsto (p_sC^t)_u(\omega),\ \mathrm{and}\  (s,\omega)\mapsto (p_s \nu^t)(\omega; A)\]
				are clearly $\mathcal{F} \otimes \mathcal{B}(\mathbb{R}_+)$-measurable. Moreover, letting $\theta: \Omega \rightarrow \Omega$ denote the shift operator, it can be easily verified that the following relations are satisfied $\forall s,u\geq 0$, $\forall\omega\in\Omega$, $\forall A \in  \mathcal{B}(\mathbb{R}^{pd})$
				\begin{align*}
					&(p_s \mathbf{B}_\mathbf{A}^t)_u(\theta_s\omega) = \mathbf{B}_{\mathbf{A},s+u}^{t}(\omega) -  \mathbf{B}_{\mathbf{A},s}^{t}(\omega), \\
					&(p_s C^t)_u(\theta_s\omega) = C_{s+u}^{t}(\omega) - C_{s}^{t}(\omega), \\
					&(p_s \nu^t)(\theta_s\omega; (0,u] \times A) = \nu^{t}(\omega; (s, s+u] \times A), 
				\end{align*}
				i.e.\ the mapping defined by $p_s$ satisfies \citet[Assumption~III.2.39]{Jacod_Shiryaev_1987} (moreover \citet[Assumption~III.2.13]{Jacod_Shiryaev_1987} is implicit as we are working on the canonical space). Next, we note that, by Lemma \ref{lemma:uniqueness_mart_problem}, the martingale problem for 
				\[(\mathbb{X}, (\mathbf{B}_\mathbf{A}^{(t-s)\vee 0}, C^{(t-s)\vee 0}, \nu^{(t-s)\vee 0}), \delta_{x})\]
				admits a unique solution, $\mathbb{Q}_{\mathbf{A}, s, x}$, which, for fixed $\mathbf{A}$, can be shown to define a transition kernel
				from $(\mathbb{R}_+ \times \mathbb{R}^{pd}, \mathcal{B}(\mathbb{R}_+ \times \mathbb{R}^{pd}))$ to $(\Omega, \mathcal{F})$ by
				\[\mathcal{Q}_{s,x} (d\omega) = \mathbb{Q}_{\mathbf{A}, s, x} (d\omega),\] 
				cf.\ Lemma \ref{lemma:transition_kernel}. Moreover, by setting $s=0$, we know that $(\mathbb{X}, (\mathbf{B}_\mathbf{A}^{t}, C^{t}, \nu^{t}), \delta_{\mathbf{x}_0})$ admits a unique solution. All hypothesis of \citet[Theorem~III.2.40]{Jacod_Shiryaev_1987} are thus satisfied and we can hence conclude that the martingale problem for $(\mathbb{X}, (\mathbf{B}_\mathbf{A}^{t}, C^{t}, \nu^{t}), \delta_{\mathbf{x}_0})$ is locally unique. In particular, we have $\tau_n$-uniqueness in the sense of \citet[Section~4]{Jacod_Memin_1976} for $(\mathbb{X}, (\mathbf{B}_\mathbf{A}^{t}, C^{t}, \nu^{t}), \delta_{\mathbf{x}_0})$ for all $n$.
				\item The initial conditions are both $\delta_{\mathbf{x}_0}$ and thus trivially equivalent.
				\item Finally we note that 
				\begin{equation} \label{eqn:A^t_infty}
					A^t_{\mathbf{A}, \infty} =  \int_0^t  \omega(u)^\mathrm{T} \tilde{\mathcal{A}}_\mathbf{A}^\mathrm{T} \tilde{\Sigma}^{-1} \tilde{\mathcal{A}}_\mathbf{A} \omega(u) du \leq   \|\Sigma^{-1}\|\|\tilde{\mathcal{A}}_\mathbf{A}\| \int_0^t \|\omega(u)\|^2 du < \infty, \end{equation}
				$\mathbb{Q}_{\mathbf{A}, \mathbf{x}_0}$ and $\mathbb{Q}_{\mathbf{A}^{(0)}, \mathbf{x}_0}$ almost surely, where $\|\cdot\|$ denote appropriate arbitrary norms on the underlying (finite dimensional) spaces. Since the driving \Levy process $\mathbb{L}$ is assumed to have finite second moments on $(\Omega', \mathcal{F}', \{\mathcal{F}'_s, \ s\geq 0\}, \mathbb{P}')$, we have that, for $s\leq t$, a solution of the stopped SDE \eqref{eqn:SDE_stopped} with initial condition $\mathbf{X}_{0} = \mathbf{x}_0$ has law $\mathbb{P}'(\mathbf{X}^t_{\mathbf{A},\mathbf{x}_0,s} \in \cdot) = \mathbb{P}'(\mathbf{X}_{\mathbf{A},\mathbf{x}_0,s} \in \cdot) = P_{s, \mathbf{x}_0, \mathbf{A}}(\cdot)$ which is infinitely  divisible with second moments and characteristics $(b_{s,\mathbf{x}_0, \mathbf{A}}, \Sigma_{s, \mathbf{A}}, F_{s, \mathbf{A}})$ given by
				\begin{align} \label{eqn:triplet_X}
					\begin{split}
						b_{s,\mathbf{x}_0,\mathbf{A}} &= e^{s\mathcal{A}_\mathbf{A}} \mathbf{x}_0 + \int_0^s e^{u \mathcal{A}_\mathbf{A}} \mathcal{E} \mathbf{b} \, du \\
						&\quad\quad+\int_{\mathbb{R}^{pd}}\int_0^s e^{u \mathcal{A}_\mathbf{A}} z\left(\mathds{1}_{\{y: \|y\| \leq 1\}}\left(e^{u \mathcal{A}_\mathbf{A}} z\right) - \mathds{1}_{\{y: \|y\| \leq 1\}}(z)\right) du\, \delta_0(dz_{-p})\,F(dz_p), \\
						\Sigma_{s, \mathbf{A}} &= \int_0^s e^{s \mathcal{A}_\mathbf{A}} \mathcal{E}\Sigma\mathcal{E}^\mathrm{T}  e^{u \mathcal{A}^{\mathrm{T}}_\mathbf{A}} du, \\
						F_{s,\mathbf{A}}(E) &= \int_{\mathbb{R}^{pd}}  \int_0^s \mathds{1}_{E}\left(e^{u\mathcal{A}_{\mathbf{A}}}z\right) du \, \delta_0(dz_{-p})\, F(dz_p), \quad \forall E\in\mathcal{B}(\mathbb{R}^{pd}),
					\end{split}
				\end{align}
				where $(\mathbf{b}, \Sigma, F)$ are the characteristics of the driving \Levy process $\mathbb{L} = \{\mathbf{L}_{s},\ s\geq 0\}$, cf. Theorem 3.1 Sato and Yamazato 1984. Thus we can bound uniformly in $s\in[0,t]$
				\begin{align} \label{eqn:L2_bound}
					\mathbb{E}^{\mathbb{P}'}\left[\|\mathbf{X}_{\mathbf{A},\mathbf{x}_0,s}\|^2\right] &= \mathrm{tr}(\mathrm{Var}^{\mathbb{P}'}[\mathbf{X}_{\mathbf{A},\mathbf{x}_0,s}]) + \|\mathbb{E}^{\mathbb{P}'}[\mathbf{X}_{\mathbf{A},\mathbf{x}_0,s}]\|^2 \nonumber\\
					&= \mathrm{tr}(\Sigma_{s, \mathbf{A}}) + \int_{\mathbb{R}^{pd}} \|z\|^2 F_{s, \mathbf{A}}(dz) + \left\|b_{s,x, \mathbf{A}} + \int_{\{z:\|z\|>1\}}  z F_{s, \mathbf{A}}(dz)\right\|^2& \\
					&\leq \kappa(t, \mathbf{x}_0, \mathcal{A}_\mathbf{A}, \mathbf{b}, \Sigma, F), \nonumber
				\end{align}
				for all $s\in[0,t]$ where $\kappa(t) = \kappa(t, \mathbf{x}_0, \mathcal{A}_\mathbf{A}, \mathbf{b}, \Sigma, F)$ is a constant which depends on $t, \mathbf{x}_0, \mathcal{A}_\mathbf{A}, \mathbf{b}, \Sigma,$ and $F$. Thus we can show that $A^t_{\mathbf{A},\infty}$ is bounded in $L^1(\Omega, \mathcal{F}, \mathbb{Q}_{\mathbf{A}, \mathbf{x}_0})$
				\begin{multline*}
					\mathbb{E}^{\mathbb{Q}_{\mathbf{A}, \mathbf{x}_0}}[A^t_{\mathbf{A},\infty}] \leq \|\Sigma^{-1}\|\|\tilde{\mathcal{A}}_\mathbf{A}\| \int_0^t \mathbb{E}^{\mathbb{Q}_{\mathbf{A}, \mathbf{x}_0}}\left[\|\omega(s)\|^2 \right] ds \\
					= \|\Sigma^{-1}\|\|\tilde{\mathcal{A}}_\mathbf{A}\|\int_0^t \mathbb{E}^{\mathbb{P}'}\left[\|\mathbf{X}_{\mathbf{A},\mathbf{x}_0,s}\|^2\right] ds \leq t \|\Sigma^{-1}\|\|\tilde{\mathcal{A}}_\mathbf{A}\| \kappa(t, \mathbf{A}) < \infty,
				\end{multline*} 
				and hence
				\[A^t_{\mathbf{A},\infty} <\infty, \quad \text{$\mathbb{Q}_{\mathbf{A}, \mathbf{x}_0}$-a.s.}\]
				Similarly, we get 
				\begin{equation*}
					\mathbb{E}^{\mathbb{Q}_{\mathbf{A}^{(0)}, \mathbf{x}_0}}[A^t_{\mathbf{A},\infty}] \leq t \|\Sigma^{-1}\|\|\tilde{\mathcal{A}}_\mathbf{A}\| \kappa(t, \mathbf{A}^{(0)}) < \infty,
				\end{equation*} 
				and hence
				\[A^t_{\mathbf{A},\infty} <\infty, \quad \text{$\mathbb{Q}_{\mathbf{A}^{(0)}, \mathbf{x}_0}$-a.s.}\]
			\end{itemize}
			The assumptions of \citet[Theorem~4.2.(c)]{Jacod_Memin_1976} for $\mathbb{Q}_{\mathbf{A}, \mathbf{x}_0}$ and $\mathbb{Q}_{\mathbf{A}^{(0)}, \mathbf{x}_0}$ are satisfied and hence we have
			\[ \mathbb{Q}_{\mathbf{A}, \mathbf{x}_0} \sim \mathbb{Q}_{\mathbf{A}^{(0)}, \mathbf{x}_0} \]
			which by construction implies
			\[ \mathbb{P}^t_{\mathbf{A}, \mathbf{x}_0} \sim \mathbb{P}^t_{\mathbf{A}^{(0)}, \mathbf{x}_0}. \]
			In order to deduce the form of the likelihood we apply \citet[Theorem~4.5.(b)]{Jacod_Memin_1976} to $\mathbb{Q}_{\mathbf{A}, \mathbf{x}_0}$ and $\mathbb{Q}_{\mathbf{A}^{(0)}, \mathbf{x}_0}$. Note that, by the arguments above, $\mathbb{Q}_{\mathbf{A}^{(0)}, \mathbf{x}_0}$ is the (unique) solution to the martingale problem $(\mathbb{X}, (\mathbf{B}_{\mathbf{A}^{(0)}}^t, C^t, \nu^t), \delta_{\mathbf{x}_0})$ on the canonical space $(\Omega, \mathcal{F}, \{\mathcal{F}_s, \ s\geq 0\})$. Moreover we have that $\mathbb{Q}_{\mathbf{A}, \mathbf{x}_0}$ is the unique solution to the martingale problem $(\mathbb{X}, (\mathbf{B}_{\mathbf{A}}^t, C^t, \nu^t), \delta_{\mathbf{x}_0})$ on $(\Omega, \mathcal{F}, \{\mathcal{F}_s, \ s\geq 0\})$ and hence satisfies the martingale representation condition  \citep[Theorem~4.4]{Jacod_Memin_1976}. Finally we note that by construction $\{A^t_{\mathbf{A},s},\ s\geq 0\}$ has continuous trajectories $\mathbb{Q}_{\mathbf{A}^{(0)}, \mathbf{x}_0}$-a.s. and hence for $\mathbb{Q}_{\mathbf{A}^{(0)}, \mathbf{x}_0}$-a.e. $\omega \in \Omega$ and $\forall s\geq 0$
			\begin{align*}
				\mathbb{E}^{\mathbb{Q}_{\mathbf{A}^{(0)}, \mathbf{x}_0}}\left[ \frac{d \mathbb{Q}_{\mathbf{A}, \mathbf{x}_0}}{d\mathbb{Q}_{\mathbf{A}^{(0)}, \mathbf{x}_0}} \Big| {\mathcal{F}_s}\right](\omega)
				&= \exp\Big\{ - \int_0^s \mathbf{Z}^{\mathrm{T}}_{\mathbf{A}, u} \, d \mathbf{X}_{\mathbf{A}^{(0)}, u}^c(\omega) - \frac{1}{2} \int_0^s  \mathbf{Z}^\mathrm{T}_{\mathbf{A},u}(\omega) dC^t_u \mathbf{Z}_{\mathbf{A}, u}(\omega)  \Big\} \\
				&= \exp\Big\{ - \int_0^s \mathbf{X}^\mathrm{T}_{u-} \tilde{\mathcal{A}}_\mathbf{A}^\mathrm{T} \tilde{\Sigma}^{-1}\ d \mathbf{X}_{\mathbf{A}^{(0)}, u}^c - \frac{1}{2} \int_0^{s\wedge t} \mathbf{X}^\mathrm{T}_{u}  \tilde{\mathcal{A}}_\mathbf{A}^\mathrm{T} \tilde{\Sigma}^{-1}  \tilde{\mathcal{A}}_\mathbf{A} \mathbf{X}_u du  \Big\},
			\end{align*}
			and noting that, by construction, if we set $s=t$, for $\mathbb{P}^t_{\mathbf{A}^{(0)}, \mathbf{x}_0}$-a. e. $\omega\in\Omega$ 
			\[
			\mathbb{E}^{\mathbb{Q}_{\mathbf{A}^{(0)}, \mathbf{x}_0}}\left[ \frac{d \mathbb{Q}_{\mathbf{A}, \mathbf{x}_0}}{d\mathbb{Q}_{\mathbf{A}^{(0)}, \mathbf{x}_0}} \Big| {\mathcal{F}_t}\right](\omega) 
			= \frac{d \mathbb{Q}^t_{\mathbf{A}, \mathbf{x}_0}}{d\mathbb{Q}^t_{\mathbf{A}^{(0)}, \mathbf{x}_0}}(\omega) 
			=  \frac{d \mathbb{P}^t_{\mathbf{A}, \mathbf{x}_0}}{d\mathbb{P}^t_{\mathbf{A}^{(0)}, \mathbf{x}_0}}(\omega) 
			= \mathbb{E}^{\mathbb{P}_{\mathbf{A}^{(0)}, \mathbf{x}_0}}\left[ \frac{d \mathbb{P}_{\mathbf{A}, \mathbf{x}_0}}{d\mathbb{P}_{\mathbf{A}^{(0)}, \mathbf{x}_0}} \Big| {\mathcal{F}_t}\right](\omega), \]
			which completes the proof.
		\end{proof}
		
		\begin{remark}
			In the general setting of \citet[Theorem~4.2~and~Theorem~4.5]{Jacod_Memin_1976} the condition $A_{\mathbf{A},t} = A^t_{\mathbf{A},\infty} < \infty$, $\mathbb{Q}_{\mathbf{A}, \mathbf{x}_0}$-a.s.\ ensures the stochastic integral
			\[ (\mathbf{Z}^{\mathrm{T}}_{\mathbf{A}} \cdot \mathbf{X}_{\mathbf{A}}^c)_s := \int_0^s \mathbf{Z}^{\mathrm{T}}_{\mathbf{A}, u} \, d \mathbf{X}_{\mathbf{A}^{(0)}, u}^c = \int_0^s \mathbf{X}^\mathrm{T}_{u-} \tilde{\mathcal{A}}_\mathbf{A}^\mathrm{T} \tilde{\Sigma}^{-1}\ d \mathbf{X}_{\mathbf{A}^{(0)}, u}^c, \]
			is well-defined for $s\leq t$. Note that one can write
			\[ \mathbf{X}_{\mathbf{A}^{(0)}, s}^c = \mathbf{X}_{\mathbf{A}, s}^c + (\mathbf{B}^t_{\mathbf{A}^{(0)},s} - \mathbf{B}^t_{\mathbf{A}, s}), \quad s\in[0,t],\]
			where $\{\mathbf{X}_{\mathbf{A}, s}^c, s\in[0, t]\}$ is a continuous local martingale under $\mathbb{Q}_{\mathbf{A}, \mathbf{x}_0}$ and $\{\mathbf{B}^t_{\mathbf{A}^{(0)}, s} - \mathbf{B}^t_{\mathbf{A}, s},\ s\in[0, t]\}$ is a finite variation process. Then one defines
			\[ \int_0^s \mathbf{Z}^{\mathrm{T}}_{\mathbf{A}, u} \, d \mathbf{X}_{\mathbf{A}^{(0)}, u}^c = \int_0^s \mathbf{Z}^{\mathrm{T}}_{\mathbf{A}, u} \, d \mathbf{X}_{\mathbf{A}, u}^c + \int_0^s \mathbf{Z}^{\mathrm{T}}_{\mathbf{A}, u} \, d (\mathbf{B}^t_{\mathbf{A}^{(0)}, u} - \mathbf{B}^t_{\mathbf{A}, u}), \quad s\in[0,t],\]
			where the first integral is well-defined as the integrand is in \[ \mathcal{P}^*(\mathbf{X}_{\mathbf{A}}^c; \Omega, \mathcal{F}, \{\mathcal{F}_s,\ s\geq 0\}, \mathbb{Q}_{\mathbf{A}, \mathbf{x}_0}) := \left\{\mathbb{H}\ \mathrm{prog.\, meas.\, s.t. }\ \mathbb{Q}_{\mathbf{A}, \mathbf{x}_0}\left(\int_0^t\mathbf{H}_s^{\mathrm{T}}d\langle \mathbf{X}_{\mathbf{A}}^c\rangle_s \mathbf{H}_s < \infty \right) = 1\right\},\] 
			cf.\ \citet[Theorem~3.2.23]{Karatzas_Shreve_1996} Theorem 3.2.23, and the second integral is defined for any optional (and thus any predictable) integrand, cf.\ \citet[Definition~3.4]{Jacod_Shiryaev_1987}. In our specific setting the processes $\{\mathbf{X}_{\mathbf{A}, s}^c, s\in[0, t]\}$ and $\{\mathbf{Z}_{\mathbf{A}, s}, s\in[0, t]\}$ satisfy stricter conditions: the first is actually a true square-integrable martingale under $\mathbb{Q}_{\mathbf{A}, \mathbf{x}_0}$ since $\mathbb{E}^{\mathbb{Q}_{\mathbf{A},\mathbf{x}_0}}[\langle \mathbf{X}_{\mathbf{A}}^c\rangle_s] = \tilde{\Sigma} s <\infty $, \citet[Corollary~II.3]{Protter_1990}, and the latter is in 
			\[ \mathcal{L}^*(\mathbf{X}_{\mathbf{A}}^c; \Omega, \mathcal{F}, \{\mathcal{F}_s,\ s\geq 0\}, \mathbb{Q}_{\mathbf{A}, \mathbf{x}_0}) := \left\{\mathbb{H}\ \mathrm{prog.\, meas.\, s.t. } \ \mathbb{E}_{\mathbb{Q}_{\mathbf{A}, \mathbf{x}_0}}\left[\int_0^t\mathbf{H}_s^{\mathrm{T}}d\langle \mathbf{X}_{\mathbf{A}}^c\rangle_s \mathbf{H}_s \right] < \infty\right\},\]
			since $A^t_{\mathbf{A}, \mathbf{x}_0}$ is bounded in $L^1(\Omega, \mathcal{F}, \mathbb{Q}_{\mathbf{A},\mathbf{x}_0})$. We can hence apply \citet[Definition~3.2.9]{Karatzas_Shreve_1996} (or from the original work of \citet{Kunita_Watanabe_1967}) to deduce that the stochastic integral
			\[ (\mathbf{Z}^{\mathrm{T}}_{\mathbf{A}} \cdot \mathbf{X}_{\mathbf{A}}^c)_s = \int_0^s \mathbf{Z}^{\mathrm{T}}_{\mathbf{A}, u} \, d \mathbf{X}_{\mathbf{A}, u}^c, \quad s\in[0, t],\]
			is not just well-defined but is actually a continuous square-integrable martingale. This observation is fundamentally linked to Lemma \ref{lemma:loc_mart_MCAR}: the process $(\mathbb{Z}^{\mathrm{T}}_{\mathbf{A}} \cdot \mathbb{X}_{\mathbf{A}}^c)$ is the sum of the components of the score \eqref{eqn:score} and is proved to be a $\mathbb{P}_{\mathbf{A}, \mathbf{x}_0}$-square-integrable martingale (i.e.\ a $\mathbb{Q}_{\mathbf{A}, \mathbf{x}_0}$-square-integrable martingale up to time $t$).
		\end{remark}
		
		\begin{lemma} \label{lemma:transition_kernel}
			Fix $\mathbf{A}\in(\mathcal{M}_d(\mathbb{R}))^p$ and let $\mathcal{Q}:\mathbb{R}_+ \times \mathbb{R}^{pd} \times \mathcal{F} \rightarrow [0,1]$ be given by 
			\[(s, x, A) \mapsto \mathcal{Q}_{s,x} (A) = \mathbb{Q}_{\mathbf{A}, s, x} (A),\]
			where $\mathbb{Q}_{\mathbf{A}, s, x}$ is the unique solution to the martingale problem for 
			\[(\mathbb{X}, (\mathbf{B}_\mathbf{A}^{(t-s)\vee 0}, C^{(t-s)\vee 0}, \nu^{(t-s)\vee 0}), \delta_{x}).\] Then $\mathcal{Q}$ defines a transition kernel from $(\mathbb{R}_+ \times \mathbb{R}^{pd}, \mathcal{B}(\mathbb{R}_+ \times \mathbb{R}^{pd}))$ to $(\Omega, \mathcal{F})$.
		\end{lemma}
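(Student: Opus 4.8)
The plan is to verify the two defining properties of a transition kernel separately. That $A \mapsto \mathcal{Q}_{s,x}(A)$ is a probability measure on $(\Omega,\mathcal{F})$ for each fixed $(s,x)$ is immediate: $\mathbb{Q}_{\mathbf{A},s,x}$ is by definition a probability measure solving a martingale problem, and the argument of Lemma \ref{lemma:uniqueness_mart_problem} (run with horizon $(t-s)\vee 0$ and initial point $x$ in place of $t$ and $\mathbf{x}_0$) guarantees it exists and is unique. The content of the lemma is therefore the joint measurability of $(s,x) \mapsto \mathcal{Q}_{s,x}(A)$ for each fixed $A \in \mathcal{F}$. To obtain this I would apply Dynkin's $\pi$--$\lambda$ theorem: let $\mathcal{D} := \{A \in \mathcal{F} : (s,x) \mapsto \mathcal{Q}_{s,x}(A) \text{ is } \mathcal{B}(\mathbb{R}_+ \times \mathbb{R}^{pd})\text{-measurable}\}$. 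One checks readily that $\mathcal{D}$ is a $\lambda$-system — it contains $\Omega$, is closed under proper differences because $\mathcal{Q}_{s,x}(B \setminus A) = \mathcal{Q}_{s,x}(B) - \mathcal{Q}_{s,x}(A)$ for $A \subseteq B$, and is closed under increasing limits by monotone convergence — so it suffices to show that $\mathcal{D}$ contains the $\pi$-system $\mathcal{C}$ of finite-dimensional cylinder sets, which generates $\mathcal{F}$ by Remark \ref{rem:sigma_algebra}.

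To handle cylinder sets I would realize all the measures $\mathbb{Q}_{\mathbf{A},s,x}$ simultaneously as image measures on a single stochastic basis $(\Omega',\mathcal{F}',\{\mathcal{F}'_u,\ u\geq 0\},\mathbb{P}')$ carrying the driving \Levy process $\mathbb{L}$. Writing $\rho_s := (t-s) \vee 0$ and $\mathbf{V}_w := \int_0^w e^{-\mathcal{A}_\mathbf{A} v} \mathcal{E}\, d\mathbf{L}_v$, the unique solution-process of the stopped SDE \eqref{eqn:SDE_stopped} started at $x$ with horizon $\rho_s$ is
\[ \mathbf{X}^{s,x}_u = e^{\mathcal{A}_\mathbf{A}(u \wedge \rho_s)}\bigl(x + \mathbf{V}_{u \wedge \rho_s}\bigr), \quad u \geq 0, \]
and by \citet[Theorem~III.2.26]{Jacod_Shiryaev_1987} together with the uniqueness from Lemma \ref{lemma:uniqueness_mart_problem} one has $\mathbb{Q}_{\mathbf{A},s,x} = (\mathbb{X}^{s,x})_* \mathbb{P}'$. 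Consequently, for a cylinder set $C = \{\omega : \omega(u_1) \in A_1, \ldots, \omega(u_n) \in A_n\}$,
\[ \mathcal{Q}_{s,x}(C) = \mathbb{E}^{\mathbb{P}'}\Bigl[\, \prod_{i=1}^n \mathds{1}_{A_i}\bigl(\mathbf{X}^{s,x}_{u_i}\bigr) \Bigr]. \]

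The remaining step, and the main technical obstacle, is the joint measurability of $(s,x,\omega') \mapsto \mathbf{X}^{s,x}_{u_i}(\omega')$. The explicit representation makes this tractable: the matrix factor $e^{\mathcal{A}_\mathbf{A}(u_i \wedge \rho_s)}$ and the shift by $x$ are continuous in $(s,x)$, while $\mathbf{V}$ is a fixed \cadlag (hence jointly measurable) process and $s \mapsto u_i \wedge \rho_s$ is continuous, so $(s,\omega') \mapsto \mathbf{V}_{u_i \wedge \rho_s}(\omega')$ is jointly measurable as a composition. Since each $\mathds{1}_{A_i}$ is Borel, the integrand above is jointly measurable in $(s,x,\omega')$ and bounded, and Tonelli's theorem then yields that $(s,x) \mapsto \mathcal{Q}_{s,x}(C)$ is measurable; thus $\mathcal{C} \subseteq \mathcal{D}$ and Dynkin's theorem gives $\mathcal{D} = \mathcal{F}$, completing the proof. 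The only delicate point is ensuring the parametrized stochastic integral is realized measurably in both $s$ and $x$ on a common basis, which the linearity of the state-space SDE reduces to measurability of the single process $\mathbf{V}$ evaluated along the deterministic time-change $s \mapsto u_i \wedge \rho_s$.
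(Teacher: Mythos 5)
Your proposal is correct and shares the paper's overall skeleton (probability measure for fixed $(s,x)$, then Dynkin's $\pi$--$\lambda$ theorem reducing joint measurability to finite cylinder sets), but it establishes the key cylinder-set measurability by a genuinely different and arguably cleaner route. The paper expresses $\mathcal{Q}_{s,x}(C)$ as an iterated integral against the time-inhomogeneous Markov transition probabilities $Q_{\mathbf{A},s}$ built from the homogeneous kernels $P_\mathbf{A}$ of \citet{Masuda_2004} and \citet{Sato_Yamazato_1984}, then argues measurability in $x$ (by extension from simple functions) and right-continuity in $s$ separately before invoking joint measurability -- a step that relies on the somewhat delicate fact that a function measurable in one variable and right-continuous in the other is jointly measurable. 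You instead exploit the linearity of the state-space SDE to realize every $\mathbb{Q}_{\mathbf{A},s,x}$ simultaneously as the pushforward of a single explicitly parametrized process $\mathbf{X}^{s,x}_u = e^{\mathcal{A}_\mathbf{A}(u\wedge\rho_s)}(x + \mathbf{V}_{u\wedge\rho_s})$ on one stochastic basis, so that joint measurability in $(s,x,\omega')$ follows from progressive measurability of the \cadlag process $\mathbf{V}$ composed with the continuous deterministic time change, and Tonelli finishes the job. Your identification $\mathbb{Q}_{\mathbf{A},s,x} = (\mathbb{X}^{s,x})_*\mathbb{P}'$ is justified exactly as in the paper's Lemma \ref{lemma:uniqueness_mart_problem} (run with horizon $(t-s)\vee 0$ and initial point $x$). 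What the paper's approach buys is generality -- it is the standard argument for Markov processes without closed-form solutions; what yours buys is that the joint measurability is immediate from an explicit formula, with no appeal to separate-variable regularity. The only points worth making explicit are that $\mathbf{V}$ should be taken in its \cadlag version so that $(w,\omega')\mapsto\mathbf{V}_w(\omega')$ is jointly measurable, and that the single fixed process $\mathbf{V}$ serves all parameter values $(s,x)$ at once, which is precisely what makes the parametrized pushforward well defined outside a single null set.
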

		\begin{proof}
			Clearly, for fixed $(s,x)$, $\mathcal{Q}_{s,x}(\cdot)$ defines a probability measure on $(\Omega, \mathcal{F})$. 
			Next, to show that for fixed $A\in\mathcal{F}$, $(s, x) \mapsto Q_{s, x}(A)$ is $\mathcal{B}(\mathbb{R}_+ \times \mathbb{R}^{pd})$-measurable, we use the $\pi-\lambda$ theorem. Let $\mathcal{C}\subset\mathcal{F}$ denote the $\pi$-system of finite cylinder sets, i.e.\ $C\in\mathcal{C}$ of the form 
			\[C = \{\omega\in\Omega\ \mathrm{s.t.}\ \omega(s_1) \in A_1, \ldots, \omega(s_n) \in A_N\}\in\mathcal{F},\]
			for any $n\in\mathbb{N}, s_1< \ldots< s_n\in\mathbb{R}_{+}$ and $A_1,\ldots, A_n\in\mathcal{B}(\mathbb{R}^{pd})$. Define \[\mathcal{L}=\{L\in\mathcal{F} \ \mathrm{s.t.}\ (s,x)\mapsto \mathcal{Q}_{s, x}(L)\ \mathrm{is}\ \mathcal{B}(\mathbb{R}_+\times \mathbb{R}^{pd})\mathrm{-measurable}\}.\] 
			This is a $\lambda$-system, since:
			\begin{enumerate}
				\item $\Omega\in\mathcal{L}$ since $(s, x) \mapsto \mathcal{Q}_{s, x}(\Omega) = \mathbb{Q}_{\mathbf{A},s,x}(\Omega) = 1$ is trivially measurable;
				\item if $A, B\in\mathcal{L}$ and $A\subseteq B$ then $B\setminus A\in \mathcal{L}$ since 
				\[(s, x) \mapsto \mathcal{Q}_{s, x}(B\setminus A) = \mathbb{Q}_{\mathbf{A},s,x}(B\setminus A) = \mathbb{Q}_{\mathbf{A},s,x}(B) -\mathbb{Q}_{\mathbf{A},s,x}(A) = \mathcal{Q}_{s, x}(B) - \mathcal{Q}_{s, x}(A)\]
				is the difference of measurable functions and thus measurable;
				\item if $A_1\subseteq A_2\subseteq \cdots \in \mathcal{L}$ then $\bigcup_{n\geq 1} A_n\in\mathcal{L}$ since
				\begin{align*}(s, x) \mapsto \mathcal{Q}_{s, x}(\bigcup_{n\geq 1} A_n) &= \mathbb{Q}_{\mathbf{A},s,x}(\bigcup_{n\geq 1} A_n)
					= \lim_{n\rightarrow\infty} \mathcal{Q}_{s,x}(A_{n})
				\end{align*}
				is the pointwise limit of measurable functions and thus measurable.
			\end{enumerate}
			Moreover $\mathcal{C}\subset \mathcal{L}$ since for $C\in\mathcal{C}$ we can write
			\begin{align*}(s, x) \mapsto \mathcal{Q}_{s,x}(C) = \mathbb{Q}_{\mathbf{A},s,x}(C) &= \int_{A_1} \cdots \int_{A_n}  Q_{\mathbf{A}, s}(s_{n-1}, x_{n-1}, s_n, dx_n) \cdots Q_{\mathbf{A}, s}(0, x, s_1, dx_1) 
			\end{align*}
			where 
			$Q_{\mathbf{A}, s}(\cdot, \cdot, \cdot, \cdot): \mathbb{R}_{+} \times \mathbb{R}^{pd} \times \mathbb{R}_{+} \times \mathcal{B}(\mathbb{R}^{pd}) \rightarrow [0,1]$ are the time-inhomogenous Markov transition probabilities associated to $\mathbb{X}_{\mathbf{A}}^{(t-s)\vee 0}$ given by
			\[Q_{\mathbf{A}, s}(s_1, y, s_2, A) = 
			\begin{cases}
				P_\mathbf{A}(y, s_2-s_1, A), \quad &\mathrm{if}\ 0\leq s_1\leq s_2\leq (t-s)\vee 0, \\
				P_\mathbf{A}(y, (t-s)- s_1, A), \quad &\mathrm{if}\ 0 \leq s_1 \leq (t-s)\vee 0 \leq s_2,\\
				\mathds{1}_{A}(y), \quad &\mathrm{if}\ 0 \leq (t-s)\vee 0 \leq s_1 \leq s_2,\\
			\end{cases}\]
			for $0\leq s_1\leq s_2$, $y\in\mathbb{R}^{pd}$, $A\in\mathcal{B}(\mathbb{R}^{pd})$ and $P_\mathbf{A}(\cdot, \cdot, \cdot): \mathbb{R}^{pd} \times \mathbb{R}_{+} \times \mathcal{B}(\mathbb{R}^{pd}) \rightarrow [0,1]$ denoting the time-homogeneous Markov transition probabilities associated to $\mathbb{X}_{\mathbf{A},\mathbf{x}_0}$ given in \citet[Proposition~2.1]{Masuda_2004}, or \citet[Theorem~3.1]{Sato_Yamazato_1984}. The map $(s,x)\mapsto \mathcal{Q}_{s,x}(C)$ is:
			\begin{itemize}
				\item measurable in $x\in\mathbb{R}^{pd}$, using measurability of $P_\mathbf{A}(x, s_1, A)$/$P_\mathbf{A}(x, t-s, A)$/$\mathds{1}_{A}(x)$ to obtain measurability of $Q_{\mathbf{A}, s}(0, x, s_1, A)$ for general $A\in\mathcal{B}(\mathbb{R}^{pd})$ and then extending from simple functions to obtain measurability of general integrals w.r.t.\\ $Q_{\mathbf{A}, s}(0, x, s_1, dx_1)$,
				\item right-continuous in $s\in\mathbb{R}_+$, using continuity in $s$ of $P_\mathbf{A}$ and treating the cases $s<t$ and $s\geq t$ separately,
			\end{itemize}
			and thus jointly measurable in $(s, x)$. Note that $\mathcal{F} = \sigma(\mathcal{C})$ and by the $\pi-\lambda$ theorem we have $\sigma(\mathcal{C}) \subseteq \mathcal{L}$. We can hence conclude that measurability in $(s,x)$ holds for any $A\in\mathcal{F}$.
		\end{proof}
		
		\subsection{Proof of Lemma \ref{lemma:loc_mart_MCAR}} \label{app:proof_loc_mart_MCAR}
		
		\begin{proof} Under $\mathbb{P}_{\mathbf{A}^{(0)}, \mathbf{x}_0}$, the process $\mathbb{H}$ is a continuous local martingale as it given by the integral of a locally bounded predictable process against the $\mathbb{P}_{\mathbf{A}^{(0)}, \mathbf{x}_0}$ Brownian motion $D^{p-1}\mathbb{Y}^c_{\mathbf{A}^{(0)}}$, i.e.\ $\mathbb{H}$ is a semimartingale with characteristcs $(0, [\mathbf{H}], 0)$ under $\mathbb{P}_{\mathbf{A}^{(0)}, \mathbf{x}_0}$. Hence by \citet[Corollary~11.3.2]{Kuchler_Sorensen_1997}, $\mathbb{H}$ is a semimartingale with characteristcs $([\mathbf{H}]\mathrm{vec}(\mathbf{A}), [\mathbf{H}], 0)$ under $\mathbb{P}_{\mathbf{A}, \mathbf{x}_0}$ and thus the score $\{\mathbf{H}_t - [\mathbf{H}]_t \mathrm{vec}(\mathbf{A}),\ t\geq 0\}$ is a mean-zero continuous local martingale with quadratic variation $[\mathbf{H}]_t$. Since, moreover, the driving \Levy process $\mathbb{L}$ is assumed to be square-integrable the score has $\mathbb{P}_{\mathbf{A}, \mathbf{x}_0}$-integrable quadratic variation (here $\|\cdot\|$ denotes the Frobenius norm for matrices and $\|\cdot\|_\Sigma$ the norm induced by $\langle \cdot\, , \cdot \rangle_\Sigma$ for vectors):
			\begin{align*}
				\mathbb{E}^{\mathbb{P}_{\mathbf{A}, \mathbf{x}_0}}\left[ \|[\mathbf{H}]_t\| \right] &\leq \mathbb{E}^{\mathbb{P}_{\mathbf{A}, \mathbf{x}_0}}\left[\left(\sum_{l,k =0}^{p-1} \sum_{i, j = 1}^{d} \sum_{m, n = 1}^{d}  \left( \int_0^t D^{l}Y^{(i)}_s \Sigma^{-1}_{m, n} D^{k}Y^{(j)}_s \, ds \right)^2 \right)^{1/2}\right] \\
				& \leq d \|\Sigma^{-1} \| \mathbb{E}^{\mathbb{P}_{\mathbf{A}, \mathbf{x}_0}}\left[\left(\int_0^t \int_0^t  \sum_{l, k=0}^{p-1} \sum_{i, j= 1}^{d}  D^{l}Y^{(i)}_s D^{k}Y^{(j)}_s  D^{l}Y^{(i)}_r D^{k}Y^{(j)}_r \,ds dr \right)^{1/2}\right] \\
				& \leq d \|\Sigma^{-1} \| \mathbb{E}^{\mathbb{P}_{\mathbf{A}, \mathbf{x}_0}}\left[\left(\int_0^t \int_0^t  \left(\sum_{l=0}^{p-1} \sum_{i= 1}^{d}  D^{l}Y^{(i)}_s D^{l}Y^{(i)}_r \right)^2 \,ds dr \right)^{1/2}\right] \\
				& \leq d \|\Sigma^{-1} \| \mathbb{E}^{\mathbb{P}_{\mathbf{A}, \mathbf{x}_0}}\left[\left(\int_0^t \int_0^t  \sum_{l=1}^{p-1} \sum_{i= 1}^{d} \left( D^{l}Y^{(i)}_s D^{l}Y^{(i)}_r \right)^2 \,ds dr \right)^{1/2}\right] \\
				& \leq d \|\Sigma^{-1} \| \mathbb{E}^{\mathbb{P}_{\mathbf{A}, \mathbf{x}_0}}\left[\int_0^t \sum_{l=0}^{p-1} \sum_{i= 1}^{d} \left( D^{l}Y^{(i)}_s \right)^2 \,ds\right] \\
				& \leq d \|\Sigma^{-1} \| \int_0^t \mathbb{E}^{\mathbb{P}_{\mathbf{A}, \mathbf{x}_0}} [\|\mathbf{X}_s\|^2] ds = d \|\Sigma^{-1} \| \int_0^t \mathbb{E}^{\mathbb{P}'} [\|\mathbf{X}_{\mathbf{A},\mathbf{x}_0,s}\|^2] \, ds 
				<\infty,
			\end{align*}
			by noting in the last step that $\{\mathbf{X}_s,\ s\in[0,t]\}$ and $\{\mathbf{X}_{\mathbf{A},\mathbf{x}_0,s}, \ s\in[0,t]\}$ have the same law under $\mathbb{P}_{\mathbf{A}, \mathbf{x}_0}$ and $\mathbb{P}'$ respectively and using the bound in Equation \eqref{eqn:L2_bound}. Thus the $\mathbb{P}_{\mathbf{A}, \mathbf{x}_0}$ continuous local martingale $\{\mathbf{H}_t - [\mathbf{H}]_t \mathrm{vec}(\mathbf{A}),\ t\geq 0\}$ is indeed a $\mathbb{P}_{\mathbf{A}, \mathbf{x}_0}$ continuous square-integrable martingale by \citet[Corollary~II.3]{Protter_1990}. In particular, $\mathbf{H}_t$ and $[\mathbf{H}]_t$ are finite $\mathbb{P}_{\mathbf{A}, \mathbf{x}_0}$-a.s.
		\end{proof}
		\begin{remark}
			Alternatively one can explicitly substitute 
			\[D^{p-1}\mathbf{Y}_{\mathbf{A}^{(0)}, t}^c = D^{p-1}\mathbf{Y}_{\mathbf{A}, t}^c  - \sum_{j=1}^p\int_0^t A_j D^{p-j}\mathbf{Y}_s\ ds\]
			into the score. This yields an integral with respect to the $\mathbb{P}_{\mathbf{A}, \mathbf{x}_0}$-Brownian motion $D^{p-1}\mathbb{Y}_{\mathbf{A}}^c$, which one can then check to have quadratic variation $[\mathbf{H}]_t$.
		\end{remark}
		
		\subsection{Proof of Theorem \ref{thm:cons_asymp}} \label{app:proof_cons_asymp}
		\begin{proof}
			For $\phi\in\mathbb{R}^{pd^2}$ the log-likelihood is a quadratic form given by:
			\[ l_t(\phi) := \log \mathcal{L}(\mathrm{vec}^{-1}(\phi);\mathbb{Y}_{[0,t]}) = \phi^\mathrm{T} \mathbf{H}_t - \frac{1}{2} \phi^\mathrm{T} [\mathbf{H}]_t \phi, \]
			i.e.\ it has gradient
			\[ \nabla_{\phi} l_t(\phi) = \mathbf{H}_t - [\mathbf{H}]_t \phi, \]
			and Hessian
			\[\mathrm{Hess}_{\phi} l_t(\phi) = - [\mathbf{H}]_t. \]
			First, note that for any $t>0$ the (log)likelihood has finite coefficients $\mathbb{P}_{\mathbf{A}^*, \mathbf{x}_0}$-a.s.\ (see dicussion at the end of the proof of Lemma \ref{lemma:loc_mart_MCAR}). Moreover the matrix $[\mathbf{H}]_t$ is strictly positive definite $\mathbb{P}_{\mathbf{A}^*, \mathbf{x}_0}$-a.s. Let $\phi\in\mathbb{R}^{pd^2}$ with $\phi = (\phi^\mathrm{T}_{1,1}, \ldots, \phi^\mathrm{T}_{d, p})^\mathrm{T}$ for $\phi_{1,1}, \ldots, \phi_{d,p}\in\mathbb{R}^d$. Then
			\begin{align*}
				\phi^{\mathrm{T}} [\mathbf{H}]_t \phi = \int_0^t\|\sum_{i=1}^d \sum_{l=0}^{p-1} \phi_{ i, l} D^{l} Y^{(i)}_s\|_{\Sigma}\, ds \geq 0, \quad  \mathbb{P}_{\mathbf{A}^*, \mathbf{x}_0}\mathrm{-a.s.}
			\end{align*}
			with equality if and only if
			\[ \sum_{i=1}^d \sum_{l=0}^{p-1} \phi_{ i, l} D^{l} Y^{(i)}_s = 0, \quad  \mathrm{Leb}\otimes \mathbb{P}_{\mathbf{A}^*, \mathbf{x}_0}\mathrm{-a.s}.\]
			We define a collection of random variables $\{Z^i,\ i\in\mathcal{I}\}$ on $(\mathbb{P}', \Omega', \mathcal{F}')$ to be affinely independent under $\mathbb{P}'$ if for any $\xi\in\mathbb{R}^{\mathcal{I}}, \xi_0\in\mathbb{R}$
			\[\sum_{i\in\mathcal{I}} \xi_{i} Z^i = \xi_0 \quad        \mathbb{P}'\mathrm{-a.s.}\implies \xi, \xi_0 = 0. \]
			Thus, showing that under $\mathbb{P}_{\mathbf{A}^*, \mathbf{x}_0}$, $\{D^{l}Y_s^{(i)}, \ i=1,\ldots,d, l=0,\ldots, p-1 \}$ is an affinely independent collection of random variables for any $s\in(0,t]$ would imply $\phi \equiv 0$ and thus that $[\mathbf{H}]_t$ is strictly positive definite. Under $\mathbb{P}_{\mathbf{A}^{(0)}, \mathbf{x}_0}$, with $\mathbf{A}^{(0)} = (0_{d\times d}, \ldots, 0_{d\times d})$, we have that
			\[D^{p-1}{Y}_t^{(i)} = L_t^{(i)},\ D^{l}{Y}_t^{(i)} = \int_0^t D^{l+1}Y_s^{(i)} ds \quad  l= p-2, \ldots, 0, \]
			and one can check that the (independent) components of a non-degenerate \Levy process and their first $p-1$ time integrals are affinely independent under $\mathbb{P}_{\mathbf{A}^{(0)}, \mathbf{x}_0}$, and hence -- by equivalence of the restricted measures up to $t$ -- also under $\mathbb{P}_{\mathbf{A}^*, \mathbf{x}_0}$. Thus we have that the (log)likelihood is a strictly concave continuous function on $\mathbb{R}^{pd^2}$ with unique maximizer
			\[\hat{\phi}_t :=  [\mathbf{H}]_t^{-1}\mathbf{H}_t, \]
			and thus 
			\[\hat{\mathbf{A}}(\mathbb{Y}_{[0,t]}) = \mathrm{vec}^{-1} ( [\mathbf{H}]_t^{-1}\mathbf{H}_t). \]
			
			In the remaining part of the proof, we assume $\mathbf{A}^*\in\mathfrak{A}$. Under $\mathbb{P}_{\mathbf{A}^*, \mathbf{x}_0}$, the MCAR($p$) process $\mathbb{Y}$ and its state-space representation are ergodic. Thus, $\mathbb{P}_{\mathbf{A}^*, \mathbf{x}_0}$ almost surely 
			\begin{equation} \label{eqn:ergodic_H_MCAR} \frac{1}{t} [\mathbf{H}]_t \rightarrow \mathcal{H}_\infty, \end{equation}
			where $\mathcal{H}_\infty$ is a strictly positive definite matrix (by similar reasoning as for $[\mathbf{H}]_t$) given by
			\begin{equation} \label{eqn:H_infty}
				\mathcal{H}_\infty =
				\begin{pmatrix}
					\mathbb{E}_{\mathbf{A}^*} [ D^{p-1}Y^{(1)}_{\infty} \Sigma^{-1} D^{p-1}Y^{(1)}_{\infty}] & \cdots & \mathbb{E}_{\mathbf{A}^*} [ D^{p-1}Y^{(1)}_{\infty} \Sigma^{-1} Y^{(d)}_{\infty}]\\
					\vdots & \ddots & \vdots \\
					\mathbb{E}_{\mathbf{A}^*} [ Y^{(d)}_{\infty} \Sigma^{-1} D^{p-1}Y^{(1)}_{\infty}] & \cdots & \mathbb{E}_{\mathbf{A}^*} [ Y^{(d)}_{\infty} \Sigma^{-1} Y^{(d)}_{\infty}]  \\
				\end{pmatrix},
			\end{equation}
			and $\mathcal{L}([\mathbf{Y}^\mathrm{T}_\infty, \ldots, D^{p-1}\mathbf{Y}^\mathrm{T}_\infty]^\mathrm{T}) = \mathcal{L}( \int_0^\infty e^{s\mathcal{A}_{\mathbf{A}^*}} \mathcal{E}\, d\mathbf{L}_s) $ under $\mathbb{P}_{\mathbf{A}^*}$.
			Note that
			\[\mathrm{vec} (\hat{\mathbf{A}}(\mathbb{Y}_{[0,t]})) =  [\mathbf{H}]_t^{-1}\mathbf{H}_t = \mathrm{vec}(\mathbf{A}^*) +  [\mathbf{H}]_t^{-1}(\mathbf{H}_t - [\mathbf{H}]_t \mathrm{vec}(\mathbf{A}^*)), \]
			and by Lemma \ref{lemma:loc_mart_MCAR} the score $\mathbf{H}_t - [\mathbf{H}]_t \mathrm{vec}(\mathbf{A}^*)$ is a mean-zero square-integrable martingale under $\mathbb{P}_{\mathbf{A}^*, \mathbf{x}_0}$. By the central limit theorem for martingales, i.e.\ \citet[Theorem~A.7.7~and~Corollary~A.7.8]{Kuchler_Sorensen_1997}, applied to the score we have that 
			\[ [\mathbf{H}]_t^{-1}(\mathbf{H}_t - [\mathbf{H}]_t \mathrm{vec}(\mathbf{A}^*)) \overset{\mathbb{P}_{\mathbf{A}^*, \mathbf{x}_0}}{\longrightarrow} 0,\]
			and hence 
			\[\mathrm{vec} (\hat{\mathbf{A}}(\mathbb{Y}_{[0,t]})) \overset{\mathbb{P}_{\mathbf{A}^*, \mathbf{x}_0}}{\longrightarrow} \mathrm{vec}(\mathbf{A}^*), \ t\rightarrow \infty. \]
			Again by \citet[Theorem A.7.7 ]{Kuchler_Sorensen_1997} we have
			\[ \frac{1}{\sqrt{t}} \mathcal{H}_\infty^{-1/2} (\mathbf{H}_t - [\mathbf{H}]_t \mathrm{vec}(\mathbf{A}^*)) \overset{\mathcal{L}}{\longrightarrow} \mathbf{W}\sim N(\mathbf{0}, I_{pd^2}), \ t \rightarrow \infty. \]
			Combining this with Equation \eqref{eqn:ergodic_H_MCAR} via Slutsky's lemma yields 
			\[  \sqrt{t} \left([\mathbf{H}]^{-1}_t\mathbf{H}_t - \mathrm{vec}(\mathbf{A}^*)\right) \overset{\mathcal{L}}{\longrightarrow} \mathcal{H}_\infty^{-1/2} \mathbf{W} =: \mathbf{Z} \sim N(\mathbf{0}, \mathcal{H}_\infty^{-1}), \ t \rightarrow \infty.\]
		\end{proof}
		
		\section{Discrete observations: proofs} \label{app:proof_discr}
		\subsection{Proof of Theorem \ref{thm:cons_asymp_first_approx}} \label{app:proof_them_cons_asymp_first_approx}
		\begin{proof}
			To show Equation \eqref{eqn:H_limit} with $\hat{\mathbf{A}}_{1,t} = \hat{\mathbf{A}}(\mathbb{Y}_{[0,t]})$ and $\hat{\mathbf{A}}_{2,t} = \hat{\mathbf{A}}(\mathbb{Y}_{\mathcal{P}_t}, \ldots, D^{p-1}\mathbb{Y}_{\mathcal{P}_t},\mathbb{J}_{\mathcal{P}_t}, \mathbb{M}_{\mathcal{P}_t})$ we work in $L^1(\Omega, \mathcal{F}, \mathbb{P})$ and note that for each $d$-entry batch with $l\in\{0,\ldots,p-1\}$ and $i\in\{1,\ldots, d\}$
			\begin{align*}
				t^{-1/2}&\mathbb{E}_{\mathbf{A}^*}\left[\left\| \sum_{n = 0}^{N_t-1}
				D^lY^{(i)}_{s_n} \Sigma^{-1} (D^{p-1}\mathbf{Y}^{c}_{\mathbf{A}^{(0)}, s_{n+1}} - D^{p-1}\mathbf{Y}^{c}_{\mathbf{A}^{(0)}, s_{n}}) - \int_0^t D^lY^{(i)}_{s-} \Sigma^{-1}\ dD^{p-1}\mathbf{Y}^{c}_{\mathbf{A}^{(0)}, s} \right\|\right]  \\
				&\overset{(i)}{\leq} t^{-1/2} \|\Sigma^{-1}\| \mathbb{E}_{\mathbf{A}^*}\Big[\Big\|   \sum_{n = 0}^{N_t-1} \int_{s_n}^{s_{n+1}} \Big(D^lY^{(i)}_{s_n} - D^lY^{(i)}_{s-} \Big) \Sigma^{1/2} \ d\mathbf{W}_s \\
				&\quad\quad\quad\quad\quad\quad\quad\quad\quad\quad\quad\quad\quad\quad -   \sum_{n = 0}^{N_t-1} \int_{s_n}^{s_{n+1}} \Big(D^lY^{(i)}_{s_n} - D^lY^{(i)}_{s-} \Big) \sum_{j=1}^p A^*_j D^{p-j}\mathbf{Y}_s\ ds \Big\|\Big]
				\\
				&\overset{(ii)}{\leq} t^{-1/2} \|\Sigma^{-1}\| \Bigg\{ \mathbb{E}_{\mathbf{A}^*}\Big[\Big\|  \int_{0}^{t} \sum_{n = 0}^{N_t-1} \mathds{1}_{[s_n, s_{n+1}]}(s) \big(D^lY^{(i)}_{s_n} - D^lY^{(i)}_{s-} \big) \Sigma^{1/2} \ d\mathbf{W}_s\Big\|^2\Big]^{1/2} \\
				&\quad\quad\quad\quad\quad\quad\quad\quad\quad\quad\quad +   \sum_{n = 0}^{N_t-1} \mathbb{E}_{\mathbf{A}^*}\Big[\Big\|\int_{s_n}^{s_{n+1}} \big(D^lY^{(i)}_{s_n} - D^lY^{(i)}_{s-} \big) \sum_{j=1}^p A^*_j D^{p-j}\mathbf{Y}_s\ ds \Big\|\Big]\Bigg\}
				\\
				&\overset{(iii)}{\leq} t^{-1/2} \|\Sigma^{-1}\| \Bigg\{\mathrm{tr}(\Sigma)^{1/2} \Big(\int_0^t \sum_{n = 0}^{N_t-1} \mathds{1}_{[s_n, s_{n+1}]}(s)\ \mathbb{E}_{\mathbf{A}^*}\Big[ \Big(D^lY^{(i)}_{s_n} - D^lY^{(i)}_{s} \Big)^2\Big] \ ds \Big)^{1/2}  \\
				&\quad\quad\quad\quad\quad\quad\quad\quad\quad\quad\quad + \sum_{n = 0}^{N_t-1} \sum_{j=1}^p \int_{s_n}^{s_{n+1}} \mathbb{E}_{\mathbf{A}^*}\Big[\Big\|\big(D^lY^{(i)}_{s_n} - D^lY^{(i)}_{s} \Big) A^*_j D^{p-j}\mathbf{Y}_s \Big\|\Big]\ ds \Bigg\}
				\\
				&\overset{(iv)}{\leq} t^{-1/2} \|\Sigma^{-1}\| \Bigg\{\mathrm{tr}(\Sigma)^{1/2} \Big(\sum_{n = 0}^{N_t-1} \int_{s_n}^{s_{n+1}}\ \mathbb{E}_{\mathbf{A}^*}\Big[ \Big(D^lY^{(i)}_{s_n} - D^lY^{(i)}_{s} \Big)^2\Big] \ ds \Big)^{1/2}  \\
				&\quad\quad\quad\quad\quad\quad + \sum_{n = 0}^{N_t-1} \sum_{j=1}^p \int_{s_n}^{s_{n+1}} \mathbb{E}_{\mathbf{A}^*}\Big[\big(D^lY^{(i)}_{s_n} - D^lY^{(i)}_{s} \Big)^2\Big]^{1/2}  \mathbb{E}_{\mathbf{A}^*} \Big[\big\|A^*_j D^{p-j}\mathbf{Y}_s \big\|^2\Big]^{1/2}\ ds \Bigg\}
				\\
				&\overset{(v)}{\leq} t^{-1/2} \|\Sigma^{-1}\| \Bigg\{\mathrm{tr}(\Sigma)^{1/2} \Big(\sum_{n = 0}^{N_t-1} \sup_{s\in[s_n, s_{n+1}]} \mathbb{E}_{\mathbf{A}^*}\Big[\big(D^lY^{(i)}_{s_n} - D^lY^{(i)}_{s} \Big)^2\Big] (s_{n+1} - s_n) \Big)^{1/2}  \\
				&\quad +  p \max_{j=1,\ldots,p} \|A^*_j\|  \ \mathbb{E}_{\mathbf{A}^*}\Big[\big\|\mathbf{X}_\infty \big\|^2\Big]^{1/2} \sum_{n = 0}^{N_t-1} \sup_{s\in[s_n, s_{n+1}]} \mathbb{E}_{\mathbf{A}^*}\Big[\big(D^lY^{(i)}_{s_n} - D^lY^{(i)}_{s} \Big)^2\Big]^{1/2} (s_{n+1} - s_n) \Bigg\}
				\\
				&\overset{(vi)}{\leq} t^{-1/2} \|\Sigma^{-1}\| \Bigg\{\mathrm{tr}(\Sigma)^{1/2} \Big( t \sup_{s\in[0, \Delta_{\mathcal{P}_t}]} \mathbb{E}_{\mathbf{A}^*}\Big[\big(D^lY^{(i)}_{0} - D^lY^{(i)}_{s} \Big)^2\Big] \Big)^{1/2}  \\
				&\quad +  p \max_{j=1,\ldots,p} \|A^*_j\|  \ \mathbb{E}_{\mathbf{A}^*}\Big[\big\|\mathbf{X}_\infty \big\|^2\Big]^{1/2} t \sup_{s\in[0, \Delta_{\mathcal{P}_t}]} \mathbb{E}_{\mathbf{A}^*}\Big[\big(D^lY^{(i)}_{0} - D^lY^{(i)}_{s} \Big)^2\Big]^{1/2}  \Bigg\}
				\\
				&\overset{(vii)}{\lesssim} \Delta_{\mathcal{P}_t}^{1/2} + t^{1/2}\Delta_{\mathcal{P}_t}^{1/2} \rightarrow 0, \quad t\rightarrow \infty,
			\end{align*}
			by using $(i)$ the decomposition \eqref{eqn:DY=W-A} of the integrator $\{D^{p-1}\mathbf{Y}^c_{\mathbf{A}^{(0)}, s}, \ s\geq 0\}$, $(iii)$ triangle inequality and domination of $L^1(\Omega, \mathcal{F}, \mathbb{P}_{\mathbf{A}^*})$ norm by $L^2(\Omega, \mathcal{F}, \mathbb{P}_{\mathbf{A}^*})$ norm, $(iii)$ \Ito isometry, triangle inequality and Tonelli's theorem, $(iv)$ Cauchy-Schwartz inequality, $(v)$ sup bounds on integral and stationarity of the state-space representation, $(vi)$ stationarity of the state-space representation again, $(vii)$ Equation \eqref{eqn:big_O_X2} and the high-frequency sampling condition, i.e.\ Assumption \ref{ass:HF_sampling}. 
			
			Next, to show \eqref{eqn:[H]_limit}, we work again in $L^1(\Omega, \mathcal{F}, \mathbb{P}_{\mathbf{A}^*})$ which then implies the corresponding limit in probability. For each $d\times d$ sub-matrix with $l, k\in\{0,\ldots,p-1\}$ and $i, j\in\{1,\ldots, d\}$ we note that
			\begin{align*}
				t^{-1/2}&\mathbb{E}_{\mathbf{A}^*}\left[\left\| \sum_{n = 0}^{N_t-1}
				D^lY^{(i)}_{s_n} \Sigma^{-1} D^{k}Y^{(j)}_{s_n} (s_{n+1} - s_n) - \int_0^t D^lY^{(i)}_{s} \Sigma^{-1} D^{k}Y^{(j)}_{s}\ ds \right\|\right] \\
				&\overset{(i)}{\leq} t^{-1/2}\|\Sigma^{-1}\|\sum_{n = 0}^{N_t-1} \int_{s_n}^{s_{n+1}} \mathbb{E}_{\mathbf{A}^*}\left[\left|
				D^lY^{(i)}_{s_n} D^{k}Y^{(j)}_{s_n} - D^lY^{(i)}_{s}  D^{k}Y^{(j)}_{s}\ \right|\right] \ ds 
				\\
				&\overset{(ii)}{\leq} t^{-1/2}\|\Sigma^{-1}\|\sum_{n = 0}^{N_t-1} \sup_{s\in[s_n, s_{n+1}]} \mathbb{E}_{\mathbf{A}^*}\left[\left|
				D^lY^{(i)}_{s_n} D^{k}Y^{(j)}_{s_n} - D^lY^{(i)}_{s}  D^{k}Y^{(j)}_{s}\ \right|\right] (s_{n+1} - s_{n})
				\\
				&\overset{(iii)}{\leq} t^{-1/2}\|\Sigma^{-1}\| \sup_{s\in[0, \Delta_{\mathcal{P}_t}]} \mathbb{E}_{\mathbf{A}^*}\left[\left|
				D^lY^{(i)}_{0} D^{k}Y^{(j)}_{0} - D^lY^{(i)}_{s}  D^{k}Y^{(j)}_{s}\ \right|\right] \sum_{n = 0}^{N_t-1} (s_{n+1} - s_{n}) \\
				&\overset{(iv)}{\leq} t^{1/2} \|\Sigma^{-1}\| \sup_{s\in[0, \Delta_{\mathcal{P}_t}]} \left\{ \mathbb{E}_{\mathbf{A}^*}\left[\left|D^lY^{(i)}_{0} ( D^{k}Y^{(j)}_{0} - D^{k}Y^{(j)}_{s}) \right|\right] + \mathbb{E}_{\mathbf{A}^*}\left[\left|D^{k}Y^{(j)}_{s} ( D^lY^{(i)}_{0} - D^lY^{(i)}_{s}) \right|\right]\right\} \\
				&\overset{(v)}{\leq} t^{1/2} \|\Sigma^{-1}\|\, 2\, \mathbb{E}_{\mathbf{A}^*}\left[\left\|\mathbf{X}_{\infty}\right\|^2\right]^{1/2} \sup_{s\in[0, \Delta_{\mathcal{P}_t}]} \mathbb{E}_{\mathbf{A}^*}\left[\left\|\mathbf{X}_{0} - \mathbf{X}_{s} \right\|^2\right]^{1/2} \\
				&\overset{(vi)}{\lesssim} t^{1/2} \Delta_{\mathcal{P}_t}^{1/2} \rightarrow 0, \quad t\rightarrow \infty,
			\end{align*}
			where we have used $(i)$ triangle inequality and Tonelli's theorem, $(ii)$ sup bound on the integral, $(iii)$ stationarity of the state-space representation 
			\[\mathbf{X}_s = 
			\begin{pmatrix} \mathbf{Y}_s \\ \vdots \\ D^{p-1}\mathbf{Y}_s \end{pmatrix}, \ s\geq 0,\]
			of the MCAR($p$) process, $(iv)$ triangle inequality, $(v)$ Cauchy-Schwartz inequality and stationarity of state-space representation, $(vi)$ Equation \eqref{eqn:big_O_X2} and the high-frequency sampling condition, i.e.\ Assumption \ref{ass:HF_sampling}. 
		\end{proof}
		
		\subsection{Proof of Lemma \ref{lemma:D-D_L2_bound}} \label{app:proof_D-D_L2_bound}
		\begin{proof}
			Let $k\in\{1,\ldots,p-1\}$. First note that for $n\in\{0,\ldots, N_t-k\}$ by Lemma \ref{lemma:FD_Y} and Lemma \ref{lemma:Y_iterated_int} we can express the forward difference as
			\begin{equation*} \hat{D}^k\mathbf{Y}_{s_n} = \sum_{i=1}^k (-1)^{k+i} \sum_{\substack{n_1,\ldots, n_i\geq 1\\ n_1 +\cdots + n_i = k}} (s_{n+1} - s_n)^{-n_1} \cdots (s_{n+i} - s_{n+i-1})^{-n_i} (\mathbf{Y}_{s_{n+i}} - \mathbf{Y}_{s_{n+i-1}}), \end{equation*}
			and, for each $i\in\{1,\ldots,k\}$, the increments as
			\begin{equation*}
				\begin{split} \mathbf{Y}_{s_{n+i}} - \mathbf{Y}_{s_{n+i-1}} = \sum_{j=1}^{k-1} D^j \mathbf{Y}_{s_n} &\frac{(s_{n+i} - s_{n})^j - (s_{n+i-1} - s_{n})^j}{j!} + \int_{s_{n+i-1}}^{s_{n+i}} \int_{s_n}^{u_1}\cdots\int_{s_n}^{u_{k-1}} D^k \mathbf{Y}_{u_k}\ du_k\ldots du_1. \end{split}
			\end{equation*}
			These can be combined to write 
			\begin{multline*} 
				\hat{D}^k\mathbf{Y}_{s_n} = \sum_{i=1}^k (-1)^{k+i} \sum_{\substack{n_1,\ldots, n_i\geq 1\\ n_1 +\cdots + n_i = k}} (s_{n+1} - s_n)^{-n_1} \cdots (s_{n+i} - s_{n+i-1})^{-n_i} \\
				\times \Big[\sum_{j=1}^{k-1} D^j \mathbf{Y}_{s_n} \frac{(s_{n+i} - s_{n})^j - (s_{n+i-1} - s_{n})^j}{j!} \\
				+ \int_{s_{n+i-1}}^{s_{n+i}} \int_{s_n}^{u_1}\cdots\int_{s_n}^{u_{k-1}} D^k \mathbf{Y}_{u_k}\ du_k\ldots du_1\Big]. 
			\end{multline*}
			Thus for any $k\in\{1\ldots, p-1\}$ we can write
			\begin{equation}\label{eqn:D-D}
				\begin{split}
					\hat{D}^k\mathbf{Y}_{s_n} - D^k\mathbf{Y}_{s_n} =
					\sum_{j=1}^{k-1} D^j \mathbf{Y}_{s_n} C&(s_n,\ldots, s_{n+k};j) +  D^k \mathbf{Y}_{s_n} \left[C(s_n,\ldots, s_{n+k};k) - 1\right] \\ + \Bigg(\sum_{i=1}^k &(-1)^{k+i} \sum_{\substack{n_1,\ldots, n_i\geq 1\\ n_1 +\cdots + n_i = k}} (s_{n+1} - s_n)^{-n_1} \cdots (s_{n+i} - s_{n+i-1})^{-n_i} \\ &\times \int_{s_{n+i-1}}^{s_{n+i}} \int_{s_n}^{u_1}\cdots\int_{s_n}^{u_{k-1}} \ du_k\ldots du_1\Bigg)(D^k \mathbf{Y}_{u_k} - D^k \mathbf{Y}_{s_n}),
				\end{split}
			\end{equation}
			where we define for $j\in\{1, \ldots, k\},$
			\begin{equation} \label{eqn:C_partition}
				\begin{split}
					C(s_n, \ldots, s_{n+k};j) := \sum_{i=1}^k (-1)^{k+i} \sum_{\substack{n_1,\ldots, n_i\geq 1\\ n_1 +\cdots + n_i = k}} (s_{n+1} - s_n)^{-n_1} &\cdots (s_{n+i} - s_{n+i-1})^{-n_i} \\
					&\times \frac{(s_{n+i} - s_{n})^j - (s_{n+i-1} - s_{n})^j}{j!}.
				\end{split}
			\end{equation}
			We note that if $\mathcal{P}_t$ is almost evenly spaced
			\[C(s_n, \ldots, s_{n+k};j) \approx \begin{cases} 0, &j=0,\ldots, k-1,\\
				1, &j=k.\end{cases}\]
			Rigorously we can write (here we consider $k$ even, the case where $k$ is odd is covered by multiplying both inequalities through by $-1$):
			\begin{align*}
				C&(s_n, \ldots, s_{n+k};j) \\ &=\Big(\sum_{\substack{i=1 \\ i \ \mathrm{even}}}^k - \sum_{\substack{i=1 \\ i\ \mathrm{odd}}}^k \Big)\sum_{\substack{n_1,\ldots, n_i\geq 1\\ n_1 +\cdots + n_i = k}} (s_{n+1} - s_n)^{-n_1} \cdots (s_{n+i} - s_{n+i-1})^{-n_i} \frac{(s_{n+i} - s_{n})^j - (s_{n+i-1} - s_{n})^j}{j!} \\
				& \leq \sum_{\substack{i=1 \\ i \ \mathrm{even}}}^k \binom{k-1}{i-1} c_{\mathcal{P}_t}^{-k}\Delta_{\mathcal{P}_t}^{-k}  \frac{(i\Delta_{\mathcal{P}_t})^j - c_{\mathcal{P}_t}^j((i-1)\Delta_{\mathcal{P}_t})^j}{j!} - \sum_{\substack{i=1 \\ i \ \mathrm{odd}}}^k \binom{k-1}{i-1} \Delta_{\mathcal{P}_t}^{-k} \frac{c_{\mathcal{P}_t}^j(i\Delta_{\mathcal{P}_t})^j - ((i-1)\Delta_{\mathcal{P}_t})^j}{j!} \\
				& \leq \frac{\Delta_{\mathcal{P}_t}^{j-k}}{j!} \left\{\sum_{\substack{i=1 \\ i \ \mathrm{even}}}^k \binom{k-1}{i-1} c_{\mathcal{P}_t}^{-k} [i^j - c_{\mathcal{P}_t}^j(i-1)^j] - \sum_{\substack{i=1 \\ i \ \mathrm{odd}}}^k \binom{k-1}{i-1} [c_{\mathcal{P}_t}^j i^j - (i-1)^j]\right\} \\
				& \leq \frac{\Delta_{\mathcal{P}_t}^{j-k}}{j!} \left\{\sum_{i=1}^k (-1)^i \binom{k-1}{i-1} [i^j - (i-1)^j] + \sum_{i=1}^k \binom{k-1}{i-1} \left[(|c_{\mathcal{P}_t}^{-k} - 1| + |c_{\mathcal{P}_t}^j - 1|)i^j +|1 - c_{\mathcal{P}_t}^{j-k}| (i-1)^j\right]\right\},
			\end{align*} 
			and
			\begin{align*}
				C&(s_n, \ldots, s_{n+k};j) \\
				& \geq \sum_{\substack{i=1 \\ i \ \mathrm{even}}}^k \binom{k-1}{i-1} \Delta_{\mathcal{P}_t}^{-k}  \frac{c_{\mathcal{P}_t}^j(i\Delta_{\mathcal{P}_t})^j - ((i-1)\Delta_{\mathcal{P}_t})^j}{j!} - \sum_{\substack{i=1 \\ i \ \mathrm{odd}}}^k \binom{k-1}{i-1} c_{\mathcal{P}_t}^{-k} \Delta_{\mathcal{P}_t}^{-k} \frac{(i\Delta_{\mathcal{P}_t})^j - c_{\mathcal{P}_t}^j((i-1)\Delta_{\mathcal{P}_t})^j}{j!} \\
				& \geq \frac{\Delta_{\mathcal{P}_t}^{j-k}}{j!} \left\{\sum_{\substack{i=1 \\ i \ \mathrm{even}}}^k \binom{k-1}{i-1} [c_{\mathcal{P}_t}^{j}i^j - (i-1)^j] - \sum_{\substack{i=1 \\ i \ \mathrm{odd}}}^k \binom{k-1}{i-1} c_{\mathcal{P}_t}^{-k} [i^j - c_{\mathcal{P}_t}^j(i-1)^j]\right\}\\
				& \geq \frac{\Delta_{\mathcal{P}_t}^{j-k}}{j!} \left\{\sum_{i=1}^k (-1)^i \binom{k-1}{i-1} [i^j - (i-1)^j] - \sum_{i=1}^k \binom{k-1}{i-1} \left[(|c_{\mathcal{P}_t}^{j}-1| + |1- c_{\mathcal{P}_t}^{-k}|) i^j + |c_{\mathcal{P}_t}^{j-k} - 1| (i-1)^j \right]\right\},
			\end{align*} 
			where we use that for any $m\geq 1$ the relationship $c_{\mathcal{P}_t}\Delta_{\mathcal{P}_t} \leq |s_{n+1} - s_n| \leq \Delta_{\mathcal{P}_t}$ implies
			\begin{align*}
				c^m_{\mathcal{P}_t}\Delta^m_{\mathcal{P}_t} \leq |s_{n+1} - s_n|^m \leq \Delta^m_{\mathcal{P}_t}&,\  -\Delta^m_{\mathcal{P}_t} \leq -|s_{n+1} - s_n|^m \leq - c^m_{\mathcal{P}_t}\Delta^m_{\mathcal{P}_t}, \\
				\Delta^{-m}_{\mathcal{P}_t} \leq |s_{n+1} - s_n|^{-m} \leq c_{\mathcal{P}_t}^{-m}\Delta^{-m}_{\mathcal{P}_t}&,\ -c_{\mathcal{P}_t}^{-m}\Delta^{-m}_{\mathcal{P}_t} \leq -|s_{n+1} - s_n|^{-m} \leq -\Delta^{-m}_{\mathcal{P}_t}.
			\end{align*}
			and for any $k\geq 1$, $1\leq i\leq k$
			\[\#\{(n_1, \ldots, n_i): n_1,\ldots, n_i\geq 1, n_1+\cdots + n_i = k\} = \binom{k-1}{i-1}.\] 
			We can rearrange and combine the upper and lower bounds for $C(s_n,\ldots, s_{n+k};j)$ to obtain
			\begin{equation} \label{eqn:bound_C_first}
				\left| C(s_n, \ldots, s_{n+k};j) - F(j,k)\right|
				\leq \frac{\Delta_{\mathcal{P}_t}^{j-k}}{j!} \sum_{i=1}^k \binom{k-1}{i-1} \left[(|c_{\mathcal{P}_t}^{j}-1| + |1- c_{\mathcal{P}_t}^{-k}|) i^j + |c_{\mathcal{P}_t}^{j-k} - 1| (i-1)^j \right],
			\end{equation} 
			where, by Lemma \ref{lemma:F(j,k)}, we have
			\begin{equation*}
				F(j, k) := (-1)^k \frac{\Delta_{\mathcal{P}_t}^{j-k}}{j!} \sum_{i=1}^k (-1)^i \binom{k-1}{i-1} [i^j - (i-1)^j] = \begin{cases} 0, &j=0,\ldots, k-1,\\
					1, &j=k.\end{cases}.
			\end{equation*}
			Next, we can use $c_{\mathcal{P}_t}\in(0,1)$ and Assumption \ref{ass:evenly_spaced_lim} to deduce that for $1\leq j\leq k$
			\[|c_{\mathcal{P}_t}^j - 1| \leq |c_{\mathcal{P}_t}^k - 1|,\quad |c_{\mathcal{P}_t}^{-k} - 1| = \frac{|c_{\mathcal{P}_t}^k - 1|}{|c_{\mathcal{P}_t}^k|} \leq 2^k|c_{\mathcal{P}_t}^k - 1|,\quad |c_{\mathcal{P}_t}^{j-k} - 1| = \frac{|c_{\mathcal{P}_t}^{k-j} - 1|}{|c_{\mathcal{P}_t}^{k-j}|} 
			\leq 2^{k-j} |c_{\mathcal{P}_t}^k - 1|,\]
			for $t\in\mathcal{T}$ sufficiently large (i.e.\ such that $c_{\mathcal{P}_t} \geq 1/2)$. Due to the ``convergence to uniform spacing'' condition on the sequence of partitions, i.e.\ Assumption \ref{ass:evenly_spaced_lim}, we can bound Equation \eqref{eqn:bound_C_first} by
			\begin{equation} \label{eqn:bound_C_second}
				\left| C(s_n, \ldots, s_{n+k};j) - F(j,k)\right|
				\leq \frac{\Delta_{\mathcal{P}_t}^{j-k}}{j!} |c_{\mathcal{P}_t}^k - 1| \sum_{i=1}^k \binom{k-1}{i-1} \left[(1+2^{k}) i^j + 2^{k-j} (i-1)^j \right] \lesssim \Delta_{\mathcal{P}_t}^{j-1/2}.
			\end{equation} 
			We note this implies
			\begin{align*}
				\mathbb{E}_{\mathbf{A}^*}&\left[\left\|\hat{D}^k\mathbf{Y}_{s_n} - D^k\mathbf{Y}_{s_n}\right\|^2\right]\\
				&\overset{(i)}{\leq} \sum_{j=1}^{k}\sum_{l=1}^k \mathbb{E}_{\mathbf{A}^*}\left[D^j \mathbf{Y}_{s_n}^\mathrm{T} D^l \mathbf{Y}_{s_n}\right] |C(s_n,\ldots, s_{n+k};j) - F(j,k)||C(s_n,\ldots, s_{n+k};l) - F(l,k)| \\ 
				&\quad\quad +2 \sum_{j=1}^k \sup_{u\in[s_n, s_{n+k}]} \mathbb{E}_{\mathbf{A}^*}\left[D^j \mathbf{Y}_{s_n}^{\mathrm{T}} (D^k \mathbf{Y}_{u} - D^k \mathbf{Y}_{s_n}) \right] |C(s_n,\ldots, s_{n+k};j) - F(j,k)| \,\bar{C}(s_n,\ldots, s_{n+k};k) \\
				&\quad\quad + \sup_{u\in[s_n, s_{n+k}]} \mathbb{E}_{\mathbf{A}^*}\left[\|D^k \mathbf{Y}_{u} - D^k \mathbf{Y}_{s_n}\|^2\right] \bar{C}(s_n,\ldots, s_{n+k};k)^2 \\ 
				&\overset{(ii)}{\leq} \sum_{j=1}^{k}\sum_{l=1}^k \mathbb{E}_{\mathbf{A}^*}\left[D^j \mathbf{Y}_{\infty}^\mathrm{T} D^l \mathbf{Y}_{\infty}\right] |C(s_n,\ldots, s_{n+k};j) - F(j,k)||C(s_n,\ldots, s_{n+k};l) - F(l,k)| \\ 
				&\quad\quad+ 2\sum_{j=1}^k \mathbb{E}_{\mathbf{A}^*}\left[\|D^j \mathbf{Y}_{\infty}\|^2\right]^{1/2} \sup_{u\in[0,k\Delta_{\mathcal{P}_t}]} \mathbb{E}_{\mathbf{A}^*}\left[\|D^k \mathbf{Y}_{u} - D^k \mathbf{Y}_{0}\|^2 \right]^{1/2} \\
				&\quad\quad\quad\quad\quad\quad\quad\quad\quad\quad\quad\quad\quad\quad\quad\quad\quad\quad\quad\quad\quad\quad\quad \times |C(s_n,\ldots, s_{n+k};j) - F(j,k)|  \,\bar{C}(s_n,\ldots, s_{n+k};k) \\
				&\quad\quad+ \sup_{u\in[0, k\Delta_{\mathcal{P}_t}]} \mathbb{E}_{\mathbf{A}^*}\left[\|D^k \mathbf{Y}_{u} - D^k \mathbf{Y}_{0}\|^2\right] \bar{C}(s_n,\ldots, s_{n+k};k)^2 \\ 
				&\overset{(iii)}{\lesssim} \Big[\sum_{j=1}^k\sum_{l=1}^k \Delta_{\mathcal{P}_t}^{j-1/2}\Delta_{\mathcal{P}_t}^{l-1/2} + 2\sum_{j=1}^k (k\Delta_{\mathcal{P}_t})^{1/2}\Delta_{\mathcal{P}_t}^{j-1/2}+ k\Delta_{\mathcal{P}_t}\Big] \lesssim \Delta_{\mathcal{P}_t},
			\end{align*}
			where 
			\begin{align} \label{eqn:C_bar_partition}
				\bar{C}(s_n,\ldots, s_{n+k};k) &:= \sum_{i=1}^k  \sum_{\substack{n_1,\ldots, n_i\geq 1\\ n_1 +\cdots + n_i = k}} (s_{n+1} - s_n)^{-n_1} \cdots (s_{n+i} - s_{n+i-1})^{-n_i} \frac{(s_{n+i} - s_{n})^k - (s_{n+i-1} - s_{n})^k}{k!} \\
				&\leq \sum_{i=1}^k  \binom{k-1}{i-1} \Delta_{\mathcal{P}_t}^{-k} c_{\mathcal{P}_t}^{-k} \frac{(i\Delta_{\mathcal{P}_t})^k - c_{\mathcal{P}_t}^k((i-1)\Delta_{\mathcal{P}_t})^k}{k!} = \sum_{i=1}^k  \binom{k-1}{i-1}  \frac{2^{k}i^k - (i-1)^k}{k!} \notag
			\end{align}
			by applying $(i)$ Equation \eqref{eqn:D-D}, sup-bounds on the integrals and triangle inequality, $(ii)$ stationarity of the state-space representation and Cauchy-Schwartz inequality and $(iii)$ Equation \eqref{eqn:bound_C_second}, Equation \eqref{eqn:big_O_X2} and $\mathbb{E}_{\mathbf{A}^*}[\mathbf{X}_\infty\mathbf{X}_\infty^{\mathrm{T}}]<\infty$.
		\end{proof}
		
		\subsection{Proof of Theorem \ref{thm:cons_asymp_second_approx}} \label{app:proof_them_cons_asymp_second_approx}
		\begin{proof}
			In order to apply Lemma \ref{lemma:approx} we need to show equations \eqref{eqn:H_limit} and \eqref{eqn:[H]_limit} hold for $\hat{\mathbf{A}}_{1,t} = \hat{\mathbf{A}}(\mathbb{Y}_{\mathcal{Q}_t}, \ldots, D^{p-1}\mathbb{Y}_{\mathcal{Q}_t},\mathbb{J}_{\mathcal{Q}_t}, \mathbb{M}_{\mathcal{Q}_t})$ and $\hat{\mathbf{A}}_{2,t} = \hat{\mathbf{A}}(\mathbb{Y}_{\mathcal{P}_t}, \mathbb{J}_{\mathcal{Q}_t}, \mathbb{M}_{\mathcal{Q}_t})$. For ease of notation let us write the increments of the partition $\mathcal{Q}_t$ with mesh $\Delta_{\mathcal{Q}_t}$ as
			\[\Delta_{\mathcal{Q}_t}^m := u_{m+1} - u_m, \quad m=0,\ldots, M_t-1,\]
			and the increment of a process $\mathbb{Z}$ over the interval $[u_m, u_{m+1}] \in\mathcal{Q}_t$ as
			\[\Delta_{\mathcal{Q}_t}^m \mathbf{Z} := (\mathbf{Z}_{u_{m+1}} - \mathbf{Z}_{u_m}), \quad m=0,\ldots, M_t-1. \]
			
			We start by showing Equation \eqref{eqn:H_limit}. We work in $L^1(\Omega, \mathcal{F}, \mathbb{P}_{\mathbf{A}^*})$ which then implies the corresponding limit in probability. We note that for each $d$-entry batch with $l\in\{0,\ldots,p-1\}$ and $i\in\{1,\ldots, d\}$
			
			\begin{align*}
				t^{-1/2}&\mathbb{E}_{\mathbf{A}^*}\left[\left\| \sum_{m = 0}^{M_t-1}
				\hat{D}^{l}Y^{(i)}_{u_m} \Sigma^{-1} \Delta_{\mathcal{Q}_t}^m \hat{D}^{p-1}\mathbf{Y}^{c}_{\mathbf{A}^{(0)}}  - \sum_{m = 0}^{M_t-1}
				D^{l}Y^{(i)}_{u_m} \Sigma^{-1} \Delta_{\mathcal{Q}_t}^m D^{p-1}\mathbf{Y}^{c}_{\mathbf{A}^{(0)}} \right\|\right]  \\
				&\overset{(i)}{\leq} t^{-1/2}\|\Sigma^{-1}\|\mathbb{E}_{\mathbf{A}^*}\Bigg[\bigg\| \sum_{m = 0}^{M_t-1}
				\hat{D}^{l}Y^{(i)}_{u_m} \Big[\hat{D}^{p-1}\mathbf{Y}^{c}_{\mathbf{A}^{(0)}, u_{m+1}} - \hat{D}^{p-1}\mathbf{Y}^{c}_{\mathbf{A}^{(0)}, u_m} - \Delta_{\mathcal{Q}_t}^m D^{p-1}\mathbf{Y}^{c}_{\mathbf{A}^{(0)}}\Big] \\
				&\quad \quad \quad \quad \quad \quad \quad \quad \quad \quad \quad \quad \quad \quad \quad \quad \quad 
				\quad \quad \quad \quad \quad \quad \quad \quad +
				\sum_{m = 0}^{M_t-1} (\hat{D}^{l}Y^{(i)}_{u_m} - D^{l}Y^{(i)}_{u_m}) \Delta_{\mathcal{Q}_t}^m D^{p-1}\mathbf{Y}^{c}_{\mathbf{A}^{(0)}} \bigg\|\Bigg]  \\
				&\overset{(ii)}{\leq} t^{-1/2}\|\Sigma^{-1}\|\Bigg\{\sum_{m = 0}^{M_t-1}
				\mathbb{E}_{\mathbf{A}^*}\Bigg[\bigg\|\hat{D}^{l}Y^{(i)}_{u_m} \Big[(\hat{D}^{p-1}\mathbf{Y}_{u_{m+1}} - D^{p-1}\mathbf{Y}_{u_{m+1}}) - (\hat{D}^{p-1}\mathbf{Y}_{u_m} - D^{p-1}\mathbf{Y}_{u_m})\Big]\bigg\|\Bigg]  \\
				&\quad \quad \quad \quad \quad \quad \quad \quad \quad \quad \quad \quad \quad \quad 
				\quad \quad \quad \quad \quad \quad \quad + \mathbb{E}_{\mathbf{A}^*}\Bigg[\bigg\|
				\sum_{m = 0}^{M_t-1}(\hat{D}^{l}Y^{(i)}_{u_m} - D^{l}Y^{(i)}_{u_m}) \Delta_{\mathcal{Q}_t}^m D^{p-1}\mathbf{Y}^{c}_{\mathbf{A}^{(0)}} \bigg\|\Bigg] \Bigg\} \\
				&\overset{(iii)}{\leq} t^{-1/2}\|\Sigma^{-1}\| \Bigg\{ \sum_{m = 0}^{M_t-1} \mathbb{E}_{\mathbf{A}^*}\left[\left\|\hat{D}^{l}Y^{(i)}_{u_m}\right\|^2\right]^{1/2}  \mathbb{E}_{\mathbf{A}^*}\left[\left\|\hat{D}^{p-1}\mathbf{Y}_{u_{m+1}} - D^{p-1}\mathbf{Y}_{u_{m+1}} \right\|^2\right]^{1/2} \\
				&\quad\quad\quad\quad\quad\quad\quad\quad\quad + \mathbb{E}_{\mathbf{A}^*}\left[\left\|\hat{D}^{l}Y^{(i)}_{u_m}\right\|^2\right]^{1/2}\mathbb{E}_{\mathbf{A}^*}\left[\left\|\hat{D}^{p-1}\mathbf{Y}_{u_m} - D^{p-1}\mathbf{Y}_{u_m} \right\|^2\right]^{1/2} \\
				&\quad \quad \quad \quad \quad \quad \quad \quad \quad  + \mathbb{E}_{\mathbf{A}^*}\left[\left\|
				\sum_{m = 0}^{M_t-1}(\hat{D}^{l}Y^{(i)}_{u_m} - D^{l}Y^{(i)}_{u_m}) [\Sigma^{1/2} \Delta_{\mathcal{Q}_t}^m \mathbf{W} - \sum_{j=1}^p\int_{u_m}^{u_{m+1}} A^*_j D^{p-j}\mathbf{Y}_u\ du] \right\|\right]\Bigg\}    \\
				&\overset{(iv)}{\lesssim} t^{-1/2} \Bigg\{\sum_{m = 0}^{M_t-1} \Delta_{\mathcal{P}_t}^{1/2} + \mathbb{E}_{\mathbf{A}^*}\bigg[\bigg\|
				\int_0^t \sum_{m = 0}^{M_t-1} \mathds{1}_{[u_m, u_{m+1}]} (s) (\hat{D}^{l}Y^{(i)}_{u_m} - D^{l}Y^{(i)}_{u_m}) \Sigma^{1/2} d\mathbf{W}_{s}\bigg\|\bigg]\\
				&\quad \quad \quad \quad \quad \quad \quad \quad \quad \quad \quad \quad \quad \quad \quad + \mathbb{E}_{\mathbf{A}^*}\bigg[\bigg\| \sum_{m = 0}^{M_t-1} \sum_{j=1}^p \int_{u_m}^{u_{m+1}} A^*_j (\hat{D}^{l}Y^{(i)}_{u_m} - D^{l}Y^{(i)}_{u_m}) D^{p-j}\mathbf{Y}_u\ du \bigg\|\bigg]\Bigg\} \\
				&\overset{(v)}{\lesssim} t^{-1/2} \Bigg\{ M_t \Delta^{1/2}_{\mathcal{P}_t} + \mathrm{tr}(\Sigma)^{1/2} \left(
				\int_0^t \sum_{m = 0}^{M_t-1} \mathds{1}_{[u_m, u_{m+1}]} (u) \mathbb{E}_{\mathbf{A}^*}\left[\left(\hat{D}^{l}Y^{(i)}_{u_m} - D^{l}Y^{(i)}_{u_m}\right)^2\right] du \right)^{1/2}\\
				&\quad \quad \quad \quad \quad \quad \quad \quad + \sum_{m = 0}^{M_t-1}\sum_{j=1}^p \int_{u_m}^{u_{m+1}} \|A^*_j\| \mathbb{E}_{\mathbf{A}^*}\bigg[\bigg|  \hat{D}^{l}Y^{(i)}_{u_m} - D^{l}Y^{(i)}_{u_m}\bigg|^2\bigg]^{1/2} \mathbb{E}_{\mathbf{A}^*}\bigg[ \|D^{p-j}\mathbf{Y}_\infty\|^2\bigg]^{1/2} du\Bigg\} \\
				&\overset{(vi)}{\lesssim}  t^{-1/2}M_t \Delta^{1/2}_{\mathcal{P}_t} + \Delta^{1/2}_{\mathcal{P}_t} + t^{1/2}\Delta^{1/2}_{\mathcal{P}_t}\\
				&\ \lesssim   t^{-1/2}N'_t \Delta^{1/2}_{\mathcal{P}_t} = \left(t^{-1} N'_t \Delta_{\mathcal{Q}_t}\right) \left(t\Delta_{\mathcal{Q}_t}^{-2} \Delta_{\mathcal{P}_t}\right)^{1/2} \rightarrow 0, \quad t\rightarrow\infty,
			\end{align*}
			where we use $(i)$ simple rearranging, $(ii)$ triangle inequality, $(iii)$ triangle inequality, Cauchy-Schwartz inequality and representation \eqref{eqn:DY=W-A}, $(iv)$ bound \eqref{eqn:D-D_L2_bound}, triangle inequality and $\mathbb{E}_{\mathbf{A}^*}[\mathbf{X}_\infty\mathbf{X}_\infty^\mathrm{T}]<\infty$, $(v)$ \Ito isometry and triangle inequlity, Tonelli theorem, Cauchy-Schwartz and stationarity of state-space, $(vi)$ bound \eqref{eqn:D-D_L2_bound} again and, finally, Assumption \ref{ass:controlled_sampling} and Assumption \ref{ass:joint_mesh}.
			
			Next, to show Equation \eqref{eqn:H_limit} holds, we again work in $L^1(\Omega, \mathcal{F}, \mathbb{P})$ and note that for each $d\times d$ sub-matrix with $l, k\in\{0,\ldots,p-1\}$ and $i, j\in\{1,\ldots, d\}$ we note that
			\begin{align*}
				t^{-1/2}&\mathbb{E}_{\mathbf{A}^*}\left[\left\| \sum_{m = 0}^{M_t-1}
				\hat{D}^{l}Y^{(i)}_{u_m} \Sigma^{-1} \hat{D}^{k}Y^{(j)}_{u_m} \Delta_{\mathcal{Q}_t}^m -
				\sum_{m = 0}^{M_t-1}
				D^{l}Y^{(i)}_{u_m} \Sigma^{-1} D^{k}Y^{(j)}_{u_m} \Delta_{\mathcal{Q}_t}^m \right\|\right] \\
				&\overset{(i)}{\leq} t^{-1/2}\|\Sigma^{-1}\|\sum_{m = 0}^{M_t-1} \bigg\{ \mathbb{E}_{\mathbf{A}^*}\left[\left|
				\hat{D}^{l}Y^{(i)}_{u_m} (\hat{D}^{k}Y^{(j)}_{u_m} - D^{k}Y^{(j)}_{u_m})\right|\right] + \mathbb{E}_{\mathbf{A}^*}\left[\left|D^{k}Y^{(j)}_{u_m} (\hat{D}^{l}Y^{(i)}_{u_m} - D^{l}Y^{(i)}_{u_m}) \right|\right] \bigg\} \Delta_{\mathcal{Q}_t}^m \\
				&\overset{(ii)}{\leq} t^{-1/2}\|\Sigma^{-1}\|\sum_{m = 0}^{M_t-1} \bigg\{ \mathbb{E}_{\mathbf{A}^*}\left[\left|
				\hat{D}^{l}Y^{(i)}_{u_m}\right|^2\right]^{1/2} \mathbb{E}_{\mathbf{A}^*}\left[\left|\hat{D}^{k}Y^{(j)}_{u_m} - D^{k}Y^{(j)}_{u_m} \right|^2\right]^{1/2} \\
				&\quad\quad\quad\quad\quad\quad\quad\quad\quad\quad\quad\quad\quad\quad\quad\quad\quad\quad + \mathbb{E}_{\mathbf{A}^*}\left[\left|D^{k}Y^{(j)}_{u_m}\right|^2\right]^{1/2} \mathbb{E}_{\mathbf{A}^*}\left[\left|\hat{D}^{l}Y^{(i)}_{u_m} - D^{l}Y^{(i)}_{u_m} \right|^2\right]^{1/2} \bigg\} \Delta_{\mathcal{Q}_t}^m \\
				&\overset{(iii)}{\lesssim} t^{-1/2}
				\sum_{m = 0}^{M_t-1} \left[ \Delta_{\mathcal{P}_t}^{1/2} + \Delta_{\mathcal{P}_t}^{1/2} \right] \Delta_{\mathcal{Q}_t}^m  \\
				&\overset{(iv)}{\lesssim} (t \Delta_{\mathcal{P}_t})^{1/2}\rightarrow 0, \quad t\rightarrow\infty,
			\end{align*}
			where we have used $(i)$ the triangle inequality, $(ii)$ Cauchy-Schwartz inequality, $(iii)$ bound \eqref{eqn:D-D_L2_bound}, the fact that $\hat{D}^l\mathbf{Y}_{s_n}$ is a linear combination of stationary $\mathbf{Y}_{s_{n}}, \ldots, \mathbf{Y}_{s_{n+l}}$ with finite second moments, stationarity of the state-space representation and $\mathbb{E}_{\mathbf{A}^*}[\mathbf{X}_\infty\mathbf{X}_\infty^\mathrm{T}]<\infty$, and $(iv)$ the high-frequency sampling condition, i.e.\ Assumption \ref{ass:HF_sampling}.
		\end{proof}
		\begin{remark}
			Note that the estimator $\hat{\mathbf{A}}_{1,t} = \hat{\mathbf{A}}(\mathbb{Y}_{\mathcal{Q}_t}, \ldots, D^{p-1}\mathbb{Y}_{\mathcal{Q}_t},\mathbb{J}_{\mathcal{Q}_t}, \mathbb{M}_{\mathcal{Q}_t})$ is defined with the summations up to $N'_t-1$ and not $M_t-1$. In order to prove equations \eqref{eqn:H_limit} and \eqref{eqn:[H]_limit} we are thus missing 
			\[\sum_{m = M_t}^{N'_t-1}
			D^{l}Y^{(i)}_{u_m} \Sigma^{-1} D^{k}Y^{(j)}_{u_m} (u_{m+1} - u_m) \overset{\mathbb{P}_{\mathbf{A}^*}}{\rightarrow} 0,\ \mathrm{and}\  \sum_{m = M_t}^{N'_t-1}
			D^{l}Y^{(i)}_{u_m} \Sigma^{-1} \Delta_{\mathcal{Q}_t}^m D^{p-1}\mathbf{Y}^{c}_{\mathbf{A}^{(0)}} \overset{\mathbb{P}_{\mathbf{A}^*}}{\rightarrow} 0.\]
			as $t\rightarrow\infty$ for $0\leq l,k\leq p-1$ and $1\leq i,j\leq d$. The first limit can be shown in $L^1(\Omega, \mathcal{F}, \mathbb{P}_{\mathbf{A}^*})$ by
			\begin{align*}
				\mathbb{E}_{\mathbf{A}^*}\left[\left\| \sum_{m = M_t}^{N'_t-1}
				D^{l}Y^{(i)}_{u_m} \Sigma^{-1} D^{k}Y^{(j)}_{u_m} (u_{m+1} - u_m) \right\|\right] &\leq \|\Sigma^{-1}\| \sum_{m = M_t}^{N'_t-1} \mathbb{E}_{\mathbf{A}^*}\left[\left|
				D^{l}Y^{(i)}_{\infty} D^{k}Y^{(j)}_{\infty}\right|\right] (u_{m+1} - u_m) \\
				&\lesssim (N'_t - M_t - 1) \Delta_{\mathcal{Q}_t} \lesssim (p-1) \Delta_{\mathcal{Q}_t} \rightarrow 0,\quad t\rightarrow \infty,
			\end{align*}
			and similarly the second limit follows by 
			\begin{align*}
				\mathbb{E}_{\mathbf{A}^*}\bigg[\bigg\| \sum_{m = M_t}^{N'_t-1}
				D^{l}Y^{(i)}_{u_m} \Sigma^{-1} \Delta_{\mathcal{Q}_t}^m D^{p-1}\mathbf{Y}^{c}_{\mathbf{A}^{(0)}} \bigg\|\bigg] &\leq \|\Sigma^{-1}\| \sum_{m = M_t}^{N'_t-1} \mathbb{E}_{\mathbf{A}^*}\left[\left\|
				D^{l}Y^{(i)}_{u_m} \Delta_{\mathcal{Q}_t}^m D^{p-1}\mathbf{Y}^{c}_{\mathbf{A}^{(0)}}\right\|\right] \\
				&\lesssim (N'_t - M_t - 1) \Delta_{\mathcal{Q}_t} \lesssim (p-1) \Delta_{\mathcal{Q}_t} \rightarrow 0,\quad t\rightarrow \infty,
			\end{align*}
		\end{remark}

		\subsection{Proof of Theorem \ref{thm:cons_asymp_third_approx_finite_activity}} \label{app:proof_thm_cons_asymp_third_approx_finite_activity}
		\begin{proof}
			Recall that by Lemma \ref{lemma:approx} it suffices to show Equation \eqref{eqn:H_limit} with $\hat{\mathbf{A}}_{1,t} = \hat{\mathbf{A}}(\mathbb{Y}_{\mathcal{P}_t}, \mathbb{J}_{\mathcal{Q}_t}, \mathbb{M}_{\mathcal{Q}_t})$ and $\hat{\mathbf{A}}_{2,t} = \hat{\mathbf{A}}(\mathbb{Y}_{\mathcal{P}_t}; \mathcal{Q}_t, \boldsymbol{\nu}_t)$. To show this we work with a slightly different approach than in Proofs \ref{app:proof_them_cons_asymp_first_approx} and \ref{app:proof_them_cons_asymp_second_approx}. We first introduce the event $A_t$, representing whether the thresholding technique is succesful, and show this event has probability approaching one as $t\rightarrow\infty$, i.e.\ Equation \eqref{eqn:P(A)_conv_one}. Next, we show that on the set $A_t$, $\hat{\mathbf{A}}(\mathbb{Y}_{\mathcal{P}_t}; \mathcal{Q}_t, \boldsymbol{\nu}_t)$ converges to $ \hat{\mathbf{A}}(\mathbb{Y}_{\mathcal{P}_t}, \mathbb{J}_{\mathcal{Q}_t}, \mathbb{M}_{\mathcal{Q}_t})$ in $L^1(\Omega, \mathcal{F}, \mathbb{P}_{\mathbf{A}^*})$, i.e.\ Equation \eqref{eqn:restricted_L1_conv}. Combining these results implies the desired condition \eqref{eqn:H_limit}.
			
			For $t\in\mathcal{T}$, $m\in\{0,\ldots, M_t-1\}$ and $i\in\{1,\ldots,d\}$ let us introduce the event that the thresholding technique is successful on the $m$-th interval of the $t$-th partition in the $i$-th component, i.e.\
			\begin{align}
				A_t^{m, i} :&= \left\{\omega\in\Omega : \mathds{1}_{\left\{\left|\Delta_{\mathcal{Q}_t}^m \hat{D}^{p-1}Y^{(i)} - \tilde{b}^{(i)} \Delta_{\mathcal{Q}_t}^m\right| \leq \nu^{(i), m}_t \right\}}(\omega) = \mathds{1}_{\left\{ \Delta^m_{\mathcal{Q}_t}N^{(i)} = 0 \right\}}(\omega) \right\}, \notag \\
				&= (K_t^{m, i} \cap M_t^{m, i}) \cup (K_t^{m,i, c} \cap M_t^{m,i, c}) = (K_t^{m,i}\setminus M_t^{m,i})^c \cap (M_t^{m,i}\setminus K_t^{m,i})^c , \label{eqn:A=KUM}
			\end{align}
			where 
			\begin{align}
				K_t^{m, i} := \left\{\omega\in\Omega : \left|\Delta_{\mathcal{Q}_t}^m \hat{D}^{p-1}Y^{(i)}(\omega) - \tilde{b}^{(i)} \Delta_{\mathcal{Q}_t}^m\right| \leq \nu^{(i), m}_t \right\}, \quad
				M_t^{m, i} := \left\{\omega\in\Omega :  \Delta^m_{\mathcal{Q}_t}N^{(i)}(\omega) = 0 \right\}. \label{eqn:K,M}
			\end{align}
			On the set $A_t^{m, i}$ either:
			\begin{itemize}
				\item no jumps have occurred and the \textit{approximated} de-trended increment is kept, i.e.\ we perfectly recover $\Delta_{\mathcal{Q}_t}^m \hat{D}^{p-1}Y^{c, (i)}_{\mathbf{A}^{(0)}}$ on $[u_m, u_{m+1}]$; or
				\item jumps have occurred and the  \textit{approximated} de-trended increment is discarded, i.e.\ $\Delta_{\mathcal{Q}_t}^m \hat{D}^{p-1}\mathbf{Y}^c_{\mathbf{A}^{(0)}}$ is set to $0$ on $[u_m, u_{m+1}]$.
			\end{itemize}
			Recall that by definition of $\hat{D}^{p-1}\mathbb{Y}^c_{\mathbf{A}^{(0)}}$, i.e.\ Equation \eqref{eqn:approx_cont_mart_part}, we have
			\begin{equation*}
				\Delta_{\mathcal{Q}_t}^m\hat{D}^{p-1}\mathbf{Y}^c_{\mathbf{A}^{(0)}} - \Delta_{\mathcal{Q}_t}^m D^{p-1}\mathbf{Y}^c_{\mathbf{A}^{(0)}} = \Delta_{\mathcal{Q}_t}^m\hat{D}^{p-1}\mathbf{Y} - \Delta_{\mathcal{Q}_t}^m D^{p-1}\mathbf{Y},
			\end{equation*}
			and thus by Equation \eqref{eqn:D-D_L2_bound} we have the bound
			\begin{equation} \label{eqn:incr_difference_bound}
				\mathbb{E}_{\mathbf{A}^*}\left[ \left\|\Delta_{\mathcal{Q}_t}^m\hat{D}^{p-1}\mathbf{Y}^c_{\mathbf{A}^{(0)}} - \Delta_{\mathcal{Q}_t}^m D^{p-1}\mathbf{Y}^c_{\mathbf{A}^{(0)}}\right\|^2\right] = \mathbb{E}_{\mathbf{A}^*}\left[ \left\|\Delta_{\mathcal{Q}_t}^m\hat{D}^{p-1}\mathbf{Y} - \Delta_{\mathcal{Q}_t}^m D^{p-1}\mathbf{Y}\right\|^2\right] \lesssim \Delta_{\mathcal{P}_t}.
			\end{equation}
			Define the event that the thresholding technique is successful over the whole partition $\mathcal{Q}_t$ 
			\[A_t := \bigcap_{i=1}^{d} \bigcap_{m=0}^{M_t-1} A_t^{m,i}.\]
			We start by showing the sequence of events $\{A_t,\ t\in\mathcal{T}\}$ has probability approaching one as $t\rightarrow\infty$, i.e.\
			\begin{equation}
				\label{eqn:P(A)_conv_one}
				\mathbb{P}_{\mathbf{A}^*}(A_t)\rightarrow 1,\quad t\rightarrow \infty.
			\end{equation}
			Note
			\begin{align*}
				\mathbb{P}_{\mathbf{A}^*}&\left(A_t^{c}\right) \\
				&= \mathbb{P}_{\mathbf{A}^*}\left(\bigcup_{i=1}^{d} \bigcup_{m=0}^{M_t-1} A_t^{m,i, c} \right) \\
				&\overset{(i)}{\leq} \sum_{i=1}^{d} \sum_{m=0}^{M_t-1} \left[ \mathbb{P}_{\mathbf{A}^*}\left( K_t^{m,i} \setminus M_t^{m,i}\right) + \mathbb{P}_{\mathbf{A}^*}\left(M_t^{m,i} \setminus K_t^{m,i} \right)\right] \\
				&\overset{(ii)}{\leq} \sum_{i=1}^{d} \sum_{m=0}^{M_t-1} \bigg[ \mathbb{P}_{\mathbf{A}^*}\left( \left|\Delta_{\mathcal{Q}_t}^m \hat{D}^{p-1}Y^{(i)} - \tilde{b}^{(i)} \Delta_{\mathcal{Q}_t}^m\right| \leq \nu^{(i), m}_t,\, \Delta^m_{\mathcal{Q}_t}N^{(i)} > 0 \right) \\
				&\quad\quad\quad\quad\quad\quad\quad\quad\quad\quad\quad\ + \mathbb{P}_{\mathbf{A}^*}\left( \left|\Delta_{\mathcal{Q}_t}^m \hat{D}^{p-1}Y^{(i)} - \tilde{b}^{(i)} \Delta_{\mathcal{Q}_t}^m\right| > \nu^{(i), m}_t,\, \Delta^m_{\mathcal{Q}_t}N^{(i)} = 0 \right) \bigg] \\
				&\overset{(iii)}{\leq} \sum_{i=1}^{d} \sum_{m=0}^{M_t-1} \bigg[ \mathbb{P}_{\mathbf{A}^*}\left( \left|\Delta_{\mathcal{Q}_t}^m \hat{D}^{p-1}Y^{c, (i)}_{\mathbf{A}^{(0)}} + \Delta_{\mathcal{Q}_t}^m \tilde{J}^{(i)}\right| \leq \nu^{(i), m}_t, \, \Delta^m_{\mathcal{Q}_t} N^{(i)} =1 \right)  \\
				&\quad\quad\quad\quad\quad\quad\quad\quad\quad\quad\quad  + \mathbb{P}_{\mathbf{A}*}\left(\Delta^m_{\mathcal{Q}_t}N^{(i)} \geq 2 \right) + \mathbb{P}_{\mathbf{A}^*}\left( \left|\Delta_{\mathcal{Q}_t}^m \hat{D}^{p-1}Y^{c, (i)}_{\mathbf{A}^{(0)}}\right| > \nu^{(i), m}_t \right) \bigg] \\
				&\overset{(iv)}{\leq} \sum_{i=1}^{d} \sum_{m=0}^{M_t-1} \bigg[ \mathbb{P}_{\mathbf{A}^*}\left( \left|\Delta_{\mathcal{Q}_t}^m \hat{D}^{p-1}Y^{c, (i)}_{\mathbf{A}^{(0)}} + \Delta_{\mathcal{Q}_t}^m \tilde{J}^{(i)}\right| \leq \nu^{(i), m}_t,\ \left|\Delta_{\mathcal{Q}_t}^m \tilde{J}^{(i)}\right| > 2\nu^{(i), m}_t, \, \Delta^m_{\mathcal{Q}_t} N^{(i)} =1 \right) \\
				&\quad\quad\quad\quad\quad\quad\quad + \mathbb{P}_{\mathbf{A}^*}\left( \left|\Delta_{\mathcal{Q}_t}^m \hat{D}^{p-1}Y^{c, (i)}_{\mathbf{A}^{(0)}} + \Delta_{\mathcal{Q}_t}^m \tilde{J}^{(i)}\right| \leq \nu^{(i), m}_t,\ \left|\Delta_{\mathcal{Q}_t}^m \tilde{J}^{(i)}\right| \leq 2\nu^{(i), m}_t, \, \Delta^m_{\mathcal{Q}_t} N^{(i)} =1 \right) \\
				&\quad\quad\quad\quad\quad\quad\quad\quad\quad\quad\quad\quad\quad\quad\quad\quad\quad\quad\quad\quad + \left(\lambda^{(i)} \Delta^m_{\mathcal{Q}_t}\right)^2  + \mathbb{P}_{\mathbf{A}^*}\left( \left|\Delta_{\mathcal{Q}_t}^m \hat{D}^{p-1}Y^{c, (i)}_{\mathbf{A}^{(0)}}\right| > \nu^{(i), m}_t \right) \bigg] \\
				&\overset{(v)}{\leq} \sum_{i=1}^{d} \sum_{m=0}^{M_t-1} \bigg[ \mathbb{P}_{\mathbf{A}^*}\left( \left|\Delta_{\mathcal{Q}_t}^m \hat{D}^{p-1}Y^{c, (i)}_{\mathbf{A}^{(0)}} \right| > \nu^{(i), m}_t, \Delta^m_{\mathcal{Q}_t} N^{(i)} =1 \right) \\
				&\quad\quad\quad\quad\quad\quad\quad + \mathbb{P}_{\mathbf{A}^*}\left( \left|\Delta_{\mathcal{Q}_t}^m \tilde{J}^{(i)}\right| \leq 2\nu^{(i), m}_t \,\Big|\, \Delta^m_{\mathcal{Q}_t} N^{(i)} =1 \right) \mathbb{P}_{\mathbf{A}^*}\left( \Delta^m_{\mathcal{Q}_t} N^{(i)} =1 \right) \\
				&\quad\quad\quad\quad\quad\quad\quad\quad\quad\quad\quad\quad\quad\quad\quad\quad\quad\quad\quad + \left(\lambda^{(i)} \Delta^m_{\mathcal{Q}_t}\right)^2  + \mathbb{P}_{\mathbf{A}^*}\left( \left|\Delta_{\mathcal{Q}_t}^m \hat{D}^{p-1}Y^{c, (i)}_{\mathbf{A}^{(0)}}\right| > \nu^{(i), m}_t \right) \bigg] \\
				&\overset{(vi)}{\leq} \sum_{i=1}^{d} \sum_{m=0}^{M_t-1} \bigg[ 2\mathbb{P}_{\mathbf{A}^*}\left( \left|\Delta_{\mathcal{Q}_t}^m \hat{D}^{p-1}Y^{c, (i)}_{\mathbf{A}^{(0)}} \right| > \nu^{(i), m}_t \right) + \tilde{F}^{(i)}\left(\left(-2\nu_t^{(i),m}, 2\nu_t^{(i),m}\right)\right) \lambda^{(i)} \Delta^m_{\mathcal{Q}_t} + \left(\lambda^{(i)} \Delta^m_{\mathcal{Q}_t}\right)^2  \bigg]
			\end{align*}
			where we use in $(i)$ Equation \eqref{eqn:A=KUM}, in $(ii)$ Equation \eqref{eqn:K,M}, in $(iii)$ Equation \eqref{eqn:cont_mart_finite_act}, in $(iv)$ theorem of total probability and 
			$\Delta_{\mathcal{Q}_t}^m N^{(i)} \overset{\mathbb{P}_{\mathbf{A}^*}}{\sim} \mathrm{Poisson}\left(\lambda^{(i)} \Delta_{\mathcal{Q}_t}^m\right)$, in $(v)$ triangle inequality and definition of conditional probability, in $(vi)$ the jump distribution
			$\Delta_{\mathcal{Q}_t}^m \tilde{J}^{(i)} \,\big|\, \Delta^m_{\mathcal{Q}_t} N^{(i)} =1 \overset{\mathbb{P}_{\mathbf{A}^*}}{\sim} \tilde{F}^{(i)}(\cdot)$. Next, we can bound
			\begin{align}
				\mathbb{P}_{\mathbf{A}^*}&\left( \left|\Delta_{\mathcal{Q}_t}^m \hat{D}^{p-1}Y^{c, (i)}_{\mathbf{A}^{(0)}} \right| > \nu^{(i), m}_t \right) \notag \\
				&\overset{(i)}{\leq}  \mathbb{P}_{\mathbf{A}^*}\left(
				\left|\Delta_{\mathcal{Q}_t}^m \hat{D}^{p-1}Y^{c, (i)}_{\mathbf{A}^{(0)}} - \Delta_{\mathcal{Q}_t}^m D^{p-1}Y^{c, (i)}_{\mathbf{A}^{(0)}} \right| > \frac{\nu_t^{(i)}}{2} \right) + \mathbb{P}_{\mathbf{A}^*}\left( \left|\Delta_{\mathcal{Q}_t}^m D^{p-1}Y^{c, (i)}_{\mathbf{A}^{(0)}} \right| > \frac{\nu_t^{(i)}}{2}\right) \notag \\
				&\overset{(ii)}{\leq} \frac{4}{\left(\nu^{(i), m}_t\right)^2} \mathbb{E}_{\mathbf{A}^*}\left[
				\left\|\Delta_{\mathcal{Q}_t}^m \hat{D}^{p-1}\mathbf{Y}^{c}_{\mathbf{A}^{(0)}} - \Delta_{\mathcal{Q}_t}^m D^{p-1}\mathbf{Y}^{c}_{\mathbf{A}^{(0)}} \right\|^2 \right] \notag \\
				&\quad\quad\quad\quad\quad\quad\quad\quad\quad\quad\quad\quad\quad\quad\quad + \mathbb{P}_{\mathbf{A}^*}\Bigg( \Big| \Delta_{\mathcal{Q}_t}^m (\Sigma^{1/2}\mathbf{W})^{(i)} - \int_{u_m}^{u_{m+1}} \sum_{j=1}^p  (A^*_jD^{p-j}\mathbf{Y}_u)^{(i)}\,du \Big| > \frac{\nu_t^{(i)}}{2} \Bigg) \notag \\
				&\overset{(iii)}{\lesssim} \frac{\Delta_{\mathcal{P}_t}}{\left(\nu^{(i), m}_t\right)^{2}} + \mathbb{P}_{\mathbf{A}^*}\Bigg( \Big|\Delta_{\mathcal{Q}_t}^m(\Sigma^{1/2} \mathbf{W})^{(i)}\Big| > \frac{\nu_t^{(i)}}{4} \Bigg) + \mathbb{P}_{\mathbf{A}^*}\Bigg( \Big| \int_{u_m}^{u_{m+1}} \sum_{j=1}^p (A^*_j D^{p-j}\mathbf{Y}_u)^{(i)} \,du \Big| > \frac{\nu_t^{(i)}}{4} \Bigg)\notag \\
				&\overset{(iv)}{\lesssim} \frac{\Delta_{\mathcal{P}_t}}{\left(\nu^{(i), m}_t\right)^{2}} +  \frac{\left(\Delta^m_{\mathcal{Q}_t}\right)^{1/2}}{\nu^{(i), m}_t} \exp\left\{ -\frac{1}{32} \nu^{(i), m}_t\left(\Delta^m_{\mathcal{Q}_t}\right)^{-1/2}\right\} \notag\\
				& \quad\quad\quad\quad\quad\quad\quad\quad\quad\quad\quad\quad\quad\quad\quad\quad\quad\quad\quad\quad + \frac{16}{\left(\nu^{(i), m}_t\right)^2} \mathbb{E}_{\mathbf{A}^*}\left[ \Big| \int_{u_m}^{u_{m+1}} \sum_{j=1}^p (A^*_j D^{p-j}\mathbf{Y}_u)^{(i)} \,du \Big|^2 \right]\notag \\
				&\overset{(v)}{\lesssim} 
				\frac{\Delta_{\mathcal{P}_t}}{\left(\nu^{(i), m}_t\right)^{2}} + \frac{\left(\Delta^m_{\mathcal{Q}_t}\right)^{1/2}}{\nu^{(i), m}_t} \exp\left\{ -\frac{1}{32} \nu^{(i), m}_t\left(\Delta^m_{\mathcal{Q}_t}\right)^{-1/2}\right\} \notag\\
				& \quad\quad\quad\quad\quad\quad\quad\quad\quad\quad\quad\quad\quad\quad\quad\quad\quad\quad\quad\quad + \frac{\Delta_{\mathcal{Q}_t}^m}{\left(\nu^{(i), m}_t\right)^2} \int_{u_m}^{u_{m+1}}\mathbb{E}_{\mathbf{A}^*}\bigg[ \Big|\sum_{j=1}^p (A^*_j D^{p-j}\mathbf{Y}_\infty)^{(i)}\Big|^2 \bigg]\,du 
				\notag \\
				&\overset{(vi)}{\lesssim} \frac{\Delta_{\mathcal{P}_t}}{\left(\nu^{(i), m}_t\right)^{2}} + \frac{\left(\Delta^m_{\mathcal{Q}_t}\right)^{1/2}}{\nu^{(i), m}_t} \exp\left\{ -\frac{1}{32} \nu^{(i), m}_t\left(\Delta^m_{\mathcal{Q}_t}\right)^{-1/2}\right\} +  \frac{\left(\Delta^m_{\mathcal{Q}_t}\right)^{2}}{\left(\nu^{(i), m}_t\right)^{2}} \notag \\
				&\overset{(vii)}{\lesssim} \left(\Delta_{\mathcal{P}_t} + \left(\Delta^m_{\mathcal{Q}_t}\right)^{2}\right) \left(\Delta^m_{\mathcal{Q}_t}\right)^{-2\beta^{(i)}}, \text{ for $t\in\mathcal{T}$ sufficiently large,} \label{eqn:P(mart>nu)}
			\end{align}
			where we use in $(i)$ triangle inequality, in $(ii)$ Markov's inequality and Equation \eqref{eqn:DY=W-A}, in $(iii)$ Equation \eqref{eqn:incr_difference_bound} and triangle inequality, in $(iv)$ [Klenke, 2008] Lemma 22.2 [add reference] and Markov's inequality, in $(v)$ Jensen's inequality and stationarity, in $(vi)$ $\mathbb{E}_{\mathbf{A}^*}[\mathbf{X}_\infty \mathbf{X}^\mathrm{T}_\infty ] < \infty$ and in $(vii)$ Assumption \ref{ass:finite_thresholding}$.(i)$, $\exp\{-x/32\} \lesssim x^{-3}$ for $x=(\Delta^m_{\mathcal{Q}_t})^{\beta^{(i)}- 1/2}$ sufficiently large and $\Delta_{\mathcal{Q}_t}^m\lesssim 1$ for $t$ sufficiently large. Plugging this into the previous inequality and using $\Delta^m_{\mathcal{Q}_t} \geq c_{\mathcal{Q}_t} \Delta_{\mathcal{Q}_t}$, for $t\in\mathcal{T}$ sufficiently large we have 
			\begin{align*}
				\mathbb{P}_{\mathbf{A}^*}\left(A_t^{c}\right) & \lesssim \sum_{i=1}^{d} \left[c_{\mathcal{Q}_t}^{-2\beta^{(i)}} M_t \Delta_{\mathcal{P}_t}\Delta_{\mathcal{Q}_t}^{-2\beta^{(i)}} + t \Delta_{\mathcal{Q}_t}^{1-2\beta^{(i)}} + t \tilde{F}^{(i)}\left(\left(-2(\Delta^m_{\mathcal{Q}_t})^{\beta^{(i)}},\ 2(\Delta^m_{\mathcal{Q}_t})^{\beta^{(i)}}\right)\right) + t \Delta_{\mathcal{Q}_t}\right] \\
				&\lesssim \sum_{i=1}^{d}\left[ c_{\mathcal{Q}_t}^{-2\beta^{(i)}} (t^{-1}N'_t\Delta_{\mathcal{Q}_t}) (t\Delta_{\mathcal{Q}_t}^{-2}\Delta_{\mathcal{P}_t}) \Delta_{\mathcal{Q}_t}^{1-2\beta^{(i)}} + t \Delta_{\mathcal{Q}_t}^{1-2\beta^{(i)}} + t \tilde{F}^{(i)}\left(\left(-2\Delta_{\mathcal{Q}_t}^{\beta^{(i)}},\ 2\Delta_{\mathcal{Q}_t}^{\beta^{(i)}}\right)\right) \right]\rightarrow 0,
			\end{align*}
			as $t\rightarrow \infty$, by Assumption \ref{ass:controlled_sampling}, Assumption \ref{ass:joint_mesh}, Assumption \ref{ass:finite_thresholding}$.(ii)$ and Assumption \ref{ass:finite_thresholding}$.(iii)$. We note that it now suffices to show the $L^1(\Omega, \mathcal{F},\mathbb{P}_{\mathbf{A}^*})$ convergence 
			\begin{equation} \label{eqn:restricted_L1_conv}
				\mathbb{E}_{\mathbf{A}^*}\left[\left\|t^{-1/2}\left( \mathbf{H}_{\mathcal{P}_t, \mathcal{Q}_t, \boldsymbol{\nu}_t} - \mathbf{H}_{\mathcal{P}_t,\mathcal{Q}_t}\right) \mathds{1}_{A_t} \right\|\right] \rightarrow 0, \quad t\rightarrow\infty,
			\end{equation}
			as this immediately implies convergence in probability
			\begin{equation} \label{eqn:restricted_p_conv}
				t^{-1/2}\left( \mathbf{H}_{\mathcal{P}_t, \mathcal{Q}_t, \boldsymbol{\nu}_t} - \mathbf{H}_{\mathcal{P}_t,\mathcal{Q}_t}\right) \mathds{1}_{A_t} \overset{\mathbb{P}_{\mathbf{A}^*}}{\rightarrow} 0, \quad t\rightarrow\infty, \end{equation}
			and hence $\forall\epsilon>0,$
			\begin{align*}
				\mathbb{P}_{\mathbf{A}^*}\left(\left\|t^{-1/2}\left( \mathbf{H}_{\mathcal{P}_t, \mathcal{Q}_t, \boldsymbol{\nu}_t} - \mathbf{H}_{\mathcal{P}_t,\mathcal{Q}_t}\right)\right\| > \epsilon \right) 
				&= \mathbb{P}_{\mathbf{A}^*}\left(\left\|t^{-1/2}\left( \mathbf{H}_{\mathcal{P}_t, \mathcal{Q}_t, \boldsymbol{\nu}_t} - \mathbf{H}_{\mathcal{P}_t,\mathcal{Q}_t}\right)\right\| > \epsilon \cap A_t \right) + \\
				&\quad\quad\quad\quad\quad\quad\quad\quad\quad \mathbb{P}_{\mathbf{A}^*}\left(\left\|t^{-1/2}\left( \mathbf{H}_{\mathcal{P}_t, \mathcal{Q}_t, \boldsymbol{\nu}_t} - \mathbf{H}_{\mathcal{P}_t,\mathcal{Q}_t}\right)\right\| > \epsilon \cap A_t^c \right) \\
				&\leq \mathbb{P}_{\mathbf{A}^*}\left(\left\|t^{-1/2}\left( \mathbf{H}_{\mathcal{P}_t, \mathcal{Q}_t, \boldsymbol{\nu}_t} - \mathbf{H}_{\mathcal{P}_t,\mathcal{Q}_t}\right)\mathds{1}_{A_t}\right\| > \epsilon \right) + \mathbb{P}_{\mathbf{A}^*}\left(A_t^c \right) \\
				&\rightarrow 0, \quad t\rightarrow\infty,
			\end{align*}
			by combining \eqref{eqn:restricted_p_conv} and \eqref{eqn:P(A)_conv_one}, i.e.\ this implies the sufficient condition \eqref{eqn:H_limit} for the required asymptotic properties of $\hat{\mathbf{A}}(\mathbb{Y}_{\mathcal{P}_t}; \mathcal{Q}_t, \boldsymbol{\nu}^m_t)$.
			
			To show Equation \eqref{eqn:restricted_L1_conv} note that for each $d$-entry batch
			
			\begin{align*}
				t^{-1/2}&\mathbb{E}_{\mathbf{A}^*}\left[\left\| \left(\sum_{m = 0}^{M_t-1}
				\hat{D}^{l}Y^{(i)}_{u_m} \Sigma^{-1} \Delta^m_{\mathcal{Q}_t,\boldsymbol{\nu}_t} \hat{D}^{p-1}\mathbf{Y}^c_{\mathbf{A}^{(0)}} - \sum_{m = 0}^{M_t-1}
				\hat{D}^{l}Y^{(i)}_{u_m} \Sigma^{-1} \Delta^m_{\mathcal{Q}_t} \hat{D}^{p-1}\mathbf{Y}^c_{\mathbf{A}^{(0)}} \right)\mathds{1}_{A_t}\right\|\right] \\
				&\overset{(i)}{\leq} t^{-1/2}\|\Sigma^{-1}\|\sum_{m=0}^{M_t-1} \mathbb{E}_{\mathbf{A}^*}\left[\left\|
				\hat{D}^{l}Y^{(i)}_{u_m}\left( \Delta^m_{\mathcal{Q}_t,\boldsymbol{\nu}_t} \hat{D}^{p-1}\mathbf{Y}^c_{\mathbf{A}^{(0)}} -  \Delta^m_{\mathcal{Q}_t} \hat{D}^{p-1}\mathbf{Y}^c_{\mathbf{A}^{(0)}}\right)\mathds{1}_{A_t}\right\|\right] \\
				&\overset{(ii)}{\leq} t^{-1/2}\|\Sigma^{-1}\|\sum_{m=0}^{M_t-1} \sum_{i=1}^d \mathbb{E}_{\mathbf{A}^*}\left[ \left|
				\hat{D}^{l}Y^{(i)}_{u_m} \ \Delta^m_{\mathcal{Q}_t} \hat{D}^{p-1}Y^{c,(i)}_{\mathbf{A}^{(0)}}\ \mathds{1}_{A_t\cap M^{m,i,c}_t}\right|\right] \\
				&\overset{(iii)}{\leq} t^{-1/2}\|\Sigma^{-1}\|\sum_{m=0}^{M_t-1} \sum_{i=1}^d\bigg\{ \mathbb{E}_{\mathbf{A}^*}\left[ \left|
				\hat{D}^{l}Y^{(i)}_{u_m} \ \Delta^m_{\mathcal{Q}_t} \left( \hat{D}^{p-1}Y^{c,(i)}_{\mathbf{A}^{(0)}} - D^{p-1}Y^{c,(i)}_{\mathbf{A}^{(0)}} \right) \right|\right]\\
				&\quad\quad\quad\quad\quad\quad\quad\quad\quad\quad\quad\quad\quad\quad\quad + \mathbb{E}_{\mathbf{A}^*}\left[ \left|
				\left(\hat{D}^{l}Y^{(i)}_{u_m} - D^{l}Y^{(i)}_{u_m}\right) \Delta^m_{\mathcal{Q}_t} D^{p-1}Y^{c,(i)}_{\mathbf{A}^{(0)}}\right|\right] \\
				&\quad\quad\quad\quad\quad\quad\quad\quad\quad\quad\quad\quad\quad\quad\quad\quad\quad\quad + \mathbb{E}_{\mathbf{A}^*}\left[ \left|
				D^{l}Y^{(i)}_{u_m} \ \Delta^m_{\mathcal{Q}_t} D^{p-1}Y^{c,(i)}_{\mathbf{A}^{(0)}}\ \mathds{1}_{M^{m,i,c}_t}\right|\right] \bigg\}\\
				&\overset{(iv)}{\leq} t^{-1/2}\|\Sigma^{-1}\|\sum_{m=0}^{M_t-1} \sum_{i=1}^d\bigg\{ \mathbb{E}_{\mathbf{A}^*}\left[ \left|
				\hat{D}^{l}Y^{(i)}_{\infty}\right|^2\right]^{1/2} \mathbb{E}_{\mathbf{A}^*}\left[ \left| \Delta^m_{\mathcal{Q}_t} \left( \hat{D}^{p-1}Y^{c,(i)}_{\mathbf{A}^{(0)}} - D^{p-1}Y^{c,(i)}_{\mathbf{A}^{(0)}} \right) \right|^2\right]^{1/2}\\
				&\quad\quad\quad\quad\quad\quad\quad\quad\quad\quad\quad\quad + \mathbb{E}_{\mathbf{A}^*}\left[ \left|
				\hat{D}^{l}Y^{(i)}_{u_m} - D^{l}Y^{(i)}_{u_m}\right|^2\right]^{1/2} \mathbb{E}_{\mathbf{A}^*}\left[ \left| \Delta^m_{\mathcal{Q}_t} D^{p-1}Y^{c,(i)}_{\mathbf{A}^{(0)}}\right|^2\right]^{1/2} \\
				&\quad\quad\quad\quad\quad\quad\quad\quad\quad\quad\quad\quad\quad\quad\quad\quad\quad + \mathbb{E}_{\mathbf{A}^*}\left[ \left|
				D^{l}Y^{(i)}_{u_m}\  \Delta_{\mathcal{Q}_t}^m (\Sigma^{1/2}\mathbf{W})^{(i)}\  \mathds{1}_{M^{m,i,c}_t}\right|\right] \\
				&\quad\quad\quad\quad\quad\quad\quad\quad\quad\quad\quad\quad\quad\quad\quad\quad\quad\quad\quad + \mathbb{E}_{\mathbf{A}^*}\Big[ \Big|
				D^{l}Y^{(i)}_{u_m}\  \int_{u_m}^{u_{m+1}} \sum_{j=1}^p  (A^*_jD^{p-j}\mathbf{Y}_u)^{(i)}\,du \  \mathds{1}_{M^{m,i,c}_t}\Big|\Big] \bigg\}\\
				&\overset{(v)}{\lesssim} t^{-1/2} \sum_{m=0}^{M_t-1} \sum_{i=1}^d\bigg\{ \Delta_{\mathcal{P}_t}^{1/2} + \Delta_{\mathcal{P}_t}^{1/2} \Delta_{\mathcal{Q}_t}^{1/2} + \mathbb{E}_{\mathbf{A}^*}\left[ \left|
				D^{l}Y^{(i)}_{\infty}\ \right|\right] \mathbb{E}_{\mathbf{A}^*}\left[ \left| \Delta_{\mathcal{Q}_t}^m (\Sigma^{1/2}\mathbf{W})^{(i)}\right|\right] \mathbb{P}_{\mathbf{A}^*}\left( M^{m,i,c}_t\right) \\
				&\quad\quad\quad\quad\quad\quad\quad\quad\quad\quad\quad\quad\quad + \sum_{j=1}^p \mathbb{E}_{\mathbf{A}^*}\Big[ \Big|
				D^{l}Y^{(i)}_{u_m}\ \mathds{1}_{M^{m,i,c}_t}\Big|^2\Big]^{1/2}  \mathbb{E}_{\mathbf{A}^*}\Big[ \Big| \int_{u_m}^{u_{m+1}} (A^*_jD^{p-j}\mathbf{Y}_u)^{(i)}\,du \Big|^2\Big]^{1/2} \bigg\}\\
				&\overset{(vi)}{\lesssim} t^{-1/2} \sum_{m=0}^{M_t-1} \sum_{i=1}^d\bigg\{ \Delta_{\mathcal{P}_t}^{1/2} + \Delta_{\mathcal{P}_t}^{1/2} \Delta_{\mathcal{Q}_t}^{1/2} +   (\Delta_{\mathcal{Q}_t}^m)^{1/2} \Delta_{\mathcal{Q}_t}^m + \sum_{j=1}^p \mathbb{E}_{\mathbf{A}^*}\Big[ \Big|
				D^{l}Y^{(i)}_{\infty}\Big|^2\Big]^{1/2}  \mathbb{P}_{\mathbf{A}^*}\left( M^{m,i,c}_t\right)^{1/2} \Delta^m_{\mathcal{Q}_t}\bigg\}\\
				&\overset{(vii)}{\lesssim} t^{-1/2} \sum_{m=0}^{M_t-1} \sum_{i=1}^d\bigg\{ \Delta_{\mathcal{P}_t}^{1/2} + \Delta_{\mathcal{P}_t}^{1/2} \Delta_{\mathcal{Q}_t}^{1/2} +   (\Delta_{\mathcal{Q}_t}^m)^{3/2} + (\Delta_{\mathcal{Q}_t}^m)^{3/2} \bigg\}\\
				&\overset{(viii)}{\lesssim} \left(t^{-1} N'_t \Delta_{\mathcal{Q}_t}\right) \left(t\Delta_{\mathcal{Q}_t}^{-2} \Delta_{\mathcal{P}_t}\right)^{1/2}\Delta_{\mathcal{P}_t}^{1/2} + t^{1/2} \Delta^{1/2}_{\mathcal{Q}_t} \rightarrow 0,\quad t\rightarrow\infty,
			\end{align*}
			where we use in $(i)$ the triangle inequality, in $(ii)$ the domination $\|\cdot\|_2\leq \|\cdot\|_1$ and the fact that on $A_t$ either the increment is perfectly recovered, on $M_t^{m,i}$, or set to zero by the thresholding, on $M_t^{m,i,c}$, in $(iii)$ the triangle inequality, in $(iv)$ Cauchy-Schwartz inequality (and stationarity) on the first two terms, Equation \eqref{eqn:DY=W-A} and triangle inequality on the third term, in $(v)$ Equation \eqref{eqn:incr_difference_bound} and boundedness of the second moments on the first term, the bounds \eqref{eqn:D-D_L2_bound} and \eqref{eqn:bound_cont_mart_incr} on the second term, independence of $D^l \mathbf{Y}_{u_m}$, $\Delta_{\mathcal{Q}_t}^m \mathbf{W}$ and $\Delta^m_{\mathcal{Q}_t}\mathbf{N}$ on the third term, triangle inequality and Cauchy-Schwartz on the fourth term, in $(vi)$ boundedness of second moments of $\mathbf{X}_t$, second moment of Brownian increments and $\mathbb{P}_{\mathbf{A}^*}(M_t^{m,i,c}) \leq \lambda^{(i)} \Delta^m_{\mathcal{Q}_t}$ on the third term, independence of $D^l \mathbf{Y}_{u_m}$ and $\Delta^m_{\mathcal{Q}_t}\mathbf{N}$, Jensen's inequality and uniform  boundedness of second moments in the fourth term, and finally $(viii)$ Assumption \ref{ass:controlled_sampling}, Assumption \ref{ass:joint_mesh} and Assumption \ref{ass:finite_thresholding}$.(ii)$.
			
			We can bound the squared increment of the continuous martingale part 
			\begin{align}
				\mathbb{E}_{\mathbf{A}^*}&\left[ \left\| \Delta^m_{\mathcal{Q}_t} D^{p-1}\mathbf{Y}^{c}_{\mathbf{A}^{(0)}}\right\|^2\right]^{1/2} \notag \\
				&\overset{(i)}{\leq} \mathbb{E}_{\mathbf{A}^*}\left[ \left\| \Delta^m_{\mathcal{Q}_t} \Sigma^{1/2}\mathbf{W}\right\|^2\right]^{1/2} +  \sum_{j=1}^p \|A_j^*\| \, \mathbb{E}_{\mathbf{A}^*}\left[ \left\| \int_{u_m}^{u_{m+1}} D^{p-j} \mathbf{Y}_u\, du\right\|^2\right]^{1/2} \notag \\
				&\overset{(ii)}{\leq} \mathbb{E}_{\mathbf{A}^*}\left[ \left\| \Delta^m_{\mathcal{Q}_t} \Sigma^{1/2}\mathbf{W}\right\|^2\right]^{1/2} +  \sum_{j=1}^p \|A_j^*\| \, \left(\Delta^m _{\mathcal{Q}_t} \int_{u_m}^{u_{m+1}}  \mathbb{E}_{\mathbf{A}^*}\left[ \left\| D^{p-j} \mathbf{Y}_u\right\|^2\right]\, du \right)^{1/2} \notag\\
				&\overset{(iii)}{\lesssim} (\Delta^m_{\mathcal{Q}_t})^{1/2}, \label{eqn:bound_cont_mart_incr}
			\end{align}
			by using $(i)$ the representation \eqref{eqn:DY=W-A} and triangle inequality, $(ii)$ Jensen's inequality and $(iii)$ second moment of Brownian increments and stationarity and boundedness of second moments of $\mathbf{X}_t = (\mathbf{Y}^{\mathrm{T}}_t,\ldots, D^{p-1}\mathbf{Y}^{\mathrm{T}}_t)^{\mathrm{T}}$.
		\end{proof}
		
		\subsection{Proof of Theorem \ref{thm:cons_asymp_third_approx_infinite_activity}} \label{app:proof_thm_cons_asymp_third_approx_infinite_activity}
		
		\begin{proof}
			Recall that by Lemma \ref{lemma:approx} it suffices to show Equation \eqref{eqn:H_limit} with $\hat{\mathbf{A}}_{1,t} = \hat{\mathbf{A}}(\mathbb{Y}_{\mathcal{P}_t}, \mathbb{J}_{\mathcal{Q}_t}, \mathbb{M}_{\mathcal{Q}_t})$ and $\hat{\mathbf{A}}_{2,t} = \hat{\mathbf{A}}(\mathbb{Y}_{\mathcal{P}_t}; \mathcal{Q}_t, \boldsymbol{\nu}_t)$. We start by splitting the limiting random variable in \eqref{eqn:H_limit} into three components. For $i\in\{1,\ldots, d\}$ and $l\in\{0,\ldots, p-1\}$, after factoring out the $\Sigma^{-1}$ term, we decompose each $d$-batch entry as
			\begin{align*}
				t^{-1/2}&\Big(\sum_{m = 0}^{M_t-1}
				\hat{D}^{l}Y^{(i)}_{u_m} \Delta^m_{\mathcal{Q}_t,\boldsymbol{\nu}_t} \hat{D}^{p-1}\mathbf{Y}^c_{\mathbf{A}^{(0)}} - \sum_{m = 0}^{M_t-1}
				\hat{D}^{l}Y^{(i)}_{u_m} \Delta^m_{\mathcal{Q}_t} \hat{D}^{p-1}\mathbf{Y}^c_{\mathbf{A}^{(0)}} \Big) \\
				&= t^{-1/2}\left(\sum_{m = 0}^{M_t-1}
				\hat{D}^{l}Y^{(i)}_{u_m} \Sigma^{-1} \left( \Delta^m_{\mathcal{Q}_t,\boldsymbol{\nu}_t} \hat{D}^{p-1}\mathbf{Y}^c_{\mathbf{A}^{(0)}} - \Delta^m_{\mathcal{Q}_t} \hat{D}^{p-1}\mathbf{Y}^c_{\mathbf{A}^{(0)}} \right)\right) \\
				&= t^{-1/2}\left(\sum_{m = 0}^{M_t-1}
				\hat{D}^{l}Y^{(i)}_{u_m} \left( \left( \Delta_{\mathcal{Q}_t}^m \hat{D}^{p-1}\mathbf{Y} - \mathbf{b} \Delta_{\mathcal{Q}_t}^m \right) \odot \mathds{1}_{\left\{\left|\Delta_{\mathcal{Q}_t}^m \hat{D}^{p-1}\mathbf{Y} - \mathbf{b} \Delta_{\mathcal{Q}_t}^m\right| \leq \boldsymbol{\nu}^m_t\right\}} - \Delta^m_{\mathcal{Q}_t} \hat{D}^{p-1}\mathbf{Y}^c_{\mathbf{A}^{(0)}} \right)\right) \\
				&\overset{(*)}{\approx} t^{-1/2}\left(\sum_{m = 0}^{M_t-1}
				D^{l}Y^{(i)}_{u_m} \left( \left( \Delta_{\mathcal{Q}_t}^m D^{p-1}\mathbf{Y} - \mathbf{b} \Delta_{\mathcal{Q}_t}^m \right) \odot \mathds{1}_{\left\{\left|\Delta_{\mathcal{Q}_t}^m \hat{D}^{p-1}\mathbf{Y} - \mathbf{b} \Delta_{\mathcal{Q}_t}^m\right| \leq \boldsymbol{\nu}^m_t\right\}} - \Delta^m_{\mathcal{Q}_t} D^{p-1}\mathbf{Y}^c_{\mathbf{A}^{(0)}} \right)\right) \\
				&= t^{-1/2}\left(\sum_{m = 0}^{M_t-1}
				D^{l}Y^{(i)}_{u_m} \left( \Delta_{\mathcal{Q}_t}^m D^{p-1}\tilde{\mathbf{Y}} \odot \mathds{1}_{\left\{\left|\Delta_{\mathcal{Q}_t}^m \hat{D}^{p-1}\mathbf{Y} - \mathbf{b} \Delta_{\mathcal{Q}_t}^m\right| \leq \boldsymbol{\nu}^m_t\right\}} - \Delta^m_{\mathcal{Q}_t} D^{p-1}\mathbf{Y}^c_{\mathbf{A}^{(0)}} \right)\right) \\
				&\quad\quad\quad\quad\quad\quad\quad\quad\quad\quad\quad\quad\quad + t^{-1/2}\left(\sum_{m = 0}^{M_t-1}
				D^{l}Y^{(i)}_{u_m} \Delta_{\mathcal{Q}_t}^m \mathbf{M} \odot \mathds{1}_{\left\{\left|\Delta_{\mathcal{Q}_t}^m \hat{D}^{p-1}\mathbf{Y} - \mathbf{b} \Delta_{\mathcal{Q}_t}^m\right| \leq \boldsymbol{\nu}^m_t\right\}} \right) \\
				&= t^{-1/2}\left(\sum_{m = 0}^{M_t-1}
				D^{l}Y^{(i)}_{u_m} \left( \Delta_{\mathcal{Q}_t}^m D^{p-1}\tilde{\mathbf{Y}} \odot \mathds{1}_{\left\{\left|\Delta_{\mathcal{Q}_t}^m D^{p-1}\tilde{\mathbf{Y}}\right| \leq 2\boldsymbol{\nu}^m_t\right\}} - \Delta^m_{\mathcal{Q}_t} D^{p-1}\mathbf{Y}^c_{\mathbf{A}^{(0)}} \right)\right) \\
				& \quad\quad\quad + t^{-1/2}\left(\sum_{m = 0}^{M_t-1}
				D^{l}Y^{(i)}_{u_m} \Delta_{\mathcal{Q}_t}^m D^{p-1}\tilde{\mathbf{Y}} \odot \left(\mathds{1}_{\left\{\left|\Delta_{\mathcal{Q}_t}^m \hat{D}^{p-1}\mathbf{Y} - \mathbf{b} \Delta_{\mathcal{Q}_t}^m\right| \leq \boldsymbol{\nu}^m_t\right\}} - \mathds{1}_{\left\{\left|\Delta_{\mathcal{Q}_t}^m D^{p-1}\tilde{\mathbf{Y}}\right| \leq 2\boldsymbol{\nu}^m_t\right\}} \right)\right) \\
				&\quad\quad\quad\quad\quad\quad\quad\quad\quad\quad\quad\quad\quad\quad\quad\quad\quad + t^{-1/2}\left(\sum_{m = 0}^{M_t-1}
				D^{l}Y^{(i)}_{u_m} \Delta_{\mathcal{Q}_t}^m \mathbf{M} \odot \mathds{1}_{\left\{\left|\Delta_{\mathcal{Q}_t}^m \hat{D}^{p-1}\mathbf{Y} - \mathbf{b} \Delta_{\mathcal{Q}_t}^m\right| \leq \boldsymbol{\nu}^m_t\right\}} \right) \\
				& =: \mathbf{S}_t^1(i, l) + \mathbf{S}_t^2(i, l) + \mathbf{S}_t^3(i, l),
			\end{align*}
			setting $D^{p-1}\tilde{\mathbb{Y}} = \{D^{p-1}\tilde{\mathbf{Y}}_s,\ s\geq 0\}$ with 
			\[D^{p-1}\tilde{\mathbf{Y}}_s := D^{p-1}\mathbf{Y}_0 + D^{p-1}\mathbf{Y}^c_{\mathbf{A}^{(0)}, s} + \mathbf{J}_s = D^{p-1}\mathbf{Y}_s - \mathbf{b} s - \mathbf{M}_s,\]
			and understanding the approximation step $(*)$ in the $L^1(\Omega, \mathcal{F},\mathbb{P}_{\mathbf{A}^*})$ limit (and hence in the $\mathbb{P}_{\mathbf{A}^*}$ limit) as $t\rightarrow\infty$. The validity of the approximation $(*)$ can be shown as follows for $t\in\mathcal{T}$ sufficiently large:
			\begin{align*}
				t^{-1/2}&\mathbb{E}_{\mathbf{A}^*}\Bigg[\Bigg\|\sum_{m = 0}^{M_t-1}
				\hat{D}^{l}Y^{(i)}_{u_m} \left( \left( \Delta_{\mathcal{Q}_t}^m \hat{D}^{p-1}\mathbf{Y} - \mathbf{b} \Delta_{\mathcal{Q}_t}^m \right) \odot \mathds{1}_{\left\{\left|\Delta_{\mathcal{Q}_t}^m \hat{D}^{p-1}\mathbf{Y} - \mathbf{b} \Delta_{\mathcal{Q}_t}^m\right| \leq \boldsymbol{\nu}^m_t\right\}} - \Delta^m_{\mathcal{Q}_t} \hat{D}^{p-1}\mathbf{Y}^c_{\mathbf{A}^{(0)}} \right) \\
				&\quad\quad\quad\quad - \sum_{m = 0}^{M_t-1}
				D^{l}Y^{(i)}_{u_m} \left( \left( \Delta_{\mathcal{Q}_t}^m D^{p-1}\mathbf{Y} - \mathbf{b} \Delta_{\mathcal{Q}_t}^m \right) \odot \mathds{1}_{\left\{\left|\Delta_{\mathcal{Q}_t}^m \hat{D}^{p-1}\mathbf{Y} - \mathbf{b} \Delta_{\mathcal{Q}_t}^m\right| \leq \boldsymbol{\nu}^m_t\right\}} - \Delta^m_{\mathcal{Q}_t} D^{p-1}\mathbf{Y}^c_{\mathbf{A}^{(0)}} \right)
				\Bigg\|\Bigg] \\
				&\overset{(i)}{\leq} t^{-1/2}\sum_{m = 0}^{M_t-1}\Bigg\{ \mathbb{E}_{\mathbf{A}^*}\Bigg[\Bigg\|
				\hat{D}^{l}Y^{(i)}_{u_m} \left( \Delta_{\mathcal{Q}_t}^m \hat{D}^{p-1}\mathbf{Y} - \mathbf{b} \Delta_{\mathcal{Q}_t}^m \right) - D^{l}Y^{(i)}_{u_m} \left( \Delta_{\mathcal{Q}_t}^m D^{p-1}\mathbf{Y} - \mathbf{b} \Delta_{\mathcal{Q}_t}^m \right)
				\Bigg\|\Bigg]  \\
				&\quad\quad\quad\quad\quad\quad\quad\quad\quad\quad\quad\quad\quad\quad + \mathbb{E}_{\mathbf{A}^*}\Bigg[\Bigg\|
				\hat{D}^{l}Y^{(i)}_{u_m} \Delta^m_{\mathcal{Q}_t} \hat{D}^{p-1}\mathbf{Y}^c_{\mathbf{A}^{(0)}} - D^{l}Y^{(i)}_{u_m} \Delta^m_{\mathcal{Q}_t} D^{p-1}\mathbf{Y}^c_{\mathbf{A}^{(0)}} 
				\Bigg\|\Bigg]\Bigg\} \\
				&\overset{(ii)}{\leq} t^{-1/2}\sum_{m = 0}^{M_t-1}\Bigg\{ \mathbb{E}_{\mathbf{A}^*}\Bigg[\Bigg\|
				\hat{D}^{l}Y^{(i)}_{u_m} \left( \Delta_{\mathcal{Q}_t}^m \hat{D}^{p-1}\mathbf{Y} - \Delta_{\mathcal{Q}_t}^m D^{p-1}\mathbf{Y} \right)
				\Bigg\|\Bigg]  \\
				& \quad\quad\quad\quad\quad\quad\quad\quad\quad + \mathbb{E}_{\mathbf{A}^*}\Bigg[\Bigg\|
				\left(\hat{D}^{l}Y^{(i)}_{u_m}  - D^{l}Y^{(i)}_{u_m} \right)\left( \Delta_{\mathcal{Q}_t}^m D^{p-1}\mathbf{Y} - \mathbf{b} \Delta_{\mathcal{Q}_t}^m \right) \Bigg\|\Bigg] \\
				&\quad\quad\quad\quad\quad\quad\quad\quad\quad\quad\quad\quad + \mathbb{E}_{\mathbf{A}^*}\Bigg[\Bigg\|
				\hat{D}^{l}Y^{(i)}_{u_m} \left(\Delta^m_{\mathcal{Q}_t} \hat{D}^{p-1}\mathbf{Y}^c_{\mathbf{A}^{(0)}} -\Delta^m_{\mathcal{Q}_t} D^{p-1}\mathbf{Y}^c_{\mathbf{A}^{(0)}} \right)
				\Bigg\|\Bigg] \\
				&\quad\quad\quad\quad\quad\quad\quad\quad\quad\quad\quad\quad\quad\quad\quad\quad\quad + \mathbb{E}_{\mathbf{A}^*}\Bigg[\Bigg\|
				\left(\hat{D}^{l}Y^{(i)}_{u_m} - D^{l}Y^{(i)}_{u_m} \right) \Delta^m_{\mathcal{Q}_t} D^{p-1}\mathbf{Y}^c_{\mathbf{A}^{(0)}} 
				\Bigg\|\Bigg]\Bigg\} \\
				&\overset{(iii)}{\lesssim} t^{-1/2}\sum_{m = 0}^{M_t-1}\Bigg\{ \mathbb{E}_{\mathbf{A}^*}\Big[\Big\|
				\hat{D}^{l}Y^{(i)}_{u_m}\Big\|^2\Big]^{1/2} \mathbb{E}_{\mathbf{A}^*}\Big[\Big\| \Delta_{\mathcal{Q}_t}^m \hat{D}^{p-1}\mathbf{Y} - \Delta_{\mathcal{Q}_t}^m D^{p-1}\mathbf{Y}\Big\|^2\Big]^{1/2} \\
				& \quad\quad\quad\quad\quad\quad\quad\quad\quad + \mathbb{E}_{\mathbf{A}^*}\Big[\Big\|
				\hat{D}^{l}Y^{(i)}_{u_m}  - D^{l}Y^{(i)}_{u_m}\Big\|^2\Big]^{1/2} \mathbb{E}_{\mathbf{A}^*}\Big[\Big\| \Delta_{\mathcal{Q}_t}^m D^{p-1}\mathbf{Y} - \mathbf{b} \Delta_{\mathcal{Q}_t}^m \Big\|^2\Big]^{1/2} \\
				&\quad\quad\quad\quad\quad\quad\quad\quad\quad\quad\quad\quad + \mathbb{E}_{\mathbf{A}^*}\Big[\Big\|
				\hat{D}^{l}Y^{(i)}_{u_m} \Big\|^2\Big]^{1/2} \mathbb{E}_{\mathbf{A}^*}\Big[\Big\|\Delta^m_{\mathcal{Q}_t} \hat{D}^{p-1}\mathbf{Y}^c_{\mathbf{A}^{(0)}} -\Delta^m_{\mathcal{Q}_t} D^{p-1}\mathbf{Y}^c_{\mathbf{A}^{(0)}} \Big\|^2\Big]^{1/2}\\
				&\quad\quad\quad\quad\quad\quad\quad\quad\quad\quad\quad\quad\quad\quad\quad\quad\quad + \mathbb{E}_{\mathbf{A}^*}\Big[\Big\|
				\hat{D}^{l}Y^{(i)}_{u_m} - D^{l}Y^{(i)}_{u_m}\Big\|^2\Big]^{1/2} \mathbb{E}_{\mathbf{A}^*}\Big[\Big\| \Delta^m_{\mathcal{Q}_t} D^{p-1}\mathbf{Y}^c_{\mathbf{A}^{(0)}} 
				\Big\|^2\Big]^{1/2} \Bigg\} \\
				&\overset{(iv)}{\lesssim} t^{-1/2}\sum_{m = 0}^{M_t-1} \{\Delta_{\mathcal{P}_t}^{1/2} + \Delta_{\mathcal{P}_t}^{1/2} \Delta_{\mathcal{Q}_t}^{1/2} + \Delta_{\mathcal{P}_t}^{1/2} + \Delta_{\mathcal{P}_t}^{1/2} + \Delta_{\mathcal{P}_t}^{1/2}\Delta_{\mathcal{Q}_t}^{1/2} \} \\
				&\overset{(v)}{\lesssim} t^{-1/2} M_t \Delta_{\mathcal{P}_t}^{1/2} \lesssim (t^{-1}N'_t\Delta_{\mathcal{Q}_t})(t\Delta^{-2}_{\mathcal{Q}_t}\Delta_{\mathcal{P}_t})^{1/2} \rightarrow 0, \quad t\rightarrow\infty,
			\end{align*}
			where we use in $(i)$ triangle inequality and $\|\mathbf{x}\,\odot\, \mathds{1}_{\{\mathbf{y}\in A\}}\| \leq \|\mathbf{x}\|$ for $\mathbf{x},\mathbf{y}\in\mathbb{R}^d$, in $(ii)$ adding and subtracting the same quantity and applying the triangle inequality multiple times, in $(iii)$ Cauchy-Schwartz inequality on all four terms, in $(iv)$ stationarity and boundedness of second moments of $\mathbf{X}_t = (\mathbf{Y}^{\mathrm{T}}_t,\ldots, D^{p-1}\mathbf{Y}^{\mathrm{T}}_t)^{\mathrm{T}}$, Equation \eqref{eqn:incr_difference_bound}, Equation \eqref{eqn:D-D_L2_bound}, Equation \eqref{eqn:big_O_X2} and Equation \eqref{eqn:bound_cont_mart_incr}, and finally in $(v)$ Assumption \ref{ass:controlled_sampling} and Assumption \ref{ass:joint_mesh}. It now remains to prove $\mathbf{S}_t^1(i, l), \mathbf{S}_t^2(i, l), \mathbf{S}_t^3(i, l)\overset{\mathbb{P}_{\mathbf{A}^*}}{\rightarrow} 0$ as $t\rightarrow\infty$ for $i\in\{1,\ldots,d\}, l\in\{0,\ldots, p-1\}$. 
			
			To show $\mathbf{S}_t^1(i, l) \overset{\mathbb{P}_{\mathbf{A}^*}}{\rightarrow} 0$ as $t\rightarrow\infty$ note that $D^{p-1}\tilde{\mathbb{Y}}$ is a finite jump activity process with $\tilde{\mathbf{b}}=0$ and $\tilde{F}^{(i)}$ for $i\in\{1,\ldots,d\}$ satisfies Assumption \ref{ass:infinite_thresholding}$.(iii)$, i.e.\ a similar condition to Assumption \ref{ass:finite_thresholding}$.(iii)$. We can thus proceed with the same strategy as in the previous paragraph. We do not repeat the full argument but summarize the main steps. We consider the events 
			\begin{align*}
				\tilde{A}_t:= \bigcap_{m=0}^{M_t-1} \bigcap_{i=1}^{d} \tilde{A}_t^{m, i},\quad\mathrm{where } \tilde{A}_t^{m, i} := \left\{\omega\in\Omega : \mathds{1}_{\left\{\left|\Delta_{\mathcal{Q}_t}^m D^{p-1}\tilde{Y}^{(i)}\right| \leq 2\nu^{(i), m}_t \right\}}(\omega) = \mathds{1}_{\left\{ \Delta^m_{\mathcal{Q}_t}N^{(i)} = 0 \right\}}(\omega) \right\},
			\end{align*}
			for $t\in\mathcal{T}$, $m\in\{0,\ldots, M_t-1\}$ and $i\in\{1,\ldots,d\}$. We note $\tilde{A}_t$ has probability approaching one:
			\begin{align*}
				\mathbb{P}_{\mathbf{A}^*}\left(\tilde{A}_t^{c}\right) &\leq \sum_{i=1}^{d} \sum_{m=0}^{M_t-1} \bigg[ 2\mathbb{P}_{\mathbf{A}^*}\left( \left|\Delta_{\mathcal{Q}_t}^m D^{p-1}Y^{c, (i)}_{\mathbf{A}^{(0)}} \right| > 2 \nu^{(i), m}_t \right) + \tilde{F}^{(i)}\left(\left(-4\nu_t^{(i)}, 4\nu_t^{(i)}\right)\right) \lambda^{(i)} \Delta^m_{\mathcal{Q}_t} + \left(\lambda^{(i)} \Delta^m_{\mathcal{Q}_t}\right)^2  \bigg] \\
				&\lesssim \sum_{i=1}^{d}\left[t \Delta_{\mathcal{Q}_t}^{1-2\beta^{(i)}} + t \tilde{F}^{(i)}\left(\left(-4(\Delta^m_{\mathcal{Q}_t})^{\beta^{(i)}},\ 4(\Delta^m_{\mathcal{Q}_t})^{\beta^{(i)}}\right)\right) \right]\rightarrow 0, \quad t\rightarrow \infty,
			\end{align*}
			since
			\begin{align}
				\mathbb{P}_{\mathbf{A}^*}&\left( \left|\Delta_{\mathcal{Q}_t}^m D^{p-1}Y^{c, (i)}_{\mathbf{A}^{(0)}} \right| > 2\nu^{(i), m}_t \right) \notag \\
				&\leq \mathbb{P}_{\mathbf{A}^*}\Bigg( \Big|\Delta_{\mathcal{Q}_t}^m(\Sigma^{1/2} \mathbf{W})^{(i)}\Big| > \nu^{(i), m}_t \Bigg) + \mathbb{P}_{\mathbf{A}^*}\Bigg( \Big| \int_{u_m}^{u_{m+1}} \sum_{j=1}^p (A^*_j D^{p-j}\mathbf{Y}_u)^{(i)} \,du \Big| > \nu^{(i), m}_t \Bigg) \notag \\
				&\lesssim  \left(\Delta^m_{\mathcal{Q}_t}\right)^{2-2\beta^{(i)}}, \text{ for $t\in\mathcal{T}$ sufficiently large.} \label{eqn:P(mart>2nu)}
			\end{align}
			Moreover, on $\tilde{A}_t$, the term $\mathbf{S}^1_t(i, l)$ vanishes in $L^1(\Omega, \mathcal{F}, \mathbb{P}_{\mathbf{A}^*})$
			\begin{align*}
				t^{-1/2}&\mathbb{E}_{\mathbf{A}^*}\left[\left\| \left(\sum_{m = 0}^{M_t-1}
				D^{l}Y^{(i)}_{u_m} \Delta_{\mathcal{Q}_t}^m D^{p-1}\tilde{\mathbf{Y}} \odot \mathds{1}_{\left\{\left|\Delta_{\mathcal{Q}_t}^m D^{p-1}\tilde{\mathbf{Y}}\right| \leq 2\boldsymbol{\nu}^m_t\right\}} - \sum_{m = 0}^{M_t-1}
				D^{l}Y^{(i)}_{u_m} \Delta^m_{\mathcal{Q}_t} D^{p-1}\mathbf{Y}^c_{\mathbf{A}^{(0)}} \right)\mathds{1}_{\tilde{A}_t}\right\|\right] \\
				&\lesssim t^{1/2} \Delta^{1/2}_{\mathcal{Q}_t} \rightarrow 0,\quad t\rightarrow\infty,
			\end{align*}
			thus $\mathbf{S}_t^1(i, l) \overset{\mathbb{P}_{\mathbf{A}^*}}{\rightarrow} 0$ as $t\rightarrow\infty$.
			
			Next, to show $\mathbf{S}_t^2(i, l) \overset{\mathbb{P}_{\mathbf{A}^*}}{\rightarrow} 0$ as $t\rightarrow\infty$ let us write
			\begin{align*}
				\mathbf{S}_t^2(i, l) &= t^{-1/2}\left(\sum_{m = 0}^{M_t-1}
				D^{l}Y^{(i)}_{u_m} \Delta_{\mathcal{Q}_t}^m D^{p-1}\tilde{\mathbf{Y}} \odot \left(\mathds{1}_{\left\{\left|\Delta_{\mathcal{Q}_t}^m \hat{D}^{p-1}{\mathbf{Y}} - \mathbf{b} \Delta_{\mathcal{Q}_t}^m\right| \leq \boldsymbol{\nu}^m_t\right\}} - \mathds{1}_{\left\{\left|\Delta_{\mathcal{Q}_t}^m D^{p-1}\tilde{\mathbf{Y}}\right| \leq 2\boldsymbol{\nu}^m_t\right\}} \right)\right) \\
				&= t^{-1/2}\left(\sum_{m = 0}^{M_t-1}
				D^{l}Y^{(i)}_{u_m} \Delta_{\mathcal{Q}_t}^m D^{p-1}\tilde{\mathbf{Y}}\odot \mathds{1}_{\left\{\left|\Delta_{\mathcal{Q}_t}^m \hat{D}^{p-1}\mathbf{Y} - \mathbf{b} \Delta_{\mathcal{Q}_t}^m\right| \leq \boldsymbol{\nu}^m_t, \left|\Delta_{\mathcal{Q}_t}^m D^{p-1}\tilde{\mathbf{Y}}\right| > 2\boldsymbol{\nu}^m_t \right\}}\right) \\
				&\quad\quad\quad- t^{-1/2}\left(\sum_{m = 0}^{M_t-1}
				D^{l}Y^{(i)}_{u_m} \Delta_{\mathcal{Q}_t}^m D^{p-1}\tilde{\mathbf{Y}} \odot \mathds{1}_{\left\{\left|\Delta_{\mathcal{Q}_t}^m \hat{D}^{p-1}\mathbf{Y} - \mathbf{b} \Delta_{\mathcal{Q}_t}^m\right| > \boldsymbol{\nu}^m_t, \left|\Delta_{\mathcal{Q}_t}^m D^{p-1}\tilde{\mathbf{Y}}\right| \leq 2\boldsymbol{\nu}^m_t \right\}}\right) \\
				& =: \mathbf{S}_t^{2, 1}(i, l) - \mathbf{S}_t^{2, 2}(i, l).
			\end{align*}
			We have $\mathbf{S}_t^{2, 1}(i, l) \overset{\mathbb{P}_{\mathbf{A}^*}}{\rightarrow} 0$ as $t\rightarrow\infty$ since
			\begin{align*}
				\mathbb{P}_{\mathbf{A}^*}\left(\|\mathbf{S}_t^{2, 1}(i, l)\|>0\right) &\overset{(i)}{\leq} \mathbb{P}_{\mathbf{A}^*}\left(\bigcup_{m=0}^{M_t-1}\bigcup_{j=1}^{d} {\left\{\left|\Delta_{\mathcal{Q}_t}^m \hat{D}^{p-1}{Y}^{(j)} - b^{(j)} \Delta_{\mathcal{Q}_t}^m\right| \leq \nu^{(j), m}_t, \left|\Delta_{\mathcal{Q}_t}^m D^{p-1}\tilde{Y}^{(j)}\right| > 2\nu^{(j), m}_t \right\}} \right) \\
				&\overset{(ii)}{\leq} \sum_{m=0}^{M_t-1} \sum_{j=1}^d \mathbb{P}_{\mathbf{A}^*}\left( \left|\Delta_{\mathcal{Q}_t}^m \hat{D}^{p-1}{Y}^{(j)} - b^{(j)} \Delta_{\mathcal{Q}_t}^m\right| \leq \nu^{(j), m}_t, \left|\Delta_{\mathcal{Q}_t}^m D^{p-1}\tilde{Y}^{(j)}\right| > 2\nu^{(j), m}_t \right) \\
				&\overset{(iii)}{\leq} \sum_{m=0}^{M_t-1} \sum_{j=1}^d\Bigg\{
				\mathbb{P}_{\mathbf{A}^*}\left( \left|\Delta_{\mathcal{Q}_t}^m \hat{D}^{p-1}{Y}^{(j)} - \Delta_{\mathcal{Q}_t}^m D^{p-1}{Y}^{(j)} \right| > \frac{\nu^{(j), m}_t}{2} \right)\\
				&\quad\quad\quad\quad + \mathbb{P}_{\mathbf{A}^*}\left( \left|\Delta_{\mathcal{Q}_t}^m D^{p-1}{Y}^{(j)} - b^{(j)} \Delta_{\mathcal{Q}_t}^m\right| \leq \frac{3\nu^{(j), m}_t}{2}, \left|\Delta_{\mathcal{Q}_t}^m D^{p-1}\tilde{Y}^{(j)}\right| > 2\nu^{(j), m}_t \right)\Bigg\} \\
				&\overset{(iv)}{\lesssim} \sum_{m=0}^{M_t-1} \sum_{j=1}^d\Bigg\{
				\Delta_{\mathcal{P}_t} \left(\nu^{(j), m}_t\right)^{-2} + \mathbb{P}_{\mathbf{A}^*}\left( \left|\Delta_{\mathcal{Q}_t}^m M^{(j)}\right| > \frac{\nu^{(j), m}_t}{2}, \Delta_{\mathcal{Q}_t}^m N^{(j)} > 0 \right) \\
				&\quad\quad\quad\quad\quad\quad\quad\quad\quad\quad\quad\quad\quad +  \mathbb{P}_{\mathbf{A}^*}\left( \left|\Delta_{\mathcal{Q}_t}^m D^{p-1}\tilde{Y}^{(j)}\right| > 2\nu^{(j), m}_t, \Delta_{\mathcal{Q}_t}^m J^{(j)}  = 0\right) \Bigg\} \\
				&\overset{(v)}{\lesssim} \sum_{m=0}^{M_t-1} \sum_{j=1}^d\Bigg\{
				\Delta_{\mathcal{P}_t} \left(\nu^{(j), m}_t\right)^{-2} + \mathbb{P}_{\mathbf{A}^*}\left( \left|\Delta_{\mathcal{Q}_t}^m M^{(j)}\right| > \frac{\nu^{(j), m}_t}{2}\right) \mathbb{P}_{\mathbf{A}^*}\left( \Delta_{\mathcal{Q}_t}^m N^{(j)} > 0 \right) \\
				&\quad\quad\quad\quad\quad\quad\quad\quad\quad\quad\quad\quad\quad\quad\quad\quad\quad\quad\quad +  \mathbb{P}_{\mathbf{A}^*}\Bigg( \left|\Delta_{\mathcal{Q}_t}^m D^{p-1}Y^{c, (j)}_{\mathbf{A}^{(0)}} \right| > 2\nu^{(j), m}_t \Bigg) \Bigg\} \\
				&\overset{(vi)}{\lesssim} \sum_{m=0}^{M_t-1} \sum_{j=1}^d\Bigg\{
				\Delta_{\mathcal{P}_t} \left(\nu^{(j), m}_t\right)^{-2} + \Delta^m_{\mathcal{Q}_t} \left(\nu^{(j), m}_t\right)^{-2} \Delta^m_{\mathcal{Q}_t} + \left(\Delta^m_{\mathcal{Q}_t}\right)^{2}\left(\nu^{(j), m}_t\right)^{-2} \Bigg\} \\
				&\overset{(vii)}{\lesssim} \sum_{j=1}^d\Bigg\{ 
				c^{-2\beta^{(j)}}_{\mathcal{Q}_t} M_t \Delta_{\mathcal{P}_t} \Delta^{-2\beta^{(j)}}_{\mathcal{Q}_t} + t \Delta^{1-2\beta^{(j)}}_{\mathcal{Q}_t} \Bigg\} \\
				&\overset{(viii)}{\lesssim} \sum_{j=1}^d\Bigg\{
				c^{-2\beta^{(j)}}_{\mathcal{Q}_t}(t^{-1}N'_t \Delta_{\mathcal{Q}_t}) (t\Delta_{\mathcal{Q}_t}^{-2}\Delta_{\mathcal{P}_t}) \Delta^{1-2\beta^{(j)}}_{\mathcal{Q}_t} + t \Delta^{1-2\beta^{(j)}}_{\mathcal{Q}_t} \Bigg\} \rightarrow 0, \quad t\rightarrow\infty,
			\end{align*}
			by using in $(i)$ the fact that if the norm is non-zero then at least one component of one of the summands is non-zero and thus the corresponding indicator is non-zero, in $(ii)$ sub-additivity of probability, in $(iii)$ theorem of total probability and $|\hat{y}-y|\leq \nu /2, |\hat{y}|\leq \nu \implies |y|\leq 3/2\nu$ for $y,\hat{y},\nu \in\mathbb{R}$, in $(iv)$ Markov's inequality and Equation \eqref{eqn:incr_difference_bound} on the first term, theorem of total probability and $|y|\leq 3/2\nu, |y-m|> 2\nu \implies |m|>\nu/2$ for $y,m,\nu\in\mathbb{R}$ on the second term, in $(v)$ independence of $\mathbb{J}$ and $\mathbb{M}$ on the second term, in $(vi)$ Markov's inequality,
			\begin{align}
				\mathbb{E}_{\mathbf{A}^*}\left[\|\Delta_{\mathcal{Q}_t}^m \mathbf{M}\|^2\right], \mathbb{E}_{\mathbf{A}^*}\left[\|\Delta_{\mathcal{Q}_t}^m \mathbf{J}\|^2\right] \lesssim \Delta^m_{Q_t} \label{eqn:bound_M2,J2}
			\end{align}
			since $\mathbb{M}$ and $\mathbb{J}$ are \Levy processes with bounded second moments and $\Delta_{\mathcal{Q}_t}^m N^{(j)} \overset{\mathbb{P}_{\mathbf{A}^*}}{\sim}\mathrm{Possion}(\lambda^{(j)} \Delta_{\mathcal{Q}_t}^m)$ on the second term, Equation \eqref{eqn:P(mart>2nu)} on the third term, in $(vii)$ Assumption \ref{ass:infinite_thresholding}.$(i)$ and $\Delta^m_{\mathcal{Q}_t} \geq c_{\mathcal{Q}_t} \Delta_{\mathcal{Q}_t}$, in $(viii)$ Assumption \ref{ass:controlled_sampling}, Assumption \ref{ass:joint_mesh} and Assumption \ref{ass:infinite_thresholding}$.(ii)$.
			
			To show $\mathbf{S}_t^{2, 2}(i, l) \overset{\mathbb{P}_{\mathbf{A}^*}}{\rightarrow} 0$ as $t\rightarrow\infty$ we write using Equation \eqref{eqn:DY=W-A}
			\[D^{p-1}\tilde{\mathbf{Y}}_s - D^{p-1}\tilde{\mathbf{Y}}_0 = D^{p-1}\mathbf{Y}^c_{\mathbf{A}^{(0)}, s} + \mathbf{J}_s = \Sigma^{1/2} \mathbf{W}_s - \sum_{j=1}^p\int_0^s A^*_j D^{p-j}\mathbf{Y}_u\ du + \mathbf{J}_s, \]
			and we can hence further decompose $\mathbf{S}_t^{2, 2}(i, l)$ as
			\begin{align*}
				\mathbf{S}_t^{2, 2}&(i, l) = t^{-1/2}\left(\sum_{m = 0}^{M_t-1}
				D^{l}Y^{(i)}_{u_m} \Delta_{\mathcal{Q}_t}^m D^{p-1}\tilde{\mathbf{Y}} \odot \mathds{1}_{\left\{\left|\Delta_{\mathcal{Q}_t}^m \hat{D}^{p-1}\mathbf{Y} - \mathbf{b} \Delta_{\mathcal{Q}_t}^m\right| > \boldsymbol{\nu}^m_t, \left|\Delta_{\mathcal{Q}_t}^m D^{p-1}\tilde{\mathbf{Y}}\right| \leq 2\boldsymbol{\nu}^m_t \right\}}\right) \\
				& = t^{-1/2}\left(\sum_{m = 0}^{M_t-1}
				D^{l}Y^{(i)}_{u_m} \Delta_{\mathcal{Q}_t}^m D^{p-1}\mathbf{Y}^c_{\mathbf{A}^{(0)}} \odot \mathds{1}_{\left\{\left|\Delta_{\mathcal{Q}_t}^m \hat{D}^{p-1}\mathbf{Y} - \mathbf{b} \Delta_{\mathcal{Q}_t}^m\right| > \boldsymbol{\nu}^m_t, \left|\Delta_{\mathcal{Q}_t}^m D^{p-1}\tilde{\mathbf{Y}}\right| \leq 2\boldsymbol{\nu}^m_t \right\}}\right) \\
				&\quad\quad\quad\quad\quad + t^{-1/2}\left(\sum_{m = 0}^{M_t-1}
				D^{l}Y^{(i)}_{u_m} \Delta_{\mathcal{Q}_t}^m \mathbf{J} \odot \mathds{1}_{\left\{\left|\Delta_{\mathcal{Q}_t}^m \hat{D}^{p-1}\mathbf{Y} - \mathbf{b} \Delta_{\mathcal{Q}_t}^m\right| > \boldsymbol{\nu}^m_t, \left|\Delta_{\mathcal{Q}_t}^m D^{p-1}\tilde{\mathbf{Y}}\right| \leq 2\boldsymbol{\nu}^m_t \right\}}\right) \\
				& =: \mathbf{S}_t^{2, 2, 1}(i, l) + \mathbf{S}_t^{2, 2, 2}(i, l).
			\end{align*}
			We start by showing $\mathbf{S}_t^{2, 2, 1}(i, l)\overset{\mathbb{P}_{\mathbf{A}^*}}{\rightarrow} 0$ as $t\rightarrow\infty$. We work in $L^1(\Omega, \mathcal{F}, \mathbb{P}_{\mathbf{A}^*})$:
			\begin{align*}
				&\mathbb{E}_{\mathbf{A}^*}\left[\|\mathbf{S}_t^{2, 2, 1}(i, l)\|\right]
				\\
				&\overset{(i)}{\leq}  t^{-1/2} \sum_{m = 0}^{M_t-1} \sum_{j=1}^d \mathbb{E}_{\mathbf{A}^*}\left[\left|
				D^{l}Y^{(i)}_{u_m} \Delta_{\mathcal{Q}_t}^m D^{p-1}Y^{c,(j)}_{\mathbf{A}^{(0)}} \right| \mathds{1}_{\left\{\left|\Delta_{\mathcal{Q}_t}^m \hat{D}^{p-1}Y^{(j)} - b^{(j)} \Delta_{\mathcal{Q}_t}^m\right| > \nu^{(j), m}_t \right\}}\right]\\
				&\overset{(ii)}{\leq}  t^{-1/2} \sum_{m = 0}^{M_t-1} \sum_{j=1}^d \mathbb{E}_{\mathbf{A}^*}\bigg[\left|
				D^{l}Y^{(i)}_{u_m} \Delta_{\mathcal{Q}_t}^m D^{p-1}Y^{c,(j)}_{\mathbf{A}^{(0)}} \right| \bigg(\mathds{1}_{\left\{\left|\Delta_{\mathcal{Q}_t}^m \hat{D}^{p-1}Y^{(j)} - \Delta_{\mathcal{Q}_t}^m D^{p-1}Y^{(j)} \right| > \nu^{(j), m}_t/2\right\}} \\
				&\quad\quad\quad\quad\quad\quad\quad\quad\quad\quad\quad\quad\quad\quad\quad\quad\quad\quad\quad\quad\quad\quad\quad\quad\quad + \mathds{1}_{\left\{\left|\Delta_{\mathcal{Q}_t}^m D^{p-1}Y^{(j)} - b^{(j)} \Delta_{\mathcal{Q}_t}^m\right| > \nu^{(j), m}_t/2 \right\}}\bigg)\bigg]\\
				&\overset{(iii)}{\leq}  t^{-1/2} \sum_{m = 0}^{M_t-1} \sum_{j=1}^d \mathbb{E}_{\mathbf{A}^*}\bigg[\left|
				D^{l}Y^{(i)}_{u_m} \Delta_{\mathcal{Q}_t}^m D^{p-1}Y^{c,(j)}_{\mathbf{A}^{(0)}} \right| \bigg(\mathds{1}_{\left\{\left|\Delta_{\mathcal{Q}_t}^m \hat{D}^{p-1}Y^{(j)} - \Delta_{\mathcal{Q}_t}^m D^{p-1}Y^{(j)} \right| > \nu^{(j), m}_t/2\right\}} \\
				&\quad\quad\quad\quad\quad\quad\quad\quad\quad\quad\quad\quad\quad\quad\quad\quad\quad\quad + \mathds{1}_{\left\{\left|\Delta_{\mathcal{Q}_t}^m D^{p-1}Y^{c, (j)}_{\mathbf{A}^{(0)}} + \Delta_{\mathcal{Q}_t}^m M^{(j)} \right| > \nu^{(j), m}_t/2\right\}} + \mathds{1}_{\left\{\Delta_{\mathcal{Q}_t}^m N^{(j)} > 0 \right\}} \bigg)\bigg]\\
				&\overset{(iv)}{\leq}  t^{-1/2} \sum_{m = 0}^{M_t-1} \sum_{j=1}^d \mathbb{E}_{\mathbf{A}^*}\bigg[\left|
				D^{l}Y^{(i)}_{u_m} \Delta_{\mathcal{Q}_t}^m D^{p-1}Y^{c,(j)}_{\mathbf{A}^{(0)}} \right| \bigg(\mathds{1}_{\left\{\left|\Delta_{\mathcal{Q}_t}^m \hat{D}^{p-1}Y^{(j)} - \Delta_{\mathcal{Q}_t}^m D^{p-1}Y^{(j)} \right| > \nu^{(j), m}_t/2\right\}} \\
				&\quad\quad\quad\quad\quad\quad\quad\quad\quad\quad\quad\quad + \mathds{1}_{\left\{\left|\Delta_{\mathcal{Q}_t}^m D^{p-1}Y^{c, (j)}_{\mathbf{A}^{(0)}} \right| > \nu^{(j), m}_t/4\right\}} + \mathds{1}_{\left\{\left| \Delta_{\mathcal{Q}_t}^m M^{(j)} \right| > \nu^{(j), m}_t/4\right\}} + \mathds{1}_{\left\{\Delta_{\mathcal{Q}_t}^m N^{(j)} > 0 \right\}} \bigg)\bigg]\\
				&\overset{(v)}{\lesssim} t^{-1/2} \sum_{j=1}^d \sum_{m=0}^{M_t-1} \Bigg[\mathbb{E}_{\mathbf{A}^*} \Big[ \Big|D^{l}Y^{(i)}_{u_m}\Delta_{\mathcal{Q}_t}^m (\Sigma^{1/2}\mathbf{W})^{(j)}\Big|^2\Big]^{1/2} + \sum_{k=1}^{p} \mathbb{E}_{\mathbf{A}^*} \left[ \left| D^{l}Y^{(i)}_{u_m} \int_{u_m}^{u_{m+1}} (A_k^*D^{p-k}\mathbf{Y}_u)^{(j)}\, du \right|^2\right]^{1/2}\Bigg] \\
				&\quad \times \Bigg[ \mathbb{P}_{\mathbf{A}^*} \left(\left|\Delta_{\mathcal{Q}_t}^m \hat{D}^{p-1}Y^{(j)} - \Delta_{\mathcal{Q}_t}^m D^{p-1}Y^{(j)} \right| > \nu^{(j), m}_t/2\right)^{1/2} + \mathbb{P}_{\mathbf{A}^*} \left(\left|\Delta_{\mathcal{Q}_t}^m D^{p-1}Y^{c, (j)}_{\mathbf{A}^{(0)}} \right| > \nu^{(j), m}_t/4\right)^{1/2} \\
				&\quad\quad\quad\quad\quad\quad\quad\quad\quad\quad\quad\quad\quad\quad\quad\quad\quad\quad\quad + \mathbb{P}_{\mathbf{A}^*} \left(\left|\Delta_{\mathcal{Q}_t}^m M^{(j)} \right| > \nu^{(j), m}_t/4 \right) + \mathbb{P}_{\mathbf{A}^*} \left(\Delta_{\mathcal{Q}_t}^m N^{(j)} > 0 \right)\Bigg]\\
				&\overset{(vi)}{\lesssim} t^{-1/2} \sum_{j=1}^d \sum_{m=0}^{M_t-1} \Bigg[\mathbb{E}_{\mathbf{A}^*} \left[ \left|D^{l}Y^{(i)}_{u_m}\right|^2\right]^{1/2} \mathbb{E}_{\mathbf{A}^*} \left[\left| \Delta_{\mathcal{Q}_t}^m(\Sigma^{1/2}\mathbf{W})^{(j)}\right|^2\right]^{1/2} \\
				&\quad\quad\quad\quad\quad\quad\quad\quad\quad\quad\quad + \sum_{k=1}^{p} \left( \Delta_{\mathcal{Q}_t}^m \int_{u_m}^{u_{m+1}} \mathbb{E}_{\mathbf{A}^*} \left[ \left| D^{l}Y^{(i)}_{u_m} (A_k^*D^{p-k}\mathbf{Y}_u)^{(j)} \right|^2\right]\, du \right)^{1/2}\Bigg] \\
				&\quad\quad\quad\quad\quad\quad\quad\quad\quad\quad\quad\quad\quad\quad\quad\quad\quad \times \Big[\Delta_{\mathcal{P}_t}^{1/2}\left(\nu^{(j), m}_t\right)^{-1} + \left(\Delta^m_{\mathcal{Q}_t}\right)^{1 - \beta^{(j)}} + \Delta^m_{\mathcal{Q}_t}\left(\nu^{(j), m}_t\right)^{-2} + \Delta^m_{\mathcal{Q}_t}\Big]\\
				&\overset{(vii)}{\lesssim} \sum_{j=1}^d t^{-1/2} M_t \left(\Delta^m_{\mathcal{Q}_t}\right)^{1/2}\Big[\Delta_{\mathcal{P}_t}^{1/2}\left(\Delta^m_{\mathcal{Q}_t}\right)^{-\beta^{(j)}} + \left(\Delta^m_{\mathcal{Q}_t}\right)^{1 - \beta^{(j)}} +\left(\Delta^m_{\mathcal{Q}_t}\right)^{1 -2\beta^{(j)}} + \Delta^m_{\mathcal{Q}_t}\Big] \\
				&\overset{(viii)}{\lesssim} \sum_{j=1}^d t^{-1/2} M_t \left[ \Delta_{\mathcal{P}_t}^{1/2} \Delta_{\mathcal{Q}_t}^{1/2-\beta^{(j)}} +  \Delta_{\mathcal{Q}_t}^{3/2-2\beta^{(j)}}\right] \\
				&\overset{(ix)}{\lesssim} \sum_{j=1}^d  \left[ (t^{-1} N'_t \Delta_{\mathcal{Q}_t}) (t\Delta^{-2}_{\mathcal{Q}_t}\Delta_{\mathcal{P}_t})^{1/2} \Delta_{\mathcal{Q}_t}^{1/2-\beta^{(j)}} + (t^{-1} N'_t \Delta_{\mathcal{Q}_t}) \left(t\Delta_{\mathcal{Q}_t}^{1-4\beta^{(j)}}\right)^{1/2} \right] \rightarrow 0, \quad t\rightarrow\infty,
			\end{align*}
			where in $(i)$ we use the triangle inequality, in $(ii)$ $|\hat{y}|>\nu\implies |\hat{y}- y|>\nu/2 \vee |y|>\nu/2$ for $y, \hat{y},\nu\in\mathbb{R}$ and $\mathds{1}_{A}\leq \mathds{1}_{B\cup C}\leq \mathds{1}_{B} + \mathds{1}_{C}$ for sets $A,B,C\subset \Omega$ such that $A\subset B\cup C$, in $(iii)$ $1 = \mathds{1}_{A} + \mathds{1}_{A^c}$ and $\mathds{1}_{A\cap B}\leq \mathds{1}_{A}$ for sets $A,B\subset \Omega$, in $(iv)$ $|y + m|>\nu/2\implies |y|>\nu/4 \vee |m|>\nu/4$ for $y, m,\nu\in\mathbb{R}$ $\mathds{1}_{A}\leq \mathds{1}_{B\cup C}\leq \mathds{1}_{B} + \mathds{1}_{C}$ for sets $A,B,C\subset \Omega$ such that $A\subset B\cup C$, in $(v)$ Cauchy-Schwartz inequality and triangle inequality in the first two terms, independence of $D^l\mathbf{Y}_{u_m}, \Delta_{\mathcal{Q}_t}^m D^{p-1}\mathbf{Y}^c_{\mathbf{A}^{(0)}}$ and $\Delta_{\mathcal{Q}_t}^m \mathbf{M}, \Delta_{\mathcal{Q}_t}^m \mathbf{J}$ in the last two terms, in $(vi)$ independence of $D^l\mathbf{Y}_{u_m}$ and $\Delta_{\mathcal{Q}_t}^m\mathbf{W}$ on the first term, Jensen's inequality on the second term, Markov's inequality and Equation \eqref{eqn:incr_difference_bound} on the third term, a slight modification of Equation \eqref{eqn:P(mart>2nu)} on the fourth term, Markov inequality and Equation \eqref{eqn:bound_M2,J2} on the fifth term, $\Delta_{\mathcal{Q}_t}^m N^{(j)} \overset{\mathbb{P}_{\mathbf{A}^*}}{\sim}\mathrm{Possion}(\lambda^{(j)} \Delta_{\mathcal{Q}_t}^m)$ on the sixth term, in $(vii)$ boundedness of fourth moments of $\mathbf{X}_t = (\mathbf{Y}^{\mathrm{T}}_t,\ldots, D^{p-1}\mathbf{Y}^{\mathrm{T}}_t)^{\mathrm{T}}$, standard deviation of Brownian increments and Assumption \ref{ass:infinite_thresholding}.$(i)$, in $(viii)$ $\Delta^m_{\mathcal{Q}_t} \leq \Delta_{\mathcal{Q}_t}$, in $(ix)$ Assumption \ref{ass:controlled_sampling}, Assumption \ref{ass:joint_mesh} and Assumption \ref{ass:infinite_thresholding}.$(ii)$. 
			
			To show $\mathbf{S}_t^{2, 2, 2}(i, l)\overset{\mathbb{P}_{\mathbf{A}^*}}{\rightarrow} 0$ as $t\rightarrow\infty$ we note
			\begin{align*}				\mathbb{P}_{\mathbf{A}^*}&\left(\|\mathbf{S}_t^{2,2,2}(i, l)\|>0\right) \\
				&\overset{(i)}{\leq} \mathbb{P}_{\mathbf{A}^*}\left(\bigcup_{m=0}^{M_t-1}\bigcup_{j=1}^{d} {\left\{\Delta_{\mathcal{Q}_t}^m J^{(j)} \neq 0, \left|\Delta_{\mathcal{Q}_t}^m D^{p-1}\tilde{Y}^{(j)}\right| \leq 2\nu^{(j), m}_t \right\}} \right) \\
				&\overset{(ii)}{\leq} \sum_{m=0}^{M_t-1}\sum_{j=1}^d \mathbb{P}_{\mathbf{A}^*}\left(\Delta_{\mathcal{Q}_t}^m J^{(j)} \neq 0, \left|\Delta_{\mathcal{Q}_t}^m D^{p-1}\tilde{Y}^{(j)}\right| \leq 2\nu^{(j), m}_t\right) \\
				&\overset{(iii)}{\leq} \sum_{m=0}^{M_t-1}\sum_{j=1}^d \left[ \mathbb{P}_{\mathbf{A}^*}\left(\Delta_{\mathcal{Q}_t}^m N^{(j)} = 1, \left|\Delta_{\mathcal{Q}_t}^m D^{p-1}\tilde{Y}^{(j)}\right| \leq 2\nu^{(j), m}_t\right) + \mathbb{P}_{\mathbf{A}^*}\left(\Delta_{\mathcal{Q}_t}^m N^{(j)} > 1\right)\right] \\
				&\overset{(iv)}{\lesssim} \sum_{m=0}^{M_t-1}\sum_{j=1}^d \left[ \mathbb{P}_{\mathbf{A}^*}\left(\Delta_{\mathcal{Q}_t}^m N^{(j)} = 1, \left|\Delta_{\mathcal{Q}_t}^m D^{p-1}\tilde{Y}^{c, (j)}_{\mathbf{A}^{(0)}} + \Delta_{\mathcal{Q}_t}^m J^{(j)} \right| \leq 2\nu^{(j), m}_t\right) + (\Delta_{\mathcal{Q}_t}^m)^2	\right] \\
				&\overset{(v)}{\lesssim} \sum_{m=0}^{M_t-1}\sum_{j=1}^d \bigg[ \mathbb{P}_{\mathbf{A}^*}\left(\left|\Delta_{\mathcal{Q}_t}^m D^{p-1}\tilde{Y}^{c, (j)}_{\mathbf{A}^{(0)}} + \Delta_{\mathcal{Q}_t}^m J^{(j)} \right| \leq 2\nu^{(j), m}_t, \left|\Delta_{\mathcal{Q}_t}^m J^{(j)} \right| \geq 4\nu^{(j), m}_t \right) \\
				&\quad\quad\quad\quad\quad\quad\quad\quad\quad\quad\quad\quad\quad\quad + \mathbb{P}_{\mathbf{A}^*}\left(\Delta_{\mathcal{Q}_t}^m N^{(j)} = 1, \left|\Delta_{\mathcal{Q}_t}^m J^{(j)} \right| < 4\nu^{(j), m}_t \right) + (\Delta_{\mathcal{Q}_t}^m)^2	\bigg] \\
				&\overset{(vi)}{\lesssim} \sum_{m=0}^{M_t-1}\sum_{j=1}^d \bigg[ \mathbb{P}_{\mathbf{A}^*}\left(\left|\Delta_{\mathcal{Q}_t}^m D^{p-1}\tilde{Y}^{c, (j)}_{\mathbf{A}^{(0)}}  \right| > 2\nu^{(j), m}_t \right) + \Delta_{\mathcal{Q}_t}^m \tilde{F}^{(j)}\left(\left(-4\Delta_{\mathcal{Q}_t}^{\beta^{(j)}}, 4\Delta_{\mathcal{Q}_t}^{\beta^{(j)}}\right)\right) + (\Delta_{\mathcal{Q}_t}^m)^2 \bigg] \\
				&\overset{(vii)}{\lesssim} \sum_{m=0}^{M_t-1}\sum_{j=1}^d \bigg[ (\Delta_{\mathcal{Q}_t}^m)^{2-2\beta^{(j)}} + \Delta_{\mathcal{Q}_t}^m \tilde{F}^{(j)}\left(\left(-4\Delta_{\mathcal{Q}_t}^{\beta^{(j)}}, 4\Delta_{\mathcal{Q}_t}^{\beta^{(j)}}\right)\right) + (\Delta_{\mathcal{Q}_t}^m)^2 \bigg] \\
				&\overset{(viii)}{\lesssim} \sum_{j=1}^d \bigg[ t\Delta_{\mathcal{Q}_t}^{1-2\beta^{(j)}} + t \tilde{F}^{(j)}\left(\left(-4\Delta_{\mathcal{Q}_t}^{\beta^{(j)}}, 4\Delta_{\mathcal{Q}_t}^{\beta^{(j)}}\right)\right) \bigg] \rightarrow 0, \quad t\rightarrow\infty,
			\end{align*}
			where we use in $(i)$ the fact that if the norm is non-zero then at least one component of one of the summands is non-zero and thus the corresponding indicator is non-zero, in $(ii)$ sub-additivity of probability, in $(iii)$ theorem of total probability, in $(iv)$ $\Delta_{\mathcal{Q}_t}^m N^{(j)} \overset{\mathbb{P}_{\mathbf{A}^*}}{\sim}\mathrm{Possion}(\lambda^{(j)} \Delta_{\mathcal{Q}_t}^m)$, in $(v)$ theorem of total probability, in $(vi)$ $|y+j|<2\nu, |j|\geq 4\nu\implies|y|>2\nu$ for $y, j,\nu\in\mathbb{R}$ on the first term, $\Delta_{\mathcal{Q}_t}^m N^{(j)} \overset{\mathbb{P}_{\mathbf{A}^*}}{\sim}\mathrm{Possion}(\lambda^{(j)} \Delta_{\mathcal{Q}_t}^m)$ and $\Delta_{\mathcal{Q}_t}^m \tilde{J}^{(i)} \,\big|\, \Delta^m_{\mathcal{Q}_t} N^{(i)} =1 \overset{\mathbb{P}_{\mathbf{A}^*}}{\sim} \tilde{F}^{(i)}(\cdot)$ on the second term, in $(vii)$ Equation \eqref{eqn:P(mart>2nu)}, in $(viii)$ Assumption \ref{ass:infinite_thresholding}$.(ii)$ and Assumption \ref{ass:infinite_thresholding}$.(iii)$.
			
			It remains to show $\mathbf{S}_t^3(i, l) \overset{\mathbb{P}_{\mathbf{A}^*}}{\rightarrow} 0$ as $t\rightarrow\infty$. We work in $L^1(\Omega, \mathcal{F}, \mathbb{P}_{\mathbf{A}^*})$:
			\begin{align*}
				\mathbb{E}_{\mathbf{A}^*}&\left[\left\| \mathbf{S}_t^3(i, l)\right\|\right] \\
				&= t^{-1/2} \mathbb{E}_{\mathbf{A}^*}\left[\left\|\sum_{m = 0}^{M_t-1}
				D^{l}Y^{(i)}_{u_m} \Delta_{\mathcal{Q}_t}^m \mathbf{M} \odot \mathds{1}_{\left\{\left|\Delta_{\mathcal{Q}_t}^m \hat{D}^{p-1}\mathbf{Y} - \mathbf{b} \Delta_{\mathcal{Q}_t}^m\right| \leq \boldsymbol{\nu}^m_t\right\}}  \right\|\right] \\
				&\overset{(i)}{\leq} t^{-1/2} \sum_{m = 0}^{M_t-1} \sum_{j=1}^d \mathbb{E}_{\mathbf{A}^*}\left[\left|
				D^{l}Y^{(i)}_{u_m} \Delta_{\mathcal{Q}_t}^m M^{(j)}  \right| \mathds{1}_{\left\{\left|\Delta_{\mathcal{Q}_t}^m \hat{D}^{p-1}Y^{(j)} - b^{(j)} \Delta_{\mathcal{Q}_t}^m\right| \leq \nu^{(j), m}_t\right\}} \right] \\
				&\overset{(ii)}{\leq} t^{-1/2} \sum_{m = 0}^{M_t-1} \sum_{j=1}^d \mathbb{E}_{\mathbf{A}^*}\bigg[\left|
				D^{l}Y^{(i)}_{u_m}\Delta_{\mathcal{Q}_t}^m M^{(j)} \right|\bigg( \mathds{1}_{\left\{\left|\Delta_{\mathcal{Q}_t}^m \hat{D}^{p-1}Y^{(j)} - \Delta_{\mathcal{Q}_t}^m D^{p-1}Y^{(j)} \right| > \nu^{(j), m}_t/2 \right\}} \\
				&\quad\quad\quad\quad\quad\quad\quad\quad\quad\quad\quad\quad\quad\quad\quad\quad\quad\quad\quad\quad\quad\quad\quad\quad + \mathds{1}_{\left\{\left|\Delta_{\mathcal{Q}_t}^m D^{p-1}Y^{(j)} - b^{(j)} \Delta_{\mathcal{Q}_t}^m\right| \leq 3\nu^{(j), m}_t/2\right\}}	\bigg)\bigg]\\
				&\overset{(iii)}{\leq} t^{-1/2} \sum_{m = 0}^{M_t-1} \sum_{j=1}^d \mathbb{E}_{\mathbf{A}^*}\bigg[\left|
				D^{l}Y^{(i)}_{u_m} \Delta_{\mathcal{Q}_t}^m M^{(j)}  \right|\bigg( \mathds{1}_{\left\{\left|\Delta_{\mathcal{Q}_t}^m \hat{D}^{p-1}Y^{(j)} - \Delta_{\mathcal{Q}_t}^m D^{p-1}Y^{(j)} \right| > \nu^{(j), m}_t/2 \right\}} \\
				&\quad\quad\quad\quad\quad\quad\quad\quad\quad\quad\quad\quad\quad\quad\quad + \mathds{1}_{\left\{\left|\Delta_{\mathcal{Q}_t}^m D^{p-1}Y^{c, (j)}_{\mathbf{A}^{(0)}} + \Delta^m_{\mathcal{Q}_t} M^{(j)} \right| \leq 3\nu^{(j), m}_t/2\right\}}	+ \mathds{1}_{\left\{\Delta_{\mathcal{Q}_t}^m N^{(j)} >  0 \right\}}	\bigg)\bigg] \\
				&\overset{(iv)}{\leq} t^{-1/2} \sum_{m = 0}^{M_t-1} \sum_{j=1}^d \mathbb{E}_{\mathbf{A}^*}\bigg[\left|
				D^{l}Y^{(i)}_{u_m} \Delta_{\mathcal{Q}_t}^m M^{(j)}  \right|\bigg( \mathds{1}_{\left\{\left|\Delta_{\mathcal{Q}_t}^m \hat{D}^{p-1}Y^{(j)} - \Delta_{\mathcal{Q}_t}^m D^{p-1}Y^{(j)} \right| > \nu^{(j), m}_t/2 \right\}} \\
				&\quad\quad\quad\quad\quad\quad\quad\quad\quad\quad\quad + \mathds{1}_{\left\{\left|\Delta_{\mathcal{Q}_t}^m D^{p-1}Y^{c, (j)}_{\mathbf{A}^{(0)}} \right|> \nu^{(j), m}_t/2 \right\}} + \mathds{1}_{\left\{\left|\Delta^m_{\mathcal{Q}_t} M^{(j)} \right| \leq 2\nu^{(j), m}_t\right\}} + \mathds{1}_{\left\{\Delta_{\mathcal{Q}_t}^m N^{(j)} >  0 \right\}}	\bigg)\bigg] \\
				&\overset{(v)}{\leq} t^{-1/2} \sum_{m = 0}^{M_t-1} \sum_{j=1}^d\Bigg\{ \mathbb{E}_{\mathbf{A}^*}\left[\left|
				D^{l}Y^{(i)}_{u_m}\right|^2\right]^{1/2} \mathbb{E}_{\mathbf{A}^*}\left[\left| \Delta_{\mathcal{Q}_t}^m M^{(j)}  \right|^2\right]^{1/2} \\
				&\quad\quad\quad\quad\quad\quad\quad\quad\quad\quad \times \bigg[\mathbb{P}_{\mathbf{A}^*} \left(\left|\Delta_{\mathcal{Q}_t}^m \hat{D}^{p-1}Y^{(j)} - \Delta_{\mathcal{Q}_t}^m D^{p-1}Y^{(j)} \right| > \nu^{(j), m}_t/2\right)^{1/2} \\
				&\quad\quad\quad\quad\quad\quad\quad\quad\quad\quad\quad\quad\quad + \mathbb{P}_{\mathbf{A}^*} \left(\left|\Delta_{\mathcal{Q}_t}^m D^{p-1}Y^{c, (j)}_{\mathbf{A}^{(0)}} \right| > \nu^{(j), m}_t/2\right)^{1/2} + \mathbb{P}_{\mathbf{A}^*} \left(\Delta_{\mathcal{Q}_t}^m N^{(j)} > 0 \right)\bigg]\\
				&\quad\quad\quad\quad\quad\quad\quad\quad\quad + \mathbb{E}_{\mathbf{A}^*}\left[\left|
				D^{l}Y^{(i)}_{u_m}\right|\right] \mathbb{E}_{\mathbf{A}^*}\left[\left| \Delta_{\mathcal{Q}_t}^m M^{(j)}  \right|  \mathds{1}_{\left\{\left|\Delta^m_{\mathcal{Q}_t} M^{(j)} \right| \leq 2\nu^{(j), m}_t\right\}} \right] \Bigg\} \\
				&\overset{(vi)}{\lesssim}  \sum_{j=1}^d t^{-1/2} M_t \left\{\left(\Delta^m_{\mathcal{Q}_t}\right)^{1/2} \left[\left(\nu^{(j), m}_t\right)^{-1} \Delta_{\mathcal{P}_t}^{1/2} + \left(\Delta^m_{\mathcal{Q}_t}\right)^{1-\beta^{(j)}} + \Delta^m_{\mathcal{Q}_t}\right] +  \Delta_{\mathcal{Q}_t}^{1+\epsilon^{(j)}} \right\}\\
				&\overset{(vii)}{\lesssim} \sum_{j=1}^d t^{-1/2} M_t \left\{  \left(\Delta^m_{\mathcal{Q}_t}\right)^{1/2-\beta^{(j)}} \Delta_{\mathcal{P}_t}^{1/2} + \Delta_{\mathcal{Q}_t}^{3/2-\beta^{(j)}} 
				+ \Delta_{\mathcal{Q}_t}^{1+\epsilon^{(j)}} \right\}\\
				&\overset{(viii)}{\lesssim} \sum_{j=1}^d (t^{-1} N'_t \Delta_{\mathcal{Q}_t}) \left[ \left(t\Delta_{\mathcal{Q}_t}^{-2}\Delta_{\mathcal{P}_t}\right)^{1/2}\Delta_{\mathcal{Q}_t}^{1/2-\beta^{(j)}} + \left(t \Delta_{\mathcal{Q}_t}^{1-2\beta^{(j)}}\right)^{1/2} + \left(t \Delta_{\mathcal{Q}_t}^{2\epsilon^{(j)}}\right)^{1/2} \right] \rightarrow 0, \quad t\rightarrow\infty,
			\end{align*}
			where we use in $(i)$ triangle inequality, in $(ii)$ $1 = \mathds{1}_{A} + \mathds{1}_{A^c}$, $|\hat{y}|\leq \nu \wedge |\hat{y}-y|\leq\nu/2 \implies |y|\leq 3\nu/2$, $\mathds{1}_{A}\leq \mathds{1}_{B}$ and $\mathds{1}_{A\cap C}\leq \mathds{1}_{C}$ for $y,\hat{y},\nu\in\mathbb{R}$ and sets $A, B, C\subset\Omega$ with $A\subset B$, in $(iii)$ $1 = \mathds{1}_{A} + \mathds{1}_{A^c}$ and $\mathds{1}_{A\cap B}\leq \mathds{1}_{A}$ for sets $A,B\subset\Omega$, in $(iv)$ $1 = \mathds{1}_{A} + \mathds{1}_{A^c}$, $|y+m|\leq 3\nu/2 \wedge |m|>2\nu \implies |y|>\nu/2$, $\mathds{1}_{A}\leq \mathds{1}_{B}$ and $\mathds{1}_{A\cap C}\leq \mathds{1}_{C}$ for $y,m,\nu\in\mathbb{R}$ and sets $A, B, C\subset\Omega$ with $A\subset B$, in $(v)$ Cauchy-Schwartz inequality and independence of $D^l\mathbf{Y}_{u_m}$ and $\Delta_{\mathcal{Q}_t}^m \mathbf{M}$ on the first two terms, independence of $D^l\mathbf{Y}_{u_m}$, $\Delta_{\mathcal{Q}_t}^m \mathbf{M}$ and $\Delta_{\mathcal{Q}_t}^m \mathbf{N}$ on the third term, independence of $D^l\mathbf{Y}_{u_m}$ and $\Delta_{\mathcal{Q}_t}^m \mathbf{M}$ on the fourth term, in $(vi)$ boundedness of second moments of $\mathbf{X}_t = (\mathbf{Y}^{\mathrm{T}}_t,\ldots, D^{p-1}\mathbf{Y}^{\mathrm{T}}_t)^{\mathrm{T}}$, Equation \eqref{eqn:bound_M2,J2}, Markov inequality, Equation \eqref{eqn:D-D_L2_bound}, Equation \eqref{eqn:P(mart>2nu)}, $\Delta_{\mathcal{Q}_t}^m N^{(j)} \overset{\mathbb{P}_{\mathbf{A}^*}}{\sim}\mathrm{Possion}(\lambda^{(j)} \Delta_{\mathcal{Q}_t}^m)$ and Assumption \ref{ass:infinite_thresholding}$.(iii)$, in $(vii)$ Assumption \ref{ass:infinite_thresholding}$.(i)$, in  $(viii)$ Assumption \ref{ass:controlled_sampling}, Assumption \ref{ass:joint_mesh} and Assumption \ref{ass:infinite_thresholding}$.(ii)$ and Assumption \ref{ass:infinite_thresholding}$.(iii)$.
		\end{proof}
		
		\subsubsection{Infinite jump activity: the symmetric driver case} \label{app:infinite_activity_symmetric}
		
		In this section we give a slightly different version of Theorem \ref{thm:cons_asymp_third_approx_infinite_activity} with Assumption \ref{ass:infinite_thresholding} replaced by the following join condition on $\{\mathcal{Q}_t,\ t\in\mathcal{T}\}$ and $\{(\boldsymbol{\nu}^m_t)_{m=0}^{M_t-1},\ t\in\mathcal{T}\}$.
		
		\begin{assumption} \label{ass:infinite_thresholding_symmetric}
			The sequence of partitions $\{\mathcal{Q}_t,\ t\in\mathcal{T}\}$ and the sequence of jump thresholds $\{(\boldsymbol{\nu}^m_t)_{m=0}^{M_t-1},\ t\in\mathcal{T}\}$ satisfy for $i\in\{1,\ldots, d\}$ 
			\begin{enumerate}[label=(\roman*)]
				\item $\nu^{(i), m}_t = (\Delta^m_{\mathcal{Q}_t})^{\beta^{(i)}}$ with $\beta^{(i)}\in(0,\, 1/4)$;
				\item $\displaystyle t \Delta^{1 - 4\beta^{(i)}}_{\mathcal{Q}_t} \rightarrow 0, \ t\rightarrow \infty$;
				\item  			$t\tilde{F}^{(i)}\left(\left(-4\Delta_{\mathcal{Q}_t}^{\beta^{(i)}}, 4\Delta_{\mathcal{Q}_t}^{\beta^{(i)}}\right)\right)\rightarrow 0,\ t\rightarrow\infty$ and $\exists \epsilon^{(i)} > 2\beta^{(i)}$ such that $\displaystyle t \Delta^{\epsilon^{(i)} - 2\beta^{(i)}}_{\mathcal{Q}_t} \rightarrow 0, \ t\rightarrow \infty$ and \[\mathbb{E}_{\mathbf{A}^*}\left[ |{M}^{(i)}_s|^2 \mathds{1}_{\left\{|{M}^{(i)}_s| \leq s^{\beta^{(i)}} \right\}}\right] = O\left(s^{1+\epsilon^{(i)}}\right)\ \mathrm{as}\ s\downarrow 0;\]
				\item $\exists \epsilon_0 > 0, t_0 \in \mathcal{T}$ s.t. $\forall \epsilon \leq \epsilon_0, t \geq t_0$, 
				\[\mathbb{E}_{\mathbf{A}^*} \left[ \Delta^m_{\mathcal{Q}_t} {M}^{(i)} \mathds{1}_{\{|\Delta^m_{\mathcal{Q}_t} {M}^{(i)}| \leq \epsilon\}}\right]= 0\quad \mathrm{for}\ i\in\{1,\ldots, d\},\  m\in\{0,\ldots, M_t - 1\}.\]
			\end{enumerate}
		\end{assumption}
		\begin{remark}
			Assumption \ref{ass:infinite_thresholding_symmetric}$.(iv)$ was introduced in \citet{mai_OU} and is a symmetry condition on $\mathbb{M}$. In \citet[Lemma~4.4]{mai_OU} shows that a sufficient condition for Assumption \ref{ass:infinite_thresholding_symmetric}$.(iv)$ is that $F|_{[-1,1]}$ is symmetric. We note that introducing this condition allows us to modify \ref{ass:infinite_thresholding}$.(iii)$: we now control the $\ell_2$ norm of small increments of $\mathbb{M}$, a ``weaker'' condition.
		\end{remark}
		\begin{theorem}
			Theorem \ref{thm:cons_asymp_third_approx_infinite_activity} holds ceteris paribus with Assumption \ref{ass:infinite_thresholding} replaced by Assumption \ref{ass:infinite_thresholding_symmetric}.
		\end{theorem}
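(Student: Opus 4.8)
The plan is to re-run the proof of Theorem~\ref{thm:cons_asymp_third_approx_infinite_activity} from Appendix~\ref{app:proof_thm_cons_asymp_third_approx_infinite_activity} essentially verbatim, changing only the treatment of the single term that actually invokes the moment control on $\mathbb{M}$ in Assumption~\ref{ass:infinite_thresholding}$.(iii)$. As before, by Lemma~\ref{lemma:approx} with $\hat{\mathbf{A}}_{1,t} = \hat{\mathbf{A}}(\mathbb{Y}_{\mathcal{P}_t}, \mathbb{J}_{\mathcal{Q}_t}, \mathbb{M}_{\mathcal{Q}_t})$ and $\hat{\mathbf{A}}_{2,t} = \hat{\mathbf{A}}(\mathbb{Y}_{\mathcal{P}_t}; \mathcal{Q}_t, \boldsymbol{\nu}_t)$, it suffices to verify Equation~\eqref{eqn:H_limit}, with consistency and asymptotic normality of $\hat{\mathbf{A}}_{1,t}$ supplied by Theorem~\ref{thm:cons_asymp_second_approx}. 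I would keep the $L^1$ approximation step that replaces $\hat{D}^{\bullet}$ by $D^{\bullet}$, and the resulting decomposition of each $d$-batch entry into $\mathbf{S}_t^1(i,l) + \mathbf{S}_t^2(i,l) + \mathbf{S}_t^3(i,l)$, exactly as written, since this decomposition is purely algebraic and does not touch the assumption being changed.

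The terms $\mathbf{S}_t^1(i,l)$ and $\mathbf{S}_t^2(i,l)$ (together with the sub-terms $\mathbf{S}_t^{2,1}$, $\mathbf{S}_t^{2,2,1}$, $\mathbf{S}_t^{2,2,2}$) carry over unchanged: inspecting their estimates shows they only use Assumption~\ref{ass:infinite_thresholding}$.(i)$--$.(ii)$, the tail control $t\tilde{F}^{(i)}((-4\Delta_{\mathcal{Q}_t}^{\beta^{(i)}}, 4\Delta_{\mathcal{Q}_t}^{\beta^{(i)}})) \to 0$, the increment bounds \eqref{eqn:bound_M2,J2} and \eqref{eqn:P(mart>2nu)}, and the derivative-approximation bound \eqref{eqn:D-D_L2_bound} --- all of which are shared with Assumption~\ref{ass:infinite_thresholding_symmetric}$.(i)$--$.(iii)$ or follow from square-integrability. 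I would therefore simply cite those parts of the original argument unchanged.

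The whole point is $\mathbf{S}_t^3(i,l)$. The original proof bounded it via the first-moment truncation $\mathbb{E}_{\mathbf{A}^*}[|\Delta_{\mathcal{Q}_t}^m M^{(j)}|\,\mathds{1}_{\{|\Delta_{\mathcal{Q}_t}^m M^{(j)}| \le 2\nu_t^{(j),m}\}}]$ controlled by Assumption~\ref{ass:infinite_thresholding}$.(iii)$. Here I would instead split $\mathbf{S}_t^3(i,l)$, componentwise in $j$, into a main part carrying the symmetric indicator $\mathds{1}_{\{|\Delta_{\mathcal{Q}_t}^m M^{(j)}| \le 2\nu_t^{(j),m}\}}$ and a remainder carrying the difference of that indicator and the true threshold indicator. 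For the main part, since $\mathbb{M}$ is a \Levy process its increment $\Delta_{\mathcal{Q}_t}^m M^{(j)}$ is independent of $\mathcal{F}_{u_m}$, so $D^{l}Y^{(i)}_{u_m}$ factors out; for $t$ large enough that $2\nu_t^{(j),m} \le \epsilon_0$, the symmetry condition Assumption~\ref{ass:infinite_thresholding_symmetric}$.(iv)$ gives $\mathbb{E}_{\mathbf{A}^*}[\Delta_{\mathcal{Q}_t}^m M^{(j)}\,\mathds{1}_{\{|\Delta_{\mathcal{Q}_t}^m M^{(j)}| \le 2\nu_t^{(j),m}\}} \mid \mathcal{F}_{u_m}] = 0$. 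The summands are thus martingale differences, the cross terms vanish, and the $L^2(\Omega,\mathcal{F},\mathbb{P}_{\mathbf{A}^*})$ norm collapses to $t^{-1}\sum_m \mathbb{E}_{\mathbf{A}^*}[(D^{l}Y^{(i)}_{u_m})^2]\,\mathbb{E}_{\mathbf{A}^*}[|\Delta_{\mathcal{Q}_t}^m M^{(j)}|^2\,\mathds{1}_{\{|\Delta_{\mathcal{Q}_t}^m M^{(j)}| \le 2\nu_t^{(j),m}\}}]$. Bounding the first factor by stationarity and the second by the \emph{second-moment} truncation control in Assumption~\ref{ass:infinite_thresholding_symmetric}$.(iii)$ gives order $t^{-1}M_t\,\Delta_{\mathcal{Q}_t}^{1+\epsilon^{(j)}}$, which vanishes by Assumption~\ref{ass:controlled_sampling} once $\epsilon^{(j)} > 0$. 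This is the precise sense in which the symmetry lets one trade the $L^1$ control for an $L^2$ control.

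For the remainder I would work in $L^1$, writing the difference of indicators as $\mathds{1}$ of the symmetric difference $\triangle$ of the two events and controlling its probability by the same tail and jump-count estimates used for $\mathbf{S}_t^{2,2,1}$ and $\mathbf{S}_t^{2,2,2}$, with a Cauchy--Schwarz split against $\mathbb{E}_{\mathbf{A}^*}[|\Delta_{\mathcal{Q}_t}^m M^{(j)}|^2]^{1/2} \lesssim (\Delta_{\mathcal{Q}_t}^m)^{1/2}$ and \eqref{eqn:P(mart>2nu)}. The main obstacle I anticipate is exactly this remainder: showing that $\{|\Delta_{\mathcal{Q}_t}^m\hat{D}^{p-1}Y^{(j)} - b^{(j)}\Delta_{\mathcal{Q}_t}^m| \le \nu_t^{(j),m}\}$ and $\{|\Delta_{\mathcal{Q}_t}^m M^{(j)}| \le 2\nu_t^{(j),m}\}$ differ only on a set small enough that, after Cauchy--Schwarz and summation over $\mathcal{Q}_t$, the aggregate rate closes --- this is where the sharpened exponent $\epsilon^{(i)} > 2\beta^{(i)}$ and the rate $t\Delta_{\mathcal{Q}_t}^{\epsilon^{(i)}-2\beta^{(i)}} \to 0$ of Assumption~\ref{ass:infinite_thresholding_symmetric}$.(iii)$ are designed to be consumed, in combination with Assumptions~\ref{ass:joint_mesh} and \ref{ass:infinite_thresholding_symmetric}$.(ii)$. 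One bookkeeping point to settle first is the status of the fourth-moment hypothesis: since $\mathbb{E}[\|\mathbf{L}_1\|^4] < \infty$ is still used in the Cauchy--Schwarz bound for $\mathbf{S}_t^{2,2,1}$, I would retain it (as inherited under the ``ceteris paribus'' clause) and merely confirm that the new symmetric argument for $\mathbf{S}_t^3$ needs nothing beyond the second moments already available.
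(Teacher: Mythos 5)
Your proposal is correct and follows essentially the same route as the paper: the paper likewise leaves $\mathbf{S}_t^1$ and $\mathbf{S}_t^2$ untouched and isolates in $\mathbf{S}_t^3$ the pure term $\sum_m D^l Y^{(i)}_{u_m}\,\Delta_{\mathcal{Q}_t}^m M^{(j)}\,\mathds{1}_{\{|\Delta_{\mathcal{Q}_t}^m M^{(j)}|\le 2\nu_t^{(j),m}\}}$, which it treats in $L^2$ exactly as you describe (symmetry kills the cross terms, the diagonal is absorbed by the second-moment truncation control), while the indicator mismatch is handled in $L^1$ via Cauchy--Schwarz and \eqref{eqn:P(mart>2nu)}, consuming $t\Delta_{\mathcal{Q}_t}^{\epsilon^{(j)}-2\beta^{(j)}}\to 0$ through the bound \eqref{eqn:M_{v/2<M<2v}}; the only cosmetic difference is that the paper splits your remainder into two pieces $\mathbf{S}_t^{3,1}$ and $\mathbf{S}_t^{3,2}$. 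Your bookkeeping point about the fourth moment is well taken --- it is still needed for $\mathbf{S}_t^{2,2,1}$ and is indeed inherited under the ceteris paribus clause rather than restated in Assumption \ref{ass:infinite_thresholding_symmetric}.
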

		
		\begin{proof}
			The proof proceeds exactly as in the proof of Theorem \ref{thm:cons_asymp_third_approx_infinite_activity}, cf.\ \ref{app:proof_thm_cons_asymp_third_approx_infinite_activity} for details, with the only difference being in the arguments used to show $\mathbf{S}_t^3(i, l) \overset{\mathbb{P}_{\mathbf{A}^*}}{\rightarrow} 0$ as $t\rightarrow\infty$. In this setting, we rewrite $\mathbf{S}_t^3(i, l)$ as 
			\begin{align*}
				\mathbf{S}_t^3(i, l) &= t^{-1/2}\left(\sum_{m = 0}^{M_t-1} D^{l}Y^{(i)}_{u_m} \Delta_{\mathcal{Q}_t}^m \mathbf{M} \odot \mathds{1}_{\left\{\left|\Delta_{\mathcal{Q}_t}^m \hat{D}^{p-1}\mathbf{Y} - \mathbf{b} \Delta_{\mathcal{Q}_t}^m\right| \leq \boldsymbol{\nu}^m_t\right\}} \right) \\
				&= t^{-1/2}\left(\sum_{m = 0}^{M_t-1} D^{l}Y^{(i)}_{u_m} \Delta_{\mathcal{Q}_t}^m \mathbf{M} \odot \mathds{1}_{\left\{\left|\Delta_{\mathcal{Q}_t}^m \hat{D}^{p-1}\mathbf{Y} - \mathbf{b} \Delta_{\mathcal{Q}_t}^m\right| \leq \boldsymbol{\nu}^m_t, |\Delta_{\mathcal{Q}_t}^m\mathbf{M}|> 2\boldsymbol{\nu}^m_t \right\}} \right) \\
				&\quad\quad + t^{-1/2}\left(\sum_{m = 0}^{M_t-1} D^{l}Y^{(i)}_{u_m} \Delta_{\mathcal{Q}_t}^m \mathbf{M} \odot\left( \mathds{1}_{\left\{\left|\Delta_{\mathcal{Q}_t}^m \hat{D}^{p-1}\mathbf{Y} - \mathbf{b} \Delta_{\mathcal{Q}_t}^m\right| \leq \boldsymbol{\nu}^m_t, |\Delta_{\mathcal{Q}_t}^m\mathbf{M}|\leq 2\boldsymbol{\nu}^m_t \right\}} -  \mathds{1}_{\left\{|\Delta_{\mathcal{Q}_t}^m\mathbf{M}|\leq 2\boldsymbol{\nu}^m_t \right\}} \right) \right) \\
				&\quad\quad\quad\quad + t^{-1/2}\left(\sum_{m = 0}^{M_t-1} D^{l}Y^{(i)}_{u_m} \Delta_{\mathcal{Q}_t}^m \mathbf{M} \odot \mathds{1}_{\left\{|\Delta_{\mathcal{Q}_t}^m\mathbf{M}|\leq 2\boldsymbol{\nu}^m_t \right\}}\right) \\
				&=: \mathbf{S}_t^{3, 1}(i, l) + \mathbf{S}_t^{3, 2}(i, l) + \mathbf{S}_t^{3, 3}(i, l),
			\end{align*}
			For $\mathbf{S}_t^{3, 1}(i, l)$ note that in $L^1(\Omega, \mathcal{F}, \mathbb{P}_{\mathbf{A}^*})$ we have
			\begin{align*}
				\mathbb{E}_{\mathbf{A}^*}&\left[\left\| \mathbf{S}_t^{3, 1}(i, l)\right\|\right] \\
				&\overset{(i)}{\leq} t^{-1/2} \sum_{m = 0}^{M_t-1} \sum_{j=1}^d \mathbb{E}_{\mathbf{A}^*}\left[\left|
				D^{l}Y^{(i)}_{u_m} \Delta_{\mathcal{Q}_t}^m M^{(j)}  \right| \mathds{1}_{\left\{\left|\Delta_{\mathcal{Q}_t}^m \hat{D}^{p-1}Y^{(j)} - b^{(j)} \Delta_{\mathcal{Q}_t}^m\right| \leq \nu^{(j), m}_t, |\Delta_{\mathcal{Q}_t}^m M^{(j)}|> 2\nu^{(j), m}_t \right\}} \right] \\
				&\overset{(ii)}{\leq} t^{-1/2} \sum_{m = 0}^{M_t-1} \sum_{j=1}^d \mathbb{E}_{\mathbf{A}^*}\left[\left|
				D^{l}Y^{(i)}_{u_m} \Delta_{\mathcal{Q}_t}^m M^{(j)}  \right| \mathds{1}_{\left\{\left|\Delta_{\mathcal{Q}_t}^m \hat{D}^{p-1}Y^{c, (j)} + \Delta_{\mathcal{Q}_t}^m J^{(j)} \right| > \nu^{(j), m}_t \right\}} \right] \\
				&\overset{(iii)}{\leq} t^{-1/2} \sum_{m = 0}^{M_t-1} \sum_{j=1}^d \mathbb{E}_{\mathbf{A}^*}\left[\left|
				D^{l}Y^{(i)}_{u_m} \Delta_{\mathcal{Q}_t}^m M^{(j)}  \right|\left( \mathds{1}_{\left\{\left|\Delta_{\mathcal{Q}_t}^m \hat{D}^{p-1}Y^{c, (j)}_{\mathbf{A}^{(0)}} \right| > \nu^{(j), m}_t \right\}} + \mathds{1}_{\left\{\Delta_{\mathcal{Q}_t}^m N^{(j)} > 0 \right\}} \right) \right] \\
				&\overset{(iv)}{\leq} t^{-1/2} \sum_{m = 0}^{M_t-1} \sum_{j=1}^d \mathbb{E}_{\mathbf{A}^*}\Bigg[\left|
				D^{l}Y^{(i)}_{u_m} \Delta_{\mathcal{Q}_t}^m M^{(j)}  \right|\Bigg( \mathds{1}_{\left\{\left|\Delta_{\mathcal{Q}_t}^m \hat{D}^{p-1}Y^{c, (j)}_{\mathbf{A}^{(0)}} - \Delta_{\mathcal{Q}_t}^m D^{p-1}Y^{c, (j)}_{\mathbf{A}^{(0)}} \right| > \nu^{(j), m}_t/2 \right\}} \\
				&\quad\quad\quad\quad\quad\quad\quad\quad\quad\quad\quad\quad\quad\quad\quad\quad + \mathds{1}_{\left\{\left|\Delta_{\mathcal{Q}_t}^m D^{p-1}Y^{c, (j)}_{\mathbf{A}^{(0)}} \right| > \nu^{(j), m}_t/2 \right\}} + \mathds{1}_{\left\{\Delta_{\mathcal{Q}_t}^m N^{(j)} > 0 \right\}} \Bigg) \Bigg] \\
				&\overset{(v)}{\leq} t^{-1/2} \sum_{m = 0}^{M_t-1} \sum_{j=1}^d \mathbb{E}_{\mathbf{A}^*}\left[\left|
				D^{l}Y^{(i)}_{u_m}\right|^2\right]^{1/2} \mathbb{E}_{\mathbf{A}^*}\left[\left| \Delta_{\mathcal{Q}_t}^m M^{(j)}  \right|^2\right]^{1/2} \\
				&\quad\quad\quad\quad\quad\quad\quad \times \Bigg[ \mathbb{P}\left(\left|\Delta_{\mathcal{Q}_t}^m \hat{D}^{p-1}Y^{c, (j)}_{\mathbf{A}^{(0)}} - \Delta_{\mathcal{Q}_t}^m D^{p-1}Y^{c, (j)}_{\mathbf{A}^{(0)}} \right| > \nu^{(j), m}_t/2 \right)^{1/2} \\
				&\quad\quad\quad\quad\quad\quad\quad\quad\quad\quad\quad\quad + \mathbb{P}\left(\left|\Delta_{\mathcal{Q}_t}^m D^{p-1}Y^{c, (j)}_{\mathbf{A}^{(0)}} \right| > \nu^{(j), m}_t/2 \right)^{1/2} + \mathbb{P}\left(\Delta_{\mathcal{Q}_t}^m N^{(j)} > 0 \right) \Bigg] \\
				&\overset{(vi)}{\lesssim} \sum_{j=1}^d t^{-1/2}M_t \left(\Delta^m_{\mathcal{Q}_t}\right)^{1/2} \left[ \left(\nu^{(j), m}_t\right)^{-1} \Delta_{\mathcal{P}_t}^{1/2} + \Delta_{\mathcal{Q}_t}^{1-\beta^{(j)}} + \Delta_{\mathcal{Q}_t}\right] \\
				&\overset{(vii)}{\lesssim} \sum_{j=1}^d (t^{-1}N'_t \Delta_{\mathcal{Q}_t}) \left[  \left(t\Delta_{\mathcal{P}_t}\Delta^{-2}_{\mathcal{Q}_t}\right)^{1/2}\Delta_{\mathcal{Q}_t}^{1/2-\beta^{(j)}} + \left(t\Delta_{\mathcal{Q}_t}^{1-2\beta^{(j)}}\right)^{1/2} \right] \rightarrow 0,\quad  t\rightarrow\infty,
			\end{align*}
			where we use in $(i)$ triangle inequality, in $(ii)$ $|\hat{y}|\leq \nu, |m|>2\nu \implies |\hat{y}-m|>\nu$ for $\hat{y},m,\nu\in\mathbb{R}$ and $\mathds{1}_{A}\leq \mathds{1}_{B}$ for sets $A\subset B \subset\Omega$, in $(iii)$ 
			$1 = \mathds{1}_{A} + \mathds{1}_{A^c}$ and $\mathds{1}_{A\cap B}\leq \mathds{1}_{A}$ for sets $A,B\subset\Omega$, in $(iv)$ $|\hat{y}|>\nu\implies |y|>\nu/2 \vee |y-\hat{y}|>\nu/2$ for $y,\hat{y},\nu\in\mathbb{R}$ and $\mathds{1}_{A}\leq \mathds{1}_{B} + \mathds{1}_{C}$ for sets $A,B, C\subset\Omega$ such that $A\subset B\cup C$, in $(v)$ Cauchy-Schwartz inequality and independence of $D^l\mathbf{Y}_{u_m}$ and $\Delta_{\mathcal{Q}_t}^m \mathbf{M}$ on the first two terms, independence of $D^l\mathbf{Y}_{u_m}$, $\Delta_{\mathcal{Q}_t}^m \mathbf{M}$ and $\Delta_{\mathcal{Q}_t}^m \mathbf{N}$ on the third term, in $(vi)$ boundedness of second moments of $\mathbf{X}_t = (\mathbf{Y}^{\mathrm{T}}_t,\ldots, D^{p-1}\mathbf{Y}^{\mathrm{T}}_t)^{\mathrm{T}}$, Equation \eqref{eqn:bound_M2,J2}, Markov inequality, Equation \eqref{eqn:D-D_L2_bound}, Equation \eqref{eqn:P(mart>2nu)}, $\Delta_{\mathcal{Q}_t}^m N^{(j)} \overset{\mathbb{P}_{\mathbf{A}^*}}{\sim}\mathrm{Possion}(\lambda^{(j)} \Delta_{\mathcal{Q}_t}^m)$, in $(vii)$ Assumption \ref{ass:controlled_sampling},  Assumption \ref{ass:infinite_thresholding_symmetric}$.(i)$ and $\Delta^m_{\mathcal{Q}_t} \leq \Delta_{\mathcal{Q}_t}$, Assumption \ref{ass:joint_mesh} and Assumption \ref{ass:infinite_thresholding_symmetric}$.(ii)$.
			
			Next, we consider $\mathbf{S}_t^{3, 2}(i, l)$. Again we work in $L^1(\Omega, \mathcal{F}, \mathbb{P}_{\mathbf{A}^*})$.
			\begin{align*}
				\mathbb{E}_{\mathbf{A}^*}&\left[\left\| \mathbf{S}_t^{3, 2}(i, l)\right\|\right] \\ 
				&\overset{(i)}{\leq} t^{-1/2} \sum_{m = 0}^{M_t-1} \sum_{j=1}^d \mathbb{E}_{\mathbf{A}^*}\left[\left|
				D^{l}Y^{(i)}_{u_m} \Delta_{\mathcal{Q}_t}^m M^{(j)}  \right| \mathds{1}_{\left\{\left|\Delta_{\mathcal{Q}_t}^m \hat{D}^{p-1}Y^{(j)} - b^{(j)} \Delta_{\mathcal{Q}_t}^m\right| > \nu^{(j), m}_t, |\Delta_{\mathcal{Q}_t}^m M^{(j)}|\leq 2\nu^{(j), m}_t \right\}} \right] \\
				&\overset{(ii)}{\leq} t^{-1/2} \sum_{m = 0}^{M_t-1} \sum_{j=1}^d \mathbb{E}_{\mathbf{A}^*}\bigg[\left|
				D^{l}Y^{(i)}_{u_m} \Delta_{\mathcal{Q}_t}^m M^{(j)}  \right| \bigg( \mathds{1}_{\left\{\left|\Delta_{\mathcal{Q}_t}^m \hat{D}^{p-1}Y^{c, (j)}_{\mathbf{A}^{(0)}} + \Delta_{\mathcal{Q}_t}^m M^{(j)} \right| > \nu^{(j), m}_t, |\Delta_{\mathcal{Q}_t}^m M^{(j)}|\leq 2\nu^{(j), m}_t \right\}} \\
				&\quad\quad\quad\quad\quad\quad\quad\quad\quad\quad\quad\quad\quad\quad\quad\quad\quad\quad\quad\quad\quad\quad\quad\quad\quad\quad\quad\quad\quad\quad\quad\quad\quad + \mathds{1}_{\left\{\Delta_{\mathcal{Q}_t}^m N^{(j)} > 0 \right\}} \bigg) \bigg] \\
				&\overset{(iii)}{\leq} t^{-1/2} \sum_{m = 0}^{M_t-1} \sum_{j=1}^d \mathbb{E}_{\mathbf{A}^*}\Bigg[\left|
				D^{l}Y^{(i)}_{u_m} \Delta_{\mathcal{Q}_t}^m M^{(j)}  \right| \Bigg( \mathds{1}_{\left\{\left|\Delta_{\mathcal{Q}_t}^m \hat{D}^{p-1}Y^{c, (j)}_{\mathbf{A}^{(0)}} \right| > \nu^{(j), m}_t/2 \right\}} \\
				&\quad\quad\quad\quad\quad\quad\quad\quad\quad\quad\quad\quad\quad\quad\quad\quad\quad\quad\quad\quad + \mathds{1}_{\left\{\nu^{(j), m}_t/2 < \left| \Delta_{\mathcal{Q}_t}^m M^{(j)} \right| < 2\nu^{(j), m}_t \right\}} + \mathds{1}_{\left\{\Delta_{\mathcal{Q}_t}^m N^{(j)} > 0 \right\}} \Bigg) \Bigg] \\
				&\overset{(iv)}{\leq} t^{-1/2} \sum_{m = 0}^{M_t-1} \sum_{j=1}^d \mathbb{E}_{\mathbf{A}^*}\Bigg[\left|
				D^{l}Y^{(i)}_{u_m} \Delta_{\mathcal{Q}_t}^m M^{(j)}  \right| \Bigg( \mathds{1}_{\left\{\left|\Delta_{\mathcal{Q}_t}^m \hat{D}^{p-1}Y^{c, (j)}_{\mathbf{A}^{(0)}} - \Delta_{\mathcal{Q}_t}^m D^{p-1}Y^{c, (j)}_{\mathbf{A}^{(0)}} \right| > \nu^{(j), m}_t/4 \right\}} \\
				&\quad\quad\quad\quad\quad\quad\quad\quad +\mathds{1}_{\left\{\left|\Delta_{\mathcal{Q}_t}^m D^{p-1}Y^{c, (j)}_{\mathbf{A}^{(0)}} \right| > \nu^{(j), m}_t/4 \right\}}  + \mathds{1}_{\left\{\nu^{(j), m}_t/2 < \left| \Delta_{\mathcal{Q}_t}^m M^{(j)} \right| < 2\nu^{(j), m}_t \right\}} + \mathds{1}_{\left\{\Delta_{\mathcal{Q}_t}^m N^{(j)} > 0 \right\}} \Bigg) \Bigg]	\\
				&\overset{(v)}{\leq} t^{-1/2} \sum_{m = 0}^{M_t-1} \Bigg\{\sum_{j=1}^d \mathbb{E}_{\mathbf{A}^*}\left[\left|
				D^{l}Y^{(i)}_{u_m}\right|^2\right]^{1/2} \mathbb{E}_{\mathbf{A}^*}\left[\left| \Delta_{\mathcal{Q}_t}^m M^{(j)}  \right|^2\right]^{1/2} \\
				&\quad\quad\quad\quad\quad\quad\quad \times \Bigg[ \mathbb{P}\left(\left|\Delta_{\mathcal{Q}_t}^m \hat{D}^{p-1}Y^{c, (j)}_{\mathbf{A}^{(0)}} - \Delta_{\mathcal{Q}_t}^m D^{p-1}Y^{c, (j)}_{\mathbf{A}^{(0)}} \right| > \nu^{(j), m}_t/4 \right)^{1/2} \\
				&\quad\quad\quad\quad\quad\quad\quad\quad\quad\quad\quad\quad + \mathbb{P}\left(\left|\Delta_{\mathcal{Q}_t}^m D^{p-1}Y^{c, (j)}_{\mathbf{A}^{(0)}} \right| > \nu^{(j), m}_t/4 \right)^{1/2} + \mathbb{P}\left(\Delta_{\mathcal{Q}_t}^m N^{(j)} > 0 \right) \Bigg] \\
				&\quad\quad\quad\quad\quad\quad\quad + \mathbb{E}_{\mathbf{A}^*}\left[\left|
				D^{l}Y^{(i)}_{u_m}\right|\right] \mathbb{E}_{\mathbf{A}^*}\left[\left| \Delta_{\mathcal{Q}_t}^m M^{(j)}\right|  \mathds{1}_{\left\{\nu^{(j), m}_t/2 < \left| \Delta_{\mathcal{Q}_t}^m M^{(j)} \right| < 2\nu^{(j), m}_t \right\}}\right] \Bigg\} \\
				&\overset{(vi)}{\lesssim} \sum_{j=1}^d \left\{ t^{-1/2}M_t \left(\Delta^m_{\mathcal{Q}_t}\right)^{1/2} \left[ \left(\nu^{(j), m}_t\right)^{-1} \Delta_{\mathcal{P}_t}^{1/2} + \left(\Delta^m_{\mathcal{Q}_t}\right)^{1-\beta^{(j)}} + \Delta^m_{\mathcal{Q}_t} \right] +  t^{-1/2}M_t \Delta_{\mathcal{Q}_t}^{1-\beta^{(j)} +\epsilon^{(j)}/2}\right\}\\
				&\overset{(vii)}{\lesssim} \sum_{j=1}^d t^{-1/2} M_t \left\{ \left(\Delta^m_{\mathcal{Q}_t}\right)^{1/2-\beta^{(j)}} \Delta_{\mathcal{P}_t}^{1/2} + \Delta_{\mathcal{Q}_t}^{3/2-\beta^{(j)}} +\Delta_{\mathcal{Q}_t}^{1-\beta^{(j)} +\epsilon^{(j)}/2} \right\}\\
				&\overset{(viii)}{\lesssim} \sum_{j=1}^d (t^{-1} N'_t \Delta_{\mathcal{Q}_t}) \left[\left(t\Delta_{\mathcal{Q}_t}^{-2}\Delta_{\mathcal{P}_t}\right)^{1/2}\Delta_{\mathcal{Q}_t}^{1/2-\beta^{(j)}} + \left(t \Delta_{\mathcal{Q}_t}^{1-2\beta^{(j)}}\right)^{1/2} + \left(t \Delta_{\mathcal{Q}_t}^{\epsilon^{(j)}-2\beta^{(j)} }\right)^{1/2} \right]\rightarrow 0, \quad t\rightarrow\infty,
			\end{align*}
			where we use in $(i)$ triangle inequality, in $(ii)$ $1 = \mathds{1}_{A} + \mathds{1}_{A^c}$ and $\mathds{1}_{A\cap B}\leq \mathds{1}_{A}$ for sets $A,B\subset\Omega$, in $(iii)$ $|\hat{y}+m|>\nu \implies |\hat{y}|>\nu/2\vee |m|>\nu/2$ for $\hat{y},m,\nu,\in\mathbb{R}$ and $\mathds{1}_{A}\leq \mathds{1}_{B} + \mathds{1}_{C}$ for sets $A,B, C\subset\Omega$ such that $A\subset B\cup C$, in $(iv)$ $|\hat{y}|>\nu/2\implies |y|>\nu/4 \vee |y-\hat{y}|>\nu/4$ for $y,\hat{y},\nu\in\mathbb{R}$ and $\mathds{1}_{A}\leq \mathds{1}_{B} + \mathds{1}_{C}$ for sets $A,B, C\subset\Omega$ such that $A\subset B\cup C$, in $(v)$ Cauchy-Schwartz inequality and independence of $D^l\mathbf{Y}_{u_m}$ and $\Delta_{\mathcal{Q}_t}^m \mathbf{M}$ on the first two terms, independence of $D^l\mathbf{Y}_{u_m}$, $\Delta_{\mathcal{Q}_t}^m \mathbf{M}$ and $\Delta_{\mathcal{Q}_t}^m \mathbf{N}$ on the third term, independence of $D^l\mathbf{Y}_{u_m}$ and $\Delta_{\mathcal{Q}_t}^m \mathbf{M}$ on the fourth term, in $(vi)$ boundedness of second moments of $\mathbf{X}_t = (\mathbf{Y}^{\mathrm{T}}_t,\ldots, D^{p-1}\mathbf{Y}^{\mathrm{T}}_t)^{\mathrm{T}}$, Equation \eqref{eqn:bound_M2,J2}, Markov's inequality, Equation \eqref{eqn:D-D_L2_bound}, Equation \eqref{eqn:P(mart>2nu)}, $\Delta_{\mathcal{Q}_t}^m N^{(j)} \overset{\mathbb{P}_{\mathbf{A}^*}}{\sim}\mathrm{Possion}(\lambda^{(j)} \Delta_{\mathcal{Q}_t}^m)$ and
			\begin{align}
				\mathbb{E}_{\mathbf{A}^*}&\left[\left| \Delta_{\mathcal{Q}_t}^m M^{(j)}\right|  \mathds{1}_{\left\{\nu^{(j), m}_t/2 < \left| \Delta_{\mathcal{Q}_t}^m M^{(j)} \right| < 2\nu^{(j), m}_t \right\}}\right] \notag \\
				&\leq \mathbb{P}_{\mathbf{A}^*}\left(\nu^{(j), m}_t/2 < \left| \Delta_{\mathcal{Q}_t}^m M^{(j)} \right| \right)^{1/2} \mathbb{E}_{\mathbf{A}^*}\left[\left| \Delta_{\mathcal{Q}_t}^m M^{(j)}\right|^2 \mathds{1}_{\left\{\left| \Delta_{\mathcal{Q}_t}^m M^{(j)} \right| < 2\nu^{(j), m}_t \right\}}\right]^{1/2} \notag \\
				&\leq \left(\nu^{(j), m}_t\right)^{-1} \mathbb{E}_{\mathbf{A}^*}\left[\left| \Delta_{\mathcal{Q}_t}^m M^{(j)} \right|^2 \right]^{1/2} \mathbb{E}_{\mathbf{A}^*}\left[\left| \Delta_{\mathcal{Q}_t}^m M^{(j)}\right|^2 \mathds{1}_{\left\{\left| \Delta_{\mathcal{Q}_t}^m M^{(j)} \right| < 2\nu^{(j), m}_t \right\}}\right]^{1/2} \notag\\
				&\leq (\Delta^m_{\mathcal{Q}_t})^{-\beta^{(j)}} (\Delta^m_{\mathcal{Q}_t})^{1/2} (\Delta^m_{\mathcal{Q}_t})^{1/2 + \epsilon^{(j)}/2} = (\Delta^m_{\mathcal{Q}_t})^{1-\beta^{(j)} +\epsilon^{(j)}/2}, \label{eqn:M_{v/2<M<2v}}
			\end{align}
			in $(vii)$ Assumption \ref{ass:infinite_thresholding_symmetric}$.(i)$, in  $(viii)$ Assumption \ref{ass:controlled_sampling}, Assumption \ref{ass:joint_mesh} and Assumption \ref{ass:infinite_thresholding_symmetric}$.(iii)$.
			
			Finally we show $\mathbf{S}_t^{3,3}(i, l) \overset{\mathbb{P}_{\mathbf{A}^*}}{\rightarrow} 0$ as $t\rightarrow\infty$. In this case we work in $L^2(\Omega, \mathcal{F},\mathbb{P}_{\mathbf{A}*})$ and note that, for $t$ sufficiently large, $\mathbf{S}_t^{3,3}(i, l)$ is mean-zero by Assumption \ref{ass:infinite_thresholding_symmetric}$.(iv)$:
			
			\begin{align*}
				\mathbb{E}_{\mathbf{A}^*}&\left[\left\| \mathbf{S}_t^{3, 3}(i, l)\right\|^2\right] \\
				&= t^{-1} \sum_{j=1}^d \mathbb{E}_{\mathbf{A}^*}\left[\left| \sum_{m = 0}^{M_t-1} D^{l}Y^{(i)}_{u_m} \Delta_{\mathcal{Q}_t}^m M^{(j)}  \mathds{1}_{\left\{|\Delta_{\mathcal{Q}_t}^m M^{(j)}|\leq 2\nu^{(j), m}_t \right\}}\right|^2 \right] \\
				&=t^{-1} \sum_{j=1}^d \sum_{m = 0}^{M_t-1}\sum_{m' = 0}^{M_t-1} \mathbb{E}_{\mathbf{A}^*}\left[ D^{l}Y^{(i)}_{u_m} D^{l}Y^{(i)}_{u_{m'}} \Delta_{\mathcal{Q}_t}^m M^{(j)} \Delta_{\mathcal{Q}_t}^{m'} M^{(j)}  \mathds{1}_{\left\{|\Delta_{\mathcal{Q}_t}^m M^{(j)}|\leq 2\nu^{(j), m}_t \right\}} \mathds{1}_{\left\{|\Delta_{\mathcal{Q}_t}^{m'} M^{(j)}|\leq 2\nu^{(j), m}_t \right\}} \right] \\
				&=t^{-1} \sum_{j=1}^d \sum_{m = 0}^{M_t-1} \mathbb{E}_{\mathbf{A}^*}\left[ \left|D^{l}Y^{(i)}_{u_m}\right|^2\right] \mathbb{E}_{\mathbf{A}^*}\left[ \left| \Delta_{\mathcal{Q}_t}^m M^{(j)}\right|^2 \mathds{1}_{\left\{|\Delta_{\mathcal{Q}_t}^m M^{(j)}|\leq 2\nu^{(j), m}_t \right\}} \right] \\
				&\lesssim \sum_{j=1}^d t^{-1} \sum_{m = 0}^{M_t-1} (\Delta^m_{\mathcal{Q}_t})^{1+\epsilon^{(j)}} \lesssim \sum_{j=1}^d \Delta_{\mathcal{Q}_t}^{\epsilon^{(j)}} \rightarrow 0, \quad t\rightarrow\infty,
			\end{align*}
			using independence of $\Delta_{\mathcal{Q}_t}^m M^{(j)}$, $\Delta_{\mathcal{Q}_t}^{m'} M^{(j)}$, $D^{l}Y^{(i)}_{u_m}$, $D^{l}Y^{(i)}_{u_{m'}}$ for $m\neq m'$ and Assumption \ref{ass:infinite_thresholding_symmetric}$.(iv)$ to show that off-diagonal elements are zero and Assumption \ref{ass:infinite_thresholding_symmetric}$.(iii)$.
		\end{proof}
		
		\section{GrCAR: proofs}
		\subsection{Proof of Theorem \ref{thm:GrCAR}} \label{app:thm_GrCAR_proof}
		\begin{proof}
			We proceed exactly as in Section \ref{sec:MCAR_cont} replacing the dependencies on the parameter $\mathbf{A}\in(\mathcal{M}(\mathbb{R}^d))^p$ with dependencies on $\theta\in\mathbb{R}^{p\times 2}$. Expanding the form of the likelihood in Proposition \ref{prop:exist_likelihood} and grouping in terms of $\theta^{(i,j)}$ for $i=1,\ldots,p$ and $j=1,2$ we can rewrite the likelihood of the GrCAR($p$) process as 
			\[\mathcal{L}(\theta; \mathbb{Y}_{[0,t]}) = \exp\left\{\mathrm{vec} (\theta)^\mathrm{T} \mathbf{K}_t - \frac{1}{2} \mathrm{vec}(\theta)^\mathrm{T} [\mathbf{K}]_t \mathrm{vec}(\theta)\right\}, \quad 
			\theta \in\mathbb{R}^{p\times 2},\]
			with $\mathbb{K} = \{\mathbf{K}_t,\ t\geq 0\}$ as above. One can then proceed as in Lemma \ref{lemma:loc_mart_MCAR} to check that for any $\theta\in\mathbb{R}^{p\times 2}$ the score 
			\[\nabla_{\mathrm{vec}(\theta)} \log \mathcal{L}(\theta; \mathbb{Y}_{[0,t]}) = \mathbf{K}_t - [\mathbf{K}]_t \mathrm{vec}(\theta), \]
			defines a continuous square-integrable martingale under $\mathbb{P}_{\theta, \mathbf{x}_0}$. This is ensured by showing the quadratic variation is integrable, i.e.\
			\begin{align*}
				\mathbb{E}_{\theta, \mathbf{x}_0}\left[ \|[\mathbf{K}]_t\| \right] &\leq (\| I\| + \|\bar{A}^\mathrm{T}\|) \mathbb{E}_{\theta, \mathbf{x}_0}\left[\int_0^t \sum_{i=0}^{p-1} \| D^i\mathbf{Y}_s\|^2_\Sigma \, ds \right] \\
				& \leq (\| I\| + \|\bar{A}^\mathrm{T}\|) \|\Sigma^{-1} \| \int_0^t \mathbb{E}_{\theta, \mathbf{x}_0} [\|\mathbf{X}_s\|^2] \, ds 
				<\infty.
			\end{align*}
			Finally, as in Theorem \ref{thm:cons_asymp}, we check that $\forall t>0$ the matrix $[\mathbf{K}]_t$ is strictly positive definite $\mathbb{P}_{\theta^*, \mathbf{x}_0}$-a.s. By carrying out the relevant computation one can show this boils down to proving \[\{D^{l}Y_s^{(i)},\ i=1,\ldots,d, l=0,\ldots, p-1 \}\] is an affinely independent collection of random variables for any $s\in(0,t]$ under $\mathbb{P}_{\theta^{(0)}, \mathbf{x}_0}$, which follows by the same arguments as before. The rest of the results carry over ceteris paribus for the MLE $\hat{\theta}(\mathbb{Y}_{[0,t]})$
			with 
			\begin{equation} \label{eqn:K_infty}
				\mathcal{K}_\infty =
				\begin{pmatrix}
					\mathbb{E}_{\theta^*} [\langle D^{p-1}\mathbf{Y}_{\infty}, D^{p-1}\mathbf{Y}_{\infty}\rangle_{\Sigma}] & \cdots & \mathbb{E}_{\theta^*} [\langle D^{p-1}\mathbf{Y}_{\infty}, \bar{A}^\mathrm{T} \mathbf{Y}_{\infty}\rangle_{\Sigma}]\\
					\mathbb{E}_{\theta^*} [\langle \bar{A}^\mathrm{T} D^{p-1}\mathbf{Y}_{\infty}, D^{p-1}\mathbf{Y}_{\infty}\rangle_{\Sigma}] & \cdots & \mathbb{E}_{\theta^*} [\langle \bar{A}^\mathrm{T} D^{p-1}\mathbf{Y}_{\infty}, \bar{A}^\mathrm{T} \mathbf{Y}_{\infty}\rangle_{\Sigma}] \\
					\vdots & \ddots & \vdots \\
					\mathbb{E}_{\theta^*}[\langle \mathbf{Y}_{\infty}, D^{p-1}\mathbf{Y}_{\infty}\rangle_{\Sigma}] & \cdots & \mathbb{E}_{\theta^*} [\langle \mathbf{Y}_{\infty}, \bar{A}^\mathrm{T} \mathbf{Y}_{\infty}\rangle_{\Sigma}]\\
					\mathbb{E}_{\theta^*} [\langle \bar{A}^\mathrm{T} \mathbf{Y}_{\infty}, D^{p-1}\mathbf{Y}_{\infty}\rangle_{\Sigma}] & \cdots & \mathbb{E}_{\theta^*} [\langle \bar{A}^\mathrm{T} \mathbf{Y}_{\infty}, \bar{A}^\mathrm{T} \mathbf{Y}_{\infty}\rangle_{\Sigma}] \\
				\end{pmatrix}.
			\end{equation} 
			The inference results for the discretized estimator follow by the same subsequent approximations as in Section \ref{sec:MCAR_discr} (and the corresponding proofs in Appendix \ref{app:proof_discr}), noting that one can simply factor out the $\bar{A}^\mathrm{T}$ term.
		\end{proof}
		
		\section{Finite difference approximations of derivatives} 
		For $k\in\{1,\ldots, p-1\}$ let $D^k\mathbb{Y}_{[0,t]} = \{D^k\mathbf{Y}_s,\ s\in [0,t]\}$ denote the $k$-th time derivative of the process $\mathbb{Y}_{[0,t]} = \{\mathbf{Y}_s,\ s\in [0,t]\}$. We define the forward finite difference approximation of the derivatives over the partition
		\[\mathcal{P}_t = \{0=s_0<s_1<\ldots <s_{N_t} = t\},\]
		iteratively by
		\[ \hat{D}^k\mathbb{Y}_{\mathcal{P}_t} := \{ \hat{D}^k\mathbf{Y}_{s_n}, \ n=0,\ldots, N_t - k\}\ \mathrm{s.t.}\ \hat{D}^j\mathbf{Y}_{s_n} = 
		\frac{\hat{D}^{k-1}\mathbf{Y}_{s_{n+1}} - \hat{D}^{k-1}\mathbf{Y}_{s_{n}}}{s_{n+1} - s_{n}},\]
		for $j=k,\ldots, p-1$ and $\hat{D}^0\mathbb{Y}_{\mathcal{P}_t} = \mathbb{Y}_{\mathcal{P}_t}$. We make use of the following results.
		
		\begin{lemma} \label{lemma:FD_Y} Let $k\in\{1,\ldots, p-1\}$. The forward differences can be written as follows for $n\in\{0,\ldots, N_t-k\}$
			\begin{equation*} \hat{D}^k\mathbf{Y}_{s_n} = \sum_{i=1}^k (-1)^{k+i} \sum_{\substack{n_1,\ldots, n_i\geq 1\\ n_1 +\cdots + n_i = k}} (s_{n+1} - s_n)^{-n_1} \cdots (s_{n+i} - s_{n+i-1})^{-n_i} (\mathbf{Y}_{s_{n+i}} - \mathbf{Y}_{s_{n+i-1}}). \end{equation*}
		\end{lemma}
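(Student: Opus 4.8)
The plan is to prove the identity by induction on $k$, working directly from the recursive definition $\hat{D}^k\mathbf{Y}_{s_n} = (s_{n+1}-s_n)^{-1}(\hat{D}^{k-1}\mathbf{Y}_{s_{n+1}} - \hat{D}^{k-1}\mathbf{Y}_{s_n})$. For brevity I would write $h_j := s_{n+j} - s_{n+j-1}$ for the successive gaps following $s_n$, so the claimed coefficient of the increment $(\mathbf{Y}_{s_{n+i}} - \mathbf{Y}_{s_{n+i-1}})$ is $c_i^{(k)} := (-1)^{k+i}\sum_{n_1+\cdots+n_i=k,\, n_j\geq 1} h_1^{-n_1}\cdots h_i^{-n_i}$. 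The base case $k=1$ is immediate: the only composition of $1$ is the single part $n_1=1$, so the right-hand side collapses to $(s_{n+1}-s_n)^{-1}(\mathbf{Y}_{s_{n+1}} - \mathbf{Y}_{s_n})$, which is exactly $\hat{D}^1\mathbf{Y}_{s_n}$.

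For the inductive step I would assume the formula at order $k-1$ and substitute it for both $\hat{D}^{k-1}\mathbf{Y}_{s_n}$ and $\hat{D}^{k-1}\mathbf{Y}_{s_{n+1}}$. The subtlety to handle carefully is that the expansion of $\hat{D}^{k-1}\mathbf{Y}_{s_{n+1}}$ is based at $s_{n+1}$, so its gaps are the shifted ones $h_{j+1}$ and its increments are $(\mathbf{Y}_{s_{n+1+i'}} - \mathbf{Y}_{s_{n+i'}})$; aligning these with the increments $(\mathbf{Y}_{s_{n+i}} - \mathbf{Y}_{s_{n+i-1}})$ based at $s_n$ requires the reindexing $i = i'+1$, whereas $\hat{D}^{k-1}\mathbf{Y}_{s_n}$ contributes to the same increment through $i = i'$. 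After dividing by $h_1$, collecting the coefficient of a fixed increment, and tracking the sign $(-1)^{k-1+i'}\mapsto(-1)^{k+i}$ in both contributions, the claim reduces to the purely combinatorial identity
\[\sum_{\substack{n_1+\cdots+n_i=k\\ n_j\geq 1}} h_1^{-n_1}\cdots h_i^{-n_i} = \sum_{\substack{n_1+\cdots+n_i=k-1\\ n_j\geq 1}} h_1^{-(n_1+1)} h_2^{-n_2}\cdots h_i^{-n_i} + h_1^{-1}\sum_{\substack{m_1+\cdots+m_{i-1}=k-1\\ m_j\geq 1}} h_2^{-m_1}\cdots h_i^{-m_{i-1}}.\]

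This identity is the heart of the argument, and it follows by splitting the compositions of $k$ into $i$ positive parts according to the value of the first part: those with $n_1\geq 2$ biject with compositions of $k-1$ into $i$ positive parts via $n_1\mapsto n_1-1$, producing the first sum on the right, while those with $n_1=1$ biject with compositions of $k-1$ into $i-1$ positive parts via $m_j = n_{j+1}$, producing the second sum. I would then verify the two boundary increment indices separately — for $i=1$ only the first sum survives and for $i=k$ only the second — confirming that the collected coefficient equals $c_i^{(k)}$ for every $i\in\{1,\ldots,k\}$ and completing the induction.

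I expect the only genuine obstacle to be the bookkeeping in the inductive step: keeping the gap-shift $h_j\mapsto h_{j+1}$, the increment reindexing $i=i'+1$ versus $i=i'$, and the alternating sign mutually consistent, so that the two expansions are expressed over a common family of increments before the composition identity is invoked. The combinatorial identity itself is elementary, and no analytic or probabilistic input is required, since the statement is a deterministic algebraic identity that holds pathwise for an arbitrary partition $\mathcal{P}_t$.
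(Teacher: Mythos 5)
Your proposal is correct. The identity you isolate — splitting the compositions of $k$ into $i$ positive parts according to whether the first part equals $1$ or is at least $2$ — is exactly the right combinatorial engine, the sign bookkeeping $(-1)^{k-1+i'} \mapsto (-1)^{k+i}$ works out in both contributions, and the boundary cases $i=1$ and $i=k$ are handled as you describe. The route is, however, the mirror image of the paper's. The paper keeps $k$ fixed and inducts on an auxiliary unwinding depth $m$, proving the stronger intermediate identity
\begin{equation*}
\hat{D}^k\mathbf{Y}_{s_n} =\sum_{i=1}^m (-1)^{m+i} \sum_{\substack{n_1,\ldots, n_i\geq 1\\ n_1 +\cdots + n_i = m}} (s_{n+1} - s_n)^{-n_1} \cdots (s_{n+i} - s_{n+i-1})^{-n_i} \bigl(\hat{D}^{k-m}\mathbf{Y}_{s_{n+i}} - \hat{D}^{k-m}\mathbf{Y}_{s_{n+i-1}}\bigr),
\end{equation*}
and at each step applies the recursive definition to the inner terms $\hat{D}^{k-m}\mathbf{Y}_{s_{n+j}}$; the new reciprocal gap then enters at the \emph{right} end of the monomial, so the paper's composition split is on the \emph{last} part ($n_i=1$ versus $n_i\geq 2$). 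Your induction is directly on $k$, applying the inductive hypothesis to $\hat{D}^{k-1}$ at the two base points $s_n$ and $s_{n+1}$, so the new gap $h_1^{-1}$ enters at the \emph{left} end and you split on the first part. Your version avoids stating the auxiliary $m$-indexed identity, at the cost of the base-point shift $s_n\to s_{n+1}$ and the attendant reindexing $i=i'+1$, which you correctly flag as the only delicate point; the paper's version keeps all expansions anchored at $s_n$ throughout, at the cost of carrying the intermediate differences $\hat{D}^{k-m}\mathbf{Y}$ along. Both are elementary, pathwise, and of essentially equal length.
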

		\begin{proof}
			For $m=1,\ldots, k$,
			\begin{align*} 
				\hat{D}^k\mathbf{Y}_{s_n} \\=\sum_{i=1}^m (-1)^{m+i} \sum_{\substack{n_1,\ldots, n_i\geq 1\\ n_1 +\cdots + n_i = m}} (s_{n+1} - s_n)^{-n_1} \cdots (s_{n+i} - s_{n+i-1})^{-n_i} (\hat{D}^{k-m}\mathbf{Y}_{s_{n+i}} - \hat{D}^{k-m}\mathbf{Y}_{s_{n+i-1}}).
			\end{align*}
			This can be proved inductively:
			\begin{itemize}
				\item for the case $m=1$ this is the iterative definition of the forward finite difference;
				\item given the case $m<k$
				\begin{align*} 
					\hat{D}^k\mathbf{Y}_{s_n} &=\sum_{i=1}^m (-1)^{m+i} \sum_{\substack{n_1,\ldots, n_i\geq 1\\ n_1 +\cdots + n_i = m}} (s_{n+1} - s_n)^{-n_1} \cdots (s_{n+i} - s_{n+i-1})^{-n_i} \\
					&\quad\quad\quad\quad\quad\quad\quad\quad\quad\quad\quad\quad\quad\quad\quad\quad\quad\quad\quad\quad\quad\quad\quad\quad \times (\hat{D}^{k-m}\mathbf{Y}_{s_{n+i}} - \hat{D}^{k-m}\mathbf{Y}_{s_{n+i-1}}) \\
					&=\sum_{i=1}^m (-1)^{m+i} \sum_{\substack{n_1,\ldots, n_i\geq 1\\ n_1 +\cdots + n_i = m}} (s_{n+1} - s_n)^{-n_1} \cdots (s_{n+i} - s_{n+i-1})^{-n_i} \\
					&\quad\quad\quad\quad\quad\quad\quad\quad\quad\quad\quad\quad \times \Big[(s_{n+i+1} - s_{n+i})^{-1}(\hat{D}^{k-m-1}\mathbf{Y}_{s_{n+i+1}} - \hat{D}^{k-m-1}\mathbf{Y}_{s_{n+i}})  \\
					&\quad\quad\quad\quad\quad\quad\quad\quad\quad\quad\quad\quad\quad\quad\quad\quad - (s_{n+i} - s_{n+i-1})^{-1}(\hat{D}^{k-m-1}\mathbf{Y}_{s_{n+i}} - \hat{D}^{k-m-1}\mathbf{Y}_{s_{n+i-1}})\Big] \\
					&=\sum_{i=1}^{m+1} (-1)^{m+1+i} \left(\sum_{\substack{n_1,\ldots, n_{i-1}\geq 1, n_i = 1\\ n_1 +\cdots + n_i = m+1}} + \sum_{\substack{n_1,\ldots, n_{i-1}\geq 1, n_i \geq 2 \\ n_1 +\cdots + n_i = m+1}} \right) (s_{n+1} - s_n)^{-n_1} \cdots (s_{n+i} - s_{n+i-1})^{-n_i}  \\
					&\quad\quad\quad\quad\quad\quad\quad\quad\quad\quad\quad\quad\quad\quad\quad\quad\quad\quad\quad\quad\quad\quad\quad\quad
					\times (\hat{D}^{k-m-1}\mathbf{Y}_{s_{n+i}} - \hat{D}^{k-m-1}\mathbf{Y}_{s_{n+i-1}}) \\
					&=\sum_{i=1}^{m+1} (-1)^{m+1+i} \sum_{\substack{n_1,\ldots, n_{i}\geq 1\\ n_1 +\cdots + n_i = m+1}}  (s_{n+1} - s_n)^{-n_1} \cdots (s_{n+i} - s_{n+i-1})^{-n_i} \\
					&\quad\quad\quad\quad\quad\quad\quad\quad\quad\quad\quad\quad\quad\quad\quad\quad\quad\quad\quad\quad\quad\quad\quad\quad \times (\hat{D}^{k-m-1}\mathbf{Y}_{s_{n+i}} - \hat{D}^{k-m-1}\mathbf{Y}_{s_{n+i-1}}), \\
				\end{align*}
				where we use that
				\begin{multline*}\{(n_1,\ldots, n_i): n_1,\ldots, n_i\geq 1, n_1 +\cdots + n_i = m+1\} = \\ \{(n_1,\ldots, n_{i-1}, 1): n_1,\ldots, n_{i-1} \geq 1, n_1 +\cdots + n_{i-1} = m\} + \\ \{(n_1,\ldots, n_i): n_1,\ldots,n_{i-1}\geq 1, n_i\geq 2, n_1 +\cdots + n_i = m+1\}. \end{multline*}
			\end{itemize}
		\end{proof}
		
		\begin{lemma} \label{lemma:Y_iterated_int} Let $k\in\{1,\ldots, p-1\}$. We can write for $n\in\{0,\ldots, N_t-k\}$ and $i\in\{1,\ldots, k\}$
			\begin{equation*}
				\begin{split} \mathbf{Y}_{s_{n+i}} - \mathbf{Y}_{s_{n+i-1}} = \sum_{j=1}^{k-1} D^j \mathbf{Y}_{s_n} &\frac{(s_{n+i} - s_{n})^j - (s_{n+i-1} - s_{n})^j}{j!} \\ &+ \int_{s_{n+i-1}}^{s_{n+i}} \int_{s_n}^{u_1}\cdots\int_{s_n}^{u_{k-1}} D^k \mathbf{Y}_{u_k}\ du_k\ldots du_1. \end{split}
			\end{equation*}
		\end{lemma}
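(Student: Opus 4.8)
The plan is to establish a pathwise Taylor expansion of $\mathbb{Y}$ about the left endpoint $s_n$, with a Lagrange-type remainder written as a $k$-fold iterated integral, and then to obtain the stated identity by evaluating this expansion at $t=s_{n+i}$ and $t=s_{n+i-1}$ and subtracting. Throughout I would work on the almost-sure event on which $\mathbb{Y}$ is $(p-1)$-times right differentiable, so that for a fixed path the whole argument is deterministic.

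First I would record the fundamental theorem of calculus relation supplied by the block structure of the state-space SDE \eqref{eqn:SDE_def} (equivalently by Proposition \ref{prop:right_diff}): for every $m\in\{0,\ldots,p-2\}$ and every $a\leq t$,
\[
D^m\mathbf{Y}_t = D^m\mathbf{Y}_a + \int_a^t D^{m+1}\mathbf{Y}_u\, du,
\]
since $D^m\mathbb{Y}=\mathbb{X}^{(m+1)}$ and the rows $\mathrm{d}\mathbf{X}^{(m+1)}=\mathbf{X}^{(m+2)}\,\mathrm{d}t$ of the SDE carry no stochastic part for $m+1\leq p-1$. Note that $D^m\mathbf{Y}$ is continuous for $m\leq p-2$, whereas only the innermost integrand $D^k\mathbf{Y}$ (which, when $k=p-1$, has jumps) need not be; because it appears solely under an integral sign its \caglad/c\`{a}dl\`{a}g regularity is all that is needed.

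Then I would prove by induction on $\kappa$ the claim that, writing $a=s_n$, for every $t\geq a$ in the relevant range,
\[
\mathbf{Y}_t = \sum_{j=0}^{\kappa-1}\frac{D^j\mathbf{Y}_a}{j!}(t-a)^j + \int_a^t\int_a^{u_1}\cdots\int_a^{u_{\kappa-1}} D^\kappa\mathbf{Y}_{u_\kappa}\, du_\kappa\ldots du_1.
\]
The base case $\kappa=1$ is exactly the FTC relation at $m=0$. For the inductive step I would insert $D^\kappa\mathbf{Y}_{u_\kappa}=D^\kappa\mathbf{Y}_a+\int_a^{u_\kappa}D^{\kappa+1}\mathbf{Y}_{u_{\kappa+1}}\,du_{\kappa+1}$ (valid since $\kappa\leq k-1\leq p-2$) into the remainder and split by linearity: the constant term $D^\kappa\mathbf{Y}_a$ times the $\kappa$-fold iterated integral of $1$ from $a$ to $t$ contributes $D^\kappa\mathbf{Y}_a\,(t-a)^\kappa/\kappa!$, producing the next Taylor coefficient, while the remaining term is precisely the $(\kappa+1)$-fold iterated integral. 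The interchange underlying the split is justified by Tonelli's theorem, as on the compact interval $[a,t]$ every integrand is bounded along the fixed path.

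Finally, applying the claim at level $\kappa=k$ (admissible because the induction uses the FTC relation only for $m=0,\ldots,k-1\leq p-2$) to both $t=s_{n+i}$ and $t=s_{n+i-1}$ and subtracting gives
\[
\mathbf{Y}_{s_{n+i}}-\mathbf{Y}_{s_{n+i-1}} = \sum_{j=0}^{k-1}\frac{D^j\mathbf{Y}_{s_n}}{j!}\left[(s_{n+i}-s_n)^j-(s_{n+i-1}-s_n)^j\right] + \left(I_k(s_{n+i})-I_k(s_{n+i-1})\right),
\]
where $I_k(t)$ denotes the iterated-integral remainder. The $j=0$ summand vanishes, so the sum starts at $j=1$ as stated, and since the two remainders differ only in the outermost limit of integration their difference collapses to $\int_{s_{n+i-1}}^{s_{n+i}}\int_{s_n}^{u_1}\cdots\int_{s_n}^{u_{k-1}}D^k\mathbf{Y}_{u_k}\,du_k\ldots du_1$, the claimed remainder. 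The only real obstacle is the bookkeeping of which derivatives are continuous so that each FTC substitution is legitimate, but no genuine difficulty arises: the single possibly-discontinuous derivative, $D^k\mathbf{Y}$, is never differentiated and only ever integrated.
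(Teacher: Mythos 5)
Your proof is correct and follows essentially the same route as the paper: the fundamental-theorem-of-calculus relation $D^m\mathbf{Y}_t = D^m\mathbf{Y}_a + \int_a^t D^{m+1}\mathbf{Y}_u\,du$ applied iteratively, together with the identity expressing the $j$-fold iterated integral of $1$ as $\bigl[(s_{n+i}-s_n)^j-(s_{n+i-1}-s_n)^j\bigr]/j!$. You merely organize the paper's two-line argument as an explicit induction on the order of the Taylor expansion before differencing the endpoints, which is a matter of presentation rather than of substance.
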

		\begin{proof}
			This follows by writing 
			\[D^j\mathbf{Y}_s = D^j\mathbf{Y}_{s_n} + \int_{s_n}^s D^{j+1}\mathbf{Y}_{u}\ du,\] 
			and noting that for $j=1,\ldots, k-1$
			\[\int_{s_{n+i-1}}^{s_{n+i}} \int_{s_n}^{u_1}\cdots\int_{s_n}^{u_{j-1}}du_{j}\ldots du_1 = \frac{(s_{n+i} - s_{n})^j - (s_{n+i-1} - s_{n})^j}{j!}.\]
		\end{proof}
		
		\begin{lemma} \label{lemma:F(j,k)} The following holds
			\begin{equation} \label{eqn:F(j,k)}
				F(j, k) := (-1)^k \frac{\Delta_{\mathcal{P}_t}^{j-k}}{j!} \sum_{i=1}^k (-1)^i \binom{k-1}{i-1} [i^j - (i-1)^j] = \begin{cases} 0, &j=0,\ldots, k-1,\\
					1, &j=k.\end{cases}.
			\end{equation}
		\end{lemma}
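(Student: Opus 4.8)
The plan is to reduce $F(j,k)$ to a single well-known finite-difference identity. Since the prefactor $(-1)^k \Delta_{\mathcal{P}_t}^{j-k}/j!$ is a nonzero constant for each fixed $j,k$, and $\Delta_{\mathcal{P}_t}^{0}=1$ when $j=k$, it suffices to evaluate the integer-valued sum
\[
S(j,k) := \sum_{i=1}^k (-1)^i \binom{k-1}{i-1}\bigl[i^j - (i-1)^j\bigr],
\]
and show that $S(j,k) = 0$ for $j = 0, \ldots, k-1$ while $S(k,k) = (-1)^k k!$.

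First I would split $S(j,k)$ into the two sums arising from $i^j$ and from $(i-1)^j$, reindex the second one via $m = i-1$, and then extend both sums so that the index ranges over $i = 0, \ldots, k$. The boundary terms introduced this way all vanish because $\binom{k-1}{-1} = \binom{k-1}{k} = 0$, so no value of the sum is changed. At that point Pascal's rule $\binom{k-1}{i-1} + \binom{k-1}{i} = \binom{k}{i}$ collapses the two sums into the single expression
\[
S(j,k) = \sum_{i=0}^{k} (-1)^i \binom{k}{i}\, i^j,
\]
which is precisely $(-1)^k$ times the $k$-th forward difference of the monomial $x \mapsto x^j$ evaluated at the origin.

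Second, I would invoke the elementary fact that the $k$-th forward difference lowers the degree of a polynomial by $k$: it annihilates every polynomial of degree strictly less than $k$, and it sends $x^k$ to the constant $k!$. Consequently $\sum_{i=0}^k (-1)^i\binom{k}{i} i^j$ equals $0$ for $0 \le j < k$ and equals $(-1)^k k!$ for $j = k$. Substituting back, $F(j,k) = (-1)^k \Delta_{\mathcal{P}_t}^{j-k} S(j,k)/j!$ is $0$ for $j < k$, and for $j = k$ it equals $(-1)^k \cdot 1 \cdot (-1)^k k!/k! = 1$, which is exactly the asserted value.

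The only delicate point is the bookkeeping in the first step: one must track the index shift and the two boundary terms introduced when homogenizing the summation ranges, so that the telescoping into $\binom{k}{i}$ through Pascal's rule is exact. Everything else is a direct appeal to the standard finite-difference calculus for polynomials; I would either cite this or append the short induction verifying that $\sum_{i=0}^k(-1)^i\binom{k}{i}i^j$ vanishes for $j<k$ and equals $(-1)^k k!$ for $j=k$.
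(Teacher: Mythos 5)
Your proof is correct, and it takes a genuinely different route from the paper. You homogenize the two half-sums by reindexing $(i-1)\mapsto m$, extend the ranges to $i=0,\ldots,k$ using the vanishing boundary binomials $\binom{k-1}{-1}=\binom{k-1}{k}=0$, and collapse via Pascal's rule to the classical identity
\begin{equation*}
\sum_{i=1}^k (-1)^i \binom{k-1}{i-1}\bigl[i^j-(i-1)^j\bigr] \;=\; \sum_{i=0}^{k}(-1)^i\binom{k}{i}\,i^j \;=\; (-1)^k\,\Delta^k\!\left(x^j\right)\!(0),
\end{equation*}
which is $0$ for $j<k$ and $(-1)^k k!$ for $j=k$ by the standard fact that the $k$-th forward difference annihilates polynomials of degree below $k$ and maps $x^k$ to $k!$. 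The paper instead manipulates the sum $f(j,k)=(-1)^k\sum_{i=1}^k(-1)^i\binom{k-1}{i-1}[i^j-(i-1)^j]$ through a chain of index shifts and factorial rewritings into $f(j,k)=(-1)^k k\sum_{i=1}^k(-1)^i\binom{k-1}{i-1}i^{j-1}$, then expands $i^{j-1}=\sum_m\binom{j-1}{m}(i-1)^m$ to obtain a two-variable recursion which it closes by induction on $k$. Both arguments are sound; yours is shorter and reduces the claim to a single well-known finite-difference (equivalently, Stirling-number) identity, at the cost of either citing that identity or appending the short induction you mention, whereas the paper's is fully self-contained but considerably more laborious. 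The only bookkeeping points to make explicit in a written version are the boundary-term checks you already flag and the convention $0^0=1$ in the degenerate case $j=0$ (where the original bracket $i^0-(i-1)^0$ vanishes termwise anyway).
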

		\begin{proof}
			Note that we can write 
			\[F(j, k) = \frac{\Delta_{\mathcal{P}_t}^{j-k}}{j!}  f(j, k)\]
			where
			\begin{align*}
				f(j, k) &= (-1)^k \sum_{i=1}^k (-1)^i \binom{k-1}{i-1} [i^j - (i-1)^j] \\
				&= (-1)^k\left[\sum_{i=1}^k (-1)^i \binom{k-1}{i-1} i^j - \sum_{i=2}^k (-1)^i \frac{(k-1)!}{(i-1)!(k-i)!} (i-1)^j\right] \\
				&= (-1)^k\left[\sum_{i=1}^k (-1)^i \binom{k-1}{i-1} i^j - \sum_{i=2}^k (-1)^i \frac{(k-1)!}{(i-2)!(k-i)!} (i-1)^{j-1}\right] \\
				&= (-1)^k\left[\sum_{i=1}^k (-1)^i \binom{k-1}{i-1} i^j + \sum_{i=1}^{k-1} (-1)^{i} \frac{(k-1)!}{(i-1)!(k-i-1)!} i^{j-1}\right] \\
				&= (-1)^k\left[(-1)^k k^j + \sum_{i=1}^{k-1} (-1)^i \left( \frac{(k-1)!}{(i-1)!(k-i)!} i^j +\frac{(k-1)!}{(i-1)!(k-i-1)!}i^{j-1} \right)\right] \\
				&= (-1)^k\left[(-1)^k k^j + \sum_{i=1}^{k-1} (-1)^i \frac{(k-1)!}{(i-1)!(k-i-1)!} i^{j-1} \left( \frac{i}{k-i} +1 \right)\right] \\
				&= (-1)^k\left[(-1)^k k^j + \sum_{i=1}^{k-1} (-1)^i \frac{k!}{(i-1)!(k-i)!} i^{j-1} \right] \\
				&= (-1)^k k \sum_{i=1}^{k} (-1)^{i} \binom{k-1}{i-1} i^{j-1} \\
			\end{align*}
			We note that $f(j, k)$ satisfies the relation
			\begin{align*}
				f(j, k) &=(-1)^k k \sum_{i=1}^{k} (-1)^{i} \binom{k-1}{i-1} \sum_{m=0}^{j-1} \binom{j-1}{m} (i-1)^m \\
				&= (-1)^k k \left\{ \sum_{i=0}^{k-1} (-1)^{i+1} \binom{k-1}{i} + \sum_{m=1}^{j-1} \binom{j-1}{m} \left[\sum_{i=1}^{k} (-1)^{i} \frac{(k-1)!}{(i-1)!(k-i)!} (i-1)^m\right]\right\} \\
				&= (-1)^k k \left\{(1 - 1)^{k-1} + \sum_{m=1}^{j-1} \binom{j-1}{m} \left[(k-1) \sum_{i=2}^{k} (-1)^{i} \frac{(k-2)!}{(i-2)!(k-i)!} (i-1)^{m-1}\right]\right\} \\
				&= (-1)^k k \sum_{m=1}^{j-1} \binom{j-1}{m} (k-1) \sum_{i=2}^{k} (-1)^{i} \binom{k-2}{i-2} (i-1)^{m-1} \\
				&= k \sum_{m=1}^{j-1} \binom{j-1}{m} (-1)^{k-1}(k-1) \sum_{i=1}^{k-1} (-1)^{i} \binom{k-2}{i-1} i^{m-1} \\
				&= k \sum_{m=1}^{j-1} f(k-1, m).
			\end{align*}
			Proceeding by induction on $k\geq 1$ one can easily show that 
			\[f(j, k) = \begin{cases} 0, &j=0,\ldots, k-1,\\
				k!, &j=k.\end{cases}\]
		\end{proof}
		
		\section{Simulating MCAR(p) processes} \label{app:simulate_MCAR}
		In this section, we will discuss numerical methods for simulating a discrete-time realization $\mathbb{Y}_{\mathcal{P}_t}$ of the MCAR($p$) processes $\mathbb{Y}$ defined in Definition \ref{def:MCAR}. We note that the state-space representation of the MCAR($p$) process $\mathbb{X}=\{\mathbf{X}_t,\ t\geq 0\}$ satisfies the Ornstein-Uhlenbeck SDE \ref{eqn:SDE_def} with drift matrix $\mathcal{A}_{\mathbf{A}}$ and degenerate \Levy driver $\mathcal{E}\mathbb{L}=\{\mathcal{E}\mathbf{L}_t,\ t\geq 0\}$. The task of simulating a realization of $\mathbb{Y}_{\mathcal{P}_t}$ can thus be tackled by simulating the OU process $\mathbb{X}_{\mathcal{P}_t}$ and then recovering then keeping only the first $d$ entries. We first discuss an approximate scheme and then consider some special cases where exact simulation is possible.
		
		\subsection{An approximate simulation method: Euler-Maruyama scheme} \label{app:simulate_MCAR_approx}
		We note SDE \eqref{eqn:SDE_def} can be approximated over the partition $\mathcal{P}_t:=\{0=t_0<t_1<\ldots<t_{N_t} = t\}$ by
		\begin{equation}
			\mathbf{X}^{\mathcal{P}_t}_0 = \mathbf{X}_0, \quad \mathbf{X}^{\mathcal{P}_t}_{t_{n+1}} = \mathbf{X}^{\mathcal{P}_t}_{t_{n}} + \mathcal{A}_{\mathbf{A}} \mathbf{X}^{\mathcal{P}_t}_{t_n} \Delta^n_{\mathcal{P}_t} + \mathcal{E} \Delta^n_{\mathcal{P}_t} \mathbf{L}, \quad n=0,\ldots, N_t-1,
		\end{equation}
		where we simulate \Levy increments $\Delta^n_{\mathcal{P}_t} \mathbf{L}$ using an exact simulation method. The Brownian component and the jump component of $\Delta^n_{\mathcal{P}_t} \mathbf{L} = \Delta^n_{\mathcal{P}_t} \mathbf{W}^{\Sigma} + \Delta^n_{\mathcal{P}_t} \mathbf{J}$ can be simulated independently:
		\begin{itemize}
			\item the continuous increment is given by $\Delta^n_{\mathcal{P}_t} \mathbf{W}^\Sigma \sim N(\mathbf{0}, \Delta^n_{\mathcal{P}_t} \Sigma)$;
			\item the jump component depends on whether:
			\begin{itemize}
				\item $\mathbb{L}$ has finite jump activity, i.e.\ $F = \lambda \tilde{F}$ where $\tilde{F}$ is the jump size distribution and $\lambda$ is the jump rate, in which case we can simulate the number of jumps in $[t_n, t_{n+1}]$ by a Poisson random variable $K_{\Delta^n_{\mathcal{P}_t}} \sim \mathrm{Poisson}(\Delta^n_{\mathcal{P}_t} \lambda)$ and i.i.d.\ jump sizes from $\tilde{F}$;
				\item $\mathbb{L}$ has infinite jump activity, in which case one simulates directly from the prescribed increment distribution, e.g.\ assuming the $d$ jump components follow independent Gamma processes we can sample each increment as $\Delta^n_{\mathcal{P}_t} J^{(j)} \sim \Gamma(\gamma^j \Delta^n_{\mathcal{P}_t}, \lambda^j)$, $j=,\ldots,d$.
			\end{itemize} 
		\end{itemize}
		
		\begin{remark}
			We can vectorize the operations to speed up computations. For example, if $\mathbf{L} = \mathbf{W}^{\Sigma} + \tilde{\mathbf{J}}$ has finite jump activity, one can generate
			\begin{itemize}
				\item $\{\Delta^n_{\mathcal{P}_t} \mathbf{W}^\Sigma, n=0,\ldots, N_t-1\}$ by simulating $N_t$ i.i.d.\ $N(\mathbf{0}, \Sigma)$ variables and scaling them by $\sqrt{\Delta^n_{\mathcal{P}_t}}$;
				\item $\{\Delta^n_{\mathcal{P}_t} \tilde{\mathbf{J}}, n=0,\ldots, N_t-1\}$ by 
				\begin{itemize}
					\item simulating $K_t\sim \mathrm{Poisson}(t \lambda)$ the total number of jumps over $[0,t]$,
					\item generating the jump times $U^{(1)}\leq \ldots\leq U^{(K_t)}$ as uniform order statistics over $[0,t]$ and the corresponding i.i.d.\ jump sizes $\Delta \tilde{\mathbf{J}}_1, \ldots, \Delta \tilde{\mathbf{J}}_{K_t} \sim \tilde{F}$,
					\item aggregate the jumps over the partition bins $[t_n, t_{n+1}]$ by $\Delta^n_{\mathcal{P}_t} \tilde{\mathbf{J}} = \sum_{k: U^{(k)}\in[t_n, t_{n+1})} \Delta \tilde{\mathbf{J}}_{k}$.
				\end{itemize}
			\end{itemize}
		\end{remark}
		The strong convergence rate of this Euler-Maruyama scheme is studied in \citet[Theorem~2.1]{Euler_scheme_Levy}. \citet[Remark~2.3.(i)]{Euler_scheme_Levy} covers the case where $\mathbb{L}$ has a non-vanishing (possibly degenerate) diffusion part, as in SDE \eqref{eqn:SDE_def}. For the OU process $\mathbb{X}$, under sufficient regularity conditions for the transition density of $\mathcal{E}\mathbb{L}$, they prove the strong convergence rate is
		\[\mathbb{E}_{\mathbf{A}}\left[ \sup_{s\in[0,t]} \|\mathbf{X}_s - \mathbf{X}^{\mathcal{P}_t}_s\|^2\right] \lesssim \Delta_{\mathcal{P}_t}.\]
		
		\subsection{An exact simulation method: the finite jump activity case} \label{app:simulate_MCAR_exact}
		When the driving \Levy process has finite jump activity we can simulate the OU process $\mathbb{X}$ exactly. Recall that when $\mathbb{L}$ has finite jump activity we can write the \Levy measure as $F=\lambda \tilde{F}$ where $\lambda$ is the jump arrival rate and $\tilde{F}$ is the distribution of jump sizes. The process $\mathbb{X}$ satisfies the integral equation
		\begin{equation} \label{eqn:evolve_X} \mathbf{X}_t = e^{\mathcal{A}_{\mathbf{A}}(t-s)}\mathbf{X}_s + \int_s^t e^{\mathcal{A}_{\mathbf{A}}(t-u)} \mathcal{E} d\mathbf{L}_u, \quad t\geq 0.\end{equation}
		By plugging-in the \LevyIto decomposition of $\mathbb{L}$, i.e.\
		$\mathbf{L}_t = \mathbf{b} t + \mathbf{W}^{\Sigma}_t + \tilde{\mathbf{J}}_t$ for $t\geq0$, we can write 
		\[ \mathbf{X}_t = e^{\mathcal{A}_{\mathbf{A}}(t-s)}\mathbf{X}_s + \int_s^t e^{\mathcal{A}_{\mathbf{A}}(t-u)} \mathcal{E} \mathbf{b} du + \int_s^t e^{\mathcal{A}_{\mathbf{A}}(t-u)} \mathcal{E} d\mathbf{W}^{\Sigma}_u + \int_s^t e^{\mathcal{A}_{\mathbf{A}}(t-u)} \mathcal{E} d\tilde{\mathbf{J}}_u, \]
		where 
		\begin{itemize}
			\item $\displaystyle \int_s^t e^{\mathcal{A}_{\mathbf{A}}(t-u)} \mathcal{E} \mathbf{b} du = \left(I - e^{\mathcal{A}_{\mathbf{A}}(t-s)}\right) \mathcal{A}_{\mathbf{A}}^{-1} \mathcal{E} \mathbf{b}$;
			\item $\displaystyle \int_s^t e^{\mathcal{A}_{\mathbf{A}}(t-u)} \mathcal{E} d\mathbf{W}^{\Sigma}_u \sim N\left(\mathbf{0}, F(t-s) e^{\mathcal{A}^\mathrm{T}_{\mathbf{A}}(t-s)}\right) $ where we use the \Ito isometry to compute
			\begin{align*}
				\mathbb{E}_{\mathbf{A}}\left[ \left(\int_s^t e^{\mathcal{A}_{\mathbf{A}}(t-u)} \mathcal{E} d\mathbf{W}^{\Sigma}_u \right) \left(\int_s^t e^{\mathcal{A}_{\mathbf{A}}(t-u)} \mathcal{E} d\mathbf{W}^{\Sigma}_u\right)^\mathrm{T} \right] &= \int_s^t e^{\mathcal{A}_{\mathbf{A}}(t-u)} \mathcal{E}\Sigma \mathcal{E}^\mathrm{T} e^{\mathcal{A}^\mathrm{T}_{\mathbf{A}}(t-u)} du \\
				&= \int_0^{t-s} e^{\mathcal{A}_{\mathbf{A}}(t-s-u)} \mathcal{E}\Sigma \mathcal{E}^\mathrm{T} e^{\mathcal{A}^\mathrm{T}_{\mathbf{A}}(t-s-u)} du \, \\
				& = F(t-s) e^{\mathcal{A}^\mathrm{T}_{\mathbf{A}}(t-s)},
			\end{align*}
			and define
			\[M = \begin{pmatrix} \mathcal{A}_{\mathbf{A}} & \mathcal{E}\Sigma \mathcal{E}^\mathrm{T} \\
				0 & -\mathcal{A}_{\mathbf{A}}^\mathrm{T} \end{pmatrix},\]
			to compute
			\[ e^{M(t-s)} = \begin{pmatrix} e^{\mathcal{A}_{\mathbf{A}}(t-s)} & F(t-s) \\
				0 & e^{-\mathcal{A}_{\mathbf{A}}^\mathrm{T} (t-s)} \end{pmatrix}; \]
			\item $\displaystyle \int_s^t e^{\mathcal{A}_{\mathbf{A}}(t-u)} \mathcal{E} d\tilde{\mathbf{J}}_u = \sum_{\tau_k\in[s,t]} e^{\mathcal{A}_{\mathbf{A}}(t-\tau_k)} \mathcal{E} \Delta\tilde{\mathbf{J}}_{\tau_k} $ where
			\begin{itemize}
				\item $\{\tau_k,\ k\geq1\}$ denote the jump times of $\tilde{\mathbb{J}}$, i.e.\ Poisson arrival times with rate $\lambda$, and 
				\item $\{\Delta\tilde{\mathbf{J}}_{\tau_k},\ k\geq1\}$ denote the jump sizes of $\tilde{\mathbb{J}}$, i.e.\ i.i.d.\ $\tilde{F}$-distributed random variables.
			\end{itemize}
		\end{itemize}
		Starting from $\mathbf{X}_0 = \mathbf{x}_0$ we can thus evolve the process $\mathbb{X}$ along $\mathcal{P}_{t'}=\{0=t_0<t_1<\ldots<t_{N_{t'}} = t'\}$ by setting $s=t_{n}$ and $t=t_{n+1}$ in Equation \eqref{eqn:evolve_X} and simulating the increments due to $\mathbf{W}^\Sigma$ and $\tilde{\mathbf{J}}$ independently. Similarly, we can simulate the initial condition from the stationary distribution by writing
		\begin{align*}\mathbf{X}_0 &= \int_0^\infty e^{\mathcal{A}_{\mathbf{A}}u} \mathcal{E} \mathbf{b} du + \int_0^\infty e^{\mathcal{A}_{\mathbf{A}} u} \mathcal{E} d\mathbf{W}^{\Sigma, \prime}_u + \int_0^\infty e^{\mathcal{A}_{\mathbf{A}}u} \mathcal{E} d\tilde{\mathbf{J}}'_u \\
			&\approx  - \mathcal{A}_{\mathbf{A}}^{-1} \mathcal{E} \mathbf{b} + \int_0^T e^{\mathcal{A}_{\mathbf{A}} u} \mathcal{E} d\mathbf{W}^{\Sigma, \prime}_u + \sum_{\tau_k\in[0,T]} e^{\mathcal{A}_{\mathbf{A}}\tau_k} \mathcal{E} \Delta\tilde{\mathbf{J}}'_{\tau_k} , \end{align*}
		where $T>0$ is a big enough cutoff threshold (which can be chosen depending on $\|\mathcal{A}_{\mathbf{A}}\|$) and $\mathbb{W}^{\Sigma, \prime}$ and $\tilde{\mathbb{J}}'$ are independent copies of $\mathbb{W}^{\Sigma}$ and $\tilde{\mathbb{J}}$.
		
		\section{Practical considerations: details and simulation study}
		\subsection{Disentangling Brownian and jump component of a \Levy process} \label{app:disentangling}
		In this section, we focus on how to disentangle the continuous and jump components of a discretely observed $d$-dimensional \Levy process $\mathbb{L}_{\mathcal{Q}_t}=\{\mathbf{L}_u,\ u\in\mathcal{Q}_t\}$ with $\mathcal{Q}_t =\{0=t_0<t_1<\ldots<t_{M_t}=t\}$. We assume throught the \Levy process $\mathbb{L}$ has \Levy triplet $(\mathbf{0}, \Sigma, F)$. In our setting, this is particularly useful when estimating the covariance matrix of the Brownian component $\Sigma$ and choosing the thresholding powers $\beta^{(i)}$, cf.\ Section \ref{sec:practical_considerations}. To do so we follow the approach outlined in \citet{Gegler_2011} based on a critical region $B^\gamma_t$ but generalize it to irregularly spaced data. The idea is quite simple and is essentially a multidimensional version of thresholding. Let us fix the hyperparameter $\gamma > 1$ and define the true critical regions depending on the unknown parameter $\Sigma$
		\begin{equation} \label{eqn:critical_region}
			B^\gamma_{t, m}(\Sigma) = \{\mathbf{x}\in\mathbb{R}^d:\mathbf{x}^\mathrm{T}\Sigma\mathbf{x} \leq 2 \gamma \Delta_{\mathcal{Q}_t}^m \log(M_t)\}.
		\end{equation}
		We can then classify each \Levy increment in $\{\Delta^m_{\mathcal{Q}_t}\mathbf{L} ,\ m=0,\ldots, M_t-1\}$ as
		\begin{itemize}
			\item a Brownian increment if $\Delta^m_{\mathcal{Q}_t}\mathbf{L} \in B^\gamma_{t, m}(\Sigma)$;
			\item a jump increment if $\Delta^m_{\mathcal{Q}_t}\mathbf{L} \notin B^\gamma_{t, m}(\Sigma)$.
		\end{itemize}
		The choice of such critical regions, inherently tied to the elliptical shape of multivariate normal level curves, ensures that asymptotically only Brownian increments are retained. This procedure leads to a consistent (as $\Delta_{\mathcal{Q}_t}\rightarrow 0$ with $t$ fixed) estimator for $\Sigma$
		\begin{equation} \label{eqn:Sigma_hat}
			\hat{\Sigma}(\mathbb{L}_{\mathcal{Q}_t};\gamma, \Sigma) = \frac{1}{M_t} \sum_{m=0}^{M_t-1} \frac{1}{\Delta^m_{\mathcal{Q}_t}} (\Delta^m_{\mathcal{Q}_t}\mathbf{L}) (\Delta^m_{\mathcal{Q}_t}\mathbf{L})^\mathrm{T} \mathds{1}_{\{\Delta^m_{\mathcal{Q}_t}\mathbf{L} \in B^\gamma_{t, m}(\Sigma)\}}.
		\end{equation}
		Clearly, this estimator is infeasible in practice since the true critical regions depend on the unknown covariance matrix $\Sigma$. To use this in practice we apply the following iterative scheme with tolerance level $\epsilon>0$:
		\begin{enumerate}
			\item initialize $\hat{\Sigma}^{(0)} = \frac{1}{M_t} \sum_{m=0}^{M_t-1} \frac{1}{\Delta^m_{\mathcal{Q}_t}} (\Delta^m_{\mathcal{Q}_t}\mathbf{L}) (\Delta^m_{\mathcal{Q}_t}\mathbf{L})^\mathrm{T};$
			\item for $n\geq 0$ set $\hat{\Sigma}^{(n+1)} = \hat{\Sigma}(\mathbb{L}_{\mathcal{Q}_t};\gamma, \hat{\Sigma}^{(n)})$ until $\|\hat{\Sigma}^{(n+1)} - \hat{\Sigma}^{(n)}\| \leq \epsilon$.
		\end{enumerate}
		\citet{Gegler_2011} shows that the limiting critical regions $B^\gamma_{t, m}(\hat{\Sigma}^*)$ where $\hat{\Sigma}^* = \lim_{n\rightarrow\infty}\hat{\Sigma}^{(n)}$ converge in probability to the true critical regions $B^\gamma_{t, m}(\Sigma)$. From the iterative procedure above we thus obtain an estimator $\hat{\Sigma} = \hat{\Sigma}(\mathbb{L}_{\mathcal{Q}_t};\gamma, \epsilon)$ for $\Sigma$ and feasible critical regions $\{B^\gamma_{t, m}(\hat{\Sigma}),\ m=0,\ldots, M_t-1\}$ which can be used to disentangle the Brownian and jump components of $\mathbb{L}_{\mathcal{Q}_t}$.
		
		\subsection{Simulation study with unknown \texorpdfstring{$\Sigma$}{Sigma} and \texorpdfstring{$F$}{F}} \label{app:realistic_simulations}
		In this section, we repeat the simulation study carried out in Section \ref{sec:simulation_study} but now assume the \Levy process parameters $\Sigma$ and $F$ are unknown. We use the data-driven methods described in Section \ref{sec:practical_considerations} to modify the estimation and inference procedure. In summary, we make the following adjustments:
		\begin{itemize}
			\item In the estimation step we select the thresholds $\nu_t^{m, (i)},\ m=0,\ldots, M_t-1$ by treating the increments $\Delta_{\mathcal{Q}_t}^m \hat{D}^{p-1}Y^{(i)},\ m=0,\ldots, M_t-1$ as univariate \Levy increments. We can thus disentangle the Brownian component by using the procedure described in Appendix \ref{app:disentangling}.
			\item Once we obtain the estimator $\hat{\mathbf{A}}(\mathbb{Y}_{\mathcal{P}_t}; \mathcal{Q}_t, \boldsymbol{\nu}_t)$ we recover the \Levy increments with Equation \eqref{eqn:recover_Levy} and estimate the Brownian covariance $\Sigma$ by using again the procedure described in Appendix \ref{app:disentangling}. We can then compute an approximation to the asymptotic z-statistic \eqref{eqn:feasible_CLT} by plugging the estimator $\hat{\Sigma}$ into the formula for $[\mathbf{H}]_{\mathcal{P}_t,\mathcal{Q}_t}$, i.e.\ Equation \eqref{eqn:[H]_hat_hat}.
		\end{itemize}
		The experiment is set up exactly as in Section \ref{sec:simulation_study}, i.e.\ we consider a $d=1$ dimensional CAR process of order $p=2$ such that the driving \Levy process has triplet $(0, \Sigma=1, F)$ and $F$ follows the three jump regimes (BM), (CP) and ($\Gamma$). As before $\mathcal{P}_t$ and $\mathcal{Q}_t$ are chosen to be uniform with mesh sizes $\Delta_{\mathcal{P}_t}=t^{-6}$ and $\Delta_{\mathcal{Q}_t}=t^{-2}$. The results are reported in Figures \ref{fig:kde_contour_estimators_realistic}, \ref{fig:kde_contour_statistic_realistic}, and \ref{fig:histograms_Sigma}, which empirically suggest consistency of $\hat{\mathbf{A}}(\mathbb{Y}_{\mathcal{P}_t}; \mathcal{Q}_t, \boldsymbol{\nu}_t)$, asymptotic normality of the approximate statistic $\mathbf{Z}_t(\hat{\Sigma})$ and consistency of the estimator $\hat{\Sigma}$.
		
		\begin{figure}
			\centering
			\includegraphics[width=\textwidth]{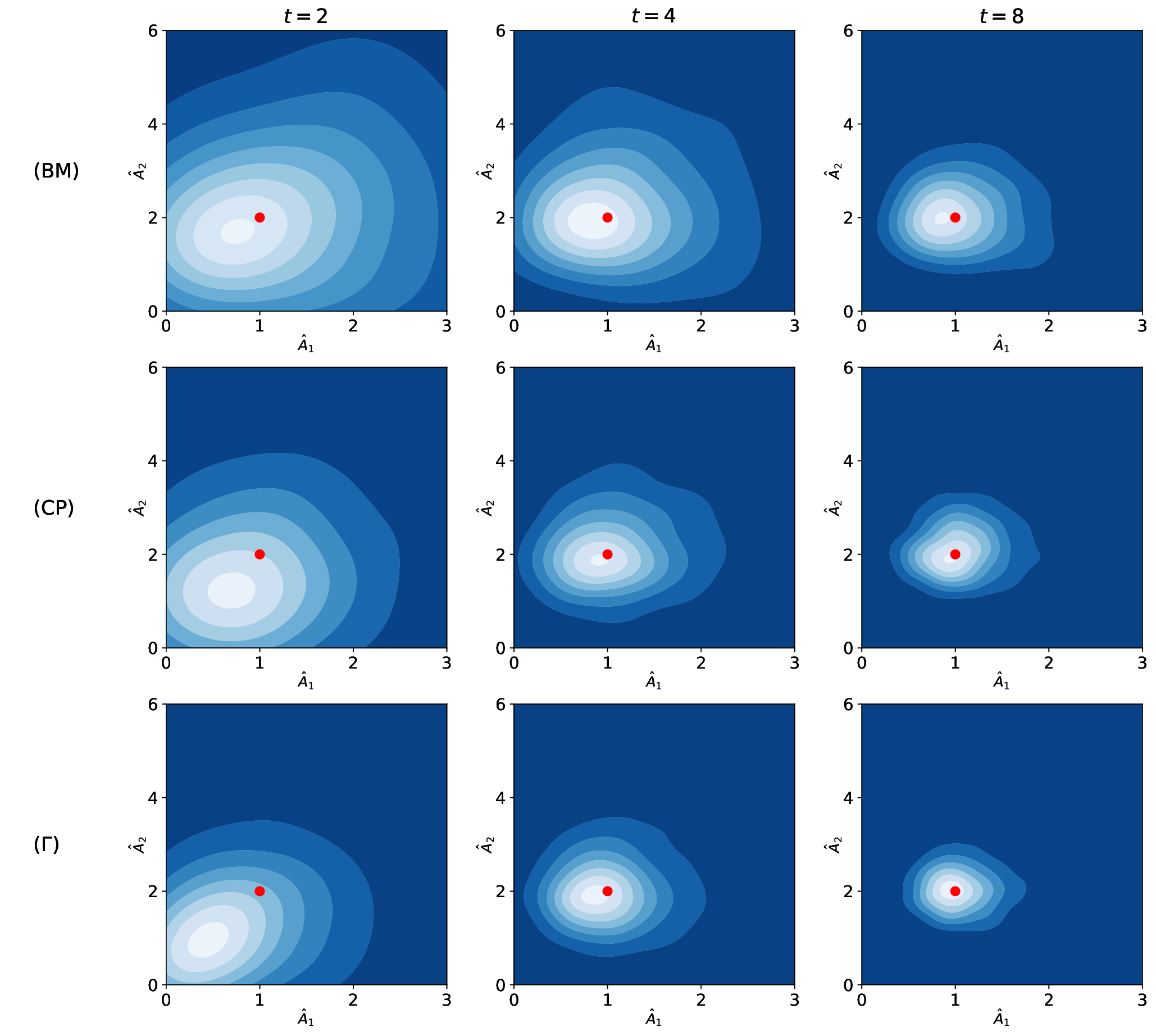}
			\caption{Empirical distribution of the estimator $\hat{\mathbf{A}}(\mathbb{Y}_{\mathcal{P}_t}; \mathcal{Q}_t, \boldsymbol{\nu}_t)$ for 1000 Monte Carlo samples under the three jump regimes for $t=2, 4, 8$ when $\Sigma$ and $F$ are unknown.}
			\label{fig:kde_contour_estimators_realistic}
		\end{figure}
		
		\begin{figure}
			\centering            \includegraphics[width=\textwidth]{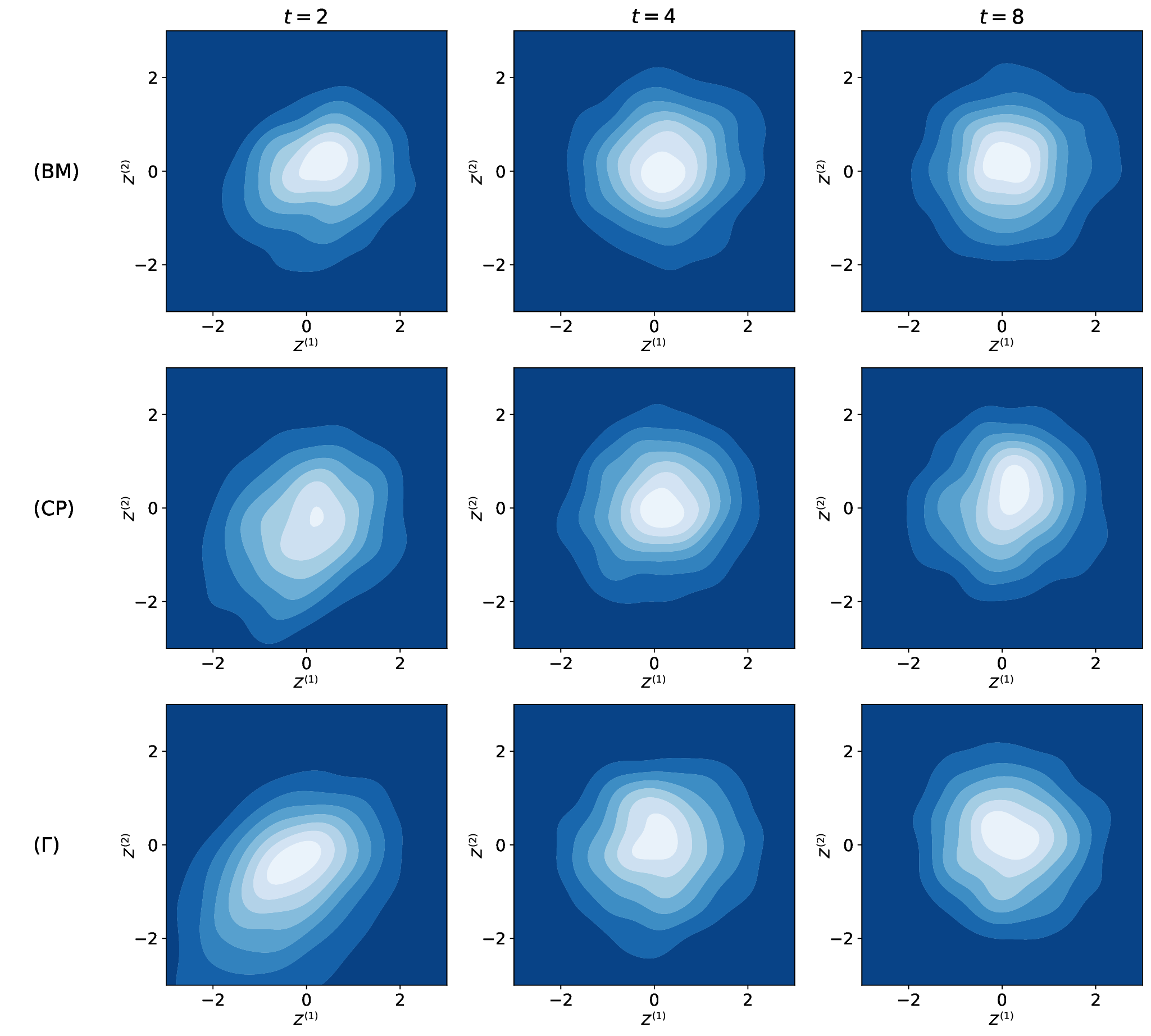}
			\caption{Empirical distribution of the statistic $\mathbf{Z}_t(\hat{\Sigma})$ for 1000 Monte Carlo samples under the three jump regimes for $t=2, 4, 8$ when $\Sigma$ and $F$ are unknown.}
			\label{fig:kde_contour_statistic_realistic}
		\end{figure}
		
		\begin{figure}
			\centering            \includegraphics[width=\textwidth]{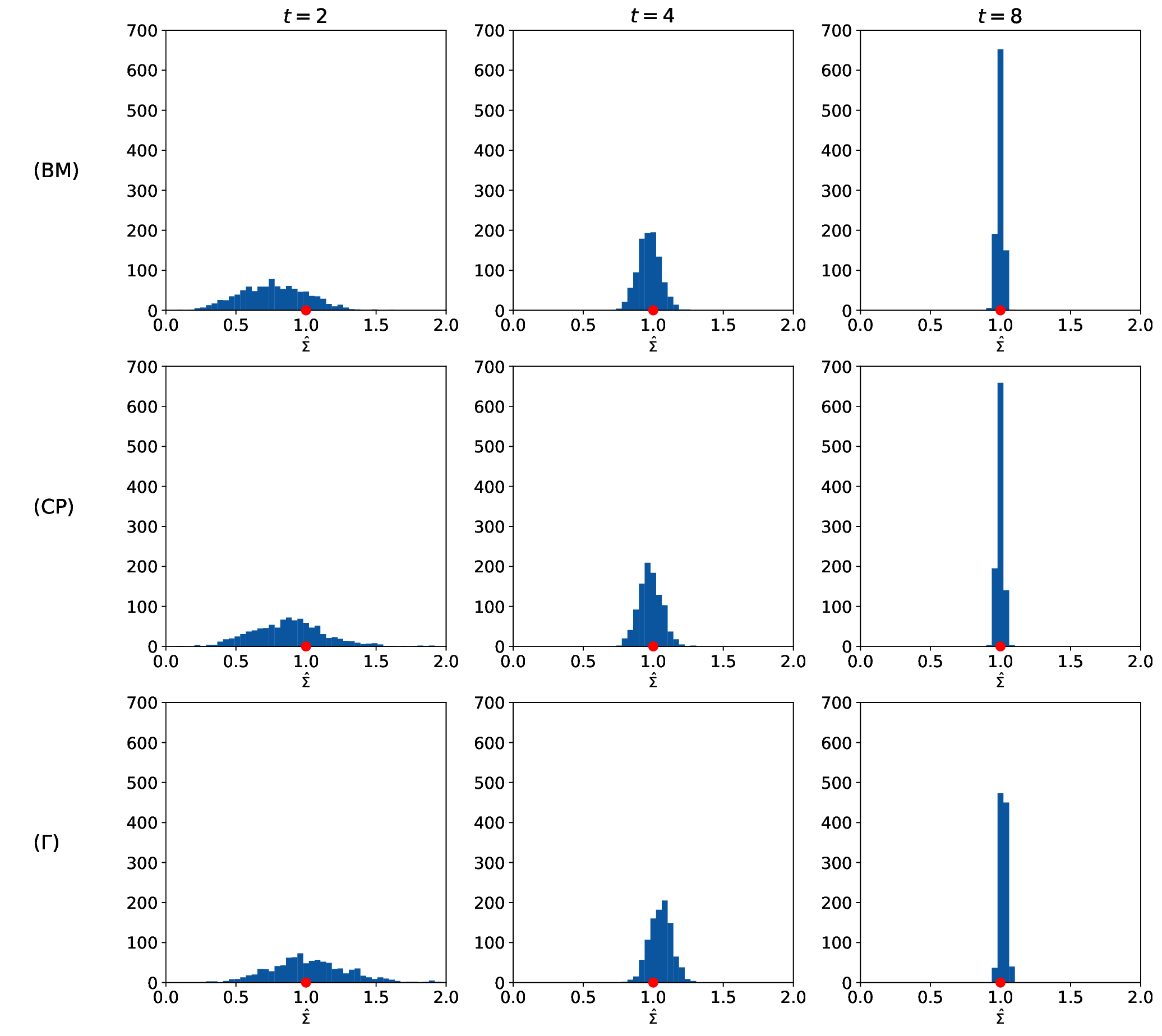}
			\caption{Empirical distribution of the estimator $\hat{\Sigma}$ for 1000 Monte Carlo samples under the three jump regimes for $t=2, 4, 8$ when $\Sigma$ and $F$ are unknown.}
			\label{fig:histograms_Sigma}
		\end{figure}

		\newpage
		\nomenclature{$(\Omega', \mathcal{F}', \{\mathcal{F}'_t,\ t\geq0\}, \mathbb{P}')$}{is an arbitrary filtered probability space.}
		\nomenclature{$\mathbb{L}=\{\mathbf{L}_t,\ t\geq 0\}$}{is a $d$-dimensional \Levy process on $(\Omega', \mathcal{F}', \{\mathcal{F}'_t,\ t\geq0\}, \mathbb{P}')$.}
		\nomenclature{$(\mathbf{b}, \Sigma, F)$}{is the characteristic triplet of the \Levy process $\mathbb{L}$, i.e.\ $\mathbf{b}\in\mathbb{R}^d$, $\Sigma$ is a $d\times d$  symmetric positive definite matrix and $F$ is a \Levy measure.}
		\nomenclature{$\mathbb{W}=\{\mathbf{W}_t,\ t\geq 0\}$}{is a $d$-dimensional standard Brownian motion on $(\Omega', \mathcal{F}', \{\mathcal{F}'_t,\ t\geq0\}, \mathbb{P}')$.}
		\nomenclature{$\{\mu(B): B\in\mathcal{B}((0,\infty)\times \mathbb{R}^d\setminus\{0\})\}$}{is a Poisson random measure on $(\Omega', \mathcal{F}', \{\mathcal{F}'_t,\ t\geq0\}, \mathbb{P}')$ with compensator $F(dz)dt$.}
		\nomenclature{$\Delta \mathbb{L} = \{\Delta \mathbf{L}_t=  \mathbf{L}_t - \mathbf{L}_{t-},\ t\geq 0\}$}{is the jump process of $\mathbb{L}$.}
		\nomenclature{$\mathcal{B}(E)$}{is the Borel sigma algebra over the space $E$ equipped with its natural topology.}
		\nomenclature{$\mathbb{A} = \{\mathbf{A}_t,\ t\geq 0\}$}{is a predictable finite variation process on $(\Omega', \mathcal{F}', \{\mathcal{F}'_t,\ t\geq0\}, \mathbb{P}')$.}
		\nomenclature{$\mathbb{M}=\{\mathbf{M}_t,\ t\geq 0\}$}{is a local martingale on $(\Omega', \mathcal{F}', \{\mathcal{F}'_t,\ t\geq0\}, \mathbb{P}')$.}
		\nomenclature{$(\Omega, \mathcal{F}, \{\mathcal{F}_t, \ t\geq 0\})$}{denotes the canonical space of $\mathbb{R}^{pd}$-valued paths.}
		\nomenclature{$\mathbb{L}=\{\mathbf{L}_t,\ t\geq 0\}$}{is a $d$-dimensional \Levy process on $(\Omega', \mathcal{F}', \{\mathcal{F}'_t,\ t\geq0\}, \mathbb{P}')$.}
		\nomenclature{$(\mathbf{b}, \Sigma, F)$}{is the characteristic triplet of the \Levy process $\mathbb{L}$, i.e.\ $\mathbf{b}\in\mathbb{R}^d$, $\Sigma$ is a $d\times d$  symmetric positive definite matrix and $F$ is a \Levy measure.}
		\nomenclature{$\mathcal{M}_{n,d}(E)$}{is the set of $n\times d$ matrices over $E$, e.g.\ $E=\mathbb{R}$.}
		\nomenclature{$\mathcal{M}_{d}(E)$}{is the set of $d\times d$ matrices over $E$, e.g.\ $E=\mathbb{R}$.}
		\nomenclature{$\mathcal{M}^{-}_{d}(\mathbb{R})$}{is the set of $d\times d$ matrices over $\mathbb{R}$ with eigenvalues having strictly negative real part.}
		\nomenclature{$\mathbf{A}=(A_1,\ldots, A_p)\in(\mathcal{M}_d(\mathbb{R}))^p$}{is a set of coefficient matrices for the MCAR process $\mathbb{Y}$.}
		\nomenclature{$\mathcal{A}_\mathbf{A}$}{is the $pd\times pd$ design matrix of the MCAR process $\mathbb{Y}$, given by Equation \eqref{eqn:design_A_E}.}
		\nomenclature{$\mathcal{E}$}{is the $pd\times d$ matrix given in Equation \eqref{eqn:design_A_E}.}
		\nomenclature{$\mathbb{Y} = \{\mathbf{Y}_t,\ t\geq 0\}$}{is a $d$-dimensional MCAR process of order $p$ as defined in Definition \ref{def:MCAR}, this is defined on $(\Omega, \mathcal{F}, \{\mathcal{F}_t, \ t\geq 0\})$ or on $(\Omega', \mathcal{F}', \{\mathcal{F}'_t, \ t\geq 0\})$ depending on the context.}
		\nomenclature{$\xi$}{is an $\mathcal{F}'_0$ measurable random variable.}
		\nomenclature{$\sigma(\cdot)$}{is the spectrum operator.}
		\nomenclature{$A\in\mathcal{M}_d(\{0,1\})$}{is a graph adjacency matrix.}
		\nomenclature{$\bar{A}\in\mathcal{M}_d(\mathbb{R})$}{is a column-normalized graph adjacency matrix.}
		\nomenclature{$\theta\in\mathbb{R}^{p\times 2}$}{is a parametrization of the GrCAR process, as defined in Definition \ref{def:GrCAR}.}
		
		\nomenclature{$\mathbb{X}_{\mathbf{A},\mathbf{x}_0} = \{\mathbf{X}_{\mathbf{A},\mathbf{x}_0,t},\ t\geq 0\}$}{denotes the state-space representation of a MCAR($p$) with parameter $\mathbf{A}$ and deterministic initial condition $\xi = \mathbf{x}_0$ on $(\Omega', \mathcal{F}', \{\mathcal{F}'_t, \ t\geq 0\}, \mathbb{P}')$.}
		\nomenclature{$\mathbb{W}=\{\mathbf{W}_t,\ t\geq 0\}$}{is a $d$-dimensional standard Brownian motion on $(\Omega', \mathcal{F}', \{\mathcal{F}'_t,\ t\geq0\}, \mathbb{P}')$.}
		\nomenclature{$\{\mu(B): B\in\mathcal{B}((0,\infty)\times \mathbb{R}^d\setminus\{0\})\}$}{is a Poisson random measure on $(\Omega', \mathcal{F}', \{\mathcal{F}'_t,\ t\geq0\}, \mathbb{P}')$ with compensator $F(dz)dt$.}
		\nomenclature{$\Sigma^{1/2}$}{is the unique symmetric positive definite square root of the strictly positive definite symmetric matrix $\Sigma$.}
		\nomenclature{$(\mathbf{B}'_{\mathbf{A}}, C', \nu')$}{are the local characteristics of $\mathbb{X}_{\mathbf{A},\mathbf{x}_0}$ on $(\Omega', \mathcal{F}', \{\mathcal{F}'_t,\ t\geq0\}, \mathbb{P}')$ as defined in Equation \eqref{eqn:loc_char_P'}.}
		\nomenclature{$(\mathbf{B}_{\mathbf{A}}, C, \nu)$}{are local characteristics on $(\Omega, \mathcal{F}, \{\mathcal{F}_t,\ t\geq0\})$ as defined in Equation \eqref{eqn:loc_char_P}.}
		\nomenclature{$\mathbb{P}_{\mathbf{A}, \mathbf{x}_0}$}{is a probability measure on $(\Omega, \mathcal{F}, \{\mathcal{F}_t,\ t\geq0\})$. This denotes the unique solution measure to SDE \eqref{eqn:SDE} with deterministic initial condition $\mathbf{X}_0=\mathbf{x}_0$ and, equivalently, the unique solution to the martingale problem for the local characteristics $(\mathbf{B}_{\mathbf{A}}, C, \nu)$ with initial condition $\delta_{\mathbf{x}_0}$.}
		\nomenclature{$\mathbb{E}^{\mathbb{P}}[\cdot]$}{is the expectation operator corresponding to the probability measure $\mathbb{P}$.}
		\nomenclature{$\mathbb{X} = \{\mathbf{X}_t,\ t\geq 0\}$}{is the canonical process on $(\Omega, \mathcal{F}, \{\mathcal{F}_t,\ t\geq0\})$, i.e.\ $\mathbf{X}_t(\omega) = \omega(t),\ \forall t\geq0$. When $(\Omega, \mathcal{F}, \{\mathcal{F}_t,\ t\geq0\})$ is endowed with the measure $\mathbb{P}_{\mathbf{A},\mathbf{x}_0}$ this is the state-space representation of the MCAR process $\mathbb{Y}$, i.e.\ $\mathbf{X}_t = (\mathbf{Y}^\mathrm{T}_t, \ldots, D^{p-1}\mathbf{Y}^\mathrm{T}_t)^\mathrm{T}$.}
		\nomenclature{$\mathbb{X}^c_{\mathbf{A},\mathbf{x}_0} = \{\mathbf{X}^c_{\mathbf{A},t},\ t\geq 0\}$}{is the process on $(\Omega, \mathcal{F}, \{\mathcal{F}_t,\ t\geq0\})$ given by Equation \eqref{eqn:cont_mart_part}.}
		\nomenclature{$\Delta\mathbb{X} = \{\Delta\mathbf{X}_t=\mathbf{X}_t - \mathbf{X}_{t-},\ t\geq 0\}$}{is the canonical jump process.}
		\nomenclature{$\{\mu_\mathbb{X}(B): B \in \mathcal{B}((0,\infty)\times \mathbb{R}^{pd}\setminus \{0\})\}$}{is the canonical jump measure given by Equation \eqref{eqn:jump_meas}.}
		\nomenclature{$\mathbf{A}^{(0)} = (0_{d\times d}, \ldots, 0_{d\times d})\in(\mathcal{M}_d(\mathbb{R}))^p$}{is a reference set of parameters.}
		\nomenclature{$\mathbb{P}^t_{\mathbf{A}, \mathbf{x}_0}$}{denotes the restriction of $\mathbb{P}_{\mathbf{A}, \mathbf{x}_0}$ to $\mathcal{F}_t$.}
		\nomenclature{$\tilde{\mathcal{A}}_\mathbf{A}\in\mathcal{M}_{pd}(\mathbb{R})$}{is given by Equation \eqref{eqn:A_tilde}.}
		\nomenclature{$\tilde{\Sigma}^{-1}$}{is a pseudo-inverse for $\tilde{\Sigma}=\mathcal{E}\Sigma \mathcal{E}^\mathrm{T}$, given by Equation \eqref{eqn:Sigma_tilde}.}
		\nomenclature{$(\mathbb{X}, (B, C, \nu), \delta_{\mathbf{x}_0})$}{denotes the set of solutions to the martingale problem on canonical space with local characteristics $(B, C, \nu)$ and initial condition $\delta_{\mathbf{x}_0}$, i.e.\ a set of measure on $(\Omega, \mathcal{F}, \{\mathcal{F}_t,\ t\geq0\})$.}
		\nomenclature{$\mathbb{X}^t_{\mathbf{A},\mathbf{x}_0} = \{\mathbf{X}^t_{\mathbf{A},\mathbf{x}_0,s},\ s\geq 0\}$}{denotes the process $\mathbb{X}_{\mathbf{A},\mathbf{x}_0}$ stopped at $t>0$.}
		\nomenclature{$(\mathbf{B}^{\prime,t}_{\mathbf{A}}, C^{\prime,t}, \nu^{\prime,t})$}{are the local characteristics of $\mathbb{X}^t_{\mathbf{A},\mathbf{x}_0}$ on $(\Omega', \mathcal{F}', \{\mathcal{F}'_s,\ s\geq0\}, \mathbb{P}')$.}
		\nomenclature{$(\mathbf{B}_\mathbf{A}^t, C^t, \nu^t)$}{are local characteristics on $(\Omega, \mathcal{F}, \{\mathcal{F}_t,\ t\geq0\})$ as defined in Equation \eqref{eqn:loc_char_P_stopped}.}
		\nomenclature{$\mathbb{Q}_{\mathbf{A}, \mathbf{x}_0}$}{is a probability measure on $(\Omega, \mathcal{F}, \{\mathcal{F}_t,\ t\geq0\})$. This denotes the unique solution measure to SDE \eqref{eqn:SDE_stopped} with deterministic initial condition $\mathbf{X}_0=\mathbf{x}_0$ and, equivalently, the unique solution to the martingale problem for the local characteristics $(\mathbf{B}_\mathbf{A}^t, C^t, \nu^t)$ with initial condition $\delta_{\mathbf{x}_0}$.}
		\nomenclature{$\mathbb{Q}^t_{\mathbf{A}, \mathbf{x}_0}$}{denotes the restriction of $\mathbb{Q}_{\mathbf{A}, \mathbf{x}_0}$ to $\mathcal{F}_t$.}
		\nomenclature{$\mathbb{Z}_\mathbf{A} = \{\mathbf{Z}_{\mathbf{A},t},\ t\geq 0\}$}{is a predictable process on $(\Omega, \mathcal{F}, \{\mathcal{F}_{t},\ t\geq 0\})$ given by $\mathbf{Z}_{\mathbf{A},t}= \tilde{\Sigma}^{-1} \tilde{\mathcal{A}}_\mathbf{A} \mathbf{X}_{t-}$.}
		\nomenclature{$A^t_{\mathbf{A}, \infty} $}{is an $\mathcal{F}_t$ measurable random variable given by Equation \eqref{eqn:A^t_infty}.}
		\nomenclature{$P_{t, \mathbf{x}_0, \mathbf{A}}(\cdot) = \mathbb{P}'(\mathbf{X}_{\mathbf{A},\mathbf{x}_0, t}\in\cdot)$}{is the infinitely divisible transition probability of the unique (Markovian) solution to SDE \eqref{eqn:SDE} starting at $\mathbf{X}_0 = \mathbf{x}_0$, i.e.\ the law at time $t\geq 0$ of the state-space representation of an MCAR process with parameters $\mathbb{A}$ and initial condition $\mathbf{x}_0$. This law is infinitely divisible with characteristic triplet $(b_{t,\mathbf{x}_0, \mathbf{A}}, \Sigma_{t, \mathbf{A}}, F_{t, \mathbf{A}})$.}
		\nomenclature{$(b_{t,\mathbf{x}_0, \mathbf{A}}, \Sigma_{t, \mathbf{A}}, F_{t, \mathbf{A}})$}{is the characteristic triplet of $P_{t, \mathbf{x}_0, \mathbf{A}}(\cdot)$, given by Equation \eqref{eqn:triplet_X}.}
		\nomenclature{$\mathcal{Q}_{t,x} (d\omega) = \mathbb{Q}_{\mathbf{A}, t, x} (d\omega)$}{is a transition kernel from $(\mathbb{R}_+ \times \mathbb{R}^{pd}, \mathcal{B}(\mathbb{R}_+ \times \mathbb{R}^{pd}))$ to $(\Omega, \mathcal{F})$ associating to each $(t,x)$ a probability measure on canonical space.}
		\nomenclature{$L^p(\Omega', \mathcal{F}', \mathbb{P}')$}{is the space of $p$-integrable random variable on $(\Omega', \mathcal{F}', \mathbb{P}')$.}
		\nomenclature{$D^j\mathbb{Y}_{[0,T]} = \{D^j\mathbf{Y}_t, \ t\in[0,T]\}$}{is the $j$-th time derivative of $\mathbb{Y}_{[0,T]}$ for $j=0,\ldots, p-1$.}
		\nomenclature{$\{\tilde{\mu}(B): B\in\mathcal{B}((0,\infty)\times \mathbb{R}^d\setminus\{0\})\}$}{is the jump measure of $D^{p-1}\mathbb{Y}_{[0,T]}$.}
		\nomenclature{$\mathcal{L}(\mathbf{A}; \mathbb{Y}_{[0,T]})$}{is the likelihood of the parameters $\mathbb{A}\in(\mathcal{M}_d(\mathbb{R}))^p$ given an observation $\mathbb{Y}_{[0,T]}$ of the MCAR($p$) process.}
		\nomenclature{$\mathbb{H} = \{\mathbf{H}_t, \ t\geq 0\}$}{is the $\mathbb{R}^{pd^2}$-valued process given by Equation \eqref{eqn:H}.}
		\nomenclature{$[\mathbb{H}] = \{[\mathbf{H}]_t, \ t\geq 0\}$}{is the $\mathbb{R}^{pd^2\times pd^2}$-valued  process given by Equation \eqref{eqn:[H]}, the quadratic variation of $\mathbb{H}$.}
		\nomenclature{$\mathfrak{A}\subseteq (\mathcal{M}_d(\mathbb{R}))^p$}{is the set of parameters which ensure, along with conditions on the driving \Levy process and the initial condition, stationarity and ergodicity of the MCAR($p$) process and its state-space representation. This is given by Equation \eqref{eqn:param_set}.}
		\nomenclature{$\hat{\mathbf{A}}(\mathbb{Y}_{[0,t]})\in(\mathcal{M}_d(\mathbb{R}))^p$}{is the maximum likelihood estimator of the MCAR($p$) parameters given a continuous time observation $\mathbb{Y}_{[0,t]}$, Equation \eqref{eqn:MLE}.}
		\nomenclature{$\mathbf{A}^*\in(\mathcal{M}_d(\mathbb{R}))^p$}{is the true parameter of the MCAR($p$) process.}
		\nomenclature{$\mathrm{vec}(\cdot)$}{is the vectorization operator, mapping an input tensor or matrix to a vector.}
		\nomenclature{$\mathcal{H}_\infty$}{is the $pd^2\times pd^2$ matrix given by Equation \eqref{eqn:H_infty}.}
		\nomenclature{$\mathbf{X}_\infty = (\mathbf{Y}_\infty^\mathrm{T},\ldots, D^{p-1}\mathbf{Y}_\infty^\mathrm{T})^\mathrm{T}$}{is a ``representative'' random variable for the stationary distribution of the MCAR($p$) state-space representation.}
		\nomenclature{$\mathcal{L}(\theta; \mathbb{Y}_{[0,T]})$}{is the likelihood of the parameters $\theta\in\mathbb{R}^{p\times 2}$ given an observation of the GrCAR($p$) process $\mathbb{Y}_{[0,T]}$.}
		\nomenclature{$\mathbb{K} = \{\mathbf{K}_t, \ t\geq 0\}$}{is the $\mathbb{R}^{2p}$-valued process given by Equation \eqref{eqn:K}.}
		\nomenclature{$[\mathbb{K}] = \{[\mathbf{K}]_t, \ t\geq 0\}$}{is the $\mathbb{R}^{2p\times 2p}$-valued process given by Equation \eqref{eqn:[K]}, the quadratic variation of $\mathbb{K}$.}
		\nomenclature{$\Theta\subseteq \mathbb{R}^{p\times 2}$}{is the set of parameters which ensure, along with conditions on the driving \Levy process and the initial condition, stationarity and ergodicity of the GrCAR($p$) process and its state-space representation. This is given by Equation \eqref{eqn:param_set_Theta}.}
		\nomenclature{$\hat{\theta}_t\in\mathbb{R}^{p\times 2}$}{is the maximum likelihood estimator of the GrCAR($p$) parameters given a continuous time observation $\mathbb{Y}_{[0,t]}$, Equation \eqref{eqn:MLE_GrCAR}.}
		\nomenclature{$\theta^*\in\mathbb{R}^{p\times 2}$}{is the true parameter of the GrCAR($p$) process.}
		\nomenclature{$\mathcal{K}_\infty$}{is the $2p\times 2p$ matrix given by Equation \eqref{eqn:K_infty}.}

		\nomenclature{$\mathbb{Y}_{[0,t]} = \{\mathbf{Y}_s, \ s\in[0,t]\}$}{is a continuous time observation of an MCAR($p$), or GrCAR($p$), process over $[0,t]$.}
		\nomenclature{$D^j\mathbb{Y}_{[0,t]} = \{D^j\mathbf{Y}_s, \ s\in[0,t]\}$}{is the $j$-th time derivative of $\mathbb{Y}_{[0,t]}$ for $j=0,\ldots, p-1$.}
		\nomenclature{$\mathcal{P}_t := \{0=s_0< s_1<\ldots <s_{N_t} = t\}$}{is a partition of $[0,t]$.}
		\nomenclature{$\mathbb{X}_{[0,t]} = \{\mathbf{X}_s,\ s\in[0,t]\}$}{is the state-space representation of the MCAR process $\mathbb{Y}_{[0,t]}$, i.e.\ $\mathbf{X}_s = (\mathbf{Y}^\mathrm{T}_s, \ldots, D^{p-1}\mathbf{Y}^\mathrm{T}_s)^\mathrm{T}$. This is the canonical process on $(\Omega, \mathcal{F}, \{\mathcal{F}_t, \ t\geq0\})$ over $[0,t]$.}
		\nomenclature{$\mathbb{H} = \{\mathbf{H}_t, \ t\geq 0\}$}{is the $\mathbb{R}^{pd^2}$-valued process given by Equation \eqref{eqn:H}.}
		\nomenclature{$[\mathbb{H}] = \{[\mathbf{H}]_t, \ t\geq 0\}$}{is the $\mathbb{R}^{pd^2\times pd^2}$-valued  process given by Equation \eqref{eqn:[H]}, the quadratic variation of $\mathbb{H}$.}
		\nomenclature{$\Delta_{\mathcal{P}_t}$}{is the mesh of the partition $\mathcal{P}_t$.}
		\nomenclature{$\mathbb{Y}_{\mathcal{P}_t} = \{\mathbf{Y}_s, \ s\in\mathcal{P}_t\}$}{is a discrete time observation of an MCAR($p$), or GrCAR($p$), process over $\mathcal{P}_t$.}
		\nomenclature{$\mathbb{L}_{[0,t]} = \{\mathbf{L}_t,\ s\in[0,t]\}$}{is the background driving \Levy process of the MCAR($p$) process $\mathbb{Y}_{[0,t]}$ under $\mathbb{P}_{\mathbf{A}^*}$, i.e.\ \eqref{eqn:L=p(D)Y}.} 
		\nomenclature{$\mathbb{P}_{\mathbf{A}^*}$}{is the probability measure on $(\Omega, \mathcal{F}, \{\mathcal{F}_t, \ t\geq0\})$ such that the canonical process $\mathbb{X}$ is the state-space representation of a stationary and ergodic MCAR($p$) process with parameter $\mathbf{A}^*\in \mathfrak{A}$.}
		\nomenclature{$D^{p-1}\mathbb{Y}^c_{\mathbf{A}^{(0)}, [0,t]} = \{D^{p-1}\mathbf{Y}^c_{\mathbf{A}^{(0)}, s}, \ s\in[0,t]\}$}{is given by Equation \eqref{eqn:cont_mart_part_Dp-1Y}.}
		\nomenclature{$\Delta D^{p-1}\mathbb{Y}_{[0,t]} = \{\Delta D^{p-1}\mathbf{Y}_{s} = D^{p-1}\mathbf{Y}_{s} - D^{p-1}\mathbf{Y}_{s-},\ s\in[0,t]\}$}{is the jump process of $D^{p-1}\mathbb{Y}_{[0,t]}$.}
		\nomenclature{$\mathbb{J}_{[0,t]} = \{\mathbf{J}_s, \ s\in[0,t]\}$}{is a continuous time observation of the jump process \eqref{eqn:jump_J} over $[0,t]$.}
		\nomenclature{$\tilde{\mathbb{J}}_{[0,t]} = \{\tilde{\mathbf{J}}_s, \ s\in[0,t]\}$}{is a continuous time observation of the jump process \eqref{eqn:jump_J_tilde} over $[0,t]$. This is defined when the background \Levy process is assumed to have finite jump activity.}
		\nomenclature{$\mathbb{M}_{[0,t]} = \{\mathbf{M}_s, \ s\in[0,t]\}$}{is a continuous time observation of the jump process \eqref{eqn:jump_M} over $[0,t]$.}
		\nomenclature{$\mathbf{A}^{(0)} = (0_{d\times d}, \ldots, 0_{d\times d})\in(\mathcal{M}_d(\mathbb{R}))^p$}{is a reference set of parameters.}
		\nomenclature{$\mathbb{E}_{\mathbf{A}^*}[\cdot]$}{is the expectation operator corresponding to the probability measure $\mathbb{P}_{\mathbb{A}^*}$.}
		\nomenclature{$\mathbb{J}_{\mathcal{P}_t} = \{\mathbf{J}_s, \ s\in\mathcal{P}_t\}$}{is a discrete time observation of $\mathbb{J}_{[0,t]}$ over $\mathcal{P}_t$.}
		\nomenclature{$\mathbb{M}_{\mathcal{P}_t} = \{\mathbf{M}_s, \ s\in\mathcal{P}_t\}$}{is a discrete time observation of $\mathbb{M}_{[0,t]}$ over $\mathcal{P}_t$.}
		\nomenclature{$D^{p-1}\mathbb{Y}^c_{\mathbf{A}^{(0)}, \mathcal{P}_t} = \{D^{p-1}\mathbf{Y}^c_{\mathbf{A}^{(0)}, s}, \ s\in\mathcal{P}_t\}$}{is a discrete time observation of $D^{p-1}\mathbb{Y}^c_{\mathbf{A}^{(0)}, [0,t]}$ over $\mathcal{P}_t$.}
		\nomenclature{$D^{j}\mathbb{Y}_{\mathcal{P}_t} = \{D^{j}\mathbf{Y}_{s}, \ s\in\mathcal{P}_t\}$}{is a discrete time observation of $D^{i}\mathbb{Y}_{[0,t]}$ over $\mathcal{P}_t$, for $j=0,\ldots, p-1$.}
		\nomenclature{$\mathcal{T}$}{is a countable set of times increasing to $\infty$.}
		\nomenclature{$\{\mathcal{P}_t,\ t\in\mathcal{T}\}$}{is a sequence of partitions $\mathcal{P}_t$ of $[0,t]$.}
		\nomenclature{$\hat{\mathbf{A}}(\mathbb{Y}_{\mathcal{P}_t}, \ldots, D^{p-1}\mathbb{Y}_{\mathcal{P}_t},\mathbb{J}_{\mathcal{P}_t}, \mathbb{M}_{\mathcal{P}_t})$}{is the estimator for $\mathbf{A}^*$ obtained from the discrete observations $\mathbb{Y}_{\mathcal{P}_t}, \ldots, D^{p-1}\mathbb{Y}_{\mathcal{P}_t}, \mathbb{J}_{\mathcal{P}_t}, \mathbb{M}_{\mathcal{P}_t}$ given in Equation \eqref{eqn:discr_estimator_A}.}
		\nomenclature{$\{\mathbf{H}_{\mathcal{P}_t},\ t\in\mathcal{T}\}$}{is the approximator of $\mathbf{H}_t$ obtained from the discrete observations $\mathbb{Y}_{\mathcal{P}_t}, \ldots, D^{p-1}\mathbb{Y}_{\mathcal{P}_t}, \mathbb{J}_{\mathcal{P}_t}, \mathbb{M}_{\mathcal{P}_t}$ given in Equation \eqref{eqn:H_hat}.}
		\nomenclature{$\{[\mathbf{H}]_{\mathcal{P}_t},\ t\in\mathcal{T}\}$}{is the approximator of $[\mathbf{H}]_t$ obtained from the discrete observations $\mathbb{Y}_{\mathcal{P}_t}, \ldots, D^{p-1}\mathbb{Y}_{\mathcal{P}_t}, \mathbb{J}_{\mathcal{P}_t}, \mathbb{M}_{\mathcal{P}_t}$ given in Equation \eqref{eqn:[H]_hat}.}
		\nomenclature{$\mathcal{H}_\infty$}{is the $pd^2\times pd^2$ matrix given by Equation \eqref{eqn:H_infty}.}
		\nomenclature{$\mathbf{X}_\infty = (\mathbf{Y}_\infty^\mathrm{T},\ldots, D^{p-1}\mathbf{Y}_\infty^\mathrm{T})^\mathrm{T}$}{is a ``representative'' random variable for the stationary distribution of the MCAR($p$) state-space representation.}
		\nomenclature{$L^p(\Omega, \mathcal{F}, \mathbb{P}_{\mathbf{A}^*})$}{is the space of $p$-integrable random variable on $(\Omega, \mathcal{F}, \mathbb{P}_{\mathbf{A}^*})$.}
		\nomenclature{$\mathbb{W}_{[0,t]} = \{\mathbf{W}_s, \ s\in[0,t]\}$}{is a standard Brownian motion under $\mathbb{P}_{\mathbf{A}^*}$ implicitly defined via Equation \eqref{eqn:DY=W-A}.}
		\nomenclature{$\mathrm{tr}(\cdot)$}{is the trace operator.}
		\nomenclature{$\hat{D}^{j}\mathbb{Y}_{\mathcal{P}_t} = \{\hat{D}^{j}\mathbf{Y}_{s}, \ s\in\mathcal{P}_t\}$}{is the forward difference approximations of $D^{j}\mathbb{Y}_{\mathcal{P}_t}$ computed from $\mathbb{Y}_{\mathcal{P}_t}$ over $\mathcal{P}_t$, for $j=0,\ldots, p-1$. This is defined in Equation \eqref{eqn:D_hat}.}
		\nomenclature{$c_{\mathcal{P}_t}$}{is the ratio between the shortest and the longest subintervals in the partition $\mathcal{P}_t$.}
		\nomenclature{$\mathcal{Q}_t = \{0=u_0<u_1<\ldots < u_{N'_t} = t\}$}{is a coarsening of $\mathcal{P}_t$.}
		\nomenclature{$\hat{\mathbf{A}}(\mathbb{Y}_{\mathcal{P}_t},\mathbb{J}_{\mathcal{P}_t}, \mathbb{M}_{\mathcal{P}_t};\mathcal{Q}_t)$}{is the estimator for $\mathbf{A}^*$ obtained from the discrete observations $\mathbb{Y}_{\mathcal{P}_t}, \mathbb{J}_{\mathcal{P}_t}, \mathbb{M}_{\mathcal{P}_t}$ and the coarsening partition $\mathcal{Q}_t$ given in Equation \eqref{eqn:discr_estimator_A_2}.}
		\nomenclature{$\{\mathbf{H}_{\mathcal{P}_t, \mathcal{Q}_t},\ t\in\mathcal{T}\}$}{is the approximator of $\mathbf{H}_t$ obtained from the discrete observations $\mathbb{Y}_{\mathcal{P}_t}, \mathbb{J}_{\mathcal{P}_t}, \mathbb{M}_{\mathcal{P}_t}$ and the coarsening partition $\mathcal{Q}_t$ given in Equation \eqref{eqn:H_hat_hat}.}
		\nomenclature{$\{[\mathbf{H}]_{\mathcal{P}_t, \mathcal{Q}_t},\ t\in\mathcal{T}\}$}{is the approximator of $[\mathbf{H}]_t$ obtained from the discrete observations $\mathbb{Y}_{\mathcal{P}_t}, \mathbb{J}_{\mathcal{P}_t}, \mathbb{M}_{\mathcal{P}_t}$ and the coarsening partition $\mathcal{Q}_t$ given in Equation \eqref{eqn:[H]_hat_hat}.}
		\nomenclature{$C(s_n,\ldots, s_{n+k};j)$}{is given by Equation \eqref{eqn:C_partition} for any collection of $k$ consecutive points $\{s_n,\ldots,s_{n+k}\}\in\mathcal{P}_t$ and $j\leq k$.}
		\nomenclature{$F(j,k)$}{for $j\leq k$ is defined in Equation \eqref{eqn:F(j,k)}.}
		\nomenclature{$|\bar{C}(s_n,\ldots, s_{n+k};k)|$}{is given by Equation \eqref{eqn:C_bar_partition} for any collection of $k$ consecutive points $\{s_n,\ldots,s_{n+k}\}\in\mathcal{P}_t$.}
		\nomenclature{$\Delta_{\mathcal{Q}_t}^m=(u_{m+1}-u_m)$}{is the $m$-th increment of the partition $\mathcal{Q}_t$.}
		\nomenclature{$\Delta_{\mathcal{Q}_t}^m\mathbf{Z}=(\mathbf{Z}_{u_{m+1}}-\mathbf{Z}_{u_m})$}{is the $m$-th increment of a process $\mathbb{Z}=\{\mathbf{Z}_s,\ s\geq 0\}$ over the partition $\mathcal{Q}_t$.}
		\nomenclature{$\boldsymbol{\nu}_t=\{\boldsymbol{\nu}_t^m,\ m=0,\ldots, M_t-1\}$}{is a sequence of thresholding vectors $\boldsymbol{\nu}_t^m\in\mathbb{R}^d$ over the partition $\mathcal{Q}_t$.}
		\nomenclature{$\Delta_{\mathcal{Q}_t}^m\hat{D}^{p-1}\mathbf{Y}^c_{\mathbf{A}^{(0)}}$}{is the $m$-th thresholded increment defined in Equation \eqref{eqn:thresholded_incr}.}
		\nomenclature{$\hat{\mathbf{A}}(\mathbb{Y}_{\mathcal{P}_t}; \mathcal{Q}_t, \boldsymbol{\nu}_t)$}{is the estimator for $\mathbf{A}^*$ obtained from the discrete observations $\mathbb{Y}_{\mathcal{P}_t}$, the coarsening partition $\mathcal{Q}_t$ and the thresholding sequence $\boldsymbol{\nu}_t$ given in Equation \eqref{eqn:discr_estimator_A_3}.}
		\nomenclature{$\{\mathbf{H}_{\mathcal{P}_t, \mathcal{Q}_t, \boldsymbol{\nu}_t},\ t\in\mathcal{T}\}$}{is the approximator of $\mathbf{H}_t$ obtained from the discrete observations $\mathbb{Y}_{\mathcal{P}_t}$, the coarsening partition $\mathcal{Q}_t$ and the thresholding sequence $\boldsymbol{\nu}_t$  given in Equation \eqref{eqn:H_hat_hat_hat}.}
		\nomenclature{$\{\mathbf{N}_{t}=(N_t^{(1)},\ldots, N_t^{(d)})^\textrm{T},\ t\geq 0\}$}{is the collection of marginal Poisson counting processes associated to $\tilde{\mathbb{J}}$ in the finite activity case and to $\mathbb{J}$ in the infinite activity case.}
		\nomenclature{$\lambda^{(1)},\ldots, \lambda^{(d)}$}{is the collection of marginal Poisson jump rates associated to $\tilde{\mathbb{J}}$ in the finite activity case and to $\mathbb{J}$ in the infinite activity case.}
		\nomenclature{$\tilde{F}^{(1)},\ldots, \tilde{F}^{(d)}$}{is the collection of marginal jump size distributions associated to $\tilde{\mathbb{J}}$ in the finite activity case and to $\mathbb{J}$ in the infinite activity case.}
		\nomenclature{$\beta^{(1)},\ldots, \beta^{(d)}$}{are the thresholding powers used to construct the collection of thresholding vectors $\boldsymbol{\nu}_t$ from the partition $\mathcal{Q}_t$ in Assumption \ref{ass:finite_thresholding} and Assumption \ref{ass:infinite_thresholding}, i.e.\ $\nu_t^{m, (i)}=(\Delta_{\mathcal{Q}_t})^{\beta^{(i)}}$ for $i=1,\ldots, d$.}
		\nomenclature{$\{\mathbf{Z}_{t},\ t\in\mathcal{T}\}$}{is the asymptotic statistic of the feasible CLT \eqref{eqn:feasible_CLT}.}
		\nomenclature{$B_{t,m}^\gamma(\Sigma)$}{is the critical region defined in Equation \eqref{eqn:critical_region} to disentangle the Brownian and jump components of a \Levy process.}
		\nomenclature{$\hat{\Sigma}(\mathbb{L}_{\mathcal{Q}_t}; \gamma, \Sigma)$}{is the estimator for $\Sigma$ defined in Equation \eqref{eqn:Sigma_hat}.}
		\nomenclature{$\hat{\Sigma}(\mathbb{L}_{\mathcal{Q}_t}; \gamma, \epsilon)$}{is the estimator for $\Sigma$ resulting from the iterative procedure outlined in Appendix \ref{app:disentangling}.}
		
		\printnomenclature
		
	\end{document}